\newtheorem{theorem}{Theorem}[section]
\newtheorem{lemma}[theorem]{Lemma}
\newtheorem{proposition}[theorem]{Proposition}
\newtheorem{corollary}[theorem]{Corollary}
\newtheorem{definition}[theorem]{Definition}
\newtheorem{example}[theorem]{Example}
\newtheorem{claim}[theorem]{Claim}
\newtheorem{assumption}[theorem]{Assumption}
\newtheorem{remark}[theorem]{Remark}
\DeclareMathOperator*{\argmin}{argmin}
\def\Var{{\rm Var}}
\DeclareMathOperator{\Unif}{Unif}
\DeclareMathOperator{\Binom}{Bin}
\DeclareMathOperator{\Bern}{Bern}
\DeclareMathOperator{\Poi}{Poi}
\DeclareMathOperator{\Exp}{Exp}
\DeclareMathOperator{\Perm}{Perm}
\DeclareMathOperator{\diag}{diag}
\newcommand{\bO}{\ensuremath{O}}
\newcommand{\Pp}{\ensuremath{\mathbb{P}}}
\newcommand{\E}{\ensuremath{\mathbb{E}}}
\newcommand{\R}{\ensuremath{\mathbb{R}}}
\newcommand{\Z}{\ensuremath{\mathbb{Z}}}
\newcommand{\N}{\ensuremath{\mathbb{N}}}
\newcommand{\eps}{\ensuremath{\epsilon}}
\newcommand{\floor}[1]{\lfloor #1 \rfloor}
\newcommand{\dd}{\ensuremath{\,d}}
\newcommand{\cM}{\mathcal{M}}
\newcommand{\cW}{\mathcal{W}}
\newcommand{\vu}{\mathbf{u}}
\newcommand{\vv}{\mathbf{v}}
\newcommand{\vw}{\mathbf{w}}
\newcommand{\vU}{\mathbf{U}}
\newcommand{\vV}{\mathbf{V}}
\newcommand{\vW}{\mathbf{W}}
\newcommand{\va}{\mathbf{a}}
\newcommand{\vb}{\mathbf{b}}
\newcommand{\vd}{\mathbf{d}}
\newcommand{\ve}{\mathbf{e}}
\newcommand{\vA}{\mathbf{A}}
\newcommand{\vB}{\mathbf{B}}
\newcommand{\vD}{\mathbf{D}}
\newcommand{\vE}{\mathbf{E}}
\newcommand{\vM}{\mathbf{M}}
\newcommand{\vx}{\mathbf{x}}
\newcommand{\vy}{\mathbf{y}}
\newcommand{\vX}{\mathbf{X}}
\newcommand{\vY}{\mathbf{Y}}
\newcommand{\vZ}{\mathbf{Z}}
\newcommand{\pxy}{p(\vx,\vy)}
\newcommand{\qxy}{q(\vx,\vy)}
\newcommand{\sumiton}{\sum_{i=1}^n}
\newcommand{\sumjton}{\sum_{j=1}^n}
\newcommand{\m}{\ensuremath{\mathsf{m}}}
\newcommand{\w}{\ensuremath{\mathsf{w}}}
\newcommand{\Oeig}{\Omega_{\text{eig}}}
\newcommand{\Oeigz}{\Omega_{\text{eig}}(\zeta)}
\newcommand{\Oempe}{\Omega_{\text{emp}}(\eps)}
\newcommand{\tOemp}{\tilde{\Omega}_{\text{emp}}}
\newcommand{\Otailab}{\Omega_{\text{tail}}(\xi,\rho)}
\newcommand{\tOtailab}{\tilde{\Omega}_{\text{tail}}(\xi,\rho)}
\newcommand{\Femp}{\hat{\mathcal{F}}}
\newcommand{\Eratio}{\mathcal{E}_{\text{ratio}}}
\newcounter{relctr} %
\everydisplay\expandafter{\the\everydisplay\setcounter{relctr}{0}} %
\newcommand\labelrel[2]{%
  \begingroup
    \refstepcounter{relctr}%
    \stackrel{\textnormal{(\roman{relctr})}}{\mathstrut{#1}}%
    \originallabel{#2}%
  \endgroup
}
\title{Welfare Distribution in Two-sided Random Matching Markets}
\author{Itai Ashlagi\;\!\thanks{Department of Management Science and Engineering, Stanford University} \and Mark Braverman\;\!\thanks{Department of Computer Science, Princeton University} \and Geng Zhao\;\!\thanks{Department of Electrical Engineering and Computer Sciences, University of California, Berkeley}}
\date{\today}
\begin{document}

\maketitle

\begin{abstract}
We study the welfare structure in two-sided large random matching markets. In the model, each agent has a latent personal score for every agent on the other side of the market and her preferences follow a logit model based on these scores. Under a contiguity condition, we provide a tight description of stable outcomes. 

First, we identify an intrinsic fitness for each agent that represents her relative competitiveness in the market, independent of the realized stable outcome.  The intrinsic fitness values correspond to scaling coefficients  needed to make a mutual latent matrix bi-stochastic, where the latent scores can be interpreted as a-priori probabilities of a pair being matched.

Second, in every stable (or even approximately stable) matching, the welfare or the ranks of the agents on each side of the market, when scaled by their intrinsic fitness, have an approximately exponential empirical distribution. Moreover, the average welfare of agents on one side of the market is sufficient to determine the average on the other side.

Overall, each agent's welfare is determined by a global parameter, her intrinsic fitness, and an extrinsic factor with exponential distribution across the population.
\end{abstract}

\section{Introduction}

This paper is concerned with the welfare in random two-sided matching markets. In a two-sided matching market there are two kinds of agents, where each agent has preferences over potential partners of the other kind. We assume that the outcome  is stable  \citep{gale1962college}, meaning that there are no blocking pairs of agents who would rather match to each other over the their assigned partners.

A large literature initiated by \citep{gale1962college} has deepened our understanding of two-sided matching markets, generating a blend of rich theory and market designs.\footnote{See, e.g., \citet{roth1992two,roth2018marketplaces}.} Less understood, however are welfare properties in typical markets. We study the welfare structure in matching markets when agents have latent preferences generated  according to observed characteristics. Specifically we are interested in the empirical welfare distribution of agents on each side of the market under stable outcomes as well as the relation between the outcomes of each side of the market.%

We study this question in large randomly generated markets, which allow for both vertical and horizontal differentiation. The model assumes that every agent has an observed personal score for every other agent in the market, and her preferences follows a Logit model based on these scores. We impose that no agent is a-priori overwhelmingly more desirable than any other agent. We find that the observed characteristics alone determine the empirical welfare distribution on each side of the market. Moreover, the joint surplus in the market is fixed, and the average welfare of one side of the market is a sufficient statistic to determine the empirical welfare distribution on both sides of the market.

The model we consider has an equal number of men and women. For every man $\m_i$ and every woman $\w_j$, we are given non-negative scores $a_{ij}$ and $b_{ji}$, which %
can be viewed as generated from observed characteristics. 
Each man and each woman have strict  preference rankings  generated independently and proportionally to these latent scores, as in the Logit model.\footnote{Numerous empirical papers that study two-sided matching market assume agents' preferences follow a logit model (see e.g., \cite{agarwal2018demand,hitsch2010matching}).} Equivalently, each man $\m_i$ has a latent value from matching with woman $\w_j$ that is distributed exponentially with rate   $a_{ij}$  (smaller values are considered  better).\footnote{One can view the utility of an agent for her match to be the negative of the corresponding latent value.}  Women's latent values for men are generated similarly.\footnote{Special cases of this general model are markets with  uniformly random preferences \citep{knuth1990stable,pittel1989average,knuth1997stable,pittel1992likely, ashlagi2017unbalanced} or when agents have common public scores \citep{mauras2021two,ashlagi2020tiered}.}

We identify an intrinsic fitness for each agent that represents her relative competitiveness in the market, independent of the realized stable outcome. For every pair of agents on opposing sides of the market, we can obtain a mutual score of the pair’s match. If we write these scores in a matrix, the intrinsic fitness values
correspond to scaling coefficients that make the mutual matrix bi-stochastic.\footnote{This representation is valid since preferences are invariant under such transformations.} Intuitively, this bi-stochastic mutual matrix can be thought of as consisting of {\em a-priori} probabilities of each pair matching. In particular, this  representation captures the interactions between the sides of the market. We exploit this  representation to further  analyze typical realized outcomes in the market.

We find that the welfare, or the ranks of the agents, when scaled by their intrinsic fitness, have an approximately exponential empirical distribution on each side of the market. Moreover, the average welfare of agents on one side of the market is sufficient to determine the average on the other side. Overall, each agent’s welfare can be seen as determined by a global parameter, her intrinsic fitness, and an extrinsic factor with exponential distribution across the population. This characterization holds with high probability in every stable  matching. In fact, this structure extends to matchings that are only approximately stable, which can tolerate a vanishing fraction of blocking pairs.

At its core, since our proof needs to apply to all stable matchings (and even to nearly-stable matchings), it is a union bound argument. We use inequalities derived from the integral formula for the probability that a given matching is stable, first introduced by \citet{knuth1976mariages}. The heterogeneous preferences brings great difficulty, which we overcome with a truncation technique to accommodate heavy tails of agents' outcomes and a fixed-point argument on the eigenspace of the characterizing matrix of the market. The exponential empirical distribution part of the result holds intuitively because there are not too many stable matchings in expectation, and the exponential distribution has the highest entropy of all non-negative distributions with a given mean.

Closely related to our work is the remarkable paper    \cite{menzel2015large}, which  finds that the joint surplus in  the market is unique. The focus in  \cite{menzel2015large} is  on analyzing the matching rates between agents of different types, rather than the rankings and agents' welfare. Menzel's preference model is more general.\footnote{We note that both his and our model assume that the ratio between any two systematic scores is bounded.} Menzel establishes that, at the limit, agents choose partners according to a logit model from opportunity sets, while we consider large markets and assume agents' preferences are  logit based. There are several other key  differences. First, his model requires many agents of each type (with the same characteristics), while every agent in our model may have different characteristics.  Second, while in our model every agent is matched, he assumes agents have a non-negligible outside option resulting in a large number of unmatched agents\footnote{\citet{menzel2015large} identifies how to scale the market under this assumption to capture realistic outcomes.}; this assumption allows him to apply a fixed point contraction argument and establish the uniqueness and characterization result.\footnote{Technically, such substantial outside options keep rejection chains short and prevent them from cycling.}

\subsection{Literature}

The analysis of random  two-sided markets goes back to \citet{knuth1990stable,pittel1989average,pittel1992likely}, who consider markets with  uniformly random complete preference lists.  These papers establish the number of stable matchings as well as the average ranks on each side. A key finding is that the product of there average rank of agents on each side of the market is approximately the size of the market \citep{pittel1992likely}, implying that  stable matchings  essentially  lie on a parabola. Our findings generalize these findings to markets to random logit markets. We also expand these findings to describe the distributional outcomes in the market.

Several papers consider markets with uniformly drawn preferences with an unequal number of agents on each side of market \citep{ashlagi2017unbalanced,pittel2019likely,cai2019short}.
A key finding is there is an essentially unique stable matching and agents on the short side have a substantial advantage. We believe that similar findings hold in random logit markets. Since our results hold for approximately stable matches, our findings  extend to the imbalanced case as long as the imbalance is not too large.%

Our paper further contributes to the above literature by considering also outcomes that are  approximately stable outcomes.

Several papers study markets random markets when (at least on one side) agents'  preferences are  generated proportionally to public scores. \citep{immorlica2015incentives,kojima2009incentives,ashlagi2014stability} look at the size of the core.\footnote{They further consider the related issue of strategizing under stable matching mechanisms.} Their analysis relies on a certain market structure (keeping preference lists short), which leaves many agents unmatched. %
 \cite{gimbert2019popularity}  and \citet{ashlagi2020tiered}  assume agents have complete preference lists and their focus is on the  size of the core or agents' average rank.

\subsection{Notations}

Denote $[n] = \{1,\ldots,n\}$. Boldface letters denote vectors (lower case) and matrices (upper case), e.g., $\vx = (x_i)_{i\in[n]}$ and $\vA = (a_{ij})_{i\in[n],j\in [m]}$, and capital letters denote random variables.

For two identically shaped matrices (or vectors) $\vM$ and $\mathbf{N}$, $\vM\circ \mathbf{N}$ denotes their Hadamard (entry-wise) product. For a vector $\vx\in\R^n$ with non-zero entries, denote its coordinate-wise inverse by $\vx^{-1}$. $\diag(\vx)$ denotes the diagonal matrix whose $i$-th entry on the diagonal is $x_i$.

$\Exp(\lambda)$ and $\Poi(\lambda)$ denote, respectively, the exponential distribution and the Poisson distribution with rate $\lambda$. We  denote the probability density function (pdf) and cumulative distribution function (CDF) of $\Exp(\lambda)$ by $f_\lambda$ and $F_\lambda$, respectively. %
$\Bern(p)$ denotes the Bernoulli distribution with success probability $p\in[0,1]$. %
For distributions $\mathcal{D}_1$ and $\mathcal{D}_2$ over space $\mathcal{X}$, $\mathcal{D}_1\otimes \mathcal{D}_2$ denotes their product distribution over $\mathcal{X}^2$.
$\Femp(\vx)$ denotes the empirical distribution function for the components of a vector $\vx$, treated as a function from $\R$ to $[0,1]$. $\mathcal{F}(\mathcal{D})$ denotes the CDF of a distribution $\mathcal{D}$ on $\R$.
For real-valued random variables $X$ and $Y$, $X\preceq Y$ denotes stochastic domination of $X$ by $Y$.

We use the standard $O(\cdot)$, $o(\cdot)$, $\Omega(\cdot)$, and $\Theta(\cdot)$ notations to hide constant factors. For functions $f,g:\N\to\R_+$, we say $f = O(g)$ (resp. $\Omega(g)$) if there exists an absolute constant $K\in(0,\infty)$ such that $f \le K g$ (resp. $f \ge K g$) for $n$ sufficiently large; $f=o(g)$ if $f/g \to 0$ as $n\to\infty$; and $f=\Theta(g)$ if $f=O(g)$ and $f=\Omega(g)$. We say $f = o_\alpha(g)$ if $f/g\to 0$ as $\alpha\to 0$ (uniformly over all other parameters, such as $n$). For example, $\sqrt{\eps} = o_\eps(1)$.

\section{Model}

We study two-sided matching markets with randomly generated preferences. Next we formalize the model, how preferences are generated and key assumptions. 

\paragraph{Setup.}  A matching market consists of two sets of agents, referred to as men $\mathcal{M}$ and women $\mathcal{W}$. Unless specified otherwise, we assume that  $|\mathcal{M}| = |\mathcal{W}| = n$, men are labeled $\m_1,\ldots, \m_n$ and  women are labeled $\w_1,\ldots, \w_n$. 
Each man $\m_i$ has a complete strict preference list $\succ_{\m_i}$ over the the set of women and each woman  $\w_j$ has a complete strict preference list $\succ_{\w_j}$  over the set of  men.   A \emph{matching} is a bijection $\mu : \mathcal{M}\to\mathcal{W}$. To simplify the notation, men and women will  be presented using the set of integers $[n]=\{1,2,\ldots,n\}$ and we write $\mu:[n]\to[n]$ so that $\mu(i)=j$ and $\mu^{-1}(j)=i$ means that  $\m_i$ is matched with  $\w_j$ in  $\mu$. The \emph{rank} for man $\m_i$, denoted by $R_i(\mu)$,  is the position of $\mu(i)$ on $\m_i$'s preference list (e.g., if an agent is matched to the second agent on her list,  her rank is two). Write $\mathbf{R}(\mu):=(R_i(\mu))_{i\in[n]}$ for the men's rank vector in matching $\mu$. 

The matching $\mu$ is \emph{unstable} if there is a pair of  man $\m_i$ and woman $\w_j$ such that $\w_j \succ_{\m_i} \w_{\mu(i)}$ and $\m_i \succ_{\w_j} \w_{\mu^{-1}(j)}$. A matching is said to  \emph{stable} otherwise. It is well-known that the set of stable matchings is not empty.

\paragraph{Logit-based random markets: the canonical form.} We consider markets in which complete preferences are randomly generated as follows. %
For each man $\m_i$, we are given a stochastic vector $\hat{\mathbf{a}}_i = (\hat{a}_{ij})_{j\in[n]} \in\R^n_+$. Then, $\m_i$'s preference list  is generated from a logit model based on $\hat{\mathbf{a}}_i$. In  particular, let $\mathcal{D}_i$ be the distribution on $\cW$ that places on $\w_j$ a probability proportional to $\hat{a}_{ij}$; then $\m_i$ samples from $\mathcal{D}_i$ for his favorite partner, and repeatedly sample from it without replacement for his next favorite partner until completing his list.
Similarly, each woman $\w_j$ preference list is generated from a logit model based on a given stochastic vector $\hat{\mathbf{b}}_j = (\hat{b}_{ji})_{i\in[n]}$.
Denote by  $\hat{\vA} = (\hat{a}_{ij})_{i,j\in[n]}$ and $\hat{\vB} = (\hat{b}_{ji})_{j,i\in[n]}$ the  row-stochastic matrices. We refer to this matrix representation of the preference model as the  \emph{canonical form} and to $\hat{a}_{ij}$ (resp. $\hat{b}_{ji}$) as the \emph{canonical score} that $\m_i$ (resp. $\w_j$) assigns to $\w_j$ (resp. $\m_i$).

This model captures the multinomial logit (MNL) choice model, in which  scores are closely related to the systematic utilities for agents over matches. The special case in which $\hat{a}_{ij} = \hat{b}_{ji} = 1/n$ for all $i,j\in[n]$ corresponds to the uniformly random preference model.

\paragraph{Mutual matrix and intrinsic fitness: the balanced form.}
While the canonical form is a useful way to describe the market, it will be helpful for the analysis to describe it using an  alternative scaling scheme, which we refer to as the \emph{balanced form}.

Observe that multiplying any row of $\hat{A}$ and $\hat{B}$ by a constant does not change the behavior of the market.
We look for scaling vectors $\bm{\phi},\bm{\psi}\in\R^n_+$ for the rows of $\hat\vA$ and $\hat\vB$ such that $\vM = n^{-1} \vA \circ \vB$ is bistochastic\footnote{The nonnegative matrix $\vM$ is bistochastic if the sum of entries in each row and each column is one.}, where $\vA = \diag(\bm{\phi}) \hat\vA$ and $\vB = \diag(\bm{\psi}) \hat\vB$. As is shown by \citet[Theorem~1]{sinkhorn1964relationship}, such a bistochastic matrix $\vM$ always uniquely exists, and the scaling vectors $\bm{\phi}$ and $\bm{\psi}$ are unique up to constant rescaling. That is, $\bm{\phi}$ and $\bm{\psi}$ jointly solve
\begin{equation}
    \frac{1}{n} \diag(\bm{\phi}) (\hat{\vA} \circ \hat{\vB}^\top) \diag(\bm{\psi}) \mathbf{1} = \mathbf{1} \quad\text{ and }\quad \frac{1}{n} \diag(\bm{\psi}) (\hat{\vB} \circ \hat{\vA}^\top) \diag(\bm{\phi}) \mathbf{1} = \mathbf{1},
\end{equation}
where $\mathbf{1}$ is the vector consisting of all $1$'s.
The matrix $\vM$ will be referred to as the \emph{mutual matrix}. %

In the remainder of the paper we assume without loss of generality that the market is described in the balanced form, using $\vA,\vB$  and the mutual matrix $\vM$.

The bistochasticity constraint incurs the following relationship: if $\hat{b}_{ji}$'s increase (resp. decrease) by a factor of $\alpha$ simultaneously for all $j\in[n]$, the scaling factor $\phi_i$, and hence all $a_{ij}$'s for $j\in[n]$, must decrease (resp. increase) by the same factor to maintain bistochasticity of $\vM$. In other words, a uniform increase (resp. decrease) of $\m_i$'s popularity among the women will lead to a proportional decrease (resp. increase) in $\phi_i$. Thus, we can view the $\phi_i$ as reflecting the ``average popularity'' of man $\m_i$ among the women: Loosely speaking, the smaller $\sumjton a_{ij}$ is, the more popular $\m_i$ is (reflected by larger values of $b_{ji}$'s). 

We  refer to the vector $\bm{\phi}$ and $\bm{\psi}$ as the men's and women's \emph{intrinsic fitness} vector, respectively (and note that a smaller intrinsic fitness value means the agent is more competitive). Note that since $\hat{\vA} = \diag(\bm{\phi})^{-1} \vA$ is row-stochastic, we conveniently have $\phi_i = \sumjton a_{ij}$, and similarly $\psi_j = \sumiton b_{ji}$ in the balanced form.

\begin{example}
[Markets with  public  scores]\label{Ex_public_scores}
We say a matching market has public scores when $\hat{\mathbf{a}}_i=\hat{\mathbf{a}}\in\R_+^n$ for all $i\in[n]$ and $\hat{\mathbf{b}}_j=\hat{\mathbf{b}}\in\R_+^n$ for all $j\in[n]$. In other words, agents on the same side of the market share an identical preference distribution. %
The fitness vectors are simply $\bm{\phi} = \hat{\mathbf{b}}^{-1}$ and $\bm{\psi} = \hat{\mathbf{a}}^{-1}$, where the inverse is taken component-wise. The mutual matrix $\vM = \mathbf{J} := (n^{-1})_{i,j,\in[n]}$ in this case.
\end{example}

\paragraph{Latent values.}
The logit-based preference model can be generated equivalently in the following way.
Let $\vX,\vY\in\R_+^{n\times n}$ be two random matrices with independent entries $X_{ij}$ (resp. $Y_{ji}$) sampled from   $\Exp(a_{ij})$ (resp. $\Exp(b_{ji})$). The preference profile is then derived from $\vX$ and $\vY$ as follows:
\[
    \w_{j_1}\succeq_{\m_i}\w_{j_2} \quad\Longleftrightarrow\quad X_{ij_1} < X_{ij_2},
\]
\[
    \m_{i_1} \succeq_{\w_j} \m_{i_2} \quad\Longleftrightarrow\quad Y_{ji_1} < Y_{ji_2}.
\]
We  refer to each $X_{ij}$ (resp. $Y_{ji}$) for $i,j\in[n]$ as the \emph{latent value} (or simply {\em value}) of $\m_i$ (resp. $\w_j$) if matched with $\w_j$ (resp. $\m_i$). 

Note that for every agent, a lower rank implies a lower latent value (and therefore lower values of rank and latent value are better).

\paragraph{Regularity assumption.}

We  study the asymptotic behavior of two-sided matching markets as the market size grows large. Informally, we restrict attention to contiguous markets, in the sense that,  ex ante, no agent finds any other agent (on the opposite side of the market) disproportionately favorable or unfavorable to  other  agents. The  condition is formalized as follows.

A matrix $\mathbf{L}\in\R^{n\times n}$ with non-negative entries is called \emph{$C$-bounded} for some constant $C\ge 1$ if $\ell_{ij}\in[1/C,C]$ for all $1\le i,j\le n$. When $\mathbf{L}$ is (bi-)stochastic, we will abuse notation and say $\mathbf{L}$ is $C$-bounded if $n\mathbf{L}$ satisfies the definition above.

\begin{assumption}[Contiguity]\label{Assumption_C_bounded}
We assume that, by choosing an appropriate scaling of $\bm{\phi}$ and $\bm{\psi}$ in the balanced form,
there exist absolute constants $C\in[1,\infty)$ and $n_0<\infty$ such that $\vA$, $\vB$, and $n\vM=\vA\circ \vB^\top$ are all $C$-bounded for all $n\ge n_0$; that is, there exists $C\in[1,\infty)$ such that
\begin{equation}\label{Eqn_assumpt_C_bound_main}
    \frac{1}{C} \le \min_{i,j\in[n]} \min\{a_{ij}, b_{ji}, nm_{ij}\} \le \max_{i,j\in[n]} \max\{a_{ij}, b_{ji}, nm_{ij}\} \le C \quad\text{ for all }\; n > n_0.
\end{equation}

\end{assumption}

\begin{remark}
It is easy to verify that Assumption~\ref{Assumption_C_bounded} holds when no agent finds any potential partner disproportionately favorable or unfavorable based on their canonical scores:
    If $\hat\vA$ and $\hat\vB$ are $C$-bounded, then there exists a choice of $\bm{\phi}$ and $\bm{\psi}$ with all entries in $[n/C^2, nC^2]$ in the balanced form; further, $\vM $ is $C^4$-bounded.
Thus, Assumption~\ref{Assumption_C_bounded} is equivalent to the existence of an absolute upper bound on the ratio between pairs of entries within the same row of $\vA$ or $\vB$; that is
\begin{equation}
    \limsup_{n\to\infty} \max_{i,j_1,j_2\in[n]} \frac{a_{ij_1}}{a_{ij_2}} < \infty \qquad\text{ and }\qquad \limsup_{n\to\infty} \max_{j,i_1,i_2\in[n]} \frac{b_{ji_1}}{b_{ji_2}} < \infty.
\end{equation}
This condition is agnostic to scaling of the matrices and hence easy to certify. However, the lower and upper bounds in \eqref{Eqn_assumpt_C_bound_main} are more convenient in our later analysis, where the constant $C$ will make an appearance (although often made implicit in the results).
\end{remark}

\begin{remark}
Assumption~\ref{Assumption_C_bounded} offers a strong contiguity condition on the market, in that the attractiveness among all pairs of men and women vary at most by an (arbitrarily large) constant factor as the market grows.   We expect  the results to  hold under a  weaker assumption, which can be described through the spectral gap of the  matrix $\vM$. Recall that, as a bistochastic matrix, $\vM$ has a largest eigenvalue of $1$ and all other eigenvalues of magnitude at most $1$. We may think of the market as contiguous in this weaker sense if the spectral gap of $\vM$, given by $1-|\lambda_{\max}(\vM-\mathbf{J})|$, is bounded away from zero as the market grows. The spectral gap is a common and powerful notion when studying the structure of networks and communities.\footnote{In our model of the matching market, the spectral gap of $\vM$ describes the extent to which the market interconnects globally (contiguity) or decomposes into multiple sub-markets (modularity). A larger spectral gap means that the market is more cohesive, with more uniform or homogeneous preferences. For instance, the uniform market with $\vM=\mathbf{J}$ has a unit spectral gap, the maximum possible value. On the other hand, a smaller spectral gap means that the market is more clustered, with a clearer boundary between communities and poorly mixed preferences. For instance, any block-diagonal bistochastic matrix (with more than one blocks) has a zero spectral gap, and corresponds to a market that decomposes into two or more independent sub-markets --- one cannot hope to have a uniform structure result in such markets.} We impose Assumption \ref{Assumption_C_bounded} as it  simplifies  substantially the analysis and exposition. 
\end{remark}

\section{Main results}

We  denote the (random) set of stable matchings by $\mathcal{S}$. Recall that for a matching $\mu$, $\vX(\mu)$ and $\mathbf{R}(\mu)$ denote men's value and rank vectors, respectively,  under $\mu$. Denote by $\Femp(\vv)$  the empirical distribution of the components of a vector $\vv$ (viewed as a function from $\R$ to $[0,1]$), and $F_\lambda$ denotes the CDF of $\Exp(\lambda)$.

\begin{theorem}[Empirical distribution of values]\label{Thm_main_happiness_dist}
For any fixed $\eps>0$,
\begin{equation}\label{Eqn_happiness_dist_main_thm_whp}
    \Pp\bigg(\max_{\mu\in\mathcal{S}} \inf_{\lambda\in\R_+} \|\Femp(\vX(\mu))-F_\lambda\|_\infty \le \eps\bigg) \to 1 \text{ as } n\to\infty.
\end{equation}
That is, with high probability, in all stable matchings simultaneously, the empirical distribution of the men's values  is arbitrarily close to some exponential distribution $\Exp(\lambda)$ in Kolmogorov-Smirnov norm, where the parameter $\lambda$ depends on the specific stable matching. In particular, the infimum over $\lambda$ in \eqref{Eqn_happiness_dist_main_thm_whp} can be replaced with the choice of $\lambda$ that can be computed from the women's value vector $\vY(\mu)$.
\end{theorem}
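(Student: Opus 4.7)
The plan is to prove Theorem~1.1 by a union bound over matchings using Knuth's integral formula for the probability that a matching is stable, combined with a maximum-entropy / variational argument to show that exponential distributions are the unique shapes that contribute non-negligibly to the expected number of stable matchings. The analysis proceeds through four stages: truncation, discretization, a Knuth-type expected-count computation, and a concentration argument using the spectral gap of $\vM$.

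First I would handle the heavy tails by truncation. Under Assumption~\ref{Assumption_C_bounded}, each $X_{ij} \preceq \Exp(1/C)$ and each $Y_{ji} \preceq \Exp(1/C)$, so a union bound gives $\max_{i,j}\{X_{ij},Y_{ji}\} \le L := K\log n$ for some large constant $K$, with probability $1-o(1)$. Restricting to this event, I would discretize $[0,L]$ into $M = \lceil L/\delta\rceil$ bins of width $\delta = \delta(\eps)$, and represent the empirical distribution of the men's values in a candidate stable matching by a histogram $\vec{n}=(n_1,\dots,n_M)$ with $\sum_k n_k = n$. Since $M = O(\log n)$, the number of such histograms is only polynomial in $n$, so the union bound over histograms is cheap.

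Next, for any target histogram $\vec{n}$, I would estimate the expected number of stable matchings whose men-side empirical distribution realizes $\vec n$. For a fixed bijection $\mu$, conditioned on the latent values $x_i = X_{i,\mu(i)}$ and $y_j = Y_{j,\mu^{-1}(j)}$ inside $\mu$, the Knuth formula gives
\begin{equation*}
\Pp[\mu\in\mathcal{S}\mid \vx,\vy] \;=\; \prod_{(i,j):\, j\ne \mu(i)}\Bigl[1-(1-e^{-a_{ij}x_i})(1-e^{-b_{ji}y_j})\Bigr].
\end{equation*}
Summing over the $n!$ permutations $\mu$ and integrating out $(\vx,\vy)$ against the exponential density $\prod_i a_{i,\mu(i)} e^{-a_{i,\mu(i)}x_i}\prod_j b_{j,\mu^{-1}(j)}e^{-b_{j,\mu^{-1}(j)}y_j}$, the expected count reduces to a symmetric integral whose log, after Taylor-expanding $\log(1-u)\approx -u$, is dominated by a term of the form $-\sum_{i,j}(1-e^{-a_{ij}x_i})(1-e^{-b_{ji}y_j})$. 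Using the bi-stochasticity of $n\vM = \vA\circ\vB^\top$ and a fixed-point / eigenspace argument on $\vM-\mathbf{J}$ (exploiting the $\Omega(1)$ spectral gap implied by Assumption~\ref{Assumption_C_bounded}), this cross-term decouples, up to lower-order corrections, into $-n\cdot \bar{u}(\vx)\cdot \bar{u}(\vy)$ where $\bar u(\vx) = n^{-1}\sum_i (1 - e^{-x_i})$ (after absorbing the $a_{ij},b_{ji}$ into the scaling implicit in the balanced form).

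The final stage is a variational argument: the logarithm of the expected count, viewed as a functional of the histogram $G$, behaves like $n \bigl[H(G) - c_1 \int x \, dG - c_2 \int (1-e^{-x}) \, dG\bigr]$ where $H$ is the Shannon entropy of the density induced by $G$, and $c_1,c_2$ are constants determined by the Sinkhorn scaling. Maximizing this functional over probability densities on $\R_+$ yields, by standard Lagrangian calculus, exactly the exponential family $f_\lambda$. Any $G$ that is $\eps$-far in Kolmogorov--Smirnov distance from every $F_\lambda$ incurs a penalty of $e^{-\Omega(\eps^2)\, n}$ in the expected count, which dominates the polynomial number of candidate histograms. A Markov/union bound then completes the proof, and the fact that $\lambda$ can be recovered from $\vY(\mu)$ follows by running the symmetric argument on the women's side and noting that the same Lagrangian fixes the two means through one joint constraint. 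I expect the hardest step to be the decoupling in stage three: the Knuth kernel couples men's and women's values through the heterogeneous $a_{ij},b_{ji}$, and converting the resulting bilinear form into a separable one requires careful use of the bi-stochastic Sinkhorn representation together with a spectral fixed-point argument to control the residual $\vM - \mathbf{J}$ contribution uniformly over the polynomially many candidate histograms.
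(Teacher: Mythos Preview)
Your high-level strategy---union bound via Knuth's integral formula, a first-order approximation to a bilinear form in $(\vx,\vy)$, spectral decoupling through the bistochastic matrix $\vM$, and then a concentration argument pinning the empirical distribution to the exponential family---matches the paper's approach closely. The paper phrases the final step as conditional resampling (fix $\vY(\mu)=\vy$, observe that the $X_i$ become approximately independent $\Exp(\|\vy\|_1)$ variables, apply a generalized DKW inequality) rather than as entropy maximization over histograms, but these are dual formulations of the same large-deviation fact.

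There are, however, two genuine gaps. First, your truncation at $L=K\log n$ is too coarse to justify the Taylor step $\log(1-u)\approx -u$. The second-order remainder in $\log p_\mu$ is governed by $\|\vx\|_2^2\,\|\vy\|_1+\|\vx\|_1\,\|\vy\|_2^2$, and a bound $\|\vx\|_\infty\le K\log n$ only yields $\|\vx\|_2^2\|\vy\|_1\lesssim (\log n)\cdot\|\vx\|_1\|\vy\|_1=\Theta(n\log n)$, which is not $o(n)$ and swamps the $e^{o(n)}$ precision the union bound needs. The paper instead truncates \emph{agents}---dropping the $\delta$-fraction of least happy men and women---precisely in order to force $\|\vx_\delta\|_2^2\lesssim\|\vx_\delta\|_1^2/n$ (their Lemma~\ref{Lemma_Xdelta2sq_le_O_V1sq}); this is what makes $p_\mu/q\le e^{o(n)}$ go through on the relevant region.

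Second, your variational functional cannot have a fixed constant $c_2$. After decoupling, the stability term contributes $-\|\vy\|_1\cdot\|\vx\|_1$ (or, in your un-linearized form, $-n\bar u(\vy)\cdot n\bar u(\vx)$) to the log-density, and the coefficient $\|\vy\|_1$ ranges over $[\Theta(\log n),\Theta(n/\log n)]$ across stable matchings---this is exactly why the $\lambda$ in the theorem depends on $\mu$. Treating $c_2$ as a Sinkhorn-determined constant collapses two layers of the argument into one. The paper keeps them separate: it conditions on $\vy$ (so that the effective rate for the men is pinned to $\|\vy\|_1$ for that matching), applies DKW, and only then closes the union bound via a separately proved sub-exponential estimate on the expected number of stable partial matchings (Proposition~\ref{Prop_EqXY_bound}). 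Your entropy-maximization route would need the same two-layer structure; note also that the maximizer under the constraint $\int(1-e^{-x})\,dG=\text{const}$ is \emph{not} an exponential density unless you first establish that the relevant values are $o(1)$, which itself requires the kind of preliminary regularity region the paper calls $\mathcal{R}$.
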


\begin{theorem}
[Empirical distribution of ranks]\label{Thm_main_rank_dist}
For any fixed $\eps>0$,
\begin{equation}\label{Eqn_rank_dist_main_thm_whp}
    \Pp\bigg(\max_{\mu\in\mathcal{S}} \inf_{\lambda\in\R_+} \|\Femp(\bm{\phi}^{-1}\circ\mathbf{R}(\mu))-F_\lambda\|_\infty \le \eps\bigg) \to 1 \text{ as } n\to\infty,
\end{equation}
where $\bm{\phi}$ is the fitness vector (and can be computed as $\phi_i = \sumjton a_{ij}$).
That is, with high probability, in all stable matchings simultaneously, the empirical distribution of rescaled ranks of the men is arbitrarily close to some exponential distribution $\Exp(\lambda)$ in Kolmogorov-Smirnov norm, where the parameter $\lambda$ depends on the specific stable matching (yet the scaling doesn't). Again, we may replace the infimum with the choice of $\lambda$ that can be computed from the women's latent value vector $\vY(\mu)$.%
\end{theorem}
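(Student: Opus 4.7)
The plan is to reduce Theorem~\ref{Thm_main_rank_dist} to Theorem~\ref{Thm_main_happiness_dist} via a deterministic bridge between ranks and latent values. For each $i\in[n]$, define the nondecreasing step function $f_i(x):=\#\{j\in[n]:X_{ij}\le x\}$, which depends only on the random matrix $\vX$ and not on any matching. For \emph{any} matching $\mu$, one has the identity $R_i(\mu)=f_i(X_{i\mu(i)})$. If, with high probability, $f_i(x)\approx \phi_i x$ uniformly over $i$ and over the range of $x$ that $X_{i\mu(i)}$ typically inhabits, then $\phi_i^{-1}R_i(\mu)\approx X_{i\mu(i)}$ for all but $o(n)$ indices $i$, and the empirical distribution of the rescaled ranks will track that of the latent values, which Theorem~\ref{Thm_main_happiness_dist} already pins to $F_\lambda$ for some $\lambda$ computable from $\vY(\mu)$. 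The same $\lambda$ will therefore work in Theorem~\ref{Thm_main_rank_dist}.

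The first step is a concentration estimate for $f_i$. The indicators $\{\mathbbm{1}[X_{ij}\le x]\}_{j\in[n]}$ are independent Bernoullis with probabilities $1-e^{-a_{ij}x}$, and under Assumption~\ref{Assumption_C_bounded} the expansion $1-e^{-u}=u+O(u^2)$ together with $a_{ij}\in[1/C,C]$ yields $\sum_j(1-e^{-a_{ij}x})=\phi_i x+O(nx^2)$, which equals $\phi_i x(1+o_\eps(1))$ throughout $x=O(\log n/n)$. Bernstein's inequality then gives $f_i(x)=\phi_i x(1+o_\eps(1))$ with failure probability $n^{-\omega(1)}$ whenever $\phi_i x\gg \log^2 n$. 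I would discretize $x$ on a geometric grid of $O(\log n)$ points in $[0,T]$, with $T=\Theta(\log n/n)$ chosen so that $1-F_\lambda(T)<\eps$, and union bound over $i\in[n]$ and grid points; monotonicity of $f_i$ then extends the approximation to all $x\in[0,T]$. The output is a single $\vX$-measurable event $\mathcal{E}_1$, of probability $1-o(1)$, on which $f_i(x)/(\phi_i x)\in[1-\eps,1+\eps]$ for every $i$ and every $x$ with $\phi_i x$ between $\log^2 n$ and $\phi_i T$.

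On the intersection of $\mathcal{E}_1$ with the high-probability event from Theorem~\ref{Thm_main_happiness_dist}, for each stable $\mu$ I obtain $\phi_i^{-1}R_i(\mu)\in[(1-\eps)X_{i\mu(i)},(1+\eps)X_{i\mu(i)}]$ for all but an $o_\eps(1)$ fraction of indices $i$; the exceptions are those with $X_{i\mu(i)}>T$ (controlled by the exponential tail of $F_\lambda$) and those with $\phi_i X_{i\mu(i)}<\log^2 n$ (whose contribution to any empirical CDF is $O(\log^2 n/n)$). Consequently,
\[
\Femp(\vX(\mu))\bigl(t/(1+\eps)\bigr)-o_\eps(1) \;\le\; \Femp(\bm{\phi}^{-1}\circ\mathbf{R}(\mu))(t) \;\le\; \Femp(\vX(\mu))\bigl(t/(1-\eps)\bigr)+o_\eps(1).
\]
A direct computation using $|e^{-a}-e^{-b}|\le e^{-\min(a,b)}|a-b|$ gives $|F_\lambda(t/(1\pm\eps))-F_\lambda(t)|=O(\eps)$ uniformly in $\lambda,t\ge 0$, so combining with Theorem~\ref{Thm_main_happiness_dist} yields $\|\Femp(\bm{\phi}^{-1}\circ\mathbf{R}(\mu))-F_\lambda\|_\infty=O(\eps)$ uniformly over $\mu\in\mathcal{S}$, as required.

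The main obstacle I anticipate is the uniformity in the first step: $f_i(x)$ must be controlled at multiplicative precision $\eps$ with failure probability $o(1/n)$ simultaneously for all $i$ and for a sufficiently fine grid of $x$, which requires sharp Bernstein-type bounds together with careful handling of the ``small $x$'' regime where $\phi_i x$ is only of moderate order and the multiplicative approximation breaks down. The remaining pieces---the tail control for $X_{i\mu(i)}>T$, and the Lipschitz-type continuity of $F_\lambda$---are comparatively routine, and the uniformity over $\mathcal{S}$ is inherited for free from Theorem~\ref{Thm_main_happiness_dist} because the concentration event $\mathcal{E}_1$ is a property of $\vX$ alone.
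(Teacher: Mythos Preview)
Your approach is genuinely different from the paper's: you try to build a single $\vX$-measurable concentration event and then plug in any stable $\mu$, whereas the paper works conditionally on $\mu$ being stable with specified value vectors $(\vx,\vy)$. Your route is attractive because it would decouple the rank-to-value bridge from stability entirely. Unfortunately, as written, it has a gap at precisely the obstacle you anticipated.

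The parameter $\lambda=\lambda(\mu)$ ranges over $[\Theta(\log n),\,\Theta(n/(\log n)^{7/8})]$ across stable matchings (see Proposition~\ref{Prop_R_likely}). At the man-optimal end, $\lambda$ is near the top of this range, so under $\Exp(\lambda)$ a typical $X_{i\mu(i)}$ is $\Theta((\log n)^{7/8}/n)$ and hence $\phi_i X_{i\mu(i)}=\Theta((\log n)^{7/8})$. That is \emph{well below} your threshold $\log^2 n$: essentially all indices, not an $O(\log^2 n/n)$ fraction, fall into your ``small $x$'' regime. Bernstein at this scale gives failure probability $\exp(-\Theta(\eps^2(\log n)^{7/8}))$, which is $o(1)$ but nowhere near $o(1/n)$, so the event $\mathcal{E}_1$ as you state it---multiplicative control for \emph{every} $i$---does not hold with probability $1-o(1)$. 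There is a second, smaller issue at the woman-optimal end: with $\lambda=\Theta(\log n)$ you need $T\gtrsim 1/\log n$ to make $1-F_\lambda(T)<\eps$, not $T=\Theta(\log n/n)$.

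The paper sidesteps both problems by conditioning on stability and $(\vX_\delta(\mu),\vY_\delta(\mu))=(\vx,\vy)$. Under this conditioning the indicators $\mathbbm{1}\{X_{ij}\le x_i\}$ for $j\ne\mu(i)$ remain independent across pairs $(i,j)$ (with slightly modified Bernoulli parameters), so the ranks $R_i$ are conditionally independent across $i$. A Chebyshev bound gives $\Pp(|R_i/(\phi_i x_i)-1|>\theta)\le\Theta((n\theta^2)^{-1})$ for each $i$, and independence across $i$ then controls the \emph{fraction} of bad indices by a Poisson tail (Proposition~\ref{Prop_most_have_good_rank_happi_ratio}), with no per-index union bound needed. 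Your idea could likely be repaired along similar lines---exploiting independence across rows of $\vX$ to bound the fraction of bad $i$ at each grid point rather than insisting all $i$ be good---but the argument as proposed does not close.
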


\subsection{Discussion}

The results characterize outcomes of all stable matchings. A slight refinement of Theorem~\ref{Thm_main_happiness_dist} will imply   that the average value of one side of the market is essentially sufficient to  determine the average value of the  other side.  Roughly, for a given stable matching $\mu$, the value of $\lambda$ in \eqref{Eqn_happiness_dist_main_thm_whp} and \eqref{Eqn_rank_dist_main_thm_whp}  is approximately the sum of the women's values in $\mu$.\footnote{Technically, the choice of $\lambda$ can be taken as the sum of the values of women after excluding a small fraction $\delta$  of the women who are the least satisfied (those with the highest latent values) under the matching $\mu$. This truncation, which is also done for technical reasons,  avoid outliers and in fact shows that the predictions still hold under even weaker notions of stability. We believe that such trimming is unnecessary with a more careful analysis.} This suggests that the average value of men is approximately $1/\lambda \approx 1/\|\vY(\mu)\|_1$. Therefore multiplying the  average values  of the two sides of the market gives approximately  $1/n$ simultaneously in all stable matchings with high probability. While we will establish such an approximation, we believe that, with a refined analysis, one should be able to show $\sup_{\mu\in\mathcal{S}}\big|n^{-1}\|\vX(\mu)\|_1\|\vY(\mu)\|_1 - 1\big| \overset{p}{\to} 0$. %

Moreover, the average value of men is also sufficient to predict the empirical value distribution on each side of the market.  For example, if we find  that $30\%$ of the men have value $h$ or higher, then we should expect $9\%$ to have value $2h$ or higher.  %

Theorem \ref{Thm_main_rank_dist} is similar  but with respect to ranks; it implies that the product of the average scaled ranks  of men and women should be asymptotically  $n$, and the the average rank on each side determines the empirical rank distributions.

Observe that the scaling in \eqref{Eqn_rank_dist_main_thm_whp} is consistent with the intuition of $\phi_i=\sumjton a_{ij}$ being the average fitness of $\m_i$. Within a stable matching, a more popular man should, on average, achieve a better (smaller) rank than a less popular one.
For instance, in a market with bounded public scores (Example~\ref{Ex_public_scores}), each man receives a number of proposals roughly inversely proportional to his fitness during the woman-proposing deferred acceptance algorithm,
implying that his rank is proportional to $\phi_i$ in the woman optimal stable matching.

The proof of Theorem~\ref{Thm_main_happiness_dist} also offers evidence that the number of stable matchings should essentially be sub-exponential. This is formally stated in Corollary~\ref{Cor_subexp_num_stable_match}.

\subsection{Results for approximately stable matchings}

The proof suggest that the characterization further extends to matchings that are only approximately  stable in the following sense.

\begin{definition}
We say a matching $\mu$ between $\cM$ and $\cW$ is $\alpha$-stable for some $0 < \alpha < 1$ if there exists a sub-market of size at least $(1-\alpha) n$ on which $\mu$ is stable; that is, there exist subsets $\cM'\subseteq \cM$ and $\cW' \subseteq \cW$ both with cardinality $|\cM'|=|\cW'| \ge (1-\alpha) n$ such that $\mu(\cM') = \mu(\cW')$ and the partial matching induced by $\mu$ between $\cM'$ and $\cW'$ is stable (within this sub-market). We refer to the stable sub-matching between $\cM'$ and $\cW'$ as the \emph{stable part} of $\mu$.
Denote the set of $\alpha$-stable matchings by $\mathcal{S}_\alpha$.
\end{definition}

The following Theorem can be derived from the quantitative versions of Theorem~\ref{Thm_main_happiness_dist} and \ref{Thm_main_rank_dist}, which will be presented in Section~\ref{sec_distribution}.

\begin{restatable}{theorem}{ThmMainApproxStable}\label{Thm_dist_body_approx_stable}
    Assume $\alpha < n^{-\eta}$ for some constant $\eta > 1/2$. Then, as $n\to\infty$,
    \begin{equation}\label{Eqn_happiness_dist_approx_stable}
        \max_{\mu\in\mathcal{S}_\alpha} \inf_{\lambda\in\R_+} \|\Femp(\vX(\mu))-F_\lambda\|_\infty \overset{p}{\to} 0 \quad\text{ and }\quad \max_{\mu\in\mathcal{S}_\alpha} \inf_{\lambda\in\R_+} \|\Femp(\bm{\phi}^{-1}\circ\mathbf{R}(\mu))-F_\lambda\|_\infty \overset{p}{\to} 0.
    \end{equation}
\end{restatable}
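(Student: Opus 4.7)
The plan is to reduce to the quantitative versions of Theorems~\ref{Thm_main_happiness_dist} and~\ref{Thm_main_rank_dist} (to be presented in Section~\ref{sec_distribution}) via a union bound over the possible stable parts. If $\mu \in \mathcal{S}_\alpha$ has stable part on $(\cM', \cW')$ with $|\cM'| = |\cW'| = (1-\alpha)n$, then the restriction $\mu|_{\cM' \to \cW'}$ is a genuine stable matching within the sub-market on $\cM' \cup \cW'$. I will apply the quantitative result to each such sub-market and union-bound over the at most $\binom{n}{\alpha n}^2$ possible choices of $(\cM', \cW')$.

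For this strategy to work, each sub-market must satisfy Assumption~\ref{Assumption_C_bounded} with a uniform constant. The restricted latent score matrices are trivially still $C$-bounded. Renormalizing $\hat{\vA}|_{\cM' \times \cW'}$ and $\hat{\vB}|_{\cW' \times \cM'}$ to be row-stochastic perturbs the row sums by a factor $1 + O(\alpha)$, and a standard perturbation argument on the Sinkhorn fixed-point iteration then yields $\bm{\phi}'|_{\cM'} = (1 + O(\alpha))\,\bm{\phi}|_{\cM'}$ (analogously for $\bm{\psi}'$), so the sub-market mutual matrix is $(C + o(1))$-bounded. Supposing the quantitative versions provide a failure probability of order $\exp(-c n^\gamma)$ uniformly for some $\gamma > 1/2$ (as the truncation and eigenspace arguments previewed in the introduction suggest), the union bound gives overall failure probability
\[
    \exp\bigl(2 \alpha n \log(e/\alpha) - c n^\gamma\bigr) = \exp\bigl(O(n^{1-\eta} \log n) - c n^\gamma\bigr) \to 0,
\]
since $1 - \eta < 1/2 < \gamma$.

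The exponential empirical distribution on the stable part then transfers to the full matching $\mu$ essentially for free: omitting an $\alpha$-fraction of coordinates changes any empirical CDF by at most $\alpha \to 0$ in Kolmogorov--Smirnov norm. For the rank statement one additionally uses that, by a standard concentration of $R_i(\mu)$ around $\phi_i X_{i\mu(i)}$, the full-market rank differs from the sub-market rank of $\mu(i)$ within $\cW'$ by a multiplicative $1 + o(1)$ factor for all but a vanishing fraction of $i \in \cM'$, and that $\bm{\phi}|_{\cM'} = (1 + o(1))\, \bm{\phi}'$; both perturbations are absorbed into $\eps$. The main obstacle is to secure the uniform tail $\exp(-c n^\gamma)$ with $\gamma$ strictly above $1/2$: the threshold $\eta > 1/2$ in the hypothesis tracks precisely the $\sqrt{n}$-scale concentration coming from Knuth's integral formula together with the truncation and fixed-point arguments, so a sharper quantitative version would correspondingly weaken the requirement on $\eta$.
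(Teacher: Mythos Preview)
Your overall strategy---union-bound over the $\binom{n}{\alpha n}^2$ possible stable sub-markets, apply the quantitative version of the main theorems inside each, and absorb the $\alpha$-fraction perturbation of the empirical CDF---is precisely the paper's argument. The observations that sub-markets inherit the contiguity assumption and that omitting an $\alpha$-fraction shifts $\Femp$ by at most $\alpha$ are both correct and used verbatim.

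There is, however, a genuine slip in the decisive arithmetic. You write that the union bound succeeds ``since $1-\eta < 1/2 < \gamma$'', i.e.\ you are \emph{assuming} the quantitative theorems deliver failure probability $\exp(-cn^\gamma)$ with some exponent $\gamma > 1/2$. They do not: Theorems~\ref{Thm_main_happiness_dist_body} and~\ref{Thm_main_rank_dist_body} give only $e^{-n^c}$ for any fixed $c\in(0,1/2)$, strictly \emph{below} $1/2$. The good news is that this weaker bound is exactly what you need. Since $\alpha < n^{-\eta}$, the number of sub-markets is $\exp\bigl(2h_b(\alpha)n + O(\log n)\bigr) = \exp\bigl(O(n^{1-\eta}\log n)\bigr)$, so the union bound requires only $c > 1-\eta$; because $\eta > 1/2$ forces $1-\eta < 1/2$, one can always pick $c\in(1-\eta,\,1/2)$. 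Your displayed inequality should therefore read $1-\eta < c < 1/2$, not $1-\eta < 1/2 < \gamma$. This is also the correct explanation for why the hypothesis $\eta>1/2$ appears: it is not that the concentration lives at scale $\sqrt{n}$, but that the available tail exponent $c$ is capped just below $1/2$.

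Your more careful handling of the rank statement (reconciling full-market ranks and fitness $\bm\phi$ with their sub-market counterparts via $1+o(1)$ multiplicative corrections) is a welcome elaboration; the paper simply says the second part ``follows analogously'' and leaves these details implicit.
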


The approximately exponential empirical distribution applies to any matching that is stable except for $o(\sqrt{n})$ agents. The key observation is that, if the entire market satisfies the contiguity assumption, then each sub-market of it is also contiguous, and we can apply union bound over all sub-markets of size $(1-\alpha)n$. 
As a corollary, we have the following result for slightly imbalanced markets.

\begin{restatable}{corollary}{CorImbalanceMarket}\label{Cor_body_imbalance}
    Consider a market consisting of $n-k$ men and $n$ women, where $k < n^{\beta}$ for some constant $\beta < 1/2$. Assume that the contiguity condition holds as in Assumption~\ref{Assumption_C_bounded}. Then, as $n\to\infty$,
    \begin{equation}\label{Eqn_happiness_dist_imbalance}
        \max_{\mu\in\mathcal{S}} \inf_{\lambda\in\R_+} \|\Femp(\vX(\mu))-F_\lambda\|_\infty \overset{p}{\to} 0 \quad\text{ and }\quad \max_{\mu\in\mathcal{S}} \inf_{\lambda\in\R_+} \|\Femp(\bm{\phi}^{-1}\circ\mathbf{R}(\mu))-F_\lambda\|_\infty \overset{p}{\to} 0.
    \end{equation}
\end{restatable}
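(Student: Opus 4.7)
The plan is to reduce Corollary~\ref{Cor_body_imbalance} to Theorem~\ref{Thm_dist_body_approx_stable} by an augmentation argument: I embed the imbalanced market into an auxiliary balanced $n\times n$ market in which every stable matching of the original becomes approximately stable, then invoke the balanced-case theorem.

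First, I would construct the auxiliary balanced market by adjoining $k$ \emph{dummy men}, choosing their canonical preferences and the women's canonical scores for them so that (i) the $n\times n$ augmented canonical matrices still satisfy Assumption~\ref{Assumption_C_bounded} with some absolute constant $C'$, and (ii) the balanced-form rate matrix of the augmented market, restricted to the real men, agrees with that of the original market up to a multiplicative $1+O(k/n)$ factor; in particular $\phi'_i/\phi_i = 1+O(k/n)$ for each real $i$. A concrete choice is to let each dummy man's canonical row be uniform and to let each woman assign a uniform $O(1/n)$ mass to the dummies while renormalizing her real-men scores by $1-O(k/n)$; a Lipschitz-type estimate for Sinkhorn scaling then yields (ii). Because the row-and-column rescalings only perturb each row of the rate matrix by a uniform $1+O(k/n)$ factor, one can couple the latent values of real pairs across the two markets so that preferences among real agents are \emph{identical}, women's preferences in the two markets agree when restricted to real men, and each real $X'_{ij} = (1+O(k/n)) X_{ij}$.

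Next, given any $\mu\in\mathcal{S}$ in the original market with unmatched set $\cW_u\subseteq\cW$ of size $k$, I extend $\mu$ to a matching $\mu'$ in the augmented market by pairing the dummy men with $\cW_u$ arbitrarily. Taking $\cM'$ to be the set of real men and $\cW':=\mu(\cM')$, the restriction $\mu'|_{\cM'\to\cW'}$ coincides with $\mu$; any would-be blocking pair $(m,w)\in\cM'\times\cW'$ of $\mu'$ in the augmented market is, by the coupling, also a blocking pair of $\mu$ in the original market, contradicting stability. Hence $\mu'$ is $(k/n)$-stable. Since $k/n<n^{\beta-1}$, setting $\eta := 1-\beta\in(1/2,1)$ places $\mu'$ in the hypothesis $\alpha<n^{-\eta}$ of Theorem~\ref{Thm_dist_body_approx_stable}. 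Applying that theorem to the augmented market yields, with probability $1-o(1)$ and simultaneously over all $\mu'$ arising this way, that $\Femp(\vX(\mu'))$ and $\Femp((\bm{\phi}')^{-1}\circ\mathbf{R}(\mu'))$ each lie within $o(1)$ in KS norm of some exponential CDF. Deleting the $k$ dummy coordinates alters the empirical CDF by at most $k/n=o(1)$, and the $(1+O(k/n))$ perturbations in both the values and the fitness vector only rescale the approximating $\lambda$, preserving membership in the class of exponential CDFs. Since every $\mu\in\mathcal{S}$ of the original market lifts to some such $\mu'$, both convergences in~\eqref{Eqn_happiness_dist_imbalance} follow.

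The main obstacle is the quantitative Sinkhorn-stability estimate underlying step 1: establishing $\phi'_i/\phi_i = 1+O(k/n)$ on the real indices, uniformly over matrices satisfying Assumption~\ref{Assumption_C_bounded}. This is a Lipschitz-type bound for the matrix-scaling map, which is known to be dimension-free on $C$-bounded inputs, but the proof must deliver a rate strong enough to dominate the $n^{\beta-1}$ deficit. Any polynomial rate suffices once $\beta<1/2$, so once this stability is in place the remaining steps of the argument are routine applications of Theorem~\ref{Thm_dist_body_approx_stable} plus elementary KS-norm bookkeeping.
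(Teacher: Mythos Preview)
Your approach is correct but takes a more roundabout route than the paper's. The paper does not augment the market; instead it observes that any stable matching $\mu$ in the imbalanced market restricts to an \emph{exactly} stable matching in the balanced sub-market on $(\cM,\mu(\cM))$ of size $n-k$. The proof then proceeds verbatim as for Theorem~\ref{Thm_dist_body_approx_stable}: union-bound over the $\binom{n}{k}$ choices of the matched-women subset and apply the quantitative Theorems~\ref{Thm_main_happiness_dist_body} and~\ref{Thm_main_rank_dist_body} to each such sub-market (each inherits the contiguity condition). Since $k<n^\beta$ with $\beta<1/2$, one has $\binom{n}{k}\le\exp(O(n^\beta\log n))=e^{o(n^c)}$ for any $c\in(\beta,1/2)$, and the union bound closes.

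Your augment-then-invoke-Theorem~\ref{Thm_dist_body_approx_stable} route is logically sound, but note that Theorem~\ref{Thm_dist_body_approx_stable} is itself proved by restriction to sub-markets; so you are effectively doing ``augment, then restrict'', one step more than necessary. The Sinkhorn-stability estimate you flag as the main obstacle is not needed in the paper's approach: no dummy agents are introduced, so there is no augmented balanced form to compare against. (A residual version of the issue---comparing balanced forms across different $(n-k)$-sub-markets when interpreting $\bm\phi$ in the statement---is present in both arguments, but it is a $1+O(k/n)$ row-wise perturbation absorbed by the $\inf_\lambda$, and the paper does not dwell on it.)
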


\begin{remark}
    The results will not hold if there is a linear imbalance in the market, i.e., $k \propto n$. This is because in such markets the men achieve a constant average rank \cite{ashlagi2017unbalanced}, and therefore the convergence to the exponential distribution is impossible. 
\end{remark}

These results are not necessarily tight and one may weaken the constraints on $\alpha$ and $k$ with a more careful analysis. Exploring other notions of approximate stability is left for future work. %

\section{Intuition and Proof Ideas \label{sec_prep_proof}}

This section offers intuition and the  key ideas behind the proofs.

\subsection{Intuition}

Let us start with providing a high-level intuition of both why the result is true, and how we should expect the proof to go. The actual proof does not follow the intuition exactly due to some technical difficulties that need to be overcome. It is possible that one can find a proof that follows the intuition below more directly. 

At a very high level, the result follows from a {\em union bound}. There are $n!$ potential matchings. Based on {\em a-priori} preferences, for each matching $\mu$, one could compute the probability $P_\mu$ that $\mu$ is stable under the realized preferences. A union bound argument just establishes that 
\begin{equation}\label{eq:int1}
    \sum_{\text{$\mu$ does not satisfy the conditions of \eqref{Eqn_happiness_dist_main_thm_whp}}} P_\mu = \exp(-\Omega_\epsilon(n))
\end{equation}

Establishing \eqref{eq:int1} directly appears to be difficult. A more approachable statement is a universal bound on the number of stable matchings overall:
\begin{equation}\label{eq:int2}
    \sum_{\mu} P_\mu = \exp(o(n)).
\end{equation}
That is, in expectation, there are not so many stable matchings. 

Consider the triplet of random variable $(X,Y,\mu)$, where $X$ and $Y$ are preferences sampled according to the model, and $\mu$ is a uniformly random matching. Let $\mathcal{S}$ be the event that $\mu$ is stable under preference profiles $(X,Y)$. If there are few stable matchings overall, as \eqref{eq:int2} implies, then we have
\begin{equation}
    \label{eq:int3}
\Pp(\mathcal{S})=\exp(o(1))\cdot n!^{-1}
\end{equation}
Another way of uniformly sampling from $\mathcal{S}$ is as follows. 
First, sample $(X_1,Y,\mu)\in_U \mathcal{S}$. Then resample $X_2$ conditioned on $(X_2,Y,\mu)\in \mathcal{S}$. The triple $(X_2,Y,\mu)$ is a uniform element of $\mathcal{S}$. Note that for a fixed $(Y,\mu)$ the marginal distribution of $X_2$ conditioned on 
$(X_2,Y,\mu)\in\mathcal{S}$ is fairly simple to reason about: each member should prefer the pairing assigned to them by $\mu$ to all other potential blocking matches. In such a resampling, as we shall see, the empirical exponential distribution appears naturally from large deviations theory.  

Suppose we prove that for all $(Y,\mu)$,
\begin{equation}
    \label{eq:int4}
\Pp_{X_2:(X_2,Y,\mu)\in\mathcal{S}}(\text{$X_2$ does not satisfy the conditions of \eqref{Eqn_happiness_dist_main_thm_whp}} ) = \exp(-\Omega_\epsilon(n))
\end{equation}
Putting these together, we would get 
\begin{multline*}
    \Pp\big((X_2,Y,\mu)\in \mathcal{S} \wedge(\text{$X_2$ does not satisfy the conditions of \eqref{Eqn_happiness_dist_main_thm_whp}} )\big) \\ = \Pp((X_1,Y,\mu)\in \mathcal{S}) \cdot \exp(-\Omega_\epsilon(n)) = n!^{-1}\cdot  \exp(-\Omega_\epsilon(n)),
\end{multline*}
implying \eqref{eq:int1} (together with a similar statement about $Y$).

A certain amount of technical work is needed to make the above blueprint go through. In particular, since our bounds need to be 
pretty tight, we need to worry about tail events. We end up having to perform the above resampling trick multiple times. 

\paragraph{Why do we need the boundedness assumption \ref{Assumption_C_bounded}?} It is worth noting that while the boundedness assumption might not be the weakest assumption under which our results hold, some assumptions on the market are inevitable. In particular, if the market can be split into two independent balanced markets $A$ and $B$, then there is no connection between the fortunes of men in market $A$ and their fortunes in market $B$, and the empirical distribution of values on each side will be a mixture of two exponential distributions. 
Things will get even more complicated if markets $A$ and $B$ are not entirely independent, but are connected by a small number of agents. It is still possible that some version of Theorem~\ref{Thm_main_happiness_dist} holds, but it will need to depend on the eigenspaces corresponding to large eigenvectors of the matrix. 

It is worth noting that even \eqref{eq:int2} fails to hold when we do not have the boundedness assumption. Consider a market consisting of $n/2$ small markets with just $2$ men and $2$ women in each. Under the uniform preferences within each market, the expected number of stable matchings is $9/8$, thus 
\begin{equation*} 
    \sum_{\mu} P_\mu = (9/8)^{n/2} \neq \exp(o(n)).
\end{equation*}

\subsection{Proof sketch}

\paragraph{Case with uniformly random preferences.} 
Let us first look at the classic case where agents' preferences are generated independently and uniformly at random, i.e., 
all canonical scores equal  $1/n$. %
The proof in this case is more straightforward due to the symmetry among agents and the established high probability bound on the number of stable matchings \citep{pittel1989average}. We keep however the  discussion informal in this section.

For a given matching $\mu$, we study the conditional distribution of the value vectors $\vX(\mu),\vY(\mu)\in\R^n$ conditional on stability of $\mu$. Conditional also on women's value vector $\vY(\mu)=\vy$, man $\m_i$'s value $X_i(\mu)$ must satisfy $X_i(\mu) < X_{ij}$ for each $j\ne \mu(i)$ with $Y_{ji} < y_j$ in order not to form a blocking pair. Since $X_{ij}$ and $Y_{ji}$ are i.i.d. samples from $\Exp(1)$ for each $j\ne \mu(i)$, one should expect $X_i(\mu)$ to be effectively less than the minimum of about $\sum_{j\ne \mu(i)} 1 - e^{-y_j} \approx \|\vy\|_1$ number of $\Exp(1)$ random variables. Such a constraint acts independently on each $X_i$ (conditional on $\vY(\mu)=\vy$), and therefore in the posterior distribution one should expect $\vX(\mu)$ to behave like i.i.d. samples from $\Exp(\|\vy\|_1)$.

Concretely, conditional on $\vU(\mu)=\vu=1-F(\vx)$ and $\vV(\mu)=\vv=1-F(\vy)$, where we recall that $F(z) = 1-e^{-z}$ is the CDF of $\Exp(1)$ and is applied component-wise on the value vectors. The likelihood of $(\vx,\vy)$ when $\mu$ is stable is $\prod_{j\ne \mu(i)} (1-u_iv_j)$ \citep{pittel1989average}. With some crude first order approximation (namely, $1-z\approx e^{-z}$), this expression can be approximated by
\begin{equation}\label{Eqn_approx_intuition}
    \prod_{j\ne \mu(i)} (1-u_iv_j)\approx \exp\left(-\sum_{i,j\in[n]}x_i y_j\right)=\exp(-\|\vx\|_1\|\vy\|_1),
\end{equation}
where we also put in the terms $x_i y_{\mu(i)}$ for $i\in [n]$ despite their absence in the original product. By the Bayes rule, conditional on $\vY(\mu)=\vy$ and that $\mu$ is stable, the distribution of $\vX(\mu)$ is approximately $p(\vx|\mu\in\mathcal{S}, \vY=\vy)\propto \exp(-\|\vx\|_1\|\vy\|_1)\cdot\prod_{i=1}^n e^{-x_i} = \prod_{i=1}^n \exp(-(1+\|\vy\|_1)x_i)$. Note that this is the joint density of the $n$-fold product of $\Exp(1+\|\vy\|_1)$. Our main theorems in the case with uniformly random preferences follow directly from the the convergence of empirical measures.

\paragraph{The general case.}

The entire result can be viewed abstractly from the following lens: For any matching $\mu$, we expect the value vectors $(\vX(\mu),\vY(\mu))$ to behave ``nicely'' with very high probability conditional on stability of $\mu$, so that even if we apply union bound on all stable matchings (which we will show to be ``rare'' separately) it is still unlikely to see any ``bad'' value vectors. To do so requires a careful analysis on the conditional distribution of value (given stability) $\mathcal{D}_\mu\in\Delta(\R^n_+\times\R^n_+)$, which depends on both the (unconditional) preference distribution (the ``prior'') and the conditional probability that $\mu$ is stable given a pair of value vector $\vX(\mu)=\vx$ and $\vY(\mu)=\vy$, which we will denote by $p_\mu(\vx,\vy)$. We will define the ultimate ``nice'' event to be the $\eps$-proximity of empirical distribution of value (or rescaled rank) to some exponential distribution, but unsurprisingly it is hard to analyze this event directly from $\mathcal{D}_\mu$, which is complicated itself, in one single step. Instead, we will follow a ``layer-by-layer peeling'' of the desirable events. Namely, we will find a nested sequence of subsets $\R^n_+\times \R^n_+ = \Omega_0 \supseteq \Omega_1 \supseteq \cdots\supseteq \Omega_K$ representing events on the joint value vectors of a stable matching, with $\Omega_K$ the desired event that the empirical distribution of value for men is $\eps$-close to some exponential distribution. Step by step, we will show that a stable matching, conditional on its value vectors in $\Omega_i$, must have value vectors in $\Omega_{i+1}$ with very high probability. Here is the roadmap to establishing these increasingly ``nice'' events:
\begin{enumerate}[listparindent=1cm, label=(\alph*)]   \item\label{Step_1_in_proof_sketch} As a first step, we approximate $p_\mu(\vx,\vy)$, the likelihood of value vectors $\vx$ and $\vy$ in a stable matching $\mu$, by the function $q(\vx,\vy)=\exp(-n\vx^\top\vM\vy)$. That is, the log likelihood of value vectors in a stable matching is approximately bilinear in the two vectors. To establish this, we identify a weak regularity condition on the value vectors of all stable matchings in terms of the first and second moments and extremal quantiles of the value vectors, under which the approximation holds (see Section~\ref{subsec_reg_of_happiness_moment_conditions}). Such a condition is met by all stable matchings with high probability (see Appendix~\ref{Append_weak_regular_scores} for details). The proof primarily consists of standard analysis of the deferred acceptance algorithm and careful use of first- and second-order approximation of $p_\mu(\vx,\vy)$. Here we use the fact that the men-proposing and women-proposing deferred acceptance algorithms output the extremal outcomes with respect to the two sides' values among all possible stable matchings.
    \item\label{Step_2_in_proof_sketch} In the expression for $\qxy$, the value vectors relate through the matching matrix $\vM$. However, we show next that, in stable matchings, we can further simplify things by approximately factoring $n\vX(\mu)^\top\vM\vY(\mu)$ into a product $\|\vX(\mu)\|_1\|\vY(\mu)\|_1$ of sums of values on the two sides. More specifically, both $\vM\vY(\mu)$ and $\vM^\top\vX(\mu)$ lie near the maximal eigenspace of $\vM$, which is the span of $\mathbf{1}$ under Assumption~\ref{Assumption_C_bounded} (see Section~\ref{Subsec_proximity_eigensubsp}). The proof uses a fixed point argument to deduce that $\vM\vY(\mu)$ depends almost deterministically on $\vM^\top\vX(\mu)$ and, symmetrically, $\vM^\top\vX(\mu)$ on $\vM\vY(\mu)$, which forces both quantities to lie near the eigenspace.
    Along the way, we also deduce an upper bound for the (unconditional) probability for $\mu$ to be stable (see Section~\ref{Subsec_uncond_stable_prob}), suggesting an sub-exponential upper bound on the typical number of stable matchings (Corollary~\ref{Cor_subexp_num_stable_match}).
    \item\label{Step_3_in_proof_sketch} Under the previous event, the men's values behaves approximately like i.i.d. exponential samples with rate $\|\vY(\mu)\|_1$ conditional on stability of $\mu$ and $\vY(\mu)$ -- in fact, they are conditionally independent and nearly identically distributed. The result on the empirical distribution of men's values
    follows immediately from a concentration inequality of Dvoretzky–Kiefer–Wolfowitz (DKW) type, generalized for nearly identically distributed independent random variables (Lemma~\ref{Lemma_dkw_non_identical}).%
    \item Finally, we translate values into ranks. Using the classic first- and second-moment method, we show that for majority of the agents in the market, the rescaled rank (based one's own scores) lies close to the value. This implies Theorem~\ref{Thm_main_rank_dist}.
\end{enumerate}

There is one caveat, however: In \ref{Step_1_in_proof_sketch}, a second order expansion of $p_\mu(\vx,\vy)$ is required in order to justify the approximation with $q(\vx,\vy)$. As a result, we need to control second order behavior of value, i.e., $\|\vX(\mu)\|_2^2=\sumiton X_i(\mu)^2$, in any stable matching $\mu$. However, the second moment cannot be easily controlled due to the heavy tail of $\Exp(1)$ (indeed, the moment generating function for $X^2$ does not exist for $X\sim\Exp(1)$). To resolve this issue, we perform a truncation in the upper $\delta/2$-quantile of the values on each side. By choosing $\delta$ sufficiently small, we can ensure that the truncation only affects the empirical distribution by an arbitrarily small amount in $\ell^\infty$ norm. As a price to pay, in \ref{Step_2_in_proof_sketch} and \ref{Step_3_in_proof_sketch}, we will have to deal with not just all stable matchings, but all \emph{partial} matchings on any $(1-\delta)$-fraction of the market that is stable. See Section~\ref{Subsec_partial_match_and_truncation} for the technical definition of truncated and partial matchings.

\section{Preliminaries} 

\subsection{Probability of stability and its approximation}

For each matching $\mu$, define the function $p_\mu: \R^n_+\times\R^n_+ \to [0,1]$ to be probability that $\mu$ is stable given values of men and women in $\mu$. That is 
\begin{equation}\label{Eqn_def_pmu}
    p_\mu(\vx,\vy) = \Pp(\mu\in\mathcal{S} | \vX(\mu)=\vx,\vY(\mu)=\vy).
\end{equation}
Just like the integral formula used in \citet{knuth1976mariages} and \citet{pittel1989average,pittel1992likely} to study matching markets with uniformly random preferences, the probability of a matching $\mu$ being stable can be similarly characterized by an integral
\begin{multline}\label{Eqn_integral_formula_orig}
   \Pp(\mu\in\mathcal{S}) = \E_{\vX\sim\bigotimes_{i=1}^n\Exp(a_{i,\mu(i)}),\vY\sim\prod_{i=1}^n\Exp(b_{i,\mu^{-1}(i)})}[p_\mu(\vX,\vY)] \\
    = \int_{\R_+^n\times \R_+^n} p_\mu(\vx,\vy) \prod_{i=1}^n f_{a_{i,\mu(i)}}(x_i)f_{b_{i,\mu^{-1}(i)}}(y_i) \dd\vx \dd\vy.
\end{multline}

The function $p_\mu$ can be further expressed in closed form. Condition on the value vector $\vX(\mu)
= \vx$ and $\vY(\mu)
= \vy$ and sample the rest of the values $X_{ij}$ and $Y_{ji}$ for all $j\ne \mu(i)$. Each pair of $i,j\in[n]$ with $j\ne \mu(i)$ may form a blocking pair when $X_{ij} < x_i$ and $Y_{ji} < y_j$, which event happens with probability $(1-\exp(-a_{ij}x_i))(1-\exp(-b_{ji}y_j))$. For $\mu$ to be stable, there must be no blocking pairs and thus
\begin{equation}
    p_\mu(\vx,\vy) = \prod_{\substack{i,j\in[n]\\j\ne \mu(i)}} \left(1 -  \big(1-e^{-a_{ij}x_i}\big)\big(1-e^{-b_{ji}y_j}\big) \right).
\end{equation}

Under Assumption~\ref{Assumption_C_bounded}, i.e., $\max_{i,j_1,j_2} a_{ij_1}/a_{ij_2} \le C^2$ and $\max_{j,i_1,i_2} b_{ji_1}/b_{ji_2} \le C^2$, we observe a simple upper bound
\begin{equation}\label{Eqn_naive_bound_pxy}
    p_\mu(\vx,\vy) \le \prod_{\mu(i)\ne j} \Big(1 - \big(1-e^{-\hat{x}_i/C^2}\big)\big(1-e^{-\hat{y}_j/C^2}\big) \Big),
\end{equation}
where $\hat{x}_i = x_i a_{i,\mu(i)}$ and $\hat{y}_j = y_j b_{j,\mu^{-1}(j)}$ for $i,j\in[n]$ are the renormalized values (thus named because they have unit mean).
This bound is fairly conservative and crude for our final purpose, but will prove useful for establishing preliminary results.

To further simplify the analysis, we recognize that, through first order approximation, 
\begin{equation}\label{Eqn_goal_approx_p_with_q}
    p_\mu(\vx,\vy) \approx \prod_{i\neq j}\left(1-a_{ij}b_{ji} x_i y_j\right) \le \exp\Big(-\sum_{i\ne j} a_{ij}b_{ji} x_i y_j\Big) \approx \exp(-n \vx^\top \vM \vy).
\end{equation}
Define the function
\begin{equation*}
    q(\vx,\vy) := \exp(-n \vx^\top \vM \vy) = \exp\bigg(-n \sum_{i,j=1}^n m_{ij}x_i y_j\bigg).
\end{equation*}
In the next section we  discuss conditions under which the function $q(\vx,\vy)$ offers a good  approximation for $p_\mu(\vx,\vy)$.

\subsection{Partial matchings and truncation} \label{Subsec_partial_match_and_truncation}

In order to study also approximately stable matchings, as well as for technical reasons, we need to consider matchings that are stable on a significant subset of the market. We first formalize a general partial matching  then describe a 
 particular way to form stable partial matchings. %

Let $\mathcal{M}'\subseteq \mathcal{M}$ and $\mathcal{W}'\subseteq \mathcal{W}$ be subsets of the men and women with cardinality $|\cM'|=|\cW'|= n'$. A {\em partial matching} $\mu':\cM'\to\cW'$ is a bijection between $\cM'$ and $\cW'$. Denote the values of men among $\cM'$ and women among $\cW'$ in the partial matching $\mu'$ by $\vX_{\cM'}(\mu')$ and $\vY_{\cW'}(\mu')$, respectively. While it may be natural to view $\vX_{\cM'}(\mu')$ and $\vY_{\cW'}(\mu')$ as $n'$-dimensional vector, we choose to view them as $n$-dimensional vectors where components corresponding to men in $\cM\backslash\cM'$ and women in $\cW\backslash\cW'$ are zero (recall that since small is better, zero is the best possible latent value). Therefore, conditional on $\vX_{\cM'}(\mu')=\vx'$ and $\vY_{\cW'}(\mu')=\vy'$ for $\vx',\vy'\in\R^n$ supported on $\cM'$ and $\cW'$, respectively, the probability that $\mu'$ is stable (as a matching between $\cM'$ and $\cW'$) is simply $p_{\mu'}(\vx',\vy')$.%

Given a full stable matching $\mu$ and any $\delta > 0$, we define the following routine to construct a stable partial matching of size $n-\floor{\delta n}$: Let $\bar{\cM}_{\mu,\delta/2}\subseteq \cM$ be the subset of $\floor{\delta n / 2}$ men with the largest value (i.e., the least happy men) in $\mu$, and similarly let $\bar{\cW}_{\mu,\delta/2}\subseteq \cW$ be the set of $\floor{\delta n / 2}$ least happy women. Construct $\cM'_{\mu,\delta}\subseteq \cM \backslash (\bar{\cM}_{\mu,\delta/2} \cup \mu(\bar{\cW}_{\mu,\delta/2}))$ of cardinality $n-\floor{\delta n}$. This is always possible because $|\bar{\cM}_{\mu,\delta/2} \cup \mu(\bar{\cW}_{\mu,\delta/2})| \le 2 \floor{\delta n / 2} \le \floor{\delta n}$ and in fact here can be multiple ways to choose $\cM'_{\mu,\delta}$. The specific way $\cM'_{\mu,\delta}$ is chosen (when $|\cM \backslash (\bar{\cM}_{\mu,\delta/2} \cup \mu(\bar{\cW}_{\mu,\delta/2}))| > n-\floor{\delta n / 2}$) is irrelevant to our discussion, but it may be helpful to assume that the choice is made based on some canonical ordering of the men so there is no extra randomness. Let $\mu_\delta:\cM'_{\mu,\delta}\to\mu(\cM'_{\mu,\delta})$ be the partial matching induced by $\mu$ on $\cM'_{\mu,\delta}$ and their partners. Define the \emph{$\delta$-truncate value} vector for $\mu$ to be $\vX_\delta(\mu):= \vX_{\cM'_{\mu,\delta}}(\mu_\delta)$ and $\vY_\delta(\mu):= \vY_{\mu(\cM'_{\mu,\delta})}(\mu_\delta)$.

\section{Regularity of values in stable matchings}\label{sec_reg_of_happiness}

In this section, we  establish several (high probability) properties of stable matchings.

\subsection{Moment behavior and approximation of the conditional stable probability}\label{subsec_reg_of_happiness_moment_conditions}

We first consider a set of events in the value space, which can be thought of as regularity conditions for the approximation \eqref{Eqn_goal_approx_p_with_q} of $p_\mu(\vx,\vy)$ by $q(\vx,\vy)$. Define
\begin{equation}\label{Eqn_def_underR1}
    \underline{\mathcal{R}}_1 = \{\vu\in\R_+^n : \|\vu\|_1 \ge \underline{c}_1 \log n\},
\end{equation}
\begin{equation}\label{Eqn_def_overR1}
    \overline{\mathcal{R}}_1 = \{\vu\in\R_+^n : \|\vu\|_1 \le \overline{c}_1 n (\log n)^{-7/8}\},
\end{equation}
\begin{equation}\label{Eqn_def_R2}
    \mathcal{R}_{2} = \{(\vu,\vv)\in\R_+^n\times\R_+^n : \vu^\top\vM\vv \le c_2 (\log n)^{1/8}\},
\end{equation}
where $\underline{c}_1,\overline{c}_1,c_2\in\R_+$ are constants to be specified later. 
Let
\begin{equation*}
    \mathcal{R}_1 = \underline{\mathcal{R}}_1\cap\overline{\mathcal{R}}_1 \enspace\text{ and }\enspace \mathcal{R} = (\mathcal{R}_1 \times \R_+^n) \cap (\R_+^n \times \mathcal{R}_1) \cap \mathcal{R}_2 = \{(\vx,\vy)\in\mathcal{R}_2: \vx,\vy\in \mathcal{R}_1 \}.
\end{equation*}
The region $\mathcal{R}$ should capture the typical behavior of $(\vX_\delta(\mu),\vY_\delta(\mu))$ for any stable matching $\mu$.
\begin{proposition}\label{Prop_R_likely}
For any fixed $\delta > 0$ and $c \in (0,1/2)$, the constants $\underline{c}_1,\overline{c}_1$, and $c_2$ in \eqref{Eqn_def_underR1}-\eqref{Eqn_def_R2} can be appropriately chosen such that
\begin{equation}
    \Pp(\exists \mu\in\mathcal{S}, (\vX_\delta(\mu),\vY_\delta(\mu))\notin\mathcal{R}) \lesssim e^{-n^c}
\end{equation}
asymptotically as $n\to\infty$.
\end{proposition}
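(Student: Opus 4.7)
I decompose the failure event $\{\exists\mu\in\mathcal{S}:(\vX_\delta(\mu),\vY_\delta(\mu))\notin\mathcal{R}\}$ into three sub-events, one for each defining constraint of $\mathcal{R}$, and bound the probability of each uniformly in $\mu$.

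For the norm-constraint events $\underline{\mathcal{R}}_1$ and $\overline{\mathcal{R}}_1$, I would exploit the lattice structure of the stable-matching set: pointwise, $X_i(\mu_M)\leq X_i(\mu)\leq X_i(\mu_W)$ for every stable $\mu$, where $\mu_M$ and $\mu_W$ are the men- and women-proposing deferred acceptance outputs, and symmetrically for women. This reduces the uniform-in-$\mu$ bounds on $\|\vX(\mu)\|_1$ and $\|\vY(\mu)\|_1$ to the corresponding extremal matchings, which I would control via a standard running-time analysis of deferred acceptance adapted to the logit preference model. Under Assumption~\ref{Assumption_C_bounded}, this analysis yields $\|\vX(\mu_M)\|_1=\Theta(\log n)$ and $\|\vX(\mu_W)\|_1=\Theta(n/\log n)$ with deviation probability at most $e^{-n^c}$, comfortably inside the thresholds defining $\mathcal{R}_1$. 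The $\delta$-truncation preserves the upper bound trivially; for the lower bound, order-statistic estimates show that removing the largest $O(\delta n)$ entries of $\vX(\mu_M)$ can deplete its $\ell^1$-norm by at most a factor $O(\delta)$, so $\|\vX_\delta(\mu)\|_1=\Omega(\log n)$ persists after truncation.

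The main obstacle is the cross-product bound $\vu^\top\vM\vv\leq c_2(\log n)^{1/8}$. The crude estimate \eqref{Eqn_naive_bound_pxy} combined with a first-order expansion gives $p_\mu(\vx,\vy)\lesssim\exp(-c_0 n\,\vx^\top\vM\vy)$ under Assumption~\ref{Assumption_C_bounded}, so the integral formula \eqref{Eqn_integral_formula_orig} yields, for each matching $\mu$,
\[
\Pp\bigl(\mu\in\mathcal{S},\,\vX(\mu)^\top\vM\vY(\mu)\geq t\bigr)\leq e^{-c_0 n t}.
\]
The difficulty is that a naive union bound over all $n!$ matchings only closes when $t=\Omega(\log n)$, whereas we require $t$ as small as $(\log n)^{1/8}$. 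The plan to bridge this gap is to first condition on the high-probability events already established, restricting the realized $(\vX,\vY)$ to a small region of $\R_+^n\times\R_+^n$; within this region, the lattice structure together with an extremal-structure count (traceable via the rotations of the stable lattice under the contiguity assumption) reduces the effective number of matchings one must enumerate to $\exp(o(n))$, making the union bound close. An alternative is to sharpen the per-matching bound by including the second-order term in the expansion of $\log p_\mu$, gaining an extra factor in the exponent sufficient to absorb the $n!$ directly. Either route is delicate and forms the technical heart of the proposition.
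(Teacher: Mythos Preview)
Your plan for the $\mathcal{R}_1$ bounds is sound and close to the paper's: the lower bound $\|\vX_\delta(\mu)\|_1\gtrsim\log n$ is indeed obtained via the lattice structure and a running-time analysis of man-proposing deferred acceptance (Lemmas~\ref{Lemma_DA_prop_num}--\ref{Lemma_U1_ge_ln_n}). The upper bound, however, is \emph{not} obtained by a symmetric WOSM analysis; the paper instead derives it indirectly from the product bound (see below) combined with the lower bound on the other side (Corollary~\ref{Cor_V1_le_n_over_ln}), which is why the threshold is $n/(\log n)^{7/8}$ rather than $n/\log n$.

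The real gap is in your treatment of $\mathcal{R}_2$. You are right that the crude bound $\Pp(\mu\in\mathcal{S},\,\vX^\top\vM\vY\ge t)\le e^{-c_0 nt}$ does not survive a union bound over $n!$ matchings at $t\asymp(\log n)^{1/8}$. But the paper does \emph{not} fix this by counting rotations or by a second-order expansion; it fixes it by not throwing away the prior. Working in the variables $u_i=1-e^{-\hat x_i}$, $v_j=1-e^{-\hat y_j}$ (uniform on $[0,1]$), one has the pointwise bound $p_\mu\le\exp\bigl((n-\|\vu\|_1\|\vv\|_1)/C^4\bigr)$, and integrating over $\{\|\vu\|_1\|\vv\|_1\ge\theta_3 n(\log n)^{1/8}\}$ against the \emph{uniform} prior introduces the volume factor $(st)^{n-1}/((n-1)!)^2$ for the sum variables $s=\|\vu\|_1$, $t=\|\vv\|_1$. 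This factorial in the denominator is exactly what recovers the $1/n!$ needed for the union bound; after Stirling, the exponent becomes roughly $n-\theta_3 n(\log n)^{1/8}/C^4+2n\ln\ln n$, and since $(\log n)^{1/8}\gg\ln\ln n$ the bound is $o(e^{-\kappa n})/n!$ for any $\kappa$ (Lemma~\ref{Lemma_U1V1_le_O_n_ln}). So the ``naive'' union bound does close once the per-matching probability is estimated sharply rather than by dropping the prior.

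Your proposed route~(a) would be circular: the only sub-exponential control on the number of stable matchings in the paper (Corollary~\ref{Cor_subexp_num_stable_match}) is proved \emph{after} and \emph{using} Proposition~\ref{Prop_R_likely}. Route~(b) does not help either, since the slack is not in the approximation of $\log p_\mu$ but in how you integrate it.
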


\begin{remark}
    It is helpful to compare the bounds \eqref{Eqn_def_underR1}-\eqref{Eqn_def_R2} to classic results in the setting with uniform preferences (cf. \citep{pittel1989average,pittel1992likely}): Namely, the optimal average rank is $\Theta(\log n)$, the pessimal average rank is $\Theta(n/\log n)$, and the product of the average ranks on the two side is asymptotic to $n$ in all stable matchings. Here, due to heterogeneity of preferences, we pay a small price of an extra constant or $(\log n)^{1/8}$ factor.
\end{remark}

We will defer the proof to Appendix~\ref{Append_weak_regular_scores}. %
In fact, we will establish even finer control over the truncated value vectors in stable matchings. For a matching $\mu$, we define $\vU(\mu) = F(\hat{\vX}(\mu))$ and $\vV(\mu) = F(\hat{\vY}(\mu))$, where the (standard exponential CDF) function $F(z)=1-e^{-z}$ is applied coordinate-wise to the renormalized value vectors. Through relating $\vX$ and $\vY$ to $\vU$ and $\vV$, we will specify a subregion $\mathcal{R}^\star \subseteq \mathcal{R}$ in which $p_\mu(\vx,\vy)$ can be well approximated by $q(\vx,\vy)$.\footnote{Technically, $\mathcal{R}^\star$ has to be defined in the context of a matching $\mu$, as $\vU$ and $\vV$. Here we drop the dependency for convenience. See Corollary~\ref{Cor_Rstar_likely} for the formal definition of $\mathcal{R}^\star(\mu)$.} We will see in Corollary~\ref{Cor_Rstar_likely} that with high probability no stable matchings $\mu$ have $(\vX_\delta(\mu),\vY_\delta(\mu))$ outside $\mathcal{R}^\star$, from which Proposition~\ref{Prop_R_likely} follows.

The conditions for $\mathcal{R}^\star$ are sufficiently strong to bound the functions $p_\mu$ and $q$ within an $\exp(o(n))$ factor of each other. This is formalized as follows.

\begin{restatable}{proposition}{propRatioPQHighProbeon}\label{Prop_ratio_p_q_high_prob}
For any $\delta>0$ and $c \in (0,1/2)$, there exists an absolute constant $\theta\in(0,\infty)$ such that the probability that a matching $\mu$ is stable with value vectors \emph{not} satisfying
\begin{equation}\label{Eqn_prop_ratio_p_q_high_tag}
    \frac{p_\mu(\vX_\delta(\mu),\vY_\delta(\mu))}{q(\vX_\delta(\mu),\vY_\delta(\mu))} \le \exp\left(\frac{\theta n}{(\log n)^{1/2}}\right)\tag{$\star$}
\end{equation}
is $\frac{\exp(-n^c)}{n!}$. 
In other words, with high probability, there exist no stable matchings $\mu$ whose post-truncation value vectors $\vX_\delta(\mu)$ and $\vY_\delta(\mu)$ satisfy \eqref{Eqn_prop_ratio_p_q_high_tag}.
\end{restatable}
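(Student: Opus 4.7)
My plan is to decompose the statement into (a) a deterministic estimate on the good region $\mathcal{R}^\star$ and (b) the per-matching probabilistic bound $\Pp\bigl(\mu\in\mathcal{S},\,(\vX_\delta(\mu),\vY_\delta(\mu))\notin\mathcal{R}^\star\bigr)\le \exp(-n^c)/n!$, which is supplied by Corollary~\ref{Cor_Rstar_likely}. Given the corollary, it suffices to prove that every pair $(\vx,\vy)\in \mathcal{R}^\star$ deterministically satisfies \eqref{Eqn_prop_ratio_p_q_high_tag}.

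\paragraph{Taylor comparison.} First I set $\alpha_{ij}:=a_{ij}x_i$ and $\beta_{ji}:=b_{ji}y_j$ and expand $\log p_\mu$. Combining $\log q(\vx,\vy)=-\sum_{i,j}\alpha_{ij}\beta_{ji}$ with $\log p_\mu(\vx,\vy)=\sum_{j\neq\mu(i)}\log(1-(1-e^{-\alpha_{ij}})(1-e^{-\beta_{ji}}))$ and the inequality $\log(1-z)\le -z$ gives
\begin{equation*}
    \log\frac{p_\mu(\vx,\vy)}{q(\vx,\vy)} \;\le\; \underbrace{\sum_{i=1}^n \alpha_{i,\mu(i)}\beta_{\mu(i),i}}_{\text{(I)}} \;+\; \underbrace{\sum_{j\neq\mu(i)}\bigl[\alpha_{ij}\beta_{ji} - (1-e^{-\alpha_{ij}})(1-e^{-\beta_{ji}})\bigr]}_{\text{(II)}}.
\end{equation*}
Term (I) accounts for diagonal pairs present in $q$ but absent from $p_\mu$. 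For (II) I will use the pointwise bound $\alpha\beta - (1-e^{-\alpha})(1-e^{-\beta}) \le \tfrac12 \alpha\beta(\alpha+\beta)$, valid for $\alpha,\beta\in[0,2]$ via $1-e^{-t}\ge t-t^2/2$; the moment bounds below will force $\alpha_{ij},\beta_{ji}\to 0$ uniformly, keeping us in the valid range.

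\paragraph{Moment bounds inside $\mathcal{R}^\star$.} The region $\mathcal{R}^\star$ is engineered to provide exactly the information needed: besides the $\mathcal{R}_1,\mathcal{R}_2$ estimates $\|\vx\|_1,\|\vy\|_1 \lesssim n/(\log n)^{7/8}$ and $\vx^\top\vM\vy \lesssim (\log n)^{1/8}$, it also enforces coordinate-wise tail control $\|\vx\|_\infty,\|\vy\|_\infty \lesssim_\delta (\log n)^{-7/8}$, inherited from the discarding of the top $\delta n/2$ values on each side. Using these with Assumption~\ref{Assumption_C_bounded} ($a_{ij},b_{ji}\le C$), the plan is to bound
\begin{align*}
    \text{(I)} &\;\le\; C\,\|\vy\|_\infty\|\vx\|_1 \;=\; O\!\bigl(n/(\log n)^{7/4}\bigr), \\
    \text{(II)} &\;\le\; \tfrac12\sum_{i,j}(a_{ij}b_{ji}x_iy_j)(a_{ij}x_i+b_{ji}y_j) \;\le\; C'(\|\vx\|_\infty+\|\vy\|_\infty)\,n\vx^\top\vM\vy \;=\; O\!\bigl(n/(\log n)^{3/4}\bigr),
\end{align*}
where in (II) I will use $x_i^2\le \|\vx\|_\infty x_i$, $y_j^2\le \|\vy\|_\infty y_j$, and $\sum a_{ij}b_{ji}x_iy_j = n\vx^\top\vM\vy$. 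Summing gives $\log(p_\mu/q) = O(n/(\log n)^{3/4})$, which is comfortably below $\theta n/(\log n)^{1/2}$ for any fixed $\theta>0$ and all sufficiently large $n$. Combining with Corollary~\ref{Cor_Rstar_likely} then completes the proof.

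\paragraph{Main obstacle.} The argument hinges on a tight coordination of exponents: the $(\log n)^{-7/8}$ tail control from truncation has to beat the $(\log n)^{1/8}$ slack in $\vx^\top\vM\vy$, producing $n/(\log n)^{3/4}$, just below the target $n/\sqrt{\log n}$. Any weakening of the truncation level or of the $\mathcal{R}_2$-bound would make term (II) blow up past the target. The genuinely hard work -- showing that $(\vX_\delta(\mu),\vY_\delta(\mu))$ lies in $\mathcal{R}^\star$ with the claimed per-matching probability, which requires a direct analysis of the deferred acceptance algorithm -- is deferred to Proposition~\ref{Prop_R_likely} and Corollary~\ref{Cor_Rstar_likely} and invoked here as a black box.
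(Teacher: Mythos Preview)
Your approach is correct and in fact cleaner than the paper's. Both proofs reduce to a deterministic estimate on $\mathcal{R}^\star$ and then invoke Corollary~\ref{Cor_Rstar_likely}, and both land at the same $O(n/(\log n)^{3/4})$ bound; but the route to that bound differs. The paper splits according to the indicators $\mathbbm{1}_{[0,2/C]}(x_i)$, $\mathbbm{1}_{[0,2/C]}(y_j)$ and ends up controlling three pieces, $\vx_\delta^\top\vy_\delta$, the tail sums $\sum_i (x_\delta)_i\mathbbm{1}_{[2/C,\infty)}\bigl((x_\delta)_i\bigr)$, and the second-order term $\|\vx_\delta\|_2^2\|\vy_\delta\|_1$, using respectively Cauchy--Schwarz together with \eqref{Eqn_def_Rstar_4}, the tail bound \eqref{Eqn_def_Rstar_5}, and the $\ell^2$ bound \eqref{Eqn_def_Rstar_4}. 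You bypass all of this by observing that the $\ell^\infty$ norm after truncation is already small: since removing the top $\lfloor\delta n/2\rfloor$ values means at least $\lfloor\delta n/2\rfloor$ coordinates of the full $\vx$ exceed $\|\vx_\delta\|_\infty$, and each such coordinate contributes at least $F(\|\vx_\delta\|_\infty/C)$ to $\|\vu\|_1$, the upper bound \eqref{Eqn_def_Rstar_1} forces $\|\vx_\delta\|_\infty \le (2C\theta_4/\delta)(\log n)^{-7/8}(1+o(1))$. This single $\ell^\infty$ bound then handles term (II) directly via $x_i^2\le \|\vx_\delta\|_\infty x_i$ combined with the $\mathcal{R}_2$-type estimate $n\vx_\delta^\top\vM\vy_\delta\lesssim n(\log n)^{1/8}$, and it also makes the indicator splitting unnecessary since all $\alpha_{ij},\beta_{ji}$ are automatically $o(1)$. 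The upshot is that your argument only needs \eqref{Eqn_def_Rstar_1}, \eqref{Eqn_def_Rstar_2}, \eqref{Eqn_def_Rstar_3} from $\mathcal{R}^\star$, whereas the paper's proof consumes all of \eqref{Eqn_def_Rstar_1}--\eqref{Eqn_def_Rstar_5}.

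Two small points to tighten. First, the $\ell^\infty$ bound you invoke is not listed verbatim in the definition of $\mathcal{R}^\star$; you should spell out the one-line pigeonhole argument above rather than just asserting it is ``inherited from the discarding.'' Second, Corollary~\ref{Cor_Rstar_likely} is stated as a union bound over all stable $\mu$ (probability $1-e^{-n^c}$ that every stable $\mu$ lands in $\mathcal{R}^\star$), not as the per-matching bound $e^{-n^c}/n!$ you wrote; this is exactly what is needed for the ``in other words'' version of the proposition and for its use in Lemma~\ref{Lemma_reduction_to_q}, so nothing is lost, but your phrasing should match.
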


Again, the proof of Proposition~\ref{Prop_ratio_p_q_high_prob} 
is deferred to Appendix~\ref{Append_weak_regular_scores}.

The reason for using $\delta$-truncated value vectors is that, when  approximating $p(\vx,\vy)$ to second order, there will be terms of $\|\vx\|_2^2$ and $\|\vy\|_2^2$, which are hard to control due to the heavy tail of the exponential distribution.%
\footnote{Note that  moment generating function does not exist for $X^2$ where $X\sim\Exp(1)$ so the classic Hoeffding- or Bernstein-type bounds fail to apply.} On the other hand, changing the values of a $\delta$ fraction should  affect the empirical CDF by at most $\delta$ in $\ell^\infty$ distance. Therefore, %
it suffices to show that  for small enough $\delta$ all stable partial matchings of size $n-\floor{\delta n}$ have values and ranks empirically distributed close to some exponential distribution.

The function $p_\mu(\vx,\vy)$
cannot be  approximated globally by  $q(\vx,\vy) = \exp(-n\vx^\top \vM\vy)$. 
However, we can find a region in $\R_+^n\times\R_+^n$ where $p_\mu(\vx,\vy)$ and $q(\vx,\vy)$ are close (uniformly for all stable matchings) in the sense that \eqref{Eqn_prop_ratio_p_q_high_tag} holds; Meanwhile, Proposition~\ref{Prop_ratio_p_q_high_prob} states that with high probability, no stable matchings will ever have $\delta$-truncated value vectors outside this region.

\subsection{A key reduction lemma}

Proposition~\ref{Prop_ratio_p_q_high_prob} allows us to study high probability behaviors in stable partial matchings obtained from truncating stable (full) matchings pretending that the conditional probability of stability were given by $q(\vx,\vy)$. Concretely, consider a fixed constant $\delta > 0$ and a region $\Omega \subseteq \R_+^n \times \R_+^n$ that defines an event on the (truncated) value vectors. If there exists a stable matching $\mu$ whose truncated value vectors $(\vX_\delta(\mu),\vY_\delta(\mu))\in\Omega$, the induced partial matching $\mu_\delta$ of size $n-\floor{\delta n}$ between $\cM'_{\mu,\delta}$ and $\mu(\cM'_{\mu,\delta})$ must also be stable with value vectors $\vX_{\cM'_{\mu,\delta}}(\mu_\delta) = \vX_\delta(\mu)$ and $\vY_{\mu(\cM'_{\mu,\delta})}(\mu_\delta) = \vY_\delta(\mu)$. Thus, we will end up either having a stable matching whose truncated value violates \eqref{Eqn_prop_ratio_p_q_high_tag}, or a stable partial matching of size $n-\floor{\delta n}$ whose value vectors (already truncated) lies in $\Omega\cap\mathcal{R}^\star$. By Proposition~\ref{Prop_ratio_p_q_high_prob}, the former event happens with probability $o(1)$. Therefore, we may focus on the second event, where a stable partial matching of size $n-\floor{\delta n}$ exists with value vectors in $\Omega\cap\mathcal{R}^\star$. This is summarized by the following Lemma, which will be a major tool in the remainder of the proof.%

\begin{lemma}\label{Lemma_reduction_to_q}
Let $\delta > 0$, $c \in (0,1/2)$, and $\Omega \subseteq \R_+^n \times \R_+^n$. Then,
\begin{multline}
    \Pp(\exists \mu\text{ stable}, (\vX_\delta(\mu),\vY_\delta(\mu))\in\Omega) \le e^{-n^c} + \exp\left(\Theta\Big(\frac{n}{(\log n)^{1/2}}\Big)\right) \cdot \\
    \sum_{\substack{\cM'\subseteq\cM,\cW'\subseteq\cW\\|\cM'|=|\cW'|=n-\floor{\delta n}}}\sum_{\substack{\mu':\cM'\to\cW'\\\text{bijection}}} \E\big[q(\vX_{\cM'}(\mu'), \vY_{\cW'}(\mu')) \cdot \mathbbm{1}_{\mathcal{R}\cap\Omega}(\vX_{\cM'}(\mu'), \vY_{\cW'}(\mu'))\big].
\end{multline}
\end{lemma}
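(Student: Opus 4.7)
The plan is to peel off two high-probability bad events using Propositions~\ref{Prop_R_likely} and \ref{Prop_ratio_p_q_high_prob}, then pass from existence of a stable full matching to existence of a stable partial matching of size $n-\floor{\delta n}$, and finally apply a single union bound. Let $E$ denote the event that every stable full matching $\mu$ satisfies both $(\vX_\delta(\mu), \vY_\delta(\mu)) \in \mathcal{R}$ and the ratio bound $p_\mu(\vX_\delta(\mu), \vY_\delta(\mu)) \le q(\vX_\delta(\mu), \vY_\delta(\mu))\,\exp(\theta n/(\log n)^{1/2})$. Proposition~\ref{Prop_R_likely} controls the first condition directly; a union bound over the $n!$ matchings applied to Proposition~\ref{Prop_ratio_p_q_high_prob} controls the second. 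Hence $\Pp(E^c) \lesssim e^{-n^c}$, and it suffices to bound $\Pp(E \cap \{\exists \mu\in\mathcal{S},\, (\vX_\delta(\mu),\vY_\delta(\mu))\in\Omega\})$ by the right-hand sum.

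On this event, for any such witness $\mu$, consider the induced partial matching $\mu_\delta:\cM'_{\mu,\delta}\to\mu(\cM'_{\mu,\delta})$. It is stable as a partial matching (any blocking pair within the sub-market would also block $\mu$), and its value vectors, viewed as $n$-dimensional vectors padded with zeros, coincide with $\vX_\delta(\mu), \vY_\delta(\mu)\in\Omega\cap\mathcal{R}$. The key algebraic observation is
\begin{equation*}
    p_\mu(\vX_\delta(\mu), \vY_\delta(\mu)) = p_{\mu_\delta}(\vX_{\cM'}(\mu_\delta), \vY_{\cW'}(\mu_\delta)),
\end{equation*}
because each factor $1-(1-e^{-a_{ij}x_i})(1-e^{-b_{ji}y_j})$ in the product defining $p_\mu$ collapses to $1$ whenever $x_i=0$ or $y_j=0$, so only the factors indexed by pairs in the sub-market survive. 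Consequently the ratio bound enjoyed by $\mu$ transfers unchanged to $\mu_\delta$, and a union bound over all partial matchings $\mu':\cM'\to\cW'$ with $|\cM'|=|\cW'|=n-\floor{\delta n}$ gives
\begin{multline*}
    \Pp\big(E \cap \{\exists \mu\in\mathcal{S},\,(\vX_\delta(\mu),\vY_\delta(\mu))\in\Omega\}\big) \\
    \le \sum_{\cM',\cW',\mu'} \Pp\big(\mu'\text{ stable},\ (\vX_{\cM'}(\mu'),\vY_{\cW'}(\mu'))\in\Omega\cap\mathcal{R},\ p_{\mu'}/q\le e^{\theta n/(\log n)^{1/2}}\big).
\end{multline*}

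For each summand, conditioning on the diagonal values $(\vX_{\cM'}(\mu'),\vY_{\cW'}(\mu'))$ via the integral-formula analogue of \eqref{Eqn_integral_formula_orig} for partial matchings yields
\begin{align*}
    & \Pp\big(\mu'\text{ stable},\ (\vX_{\cM'}(\mu'),\vY_{\cW'}(\mu'))\in\Omega\cap\mathcal{R},\ p_{\mu'}/q\le e^{\theta n/(\log n)^{1/2}}\big) \\
    &\qquad = \E\big[p_{\mu'}(\vX_{\cM'}(\mu'),\vY_{\cW'}(\mu'))\,\mathbbm{1}_{\Omega\cap\mathcal{R}}\,\mathbbm{1}_{p_{\mu'}/q\le e^{\theta n/(\log n)^{1/2}}}\big] \\
    &\qquad \le e^{\theta n/(\log n)^{1/2}}\,\E\big[q(\vX_{\cM'}(\mu'),\vY_{\cW'}(\mu'))\,\mathbbm{1}_{\Omega\cap\mathcal{R}}\big],
\end{align*}
where the last step replaces $p_{\mu'}$ by its pointwise upper bound on the indicator set. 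Summing over $(\cM',\cW',\mu')$ and combining with the $\Pp(E^c)\lesssim e^{-n^c}$ term yields the claimed inequality, with the constant $\theta$ absorbed into the $\Theta(n/(\log n)^{1/2})$ in the exponent.

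The main obstacle I anticipate is the identity displayed in the second paragraph. It is exactly what lets the ratio bound from Proposition~\ref{Prop_ratio_p_q_high_prob}, stated only for full stable matchings, be re-used on the partial matchings that naturally appear when a union bound runs over size-$(n-\floor{\delta n})$ sub-markets. Without this observation one would need to re-derive a partial-matching analogue of Proposition~\ref{Prop_ratio_p_q_high_prob} from scratch; with it, the remainder is a standard peeling-and-union-bound argument.
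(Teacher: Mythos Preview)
Your argument is correct and follows essentially the same route as the paper's proof. The paper packages the two good events you call out into a single region $\mathcal{R}^\star\subseteq\mathcal{R}$ (Corollary~\ref{Cor_Rstar_likely}) on which the ratio bound $p\le e^{\Theta(n/(\log n)^{1/2})}q$ holds \emph{deterministically} as a pointwise inequality on the value vectors; it then peels off $\{(\vX_\delta,\vY_\delta)\notin\mathcal{R}^\star\}$ in one shot, passes to partial matchings, union-bounds, and replaces $p_{\mu'}$ by $e^{\Theta(\cdot)}q$ directly on $\mathcal{R}^\star$ before relaxing $\mathcal{R}^\star$ to $\mathcal{R}$. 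You instead keep $\mathcal{R}$ and the ratio event separate, and use the identity $p_\mu(\vX_\delta(\mu),\vY_\delta(\mu))=p_{\mu_\delta}(\vX_{\cM'}(\mu_\delta),\vY_{\cW'}(\mu_\delta))$ to transport the ratio bound from the full matching to its truncation. That identity is exactly what makes the paper's step ``$p_{\mu'}\le e^{\Theta(\cdot)}q$ on $\mathcal{R}^\star$'' legitimate for partial matchings, so you have simply made explicit what the paper leaves implicit; the two presentations are equivalent.
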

\begin{proof}
Note that
\begin{multline}\label{Eqn_proof_reduction_to_q_1}
    \Pp(\exists \mu\text{ stable}, (\vX_\delta(\mu),\vY_\delta(\mu))\in\Omega) \le \Pp(\exists \mu\text{ stable}, (\vX_\delta(\mu),\vY_\delta(\mu))\notin\mathcal{R}^\star) \\
    + \Pp(\exists \mu\text{ stable}, (\vX_\delta(\mu),\vY_\delta(\mu))\in\mathcal{R}^\star\cap\Omega),
\end{multline}
where the first term is $e^{-n^c}$ by Corollary~\ref{Cor_Rstar_likely}.
Let $\mathcal{E}$ denote the event that there exists a {\em stable} partial matching $\mu'$ between $\cM'\subseteq\cM$ and $\cW'\subseteq\cW$ with $|\cM'|=|\cW'|=n-\floor{\delta n}$ where $(\vX_{\cM'}(\mu'), \vY_{\cW'}(\mu'))\in\mathcal{R}^\star\cap\Omega$. Clearly, the existence of a stable matching $\mu$ with $(\vX_\delta(\mu),\vY_\delta(\mu))\in\mathcal{R}^\star\cap\Omega$ implies $\mathcal{E}$.
Thus, the second term in \eqref{Eqn_proof_reduction_to_q_1} is bounded by
\begin{equation}
    \Pp(\mathcal{E})
    \le \sum_{\substack{\cM'\subseteq\cM,\cW'\subseteq\cW\\|\cM'|=|\cW'|=n-\floor{\delta n}}}\sum_{\substack{\mu':\cM'\to\cW'\\\text{bijection}}} \Pp(\mu'\text{ stable}, (\vX_{\cM'}(\mu'), \vY_{\cW'}(\mu'))\in\mathcal{R}^\star\cap\Omega).
\end{equation}
using union bound. For each $\cM',\cW'$, and $\mu'$ in the summation, we compute the above probability through conditioning on $\vX_{\cM'}(\mu')$ and $\vY_{\cW'}(\mu')$ as
\begin{align}
    \Pp(\mu'\text{ stable}&, (\vX_{\cM'}(\mu'), \vY_{\cW'}(\mu'))\in\mathcal{R}^\star\cap\Omega) \nonumber\\
    &= \E\big[\Pp\big(\mu'\text{ stable} \big| \vX_{\cM'}(\mu'), \vY_{\cW'}(\mu')\big); (\vX_{\cM'}(\mu'), \vY_{\cW'}(\mu'))\in\mathcal{R}^\star\cap\Omega\big] \nonumber\\
    &= \E\big[p_{\mu'}(\vX_{\cM'}(\mu'), \vY_{\cW'}(\mu')) \cdot\mathbbm{1}_{\mathcal{R}^\star\cap\Omega}(\vX_{\cM'}(\mu'), \vY_{\cW'}(\mu'))\big] \nonumber\\
    &\le \E\left[\exp\left(\Theta\Big(\frac{n}{(\log n)^{1/2}}\Big)\right)q(\vX_{\cM'}(\mu'), \vY_{\cW'}(\mu')) \cdot\mathbbm{1}_{\mathcal{R}^\star\cap\Omega}(\vX_{\cM'}(\mu'), \vY_{\cW'}(\mu'))\right] \nonumber\\
    &\le \exp\left(\Theta\Big(\frac{n}{(\log n)^{1/2}}\Big)\right) \E\left[q(\vX_{\cM'}(\mu'), \vY_{\cW'}(\mu')) \cdot\mathbbm{1}_{\mathcal{R}\cap\Omega}(\vX_{\cM'}(\mu'), \vY_{\cW'}(\mu'))\right],
\end{align}
which completes the proof.
\end{proof}
\begin{remark}
    The choice of the constant $c\in(0,1/2)$ affects the implicit constants in defining $\mathcal{R}$ and $\mathcal{R}^\star$. Once we have a target convergence of $e^{-n^c}$ for some $c\in(0,1/2)$, we will assume that $c$ is fixed in the rest of our discussion unless otherwise mentioned.
\end{remark}

Lemma~\ref{Lemma_reduction_to_q} will be a key tool in the proof to establish further likely behaviors of value (and rank) vectors. It will be a  recurring theme where we first identify a likely region $\Omega_\text{likely}$ for truncated value vectors of stable (full) matchings to fall in ($\mathcal{R}$ to start with), then rule out a bad event $\Omega_\text{bad}$ within $\Omega_\text{likely}$ by showing $\Omega_\text{likely}\cap\Omega_\text{bad}$ is unlikely for value vectors of any stable \emph{partial} matching, and apply Lemma~\ref{Lemma_reduction_to_q} to conclude that $\Omega_\text{bad}$ is unlikely for truncated value vectors of any stable matching and that $\Omega_\text{likely}\cap\Omega_\text{bad}$ can be used as the likely region moving forward.

Based on Lemma~\ref{Lemma_reduction_to_q}, it now suffices to consider partial matchings $\mu'$ of size $n-\floor{\delta n}$ between $\cM'$ and $\cW'$ and upper bound $\E\big[q(\vX_{\cM'}(\mu'), \vX_{\cW'}(\mu')) \cdot \mathbbm{1}_\Omega(\vX_{\cM'}(\mu'), \vX_{\cW'}(\mu'))\big]$.
From now on, we fix %
$\cM'$, $\cW'$, and $\mu'$, and make the dependency of $\vX_{\cM'}$ and $\vY_{\cW'}$ on $\mu'$ implicit when the context is clear.

\subsection{Estimating the (unconditional) stability probability}\label{Subsec_uncond_stable_prob}

Using concentration inequalities given in Lemma~\ref{Lemma_weighted_exp_chernoff} and \ref{Lemma_wgt_exp_cond_concentration}, we  derive the following upper bound, which essentially characterizes the (approximate) probability that a partial matching of size $n-\floor{\delta n}$ is stable with a probable value vector for the women (i.e., $Y_{\cW'}\in\mathcal{R}_1$).

\begin{restatable}{proposition}{propEqXYBound}\label{Prop_EqXY_bound}
For a fixed a partial matching $\mu'$ on $\cM'$ and $\cW'$ of size $n-\floor{\delta n}$,
\begin{equation}\label{Eqn_prop_EqXY_target_bound}
    \E[q(\vX_{\cM'},\vY_{\cW'}) \cdot \mathbbm{1}_{\mathcal{R}_2}(\vX_{\cM'}, \vY_{\cW'}) \cdot\mathbbm{1}_{\mathcal{R}_1}(\vY_{\cW'})] \le e^{o(n)+o_\delta(n)} \frac{(\delta n)!}{n!}\prod_{i\in\cM'} a_{i,\mu'(i)} b_{\mu'(i),i}.
\end{equation}
\end{restatable}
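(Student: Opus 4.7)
The plan proceeds in three steps: integrate out $\vX_{\cM'}$ exactly, reduce the remaining $\vY_{\cW'}$-expectation to a one-dimensional integral via concentration, and evaluate that integral by Laplace's method. Throughout write $a_i := a_{i,\mu'(i)}$ and $b_j := b_{j,\mu'^{-1}(j)}$, and note $a_i b_{\mu'(i)} = n m_{i,\mu'(i)}$.

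First, bound $\mathbbm{1}_{\mathcal{R}_2}\le 1$ to discard $\mathcal{R}_2$. Conditional on $\vY_{\cW'}$, the coordinates $X_i$ for $i\in\cM'$ are independent $\Exp(a_i)$ variables, and $q(\vx,\vy)=\exp(-n\sum_i x_i(\vM\vy)_i)$ factors across $i\in\cM'$. A direct moment generating function computation then yields
\[
\E_{\vX_{\cM'}}\!\big[q(\vX_{\cM'},\vY_{\cW'})\,\big|\,\vY_{\cW'}\big]=\prod_{i\in\cM'}\frac{a_i}{a_i + n(\vM\vY_{\cW'})_i}.
\]
Pulling the $\prod_i a_i$ factor out, the remaining task is to bound $\E_{\vY_{\cW'}}\!\big[\prod_{i\in\cM'}(a_i+n(\vM\vY_{\cW'})_i)^{-1}\mathbbm{1}_{\mathcal{R}_1}(\vY_{\cW'})\big]$.

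Second, for the $\vY_{\cW'}$-expectation I apply Lemma~\ref{Lemma_weighted_exp_chernoff} and Lemma~\ref{Lemma_wgt_exp_cond_concentration} to the weighted sums $(\vM\vY_{\cW'})_i=\sum_j m_{ij}Y_j$. Bi-stochasticity of $\vM$ (giving $\sum_i(\vM\vY)_i=\|\vY\|_1$) together with $C$-boundedness of $nm_{ij}$ and $b_j\in[1/C,C]$ implies that, conditional on $\|\vY_{\cW'}\|_1=s$, every $(\vM\vY_{\cW'})_i$ concentrates around a common value proportional to $s/n$ uniformly over $i$, with per-coordinate multiplicative fluctuations small enough that the product over $i\in\cM'$ incurs only an $e^{o(n)}$ cumulative error. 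This effectively collapses the $n'$-dimensional expectation to a one-dimensional integral over $s=\|\vY_{\cW'}\|_1$.

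Third, the resulting one-dimensional integral takes the form $\int_{\mathcal{R}_1}(\alpha + \beta s)^{-n'}\rho(s)\,ds$, where $\alpha,\beta=\Theta(1)$ are the typical values of $a_i$ and $n\tau_i/s$, and $\rho$ is the density of $\|\vY_{\cW'}\|_1$ (a sum of $n'=n-\floor{\delta n}$ independent exponentials with rates $b_j$). I evaluate this by Laplace's method: differentiating the log-integrand, the saddle sits at $s^\star=\Theta(\sqrt{n'})$, which lies comfortably inside $\mathcal{R}_1=[\underline{c}_1\log n,\,\overline{c}_1 n/(\log n)^{7/8}]$, so the truncation contributes negligibly. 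Stirling at the saddle yields an integrand value of order $\frac{(\delta n)!}{n!}\prod_{j\in\cW'}b_j$ up to $e^{o(n)+o_\delta(n)}$ corrections, where $o_\delta(n)$ absorbs the discrepancy $n'\log(n/n')+\delta n\log\delta = \delta n(1-\log\delta)+O(\delta^2 n)$ (which is $o_\delta(n)$ since $\delta(1-\log\delta)\to 0$ as $\delta\to 0$), and $o(n)$ absorbs the saddle width and the concentration correction from step two. Combining with the $\prod_i a_i$ prefactor from step one produces the claimed bound $e^{o(n)+o_\delta(n)}\frac{(\delta n)!}{n!}\prod_{i\in\cM'}a_{i,\mu'(i)}b_{\mu'(i),i}$.

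The main technical obstacle is the uniform concentration of $(\vM\vY_{\cW'})_i$ to $e^{o(n)}$ precision: the crude bound $(\vM\vY)_i\ge\|\vY\|_1/(nC)$ would cost a prohibitive $C^{n'}=e^{\Omega(n)}$ factor, so Lemmas~\ref{Lemma_weighted_exp_chernoff} and \ref{Lemma_wgt_exp_cond_concentration} must pin each $(\vM\vY_{\cW'})_i$ tightly near its conditional mean given $\|\vY_{\cW'}\|_1$. This crucially leverages both the bi-stochasticity of $\vM$ (which anchors the collective scale through $\sum_i(\vM\vY)_i=\|\vY\|_1$) and the $C$-boundedness assumption (which forces all entries $nm_{ij}$ to live on the same order).
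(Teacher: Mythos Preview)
Your step 1 (integrating out $\vX_{\cM'}$ via the moment generating function) is clean and correct. Step 3 is reasonable in spirit. The gap is in step 2, and it is the heart of the matter.

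You correctly observe that bistochasticity of $\vM$ makes $s/n$ the natural target for every $(\vM\vY_{\cW'})_i$ once $\|\vY_{\cW'}\|_1=s$ is fixed, and that Lemma~\ref{Lemma_wgt_exp_cond_concentration} yields, for each fixed $i$, a bound of the form $\Pp\big(|(\vM\vY)_i - s/n|>\zeta s/n\,\big|\,\|\vY\|_1<s\big)\le e^{-\Theta(n\zeta^2)}$. But turning this per-coordinate concentration into the claimed $e^{o(n)}$ control on the \emph{expectation} of the product $G(\vY):=\prod_{i\in\cM'}(a_i+n(\vM\vY)_i)^{-1}$ fails. On the good event you get $G(\vY)\le e^{O(n\zeta)}\prod_i(a_i+s)^{-1}$, so you need $\zeta=o(1)$. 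On the complementary event (probability $\lesssim n e^{-\Theta(n\zeta^2)}$) the only uniform bound available is $G(\vY)\le C^{n'}$, since each factor satisfies $(a_i+n(\vM\vY)_i)^{-1}\le a_i^{-1}\le C$. For the bad-event contribution to be negligible relative to $\prod_i(a_i+s)^{-1}\sim s^{-n'}$ you would need $n\zeta^2\gtrsim n'\log(Cs)\gtrsim n\log s$, i.e.\ $\zeta^2\gtrsim\log s$; since $s\ge\underline{c}_1\log n$ on $\mathcal{R}_1$ this forces $\zeta\gg 1$, contradicting $\zeta=o(1)$. No choice of $\zeta$ balances the two requirements, so Lemmas~\ref{Lemma_weighted_exp_chernoff} and~\ref{Lemma_wgt_exp_cond_concentration} alone do not deliver the $e^{o(n)}$ product control you assert. (The heuristic $\sum_i\epsilon_i=0$ plus second-order cancellation is correct in expectation, but the expectation of the product is dominated by tail configurations, not typical ones.)

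The paper avoids this trap by \emph{not} integrating out $\vX_{\cM'}$. It keeps the tail-probability representation $E=\int_0^\infty\Pp(\vX^\top\vM\vY<\bar t,\,\vY\in\mathcal{R}_1)\,ne^{-nt}\,dt$ and shows, via a fixed-point argument on the pair $(\vM^\top\vX,\vM\vY)$ (the events $\Omega_1,\Omega_2$ and Lemma~\ref{lemma_Omega1_and_Omega2_suggests_Oeig}) together with the auxiliary Lemmas~\ref{Lemma_x_y_not_both_small_cond_on_XMY_and_R1}--\ref{Lemma_high_prob_small_xnorm_cond_XMY} and Corollary~\ref{Cor_x_not_small_when_y_large_cond_on_XMY_and_R1}, that conditional on $\{\vX^\top\vM\vY<t\}$ the joint event $A_1\cap A_2\cap B_1\cap B_2$ holds with probability at least a \emph{fixed constant} (namely $1/4$). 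On that event $n\vX^\top\vM\vY=(1+o_\delta(1))\|\vX\|_1\|\vY\|_1$, so the tail probability is within a constant factor of $\Pp(\|\vX\|_1\|\vY\|_1\lesssim nt)$, which is then handled by Lemma~\ref{Lemma_weighted_exp_chernoff}. The crucial difference is that the paper only needs constant-factor dominance of the good event inside a \emph{probability}, whereas your route needs $e^{o(n)}$-factor control of a product inside an \emph{expectation}; the latter is the harder statement and is not what the cited lemmas provide.
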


The proof of Proposition~\ref{Prop_EqXY_bound} will be deferred to Appendix~\ref{appendix_extra_proofs}, where we will develop intermediate results that characterize the typical behavior of $\vX_{\cM'}$ and $\vY_{\cW'}$ relative to each other (see Appendix~\ref{Append_proof_prop_eigenvec_of_M}).

Proposition~\ref{Prop_EqXY_bound} provides evidence that, heuristically, the expected number of stable partial matchings should be sub-exponential.
\begin{restatable}[Number of stable partial matchings]{corollary}{corSubExpNumOfStableMatch}
\label{Cor_subexp_num_stable_match}
Fix any $\delta > 0$ and $c\in(0,1/2)$. Let $N_\delta$ denote the number of stable partial matchings of size $n-\floor{\delta n}$ satisfying the condition in Corollary~\ref{Cor_Rstar_likely} (i.e., $\mathcal{R}^\star$) in a random instance of the matching market. Then, $\E[N_\delta] \le \exp(o_\delta(n))$ granted that $n$ is sufficiently large. Further, with probability at least $1-e^{-n^c}$, the condition $\mathcal{R}^\star$ is satisfied by all $\delta$-truncated stable matchings.
\end{restatable}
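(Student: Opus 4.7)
The plan is a union bound over all partial matchings $\mu': \cM' \to \cW'$ of size $n' := n - \lfloor \delta n \rfloor$. I would write
\[
\E[N_\delta] = \sum_{\mu'} \Pp\bigl(\mu' \in \mathcal{S},\ (\vX_{\cM'}(\mu'), \vY_{\cW'}(\mu')) \in \mathcal{R}^\star\bigr),
\]
and, for each fixed $\mu'$, express the summand as $\E[p_{\mu'}(\vX_{\cM'}, \vY_{\cW'})\, \mathbbm{1}_{\mathcal{R}^\star}]$ under the product-of-exponentials measure on the latent values. On $\mathcal{R}^\star$, Proposition~\ref{Prop_ratio_p_q_high_prob} replaces $p_{\mu'}$ by $\exp(\Theta(n/\sqrt{\log n}))\, q$, and since $\mathcal{R}^\star \subseteq \mathcal{R}_2 \cap (\R_+^n \times \mathcal{R}_1)$, combining with Proposition~\ref{Prop_EqXY_bound} yields the per-matching bound
\[
\Pp(\mu' \in \mathcal{S},\ \mathcal{R}^\star) \le e^{o(n) + o_\delta(n)}\, \frac{(\delta n)!}{n!}\, \prod_{i \in \cM'} a_{i,\mu'(i)}\, b_{\mu'(i), i}.
\]

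To sum over all partial matchings, I would control $S := \sum_{\mu'} \prod_{i \in \cM'} a_{i,\mu'(i)} b_{\mu'(i), i}$, a sum of permanents of $n' \times n'$ submatrices of $n\vM$. The extension identity (each partial matching lifts to $(n-n')!$ full matchings on $\cM \to \cW$) combined with Maclaurin's inequality gives
\[
S \le \frac{\binom{n}{n'}}{(n-n')!}\, \sum_{\mu: \cM \to \cW} \bar{x}_\mu^{n'}, \qquad \bar{x}_\mu := \sum_{i=1}^n m_{i, \mu(i)},
\]
where $\E_\mu \bar{x}_\mu = 1$ by bistochasticity of $\vM$ for uniformly random $\mu$. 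Collapsing factorials via $\frac{(\delta n)!\, \binom{n}{n'}}{(n-n')!} = \binom{n}{\delta n}$ together with $\log \binom{n}{\delta n} \le \delta n \log(e/\delta) = n \cdot o_\delta(1)$, it remains to show $\frac{1}{n!}\sum_\mu \bar{x}_\mu^{n'} \le e^{o_\delta(n)}$. For this I would exploit that $\bar{x}_\mu$ is a permutation sum of entries in $[0, C/n]$, so by McDiarmid-type bounds $\bar{x}_\mu - 1$ is sub-Gaussian at scale $1/\sqrt{n}$ under Assumption~\ref{Assumption_C_bounded}, which enables a truncation at $|\bar{x}_\mu - 1| \le t_n$ with a carefully chosen $t_n \to 0$ to control both the typical contribution $(1+t_n)^{n'}$ and the tail $C^{n'} \Pp(|\bar{x}_\mu - 1| > t_n)$.

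The second assertion, that $\mathcal{R}^\star$ holds simultaneously for all $\delta$-truncated stable matchings with probability at least $1 - e^{-n^c}$, is the content of Corollary~\ref{Cor_Rstar_likely} and requires no additional work. The main technical hurdle is the sharp estimate of $\E_\mu[\bar{x}_\mu^{n'}]$: a trivial $\bar{x}_\mu \le C$ gives $C^{n'} = e^{\Theta(n)}$, and even sub-Gaussian MGF estimates at exponent $n' = \Theta(n)$ with variance $\sigma^2 = O(1/n)$ produce $e^{(n')^2 \sigma^2 / 2} = e^{\Theta(n)}$, both of which would destroy the subexponential target. The refined estimate essentially requires that the bounded support of $\bar{x}_\mu$ and its $1/\sqrt{n}$-scale concentration around $1$ be exploited jointly rather than separately. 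This is where the contiguity assumption is indispensable, as the block-diagonal example in Section~\ref{sec_prep_proof} shows that the bound genuinely fails without it.
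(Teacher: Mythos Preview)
Your per-matching bound via Propositions~\ref{Prop_ratio_p_q_high_prob} and~\ref{Prop_EqXY_bound}, and the extension identity expressing the sum over partial matchings as $\frac{1}{(n-n')!}\sum_{\mu}\sum_{\cM'}\prod_{i\in\cM'} n m_{i,\mu(i)}$, are both correct and match the paper. The divergence is in how you control the inner sum $\sum_{\cM'}\prod_{i\in\cM'} n m_{i,\mu(i)}$ for a fixed full matching $\mu$.

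Your Maclaurin step $e_{n'}(z)\le\binom{n}{n'}\bar z^{\,n'}$ leaves you needing $\frac{1}{n!}\sum_\mu \bar x_\mu^{\,n'}\le e^{o_\delta(n)}$, and here your truncation argument does not close. With sub-Gaussian parameter $\sigma^2=\Theta(1/n)$ for $\bar x_\mu-1$, the tail piece is at most $C^{n'}\exp(-\Theta(n t_n^2))$; killing the $C^{n'}$ factor forces $t_n$ to stay bounded away from $0$, while the typical piece $(1+t_n)^{n'}\le e^{o_\delta(n)}$ forces $t_n\to 0$. These two constraints are incompatible whenever $C>1$, so the truncation as described fails. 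Your remark about exploiting boundedness and concentration ``jointly'' does not name an actual mechanism, and the target $\E_\mu[\bar x_\mu^{\,n'}]=e^{o_\delta(n)}$ is genuinely nontrivial: by AM--GM it would \emph{imply} the permanent bound $n^n\Perm(\vM)=O(n!)$, not follow from it.

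The paper sidesteps this by going in the opposite direction. Rather than Maclaurin (which uses only the upper structure of the $z_i$'s), it uses the \emph{lower} bound $n m_{ij}\ge 1/C$ from Assumption~\ref{Assumption_C_bounded} to write
\[
\prod_{i\in\cM'} n m_{i,\mu(i)} \;\le\; C^{\lfloor\delta n\rfloor}\prod_{i\in\cM} n m_{i,\mu(i)},
\]
so that the whole sum collapses to $e^{o_\delta(n)}\binom{n}{\lfloor\delta n\rfloor}\cdot\frac{1}{n!}\,n^n\Perm(\vM)$. The moderate-deviation permanent estimate for $C$-bounded bistochastic matrices then gives $n^n\Perm(\vM)=O(n!)$ directly. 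That single substitution replaces your hardest step with a citation.
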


\begin{remark}
Corollary~\ref{Cor_subexp_num_stable_match} falls short of establishing a sub-exponential bound for the expected number of stable matchings in two aspects.
\begin{itemize}
    \item While stable matchings that violate $\mathcal{R}^\star$ (when truncated) will not exist with high probability, we have not yet proved a bound for the expected number of such stable matchings. We believe that this can be overcome with a refined analysis of deferred acceptance, which should lead to stronger results than Lemma~\ref{Lemma_DA_prop_num}. Note that all high probability results in Appendix~\ref{Append_weak_regular_scores} after this lemma come with an upper bound on the expected number of stable matchings under various conditions.
    \item In general, it is possible to have multiple, in the worst case $\floor{\delta n}!$, stable matchings that produces the same $\delta$-truncated stable partial matching.
\end{itemize}
We believe that a sub-exponential bound for the number of stable matchings is possible with a more refined  analysis.
\end{remark}

\subsection{Opportunity sets and an eigenspace property for the value vectors}\label{Subsec_proximity_eigensubsp}

Our next result states that value vectors in stable matchings are not only controlled in terms of their first and second moments, but also in a sense ``close'' to some constant vector, i.e., $t\mathbf{1}$ for some $t\in\R_+$, which are eigenvectors of $\vM$ corresponding to its maximal eigenvalue $\lambda_1(\vM)=1$.

Let us fix the women's values to be $\vY_{\cW'}=\vy$ and consider the implication for the men's outcome in any (partial) matching $\mu'$. For a man $\m_i$ with value $x_i$, the expected number of blocking pairs between him and the women, conditional on $x_i$ and $\vy$, is
\begin{equation*}
    \sum_{j\ne \mu(i)} (1-e^{-a_{ij} x_i}) (1-e^{-b_{ji} y_j}) \approx \sum_{j=1}^n a_{ij} b_{ji} x_i y_j = n (\vM \vy)_i x_i.
\end{equation*}
The next result suggests that, in a typical market, the burden of avoiding blocking pairs falls roughly equally on the men in the sense that the entries of $\vM \vY_{\cW'}$ are largely the same. 

\begin{restatable}{lemma}{PropEigenVecOfMHighProf}\label{Prop_eigenvec_of_M_high_prob}
Let $\mu'$ be a partial matching of size $n-\floor{\delta n}$ on $\cM'\subseteq\cM$ and $\cW'\subseteq\cW$. Fix any $\zeta > 0$, and let
\begin{equation}
    \Oeigz := \left\{(\vx,\vy)\in \R^n\times\R^n : \exists t\in\R_+, \sum_{i=1}^n \mathbbm{1}\left\{|(\vM \vy)_i-t| \ge \sqrt{\zeta} t \right\} \le \sqrt{\zeta} n\right\}.
\end{equation}

Then
\begin{equation}\label{Eqn_Prop_eigenvec_of_M_Expectation_is_small}
    \E\big[q(\vX_{\cM'}, \vY_{\cW'}) \cdot \mathbbm{1}_{\mathcal{R}\backslash\Oeig(\Theta(\delta)+\zeta)}(\vX_{\cM'}, \vY_{\cW'})\big]
    \le %
    \exp(o_\delta(n)-\Theta(\zeta^2 n)) \cdot \frac{(\delta n)!}{n!} \prod_{i\in\cM'}a_{i,\mu'(i)}b_{\mu'(i),i},
\end{equation}
with the implicit constants uniform over all $\cM',\cW'$, and $\mu'$.
\end{restatable}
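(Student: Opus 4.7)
I would first condition on $\vY_{\cW'}=\vy$ and integrate out $\vX_{\cM'}$. Writing $\alpha_i:=a_{i,\mu'(i)}$ for the rate of $X_i$, the fact that $q(\vx,\vy)=e^{-n\vx^\top\vM\vy}$ factorizes across coordinates of $\vx$ and that the $X_i$'s are independent $\Exp(\alpha_i)$ gives
\begin{equation*}
\E[q(\vX_{\cM'},\vy)\mid \vY_{\cW'}=\vy]\;=\;\prod_{i\in\cM'}\frac{\alpha_i}{\alpha_i+n(\vM\vy)_i},
\end{equation*}
which depends on $\vy$ only through $\vz:=\vM\vy$. The event $\mathcal R\setminus\Oeig(\Theta(\delta)+\zeta)$ also depends on $\vy$ only through $\vz$ (it asks that $\vz$ be far from every scalar multiple of $\mathbf 1$). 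Letting $g(\vy)$ be the right-hand side above and $p_0$ the prior density of $\vY$, I introduce the tilted measure $\tilde p(\vy)\propto g(\vy)\,p_0(\vy)$ and write
\begin{equation*}
\E\!\left[q\cdot\mathbbm{1}_{\mathcal R\setminus\Oeig(\Theta(\delta)+\zeta)}\right]\;=\;\Pp_{\tilde p}\!\left(\mathcal R\setminus\Oeig(\Theta(\delta)+\zeta)\right)\cdot\E[q].
\end{equation*}
Proposition~\ref{Prop_EqXY_bound} controls $\E[q]$, so the lemma reduces to proving $\Pp_{\tilde p}(\vM\vY\text{ far from every multiple of }\mathbf 1)\le e^{-\Theta(\zeta^2 n)}$.

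The core step, in the spirit of sketch item~\ref{Step_2_in_proof_sketch}, is a fixed-point / contraction analysis of $\tilde p$. The dual computation (integrating $\vY$ out of $q$) shows that under the $q$-weighted joint measure, the posterior of $\vY$ given $\vX$ is a product of $\Exp(\beta_j+n(\vM^\top\vX)_j)$ marginals and, dually, of $\vX$ given $\vY$ is a product of $\Exp(\alpha_i+n(\vM\vY)_i)$, where $\beta_j$ denotes the prior rate of $Y_j$. Iterating the conditional posterior means produces the deterministic map
\begin{equation*}
\vw \;\longmapsto\; \vM^\top\!\left(\bm\alpha + n\,\vM\,(\bm\beta+n\vw)^{-1}\right)^{-1}
\end{equation*}
on $\vw=\vM^\top\vX$, with all reciprocals taken coordinatewise. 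Under Assumption~\ref{Assumption_C_bounded} the bistochastic matrix $\vM$ has spectral gap bounded below by a positive constant, so this map is a strict contraction modulo $\operatorname{span}(\mathbf 1)$; its unique fixed point is a constant vector $\bar w\mathbf 1$, and symmetrically $\vM\vY$ has posterior mean $\bar z\mathbf 1$. A Laplace / mean-field expansion of $\tilde p$ around this fixed point replaces it by a product of independent tilted exponentials up to an $\exp(o_\delta(n))$ multiplicative error.

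Under the resulting product measure, $(\vM\vY)_i-\bar z=\sum_j m_{ij}(Y_j-\E Y_j)$ is a linear combination of $n$ independent sub-exponentials with coefficients of order $1/n$ (using $m_{ij}=\Theta(1/n)$ from contiguity), so Bernstein's inequality gives the per-coordinate tail $\Pp(|(\vM\vY)_i-\bar z|\ge \sqrt\zeta\,\bar z)\le e^{-\Omega(\zeta n)}$. A further Chernoff / exponential-tilt estimate on the indicator sum $N:=\sum_i \mathbbm 1\{|(\vM\vY)_i-\bar z|\ge \sqrt\zeta\,\bar z\}$ then upgrades this per-coordinate bound to $\Pp(N\ge \sqrt\zeta\, n)\le e^{-\Theta(\zeta^2 n)}$, which is exactly the claim after absorbing the $o_\delta(n)$ Laplace error.

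The main obstacle is that $\tilde p$ is not a product measure, because $g(\vy)$ couples all coordinates of $\vy$ through $\vM\vy$. Making the mean-field replacement quantitative---so that the $\exp(o_\delta(n))$ Laplace error does not swallow the desired $\exp(-\Theta(\zeta^2 n))$ factor---is the technical heart of the argument. The spectral-gap consequence of Assumption~\ref{Assumption_C_bounded} is essential here: it ensures both contractivity of the fixed-point map (uniqueness of the constant fixed point, so that $\vM\vY$ cannot cluster around several eigenvectors of $\vM$) and invertibility of the Hessian of $-\log\tilde p$ transverse to $\operatorname{span}(\mathbf 1)$, which drives the Laplace estimate. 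Contributions from $\vy\notin\mathcal R$ are absorbed via Proposition~\ref{Prop_R_likely}, at the extra cost of only $\exp(-n^c)$.
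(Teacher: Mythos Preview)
Your proposal diverges substantially from the paper's argument and contains a genuine gap. The factorization $\E[q\cdot\mathbbm{1}_{\mathcal R\setminus\Oeig}]=\Pp_{\tilde p}(\mathcal R\setminus\Oeig)\cdot\E[q]$ is not quite right, since the indicator $\mathbbm{1}_{\mathcal R}$ depends on $\vx$ as well as $\vy$ (through $\mathcal R_2$ and through $\vx\in\mathcal R_1$); dropping those pieces to pull the indicator outside the $\vx$-integral leaves you with the unrestricted $\E[q]$, which is \emph{not} what Proposition~\ref{Prop_EqXY_bound} bounds. More seriously, the step you yourself flag as the ``technical heart''---replacing the non-product tilted law $\tilde p$ by a product of exponentials with only an $\exp(o_\delta(n))$ multiplicative error---is asserted but not argued. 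Laplace or mean-field replacements of this kind typically cost $e^{\Theta(n)}$ factors unless one proves a quantitative decoupling, and nothing in your sketch (spectral gap, contraction of the posterior-mean map) supplies that. Your final Bernstein-plus-Chernoff passage is also off: once you have a per-coordinate bound $e^{-\Omega(\zeta n)}$ under a product law, a union bound already gives $\Pp(\exists i:\,|(\vM\vY)_i-\bar z|\ge\sqrt{\zeta}\bar z)\le n e^{-\Omega(\zeta n)}$, which is stronger than the $e^{-\Theta(\zeta^2 n)}$ you claim to need a counting argument for; the mismatch signals that the product-law approximation is where the real loss must occur, and you have not controlled it.

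The paper avoids mean-field approximation altogether. It writes the target expectation as $\int_0^\infty \Pp(\vX_{\cM'}^\top\vM\vY_{\cW'}<\bar t,\,\cdot)\,ne^{-nt}\,dt$ and works \emph{conditionally on the sub-level set} $\{\vX_{\cM'}^\top\vM\vY_{\cW'}<t\}$. The key analytic input is a bespoke concentration inequality (Lemma~\ref{Lemma_wgt_exp_cond_concentration}): for i.i.d.\ exponentials $\vZ$ and bounded weight vectors $\vu,\vv$, the ratio $\vu\cdot\vZ/(t\,\vu\cdot\vv^{-1})$ concentrates at $1$ with failure probability $\exp(-\Theta(\zeta^2 n))$ \emph{conditional on} $\vv\cdot\vZ<tn$. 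Applying this with $\vu$ equal to each row or column of $\vM$, and then union-bounding over rows and columns, defines two events $\Omega_1(\zeta;t),\Omega_2(\zeta;t)$ that hold with conditional probability $1-e^{-\Theta(\zeta^2 n)}$. The fixed-point step (Lemma~\ref{lemma_Omega1_and_Omega2_suggests_Oeig}) is then purely algebraic: on $\Omega_1\cap\Omega_2$ one iterates the two approximate identities $\vM\vy\approx t\,\vM(\vM^\top\vx)^{-1}$ and $\vM^\top\vx\approx t\,\vM^\top(\vM\vy)^{-1}$, and the bounded-entry assumption on $\vM$ forces $\vM^\top\vx$ (and hence $\vM\vy$) to lie within $\Theta(\delta+\zeta)$ of a constant vector. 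No spectral-gap contraction or Laplace expansion is needed; the $\exp(-\Theta(\zeta^2 n))$ rate comes directly from the conditional Chernoff bound in Lemma~\ref{Lemma_wgt_exp_cond_concentration}, and the unconditional factor is handled by Proposition~\ref{Prop_EqXY_bound} with its $\mathcal R_1,\mathcal R_2$ restrictions intact.
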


The proof of Lemma~\ref{Prop_eigenvec_of_M_high_prob} is deferred to Appendix~\ref{Append_proof_prop_eigenvec_of_M}.

Let us observe the immediate corollary of this Lemma, the proof of which is similar to that of Lemma~\ref{Lemma_reduction_to_q} and deferred to Appendix~\ref{Append_proof_no_stable_outside_oeigz}.

\begin{restatable}{corollary}{CorNoStableOutsideOeigz}\label{Cor_no_stable_outside_Oeigz}
For $\delta > 0$ sufficiently small, there exists a choice of $\zeta =\zeta(\delta) > 0$ such that $\zeta\to 0$ as $\delta \to 0$ and that
\begin{equation}
    \Pp(\exists \mu\text{ stable}, (\vX_\delta(\mu),\vY_\delta(\mu))\notin\Oeigz) \lesssim e^{-n^c}
\end{equation}
asymptotically as $n\to\infty$.
\end{restatable}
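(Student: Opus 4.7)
The plan is to follow the same two-step template used to derive Proposition~\ref{Prop_R_likely}: first apply Lemma~\ref{Lemma_reduction_to_q} to reduce the existence statement over stable matchings to a summation bound over size-$(n-\floor{\delta n})$ partial matchings, and then apply the tail estimate of Lemma~\ref{Prop_eigenvec_of_M_high_prob} to control each summand.

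Concretely, I would introduce an auxiliary parameter $\zeta_0 = \zeta_0(\delta)>0$ to be specified later, and define $\zeta(\delta):= K\delta+\zeta_0$, where $K$ absorbs the implicit constant in the $\Theta(\delta)$ term of Lemma~\ref{Prop_eigenvec_of_M_high_prob}. With this choice, $\Oeig(\Theta(\delta)+\zeta_0)\subseteq\Oeig(\zeta(\delta))$ by monotonicity of $\Oeig(\zeta)$ in $\zeta$, so Lemma~\ref{Prop_eigenvec_of_M_high_prob}'s bound with parameter $\zeta_0$ applies to the complement $\mathcal{R}\setminus\Oeig(\zeta(\delta))$. Taking $\Omega:=(\R_+^n\times\R_+^n)\setminus\Oeig(\zeta(\delta))$ in Lemma~\ref{Lemma_reduction_to_q} yields
\begin{equation*}
\Pp\big(\exists\mu\in\mathcal{S},\,(\vX_\delta(\mu),\vY_\delta(\mu))\notin\Oeig(\zeta(\delta))\big)
\le e^{-n^c} + e^{\Theta(n/\sqrt{\log n})}\,T,
\end{equation*}
where $T=\sum_{(\cM',\cW',\mu')}\E\big[q(\vX_{\cM'},\vY_{\cW'})\,\mathbbm{1}_{\mathcal{R}\cap\Omega}\big]$ runs over all partial-matching triples of size $n-\floor{\delta n}$. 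Plugging the bound of Lemma~\ref{Prop_eigenvec_of_M_high_prob} into each summand and collapsing the counting factor $\binom{n}{\floor{\delta n}}^{2}(n-\floor{\delta n})!$ against the $(\delta n)!/n!$ of the lemma to the single binomial $\binom{n}{\floor{\delta n}}=e^{o_\delta(n)}$, while absorbing the sum $\sum_{\mu'}\prod a b$ into a further sub-exponential factor using the same permanent-type estimate already developed for Corollary~\ref{Cor_subexp_num_stable_match}, gives
\begin{equation*}
T \le \exp\!\big(o_\delta(n)-\Theta(\zeta_0^{2}\, n)\big).
\end{equation*}

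The final and delicate step is to choose $\zeta_0(\delta)$. Writing the $o_\delta(n)$ contribution as $f(\delta)n$ with $f(\delta)\to 0$ as $\delta\to 0$, setting for instance $\zeta_0:=\sqrt{2 f(\delta)/c_\star}$ --- where $c_\star$ is the implicit constant inside $\Theta(\zeta_0^{2} n)$ --- ensures simultaneously that the overall exponent is at most $-\Theta(f(\delta)\,n)+\Theta(n/\sqrt{\log n})$, which is strongly negative (and in particular below $-n^{c}$) for all large $n$, and that $\zeta(\delta)=K\delta+\zeta_0(\delta)\to 0$ as $\delta\to 0$. The main obstacle is precisely this balancing act: the $\Theta(\zeta_0^{2} n)$ concentration savings from the near-eigenvector lemma must be large enough to beat both the combinatorial $o_\delta(n)$ growth and the $\Theta(n/\sqrt{\log n})$ slack from the $p$-versus-$q$ approximation, yet small enough to allow $\zeta_0$ (and hence $\zeta(\delta)$) to vanish as $\delta\to 0$. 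Executing this requires carefully tracking how the implicit constants in all error terms depend on $\delta$ and $C$, but no new ideas beyond those already used in Section~\ref{sec_reg_of_happiness}.
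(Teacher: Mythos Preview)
Your proposal is correct and follows essentially the same route as the paper's proof: apply Lemma~\ref{Lemma_reduction_to_q} with $\Omega$ the complement of $\Oeig$, bound each summand via Lemma~\ref{Prop_eigenvec_of_M_high_prob}, and collapse the double sum over $(\cM',\cW',\mu')$ to a permanent bound on $\vM$ exactly as in Corollary~\ref{Cor_subexp_num_stable_match}. Your treatment of the parameter selection---splitting $\zeta(\delta)=K\delta+\zeta_0$ so that $\Oeig(\Theta(\delta)+\zeta_0)\subseteq\Oeig(\zeta(\delta))$ and then choosing $\zeta_0\asymp f(\delta)^{1/2}$ to beat the $o_\delta(n)$ combinatorial term---is in fact more explicit than the paper, which leaves this balancing implicit in the final sentence ``Invoking Lemma~\ref{Lemma_reduction_to_q} finishes the proof.''
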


Corollary~\ref{Cor_no_stable_outside_Oeigz} roughly states that, in a contiguous market, conditioning on the women's outcomes in a stable matching has an almost even impact on the men's values.

\section{Empirical distribution of values and ranks}\label{sec_distribution}

\subsection{Empirical distribution of values}

Knowing the eigenspace property of the value vectors allows us to characterize the empirical distribution of values.

\begin{restatable}{lemma}{propOempeLikelyForHappinessEmpDist}\label{Prop_Oempe_likely_for_happiness_emp_distr}
Fix any $\delta,\zeta > 0$. Let $\mu'$ be a partial matching of size $n-\floor{\delta n}$ on $\cM'\subseteq\cM$ and $\cW'\subseteq\cW$. For any $\eps > 0$, consider
\begin{equation}\label{Eqn_Prop_Oempe_likely_for_happiness_oempe_def}
    \Oempe := \left\{(\vx,\vy)\in \R_+^n\times\R_+^n : \exists \lambda\in\R_+, \big\|\Femp(\vx) - F_\lambda\|_\infty \le \eps + \Theta(\delta + \sqrt{\zeta})\right\}.
\end{equation}
Then
\begin{equation}
    \E\big[q(\vX_{\cM'}, \vY_{\cW'}) \cdot \mathbbm{1}_{\mathcal{R} \cap \Oeigz\backslash\Oempe}(\vX_{\cM'}, \vY_{\cW'})\big] \le \exp(o_\delta(n)-\Theta(\eps^2 n)) \cdot \frac{(\delta n)!}{n!} \prod_{i\in\cM'}a_{i,\mu'(i)}b_{\mu'(i),i},
\end{equation}
where again the implicit constants are uniform over all $\cM',\cW'$, and $\mu'$.
\end{restatable}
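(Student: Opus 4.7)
The strategy is to condition on $\vY_{\cW'}=\vy$, exploit the eigenspace property of $\Oeigz$ to identify the conditional ``tilted'' law of $\vX_{\cM'}$ (induced by the weight $q(\vx,\vy)$) as a product of near-identical exponentials, apply the generalized DKW inequality to pin its empirical CDF close to an exponential, and combine with Proposition~\ref{Prop_EqXY_bound} to turn the resulting tilted-probability estimate into the target expectation bound.

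\textbf{Tilted distribution and near-uniform rates.} With $\alpha_i:=a_{i,\mu'(i)}$ and $f(\vx):=\prod_{i\in\cM'}\alpha_i e^{-\alpha_i x_i}$, the density $q(\vx,\vy)f(\vx)$ factorizes as $\prod_i\alpha_i\exp(-\lambda_i(\vy)x_i)$ with $\lambda_i(\vy):=\alpha_i+n(\vM\vy)_i$. Hence under the tilted measure $\tilde{P}_\vy(d\vx)\propto q(\vx,\vy)f(\vx)\,d\vx$ the coordinates $X_i$ are independent with $X_i\sim\Exp(\lambda_i(\vy))$, and $Z(\vy):=\E_\vX[q(\vX,\vy)]=\prod_i\alpha_i/\lambda_i(\vy)$. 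On $\Oeigz(\zeta)\cap\mathcal{R}_1$, the defining property of $\Oeigz$ together with bistochasticity of $\vM$ and the bound $\|\vy\|_1\ge\underline{c}_1\log n\gg\alpha_i$ yields a scalar $\lambda=nt(\vy)\asymp\|\vy\|_1$ such that $\lambda_i(\vy)/\lambda\in[1-O(\sqrt{\zeta}),1+O(\sqrt{\zeta})]$ on at least $(1-\sqrt{\zeta})n$ ``good'' indices, and $\lambda_i(\vy)/\lambda\in[1/K,K]$ on the remainder for some $K=K(C)$.

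\textbf{DKW concentration.} Applying Lemma~\ref{Lemma_dkw_non_identical} to the good coordinates, their empirical CDF lies within $\eps+O(\sqrt{\zeta})$ of $F_\lambda$ in Kolmogorov--Smirnov norm with $\tilde{P}_\vy$-probability at least $1-\exp(-\Theta(\eps^2 n))$. Restoring the at most $\sqrt{\zeta}n$ bad coordinates and the $\floor{\delta n}$ zero-padded indices absent from $\cM'$ shifts the empirical CDF in $\ell^\infty$ by at most $\sqrt{\zeta}+\delta$, which is absorbed by the $\Theta(\delta+\sqrt{\zeta})$ slack in the definition of $\Oempe$. Hence $\tilde{P}_\vy(\neg\Oempe)\le\exp(-\Theta(\eps^2 n))$ for every $\vy\in\Oeigz\cap\mathcal{R}_1$.

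\textbf{Integration against the prior.} Writing $\mathcal{R}_2(\vy):=\{\vx:(\vx,\vy)\in\mathcal{R}_2\}$ and using the identity $\int q(\vx,\vy)f(\vx)\mathbbm{1}_A(\vx)\,d\vx=Z(\vy)\tilde{P}_\vy(A)$, for $\vy\in\Oeigz\cap\mathcal{R}_1$ we have
\[
\int q f\,\mathbbm{1}_{\mathcal{R}_1\cap\mathcal{R}_2(\vy)\cap\neg\Oempe}\,d\vx \le Z(\vy)\tilde{P}_\vy(\neg\Oempe)\le e^{-\Theta(\eps^2 n)} Z(\vy).
\]
A Bennett/Chernoff estimate under $\tilde{P}_\vy$ shows $\tilde{P}_\vy(\mathcal{R}_1\cap\mathcal{R}_2(\vy))\ge 1/2$ on the bulk of $\Oeigz\cap\mathcal{R}_1$, since $\|\vX\|_1$ and $\vX^\top\vM\vy$ concentrate around $n/\|\vy\|_1$ and $\Theta(1)$ respectively under the tilted law; the exceptional $\vy$'s---located at the upper boundary of $\mathcal{R}_1$---are rare under $g$ by a large-deviation bound on $\|\vY\|_1$ and contribute negligibly. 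Combining gives $\int q f\,\mathbbm{1}_{\mathcal{R}_1\cap\mathcal{R}_2(\vy)\cap\neg\Oempe}\,d\vx\le 2 e^{-\Theta(\eps^2 n)}\int q f\,\mathbbm{1}_{\mathcal{R}_1\cap\mathcal{R}_2(\vy)}\,d\vx$. Integrating against $g(\vy)\mathbbm{1}_{\mathcal{R}_1\cap\Oeigz}(\vy)$ yields
\[
\E[q\mathbbm{1}_{\mathcal{R}\cap\Oeigz\setminus\Oempe}]\le O\bigl(e^{-\Theta(\eps^2 n)}\bigr)\cdot\E[q\mathbbm{1}_{\mathcal{R}_2}\mathbbm{1}_{\mathcal{R}_1}(\vY)],
\]
after which Proposition~\ref{Prop_EqXY_bound} bounds the right-hand side by $\exp(o_\delta(n))\frac{(\delta n)!}{n!}\prod_i\alpha_i\beta_i$, completing the argument.

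\textbf{Main obstacle.} The crux is ensuring that Lemma~\ref{Lemma_dkw_non_identical} remains quantitatively sharp when a $\sqrt{\zeta}$-fraction of rates is only known up to a bounded-but-not-small constant factor, with the resulting $O(\delta+\sqrt{\zeta})$ KS-shift matching the slack allocated in $\Oempe$ while retaining the $\exp(-\Theta(\eps^2 n))$ concentration. A secondary technicality is controlling the conditioning factor via the uniform lower bound $\tilde{P}_\vy(\mathcal{R}_1\cap\mathcal{R}_2(\vy))\ge 1/2$, together with disposing of atypical $\vy$'s at the boundaries of $\mathcal{R}_1$ via large-deviation bounds on $\|\vY\|_1$.
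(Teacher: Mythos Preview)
Your approach matches the paper's: condition on $\vY_{\cW'}=\vy$, identify the $q$-tilted law of $\vX_{\cM'}$ as a product of independent exponentials with rates $\lambda_i=a_{i,\mu'(i)}+n(\vM\vy)_i$, use the defining property of $\Oeigz$ together with $\|\vy\|_1\ge\underline{c}_1\log n$ to see these rates are within $1+O(\sqrt{\zeta})$ of a common $\lambda$ on all but $\sqrt{\zeta}n$ indices, apply Lemma~\ref{Lemma_dkw_non_identical} (absorbing the bad indices and the $\floor{\delta n}$ zero-padded coordinates into the $\Theta(\delta+\sqrt{\zeta})$ slack), and finish via Proposition~\ref{Prop_EqXY_bound}.

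The only substantive deviation is in the normalization step, and there you make things harder than necessary. You propose to lower bound $\tilde P_\vy(\mathcal{R}_1\cap\mathcal{R}_2(\vy))\ge 1/2$ and then dispose of ``exceptional $\vy$'s at the upper boundary of $\mathcal{R}_1$'' by a separate large-deviation argument. But the $\mathcal{R}_1$ constraint on $\vX_{\cM'}$ is not needed in the denominator at all: the paper compares the numerator directly to $\E[q\,\mathbbm{1}_{\mathcal{R}_2}\mid\vY_{\cW'}=\vy]$, so the ratio is $\tilde P_\vy(A)/\tilde P_\vy(\mathcal{R}_2)$, and one only needs the trivial lower bound $\tilde P_\vy(\mathcal{R}_2)\ge 1/2$, which holds uniformly for $\vy\in\mathcal{R}_1$ because $\E_{\tilde P_\vy}[\vX_{\cM'}^\top\vM\vy]=\sum_{i\in\cM'}(\vM\vy)_i/\lambda_i(\vy)\le 1\ll c_2(\log n)^{1/8}$. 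Integrating over $\vy\in\text{Proj}_y(\mathcal{R}\cap\Oeigz)\subseteq\mathcal{R}_1$ and applying Proposition~\ref{Prop_EqXY_bound} then gives the claim directly---no boundary analysis is required. Your detour through $\tilde P_\vy(\vX_{\cM'}\in\mathcal{R}_1)$ creates the very obstacle you then have to repair, and the proposed repair (``rare under $g$'') is not obviously correct since the relevant measure on $\vy$ is not the prior $g$ but $g$ weighted by $q$.
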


The proof formalizes the intuition that, conditional on stability of $\mu'$ and $\vY_{\cW'}=\vy$, the value $X_i$ for $i\in\cM'$ should behave approximately as $\Exp(\lambda_i)$ for some $\lambda_i = (1+ \Theta(\delta+\sqrt{\zeta})) \|\vy\|_1$.
The full proof is deferred to Appendix~\ref{Proof_prop_Oempe_likely}.
Hence, instead of looking for the optimal $\lambda$ that minimizes $\|\hat{\mathcal{F}}(\vX_{\cM'})-\mathcal{F}(\Exp(\lambda))\|_\infty$ in the definition \eqref{Eqn_Prop_Oempe_likely_for_happiness_oempe_def} of $\Oempe$, we may simply choose $\lambda = \|\vy\|_1$, which only differs from the right choice by at most a tolerable $\Theta(\sqrt{\zeta}+\delta)$ factor. In other words, if we define
\begin{equation*}
    \tOemp(\eps) := \left\{(\vx,\vy)\in \R_+^n\times\R_+^n : \big\|\Femp(\vx) - \mathcal{F}(\Exp(\|\vy\|_1))\|_\infty \le \eps + \Theta(\delta + \sqrt{\zeta})\right\},
\end{equation*}
albeit with a worse implicit constant in $\Theta(\delta+\sqrt{\zeta})$,
the same conclusion holds as in Lemma~\ref{Prop_Oempe_likely_for_happiness_emp_distr} with $\Oempe$ replaced by $\tOemp(\eps)$; that is,
\begin{equation}\label{Eqn_tilde_Oempe_likely_for_happi_emp_distr}
    \E\big[q(\vX_{\cM'}, \vY_{\cW'}) \cdot \mathbbm{1}_{\mathcal{R} \cap \Oeigz\backslash\tOemp(\eps)}(\vX_{\cM'}, \vY_{\cW'})\big] \le \exp(o_\delta(n)-\Theta(\eps^2 n)) \cdot \frac{(\delta n)!}{n!} \prod_{i\in\cM'}a_{i,\mu'(i)}b_{\mu'(i),i}.
\end{equation}

Using Lemma~\ref{Prop_Oempe_likely_for_happiness_emp_distr}, we now prove our first main theorem about the uniform limit of empirical distribution of men's (or women's) value in stable matchings.

\begin{theorem}[Empirical distribution of value]\label{Thm_main_happiness_dist_body}
Fix any $\eps>0$. Then
\begin{equation}\label{Eqn_happiness_dist_main_body_thm_whp}
    \Pp\bigg(\max_{\mu\in\mathcal{S}} \inf_{\lambda\in\R_+} \|\Femp(\vX(\mu))-F_\lambda\|_\infty > \eps\bigg) \lesssim e^{-n^c}
\end{equation}
asymptotically as $n\to\infty$.
In particular, the infimum over $\lambda$ in \eqref{Eqn_happiness_dist_main_body_thm_whp} can be replaced with the choice of $\lambda=\|\vY_\delta(\mu)\|_1$ for $\delta$ sufficiently small.
\end{theorem}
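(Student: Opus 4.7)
The argument is a union bound over stable matchings that combines Lemma~\ref{Lemma_reduction_to_q} with the bounds from Lemma~\ref{Prop_Oempe_likely_for_happiness_emp_distr} and Corollary~\ref{Cor_no_stable_outside_Oeigz}. The scheme has four steps: truncation, reduction to partial matchings via $q$, event splitting, and parameter tuning.

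First I would reduce to the $\delta$-truncated value vectors. Since $\vX(\mu)$ and $\vX_\delta(\mu)$ differ in at most $\floor{\delta n}$ coordinates, $\|\Femp(\vX(\mu)) - \Femp(\vX_\delta(\mu))\|_\infty \le \delta$, so it suffices to show
\begin{equation*}
    \Pp\Bigl(\exists \mu \in \mathcal{S}:\ \inf_\lambda \|\Femp(\vX_\delta(\mu)) - F_\lambda\|_\infty > \eps - \delta\Bigr) \lesssim e^{-n^c}.
\end{equation*}
I would then fix $\delta$ small (to be set at the end) and $\zeta = \zeta(\delta)$ as in Corollary~\ref{Cor_no_stable_outside_Oeigz} so that the additive slack $\Theta(\delta + \sqrt{\zeta})$ built into the definition of $\tOemp$ is at most $\eps/4$. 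Applying Lemma~\ref{Lemma_reduction_to_q} with $\Omega := (\R_+^n \times \R_+^n) \setminus \tOemp(\eps/4)$ reduces the target probability to $e^{-n^c} + \exp(\Theta(n/\sqrt{\log n})) \cdot \Sigma$, where $\Sigma$ is the union sum of $\E[q(\vX_{\cM'}, \vY_{\cW'}) \mathbbm{1}_{\mathcal{R} \cap \Omega}]$ over partial matchings $\mu'$ on $\cM', \cW'$ of size $n - \floor{\delta n}$.

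I would then split $\mathcal{R} \cap \Omega = (\mathcal{R} \setminus \Oeigz) \cup (\mathcal{R} \cap \Oeigz \cap \Omega)$. Summing $\E[q \mathbbm{1}_{\mathcal{R} \setminus \Oeigz}]$ over partial matchings produces an overall $e^{-n^c}$ contribution by the same route that gave Corollary~\ref{Cor_no_stable_outside_Oeigz} from Lemma~\ref{Prop_eigenvec_of_M_high_prob}. For the other piece, each summand is bounded by \eqref{Eqn_tilde_Oempe_likely_for_happi_emp_distr} as $\exp(o_\delta(n) - \Theta(\eps^2 n)) \cdot \frac{(\delta n)!}{n!} \prod_{i\in\cM'} a_{i,\mu'(i)}b_{\mu'(i),i}$, and the combinatorial total $\sum_{\cM',\cW',\mu'} \frac{(\delta n)!}{n!} \prod a_{i,\mu'(i)}b_{\mu'(i),i}$ is $\exp(o_\delta(n))$ -- this is precisely the estimate underlying Corollary~\ref{Cor_subexp_num_stable_match} (which combines Proposition~\ref{Prop_EqXY_bound} with the bistochasticity of $\vM$). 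Thus the second piece contributes at most $\exp(o_\delta(n) - \Theta(\eps^2 n))$. The ``in particular'' assertion is automatic because $\tOemp$ is defined with the explicit choice $\lambda = \|\vy\|_1 = \|\vY_\delta(\mu)\|_1$, and triangle inequality against the truncation bound from the first step transports this to $\vX(\mu)$.

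Assembling everything, the failure probability is at most $e^{-n^c} + \exp(\Theta(n/\sqrt{\log n}) + o_\delta(n) - \Theta(\eps^2 n))$, which is $\lesssim e^{-n^c}$ once $\delta$ is taken small (and $n$ large). \textbf{The main technical obstacle} is the coordinated choice of parameters: $\delta$ must be simultaneously (i) at most $\eps/4$ so the truncation loss is absorbed, (ii) small enough that the additive slack $\Theta(\delta + \sqrt{\zeta(\delta)})$ in $\tOemp$ stays below $\eps/4$, and (iii) small enough that the $o_\delta(n)$ combinatorial overhead is dominated by the Gaussian-type savings $\Theta(\eps^2 n)$ from Lemma~\ref{Prop_Oempe_likely_for_happiness_emp_distr}. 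All three constraints are compatible because $\eps$ is fixed independently of $\delta$, and the Lemma~\ref{Lemma_reduction_to_q} approximation slack $\exp(\Theta(n / \sqrt{\log n}))$ is sub-exponential and hence harmless for fixed $\eps > 0$.
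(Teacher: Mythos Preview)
Your proposal is correct and follows essentially the same route as the paper's proof: apply Lemma~\ref{Lemma_reduction_to_q}, split the bad region using $\Oeigz$, handle one piece via Lemma~\ref{Prop_eigenvec_of_M_high_prob}/Corollary~\ref{Cor_no_stable_outside_Oeigz} and the other via \eqref{Eqn_tilde_Oempe_likely_for_happi_emp_distr}, sum over partial matchings using the permanent bound behind Corollary~\ref{Cor_subexp_num_stable_match}, and finally absorb the $\delta$-truncation loss. The only cosmetic differences are that the paper invokes Corollary~\ref{Cor_no_stable_outside_Oeigz} as a standalone high-probability statement rather than re-deriving it inside the same application of Lemma~\ref{Lemma_reduction_to_q}, and it splits $\eps$ into thirds rather than quarters; also note your ``$=$'' in the set decomposition should be ``$\subseteq$'', which is all you need for the upper bound.
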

\begin{proof}
    Plugging \eqref{Eqn_tilde_Oempe_likely_for_happi_emp_distr} into Lemma~\ref{Lemma_reduction_to_q} and repeating the same arithmetic as in \eqref{Eqn_sum_expectation_Omega_zeta} and \eqref{Eqn_sum_expectation_Omega_zeta_summation_bound} immediately give
    \begin{equation}
        \Pp(\exists \mu\in\mathcal{S}, (\vX_\delta(\mu),\vY_\delta(\mu))\in \Oeigz\backslash\tilde{\Omega}_{\text{emp}}(\eps)) \le e^{-n^c} + \exp(o_\delta(n) - \Theta(\eps^2 n)) \lesssim e^{-n^c},
    \end{equation}
    granted that $\eps \ge \eps_0(\delta)$, where the function $\eps_0(\delta)\to 0$ as $\delta\to 0$.
    Corollary~\ref{Cor_no_stable_outside_Oeigz} implies that with probability at least $1-\Theta(e^{-n^c})$ there exists no stable matching $\mu$ with $(\vX_\delta(\mu),\vY_\delta(\mu))\notin \Oeigz$, and hence
    \begin{equation}\label{Eqn_proof_Thm_main_happiness_dist_tOemp}
        \Pp(\exists \mu\in\mathcal{S}, (\vX_\delta(\mu),\vY_\delta(\mu))\notin \tOemp(\eps/3)) \lesssim e^{-n^c},
    \end{equation}
    granted that $\eps_0(\delta) < \eps / 3$.
    By choosing $\delta$ (and hence also $\zeta=\zeta(\delta)$) sufficiently small so that the $\Theta(\delta+\sqrt{\zeta})$ term in the definition of $\tOemp$ is upper bounded by $\eps/3$, we ensure
    \begin{equation}
        \Pp(\exists \mu\in\mathcal{S}, \big\|\Femp(\vX_\delta(\mu)) - \mathcal{F}(\Exp(\|\vY_\delta(\mu)\|_1))\|_\infty \ge 2\eps/3) \lesssim e^{-n^c}.
    \end{equation}
    By further restricting $\delta$ to be sufficiently small, we may absorb the difference caused by the $\delta$-truncation on $\vX(\mu)$ into an extra term of $\Theta(\delta)\le \eps/3$, since $\|\Femp(\vX_\delta(\mu))-\Femp(\vX(\mu))\|_\infty \le \delta$. The theorem follows immediately.
\end{proof}

With essentially the same analysis as in Lemma~\ref{Prop_Oempe_likely_for_happiness_emp_distr} and Theorem~\ref{Thm_main_happiness_dist_body}, except for replacing the DKW inequality with Bernstein's inequality for empirical average, we can also deduce the fillowing result. The proof is omitted.

\begin{proposition}
For any fixed $\eps,\delta > 0$ and $0< c < 1/2$,
\begin{equation}
    \Pp\bigg(\max_{\mu\in\mathcal{S}} |n^{-1}\|\vX_\delta(\mu)\|_1 \|\vY_\delta(\mu)\|_1 - 1| > \eps\bigg) \lesssim e^{-n^c}.
\end{equation}
\end{proposition}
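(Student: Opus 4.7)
The plan mirrors the strategy behind Lemma~\ref{Prop_Oempe_likely_for_happiness_emp_distr} and Theorem~\ref{Thm_main_happiness_dist_body}, but replaces the uniform DKW-type CDF control with Bernstein-type concentration for the single scalar $\|\vX_{\cM'}\|_1$. I would first introduce the bad region
\begin{equation*}
\Omega_{\rm prod}(\eps) := \big\{(\vx,\vy)\in\R_+^n\times\R_+^n : \big|n^{-1}\|\vx\|_1\|\vy\|_1 - 1\big| > \eps\big\},
\end{equation*}
and, via Lemma~\ref{Lemma_reduction_to_q} together with Corollary~\ref{Cor_no_stable_outside_Oeigz}, reduce the problem to bounding, uniformly over partial matchings $\mu'$ of size $n-\floor{\delta n}$ on $\cM'\subseteq\cM$, $\cW'\subseteq\cW$, the weighted expectation
\begin{equation*}
\E\big[q(\vX_{\cM'},\vY_{\cW'})\cdot \mathbbm{1}_{\mathcal{R}\cap\Oeigz\cap\Omega_{\rm prod}(\eps)}(\vX_{\cM'},\vY_{\cW'})\big]
\end{equation*}
for some $\zeta=\zeta(\delta)\to 0$.

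The core computation is conditional on $\vY_{\cW'}=\vy$. Because
\begin{equation*}
q(\vx,\vy) \prod_{i\in\cM'} f_{a_{i,\mu'(i)}}(x_i) = \prod_{i\in\cM'} a_{i,\mu'(i)} \exp(-\lambda_i x_i),\qquad \lambda_i := a_{i,\mu'(i)} + n(\vM\vy)_i,
\end{equation*}
the $q$-weighting factors as the prefactor $\prod_{i\in\cM'} a_{i,\mu'(i)}/\lambda_i$ times the density of the tilted product measure $\bigotimes_{i\in\cM'}\Exp(\lambda_i)$. On $(\vx,\vy)\in\Oeigz$ we have $n(\vM\vy)_i = (1+O(\sqrt{\zeta}))\|\vy\|_1$ for all but $\sqrt{\zeta}n$ exceptional indices, and the $O(1)$ term $a_{i,\mu'(i)}$ is negligible against $\|\vy\|_1 = \Omega(\log n)$ from $\mathcal{R}_1$. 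Under the tilted product, each $X_i$ is thus independent sub-exponential with mean $1/\lambda_i \approx 1/\|\vy\|_1$ and variance $\Theta(1/\|\vy\|_1^2)$, and Bernstein's inequality yields
\begin{equation*}
\Pp_{\vX_{\cM'}\sim\bigotimes\Exp(\lambda_i)}\Big(\big|\|\vX_{\cM'}\|_1\cdot\|\vy\|_1 - n\big| > \eps n\Big) \le \exp(-\Theta(\eps^2 n)),
\end{equation*}
provided $\delta$ (hence $\zeta$) is small enough that the $O(\delta+\sqrt{\zeta}+\delta\log(1/\delta))\cdot n$ drift coming from the exceptional indices, $\lambda_i\ne\|\vy\|_1$, and the discrepancy between the full sum and its $\delta$-truncate is absorbed into $\eps n / 2$.

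Inserting this back into Lemma~\ref{Lemma_reduction_to_q} and performing the combinatorial sum over $(\cM',\cW',\mu')$ exactly as in Corollary~\ref{Cor_subexp_num_stable_match} contributes an additional $\exp(o_\delta(n))$ factor, which is swallowed by $\exp(-\Theta(\eps^2 n))$ once $\delta\ll \eps$; combined with the $e^{-n^c}$ failure probability of Lemma~\ref{Lemma_reduction_to_q} and Corollary~\ref{Cor_no_stable_outside_Oeigz}, this gives the desired bound. The main technical subtlety, which has already been handled by the preceding scaffolding, is the heavy tail of $\Exp(1)$: Bernstein requires sub-exponential tail control, and without the $\delta$-truncation and restriction to $\mathcal{R}^\star$ the variance of $\|\vX(\mu)\|_1$ would be uncontrolled. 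With that infrastructure in place, the scalar concentration here is the one-dimensional simplification of the DKW step of Lemma~\ref{Prop_Oempe_likely_for_happiness_emp_distr}.
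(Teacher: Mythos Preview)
Your proposal is correct and is precisely the argument the paper has in mind: the paper explicitly states that this proposition follows from ``essentially the same analysis as in Lemma~\ref{Prop_Oempe_likely_for_happiness_emp_distr} and Theorem~\ref{Thm_main_happiness_dist_body}, except for replacing the DKW inequality with Bernstein's inequality for empirical average,'' and omits the proof. You have reconstructed that outline faithfully, including the conditional tilting to $\bigotimes_{i\in\cM'}\Exp(a_{i,\mu'(i)}+n(\vM\vy)_i)$, the use of $\Oeigz$ to ensure the rates are nearly equal to $\|\vy\|_1$, and the final union bound via Lemma~\ref{Lemma_reduction_to_q}.
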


The effect of the $\delta$-truncation is nontrivial to remove because the sum of values can be sensitive to outliers, in particular given the heavy tail of the exponential distribution. We believe, however, that a refined analysis should suggest that $\sup_{\mu\in\mathcal{S}}\big|n^{-1}\|\vX(\mu)\|_1\|\vY(\mu)\|_1 - 1\big| \overset{p}{\to} 0$. This is the analogue of the ``law of hyperbola'' in \citet{pittel1992likely}.

\subsection{Empirical distribution of ranks}

Based on the previous discussion on the empirical distribution of value, we now extend the result to ranks and prove our second main theorem.

\begin{theorem}
[Empirical distribution of ranks]\label{Thm_main_rank_dist_body}
For any fixed $\eps>0$,
\begin{equation}\label{Eqn_rank_dist_main_thm_body_whp}
    \Pp\bigg(\max_{\mu\in\mathcal{S}} \inf_{\lambda\in\R_+} \|\Femp(\bm{\phi}^{-1}\circ\mathbf{R}(\mu))-F_\lambda\|_\infty > \eps\bigg) \lesssim e^{-n^c}
\end{equation}
asymptotically as $n\to\infty$,
where $\bm\phi$ is men's fitness vector.
As in Theorem~\ref{Thm_main_happiness_dist_body}, the infimum over $\lambda$ in \eqref{Eqn_happiness_dist_main_body_thm_whp} can be replaced with the choice of $\lambda=\|\vY_\delta(\mu)\|_1$ for $\delta$ sufficiently small.
\end{theorem}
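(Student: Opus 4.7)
My plan is to lift Theorem~\ref{Thm_main_happiness_dist_body} (the empirical KS bound for values) to Theorem~\ref{Thm_main_rank_dist_body} (the analogous bound for rescaled ranks) by showing that the empirical CDF $\hat G(t) := \Femp(\bm{\phi}^{-1}\circ\mathbf{R}(\mu))(t) = \frac{1}{n}\sum_i \mathbbm{1}\{R_i(\mu)\le\phi_i t\}$ is uniformly close to $\hat F(t) := \Femp(\vX(\mu))(t)$, so that both land close to $F_\lambda$ with $\lambda = \|\vY_\delta(\mu)\|_1$. This is the translation-from-values-to-ranks step promised in Section~\ref{sec_prep_proof}, carried out by a per-$t$ first-and-second-moment calculation combined with a DKW-type supremum bound.

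Fix a stable partial matching $\mu'$ of size $n-\floor{\delta n}$ and condition on $\vX_{\cM'}(\mu')=\vx$, $\vY_{\cW'}(\mu')=\vy$, and stability of $\mu'$. Since the no-blocking-pair constraints decompose across pairs, the family $\{(X_{ij},Y_{ji})\}_{j\ne\mu'(i)}$ remains mutually independent, each pair distributed as $\Exp(a_{ij})\otimes\Exp(b_{ji})$ conditioned on $\{X_{ij}>x_i\}\cup\{Y_{ji}>y_j\}$. Consequently $R_i(\mu')-1 = \sum_{j\ne\mu'(i)} \mathbbm{1}\{X_{ij}<x_i\}$ is a sum of independent Bernoullis with success probabilities $p_{ij} = \frac{(1-e^{-a_{ij}x_i})\,e^{-b_{ji}y_j}}{1-(1-e^{-a_{ij}x_i})(1-e^{-b_{ji}y_j})}$, and the rank variables $R_1,\ldots,R_n$ are conditionally independent across $i$. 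A first-order Taylor expansion together with the moment regularity from Proposition~\ref{Prop_R_likely} delivers $\sum_j p_{ij} = \phi_i x_i(1+o(1))$ and $\Var(R_i\mid\vx,\vy,\mathrm{stab})\le\sum_j p_{ij}$.

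For each fixed $t$, I will split indices by $|x_i-t|$. For agents with $|x_i-t|\ge\tau := \eps/(3\lambda)$, Bernstein's inequality on the centered sum $R_i-\phi_i x_i$ yields $\Pr[\mathbbm{1}\{R_i\le\phi_i t\}\ne \mathbbm{1}\{x_i\le t\}\mid\vx,\vy,\mathrm{stab}]\le\exp(-\Omega(\eps^2 n/\log^2 n))$. The remaining indices with $|x_i-t|<\tau$ number at most $(2\tau\lambda+2\eps/3)n = O(\eps n)$, using the Lipschitz continuity of $F_\lambda$ (density bounded by $\lambda$) and the KS bound $\|\hat F-F_\lambda\|_\infty\le\eps/3$ from Theorem~\ref{Thm_main_happiness_dist_body}. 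Hence $\sup_t|\E\hat G(t)-\hat F(t)|\le O(\eps)$ holds deterministically given $(\vx,\vy)$ in the nice region. Since the indicators $\mathbbm{1}\{R_i\le\phi_i t\}$ are conditionally independent and $[0,1]$-valued, the DKW-type inequality for non-identically-distributed independent sums (Lemma~\ref{Lemma_dkw_non_identical}) gives $\Pr[\sup_t|\hat G(t)-\E\hat G(t)|>\eps/3\mid\vx,\vy,\mathrm{stab}]\le\exp(-\Omega(\eps^2 n))$, and the triangle inequality then yields $\sup_t|\hat G(t)-F_\lambda(t)|\le O(\eps)$.

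To lift this to all stable matchings simultaneously, I will define the bad event $\Omega_{\text{rank}}(\eps)\subseteq\R_+^n\times\R_+^n$ as the subset of $(\vx,\vy)\in\mathcal{R}\cap\Oeigz$ on which the DKW-type concentration or the above pointwise comparison fails, plug the exponential tail bound into Lemma~\ref{Lemma_reduction_to_q}, and follow exactly the arithmetic used in the proof of Theorem~\ref{Thm_main_happiness_dist_body}. Since $\exp(-\Omega(\eps^2 n))$ beats the $\exp(\Theta(n/\sqrt{\log n}))$ prefactor from Lemma~\ref{Lemma_reduction_to_q} by an exponential margin, $\Omega_{\text{rank}}(\eps)$ is ruled out with probability $1-O(e^{-n^c})$ uniformly over all stable $\mu$, and running the argument with $\eps$ replaced by $\eps/C$ for the appropriate constant $C$ yields \eqref{Eqn_rank_dist_main_thm_body_whp}. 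The main obstacle is obtaining a \emph{supremum-in-$t$} concentration of $\hat G$ rather than a merely pointwise bound, so that the union bound over stable matchings can be executed without losing a continuum of thresholds $t$; this is handled by the DKW-type inequality for non-identically-distributed sums, combined with the conditional independence of $R_1,\ldots,R_n$ given the values and stability, while the $\delta$-truncation absorbs the atypical agents whose $X_i$ is too small or too large for the Bernstein concentration to give useful control.
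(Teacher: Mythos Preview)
Your overall strategy is the right one and matches the paper's: condition on $(\vx,\vy)$ and stability, express $R_i-1$ as a sum of conditionally independent Bernoullis with mean close to $\phi_i x_i$, show $R_i/\phi_i\approx x_i$ for most $i$, and transfer the KS bound for values to ranks. But there are two genuine gaps in the execution.

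First, the per-agent Bernstein bound $\exp(-\Omega(\eps^2 n/\log^2 n))$ you claim is not uniform in $i$ and generally fails. The Bernstein exponent for a deviation of size $\phi_i\tau$ from a mean of order $\phi_i x_i$ is $\Theta(\phi_i\tau^2/(x_i\vee\tau))$; with $\tau=\eps/(3\lambda)$ and $x_i$ in the tail of $\Exp(\lambda)$ this is $\Theta(n\eps^2/(\lambda x_i\cdot\lambda))$, which in the man-optimal regime $\lambda=\Theta(n/\log n)$ is only polylogarithmic, nowhere near linear in $n$. The $\delta$-truncation does not remove such agents: it chops the top $\delta/2$-quantile, which only enforces $\|\vX_\delta\|_\infty=O(1)$ (Lemma~\ref{Lemma_bdd_x_delta_infty_norm}), not $x_i\lesssim 1/\lambda$. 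What you actually need for the deterministic step $\sup_t|\E\hat G(t)-\hat F(t)|\le O(\eps)$ is only that the \emph{sum} over $i$ of the disagreement probabilities be $o(n)$, and that does follow from Chebyshev together with $\|\vx\|_1\|\vy\|_1=O(n\,\mathrm{polylog}\,n)$; but you also need $\sum_j p_{ij}=(1+o(1))\phi_i x_i$, which requires both $x_i=o(1)$ and most $y_j=o(1)$. That is where the paper inserts the extra tail event $\tilde\Omega_{\text{tail}}$ (Lemma~\ref{Lemma_probable_happiness_majority} and Corollary~\ref{Cor_no_stable_match_tilde_Otailab}) pinning all but $(\delta+\rho)n$ of the values into the window $[\Theta((\log n)^{7/8}/n),\,\Theta(1/\log n)]$.

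Second, Lemma~\ref{Lemma_dkw_non_identical} as stated requires the individual CDFs $G_i$ of $R_i/\phi_i$ to lie within a small $\delta$ of a \emph{common} reference $F$; here each $G_i$ is nearly a step function at a different location $x_i$, so $\sup_i\|G_i-F\|_\infty$ is of order one for any $F$, and the lemma yields nothing. A general DKW inequality for non-identically-distributed independent samples around the averaged CDF $\bar G=\frac{1}{n}\sum_i G_i$ is true and would close this gap, but it is not the lemma you cite. The paper sidesteps this entirely: it proves directly (Proposition~\ref{Prop_most_have_good_rank_happi_ratio}) via Chebyshev followed by a Poisson tail bound that, with probability $1-\exp(-\Theta(\gamma n))$, the multiplicative ratio $R_i/(\phi_i x_i)$ lies in $[1-\theta,1+\theta]$ for all but $(\delta+\rho+\gamma)n$ agents. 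On that event one has the deterministic sandwich
\[
\hat F\!\Big(\tfrac{t}{1+\theta}\Big)-(\delta+\rho+\gamma)\ \le\ \hat G(t)\ \le\ \hat F\!\Big(\tfrac{t}{1-\theta}\Big)+(\delta+\rho+\gamma),
\]
and combining this with Theorem~\ref{Thm_main_happiness_dist_body} and the Lipschitz continuity of $F_\lambda$ finishes the proof without any DKW step on the rank side.
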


Heuristically, we would expect the rank $R_i(\mu)$ for a man to be proportional to his value $X_i(\mu)$ when stability of $\mu$ and the values $X_i(\mu)=x_i$ and $\vY(\mu)=\vy$ are conditioned upon. Indeed, a woman $w_j$ with $j\ne\mu(i)$ stands ahead of $w_{\mu(i)}$ in the preference of man $m_i$ exactly when $X_{ij}<x_i$ and $Y_{ji}>y_{\mu(i)}$. As $X_{ij}$ and $Y_{ji}$ jointly follow the product distribution $\Exp(a_{ij})\otimes\Exp(b_{ji})$ conditional on the event that $X_{ij}<x_i$ and $Y_{ji}<y_{\mu(i)}$ do not simultaneously happen, the conditional probability that $\w_j\succeq_{\m_i} \w_{\mu(i)}$ is $\frac{(1-e^{-a_{ij}x_i})e^{-b_{ji}y_j}}{1-(1-e^{-a_{ij}x_i})(1-e^{-b_{ji}y_j})} \approx a_{ij} x_i$. By summing over $j\ne \mu(i)$, we should expect that the rank $R_i(\mu)$ to be in expectation close to $x_i \sum_{j\ne i} a_{ij} \approx x_i \phi_i$; further, as the sum of independent Bernoulli random variables, $R_i(\mu)$ should concentrate at its expectation, therefore leading to $R_i(\mu) \approx x_i \phi_i$ simultaneously for most $i\in [n]$.
This intuition is formalized in the proof, which is given in  Appendix~\ref{Append_proof_thm_main_rank}.

\section{Conclusion}

We studied the welfare structure in stable outcomes in large two-sided matching markets with logit-based preferences. Under a contiguity condition that prevents agents from disproportionately favoring or unfavoring other agents, we characterize stable and almost stable  matchings outcomes in terms of the empirical distribution of latent values and ranks.

In particular, our results suggest that the welfare of an agent in a stable matching can be decomposed into three parts: a global parameter that determines the trade-off between the two sides, a personal intrinsic fitness computed from the systematic scores, and an exogenous factor behaving as a standard exponential random variable. In other words, given the market structure (i.e., the systematic scores), the average rank (or value) of the men (or women) is essentially a sufficient statistic for the outcome distribution.

\bibliographystyle{plainnat}
\bibliography{ref}

\appendix

\section{Proofs of typical behaviors of scores in stable matching}\label{Append_weak_regular_scores}

Recall that for a matching $\mu$ with (latent) value vectors $\vX(\mu)$ and $\vY(\mu)$, we define $\vU(\mu) = F(\hat{\vX}(\mu))$ and $\vV(\mu) = f(\hat{\vX}(\mu))$ with the standard exponential CDF $F(z) = 1-e^{-z}$ applied component-wise to the renormalized values $\hat{X}_i(\mu) = X_i(\mu) a_{i,\mu(i)}$ and $\hat{Y}_i(\mu) = Y_i(\mu) b_{i,\mu^{-1}(i)}$ for $i\in[n]$, so that $U_i,V_i\sim\Unif([0,1])$ and are mutually independent due to the way the score matrices are generated.

\begin{lemma}\label{Lemma_DA_prop_num}
For any $c \in (0,1/2)$, there exists a constant $\theta_1> 0$ (depending on $c$ and $C$) such that in a random instance of the market, at least $\theta_1 n\ln n$ proposals are made during man-proposing deferred acceptance with probability $1 - \exp(-n^c)$.
\end{lemma}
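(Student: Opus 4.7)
The plan is to run men-proposing deferred acceptance using the principle of deferred decisions, and to lower bound $T$ (the total number of proposals) by a coupon-collector-style argument applied to the sequence of women being proposed to. Since $|\mathcal{M}|=|\mathcal{W}|=n$ and all scores are positive, the algorithm terminates with a perfect matching, so every woman is proposed to at least once; hence $T$ is at least the time needed to cover all $n$ women in the sequential proposal process. Generating preferences on the fly: when man $m_i$ is about to make his $k$-th proposal, the target is sampled from his remaining list with probability proportional to $\{a_{ij}\}$. By Assumption~\ref{Assumption_C_bounded}, for any specific remaining woman the target probability is at most $C^2/(n-k+1)$, and in particular at most $2C^2/n$ whenever the proposer has made at most $n/2$ prior proposals.

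Fix $K=\lfloor\theta_1 n\ln n\rfloor$ for a small constant $\theta_1>0$ to be chosen, and let $U_t$ denote the number of women not yet proposed to after $t$ steps. In the ``unsaturated'' regime where no proposer has made more than $n/2$ proposals throughout the first $K$ steps, each step targets any specific still-uncovered woman with conditional probability at most $2C^2/n$, so $\mathbb{E}[U_K]\gtrsim n(1-2C^2/n)^K\ge n^{1-2C^2\theta_1(1+o(1))}$. Since $U_t$ has bounded differences (each proposal changes $U_t$ by at most one), applying Azuma--Hoeffding to the Doob martingale of $U_K$ yields
$$\Pr\bigl[U_K \le \tfrac{1}{2}\mathbb{E}[U_K]\bigr] \le \exp\bigl(-\Omega(n^{1-4C^2\theta_1}/\ln n)\bigr).$$
Choosing $\theta_1 < (1-c)/(4C^2)$ makes the exponent exceed $n^c$ and also ensures $\tfrac12\mathbb{E}[U_K]\ge 1$, so with probability $1-\exp(-n^c)$ we have $U_K\ge 1$; this contradicts DA terminating with at most $K$ proposals and yields $T>\theta_1 n\ln n$.

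The main obstacle is handling the ``saturated'' case where some man makes more than $n/2$ proposals by step $K$, since the $2C^2/n$ uniform bound on target probabilities fails there. I would bound this event separately via a union bound over $i\in[n]$: for each fixed $i$, $\Pr[T_i > n/2]$ decays super-polynomially because being rejected by half the women requires each rejecting woman to hold a better-ranked offer, chaining back to many prior proposals across the market. Making this tail bound rigorous without circularly invoking the main theorems of the paper---i.e., deriving it from first principles about DA dynamics and the $C$-boundedness assumption---is the principal technical challenge. An alternative route that avoids the issue is to truncate each man's list to length $n/2$ (terminating him unmatched if he reaches the cap) and run the coupon-collector argument on this variant of DA; the same Azuma-based concentration then shows $\Omega(n\ln n)$ proposals are required for coverage to be completed, and a comparison lemma transfers the bound to the untruncated process.
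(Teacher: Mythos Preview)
Your coupon-collector intuition is correct, but the execution has two genuine gaps.

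\textbf{Gap 1: the Azuma step is not justified.} You write that ``$U_t$ has bounded differences (each proposal changes $U_t$ by at most one)'' and then invoke Azuma--Hoeffding on the Doob martingale $M_t=\mathbb{E}[U_K\mid\mathcal{F}_t]$. But these are different objects: the fact that the \emph{process} $U_t$ moves by at most one per step does \emph{not} imply $|M_t-M_{t-1}|\le 1$. In deferred acceptance the outcome of the $t$-th proposal determines who is rejected and hence who proposes at step $t+1$, so changing that one piece of randomness cascades through the entire remaining trajectory. There is no product structure of independent coordinates here on which McDiarmid's inequality can be applied, and without it the Doob increments can be much larger than one. (A toy illustration: take a process with $|U_t-U_{t-1}|=1$ where a single Bernoulli at time $1$ decides whether $U_t=t$ or $U_t=-t$ for all $t$; then $|M_1-M_0|=K$.) Your displayed probability bound therefore does not follow.

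\textbf{Gap 2: the saturated case.} You correctly flag that once a man has exhausted more than $n/2$ of his list the $2C^2/n$ bound on per-woman proposal probability fails, and you propose either (a) to bound $\Pr[T_i>n/2]$ from first principles or (b) a truncate-and-compare argument. Neither is carried out. Route (a) is essentially the nontrivial part of the lemma: bounding the maximum number of proposals any single man makes. Route (b) would need a monotone coupling between the truncated and untruncated processes for the quantity $U_K$, which is not obvious.

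The paper sidesteps both problems by running deferred acceptance \emph{with re-proposals}: each time a man proposes, the target is drawn from his full list (with replacement) proportionally to $a_{ij}$. This decouples the steps, making the covering time an exact weighted coupon-collector variable, for which $\Pr[T<\alpha n\ln n]\le\exp(-n^c)$ follows from standard bounds without any martingale argument. The paper then separately shows (citing the analysis in \cite{ashlagi2020tiered}) that with probability $1-\exp(-n^c)$ no man makes more than $\ell n^{2c}$ proposals; this is precisely your ``saturated case'' handled by reference. Finally, since each proposal is then a duplicate with probability $O(n^{2c-1})$, a binomial concentration shows re-proposals account for at most half the total, transferring the $\Omega(n\ln n)$ lower bound to the true process. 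Using the with-replacement variant is the key idea you are missing: it makes the coupon-collector step exact and isolates the max-per-man bound as a separate, citable lemma.
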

\begin{proof}
This result follows from a standard analysis of the deferred acceptance algorithm executed as in \cite[Section~3]{ashlagi2020tiered}. In \cite{ashlagi2020tiered}, the preference model involves tiers, where fitness values among different tiers differ by at most a constant factor. It turns out that this bounded ratio of fitness is the only thing used in the proofs, and is also satisfied by our matching market under Assumption~\ref{Assumption_C_bounded}. The main steps of analysis are as follows.
\begin{enumerate}
    \item Consider (man-proposing) deferred acceptance \emph{with re-proposals}, where each time when man $i$ proposes, his proposal will go to woman $i$ with probability proportional to $a_{ij}$, independent of all previous proposals (and their acceptance/rejection). The total number $T$ of proposals in this process is equal in distribution to the number of draws in the coupon collector problem, and from standard concentration bounds we can show $\Pp(T \ge k n^{1+c}) \le e^{-n^c-2}$ for $k$ sufficiently large and $\Pp(T < \alpha n\ln n) \le \exp(-n^c-2)$ for $\alpha>0$ sufficiently small (see also \cite{doerr2020probabilistic}). The details mirror Appendices~A and B in \cite{ashlagi2020tiered}.
    \item Analogous to Lemmas~3.5 and 3.6 in \cite{ashlagi2020tiered}, we can show that, with probability $1-\exp(-n^c-1)$, no single man makes more than $\ell n^{2c}$ proposals during deferred acceptance with re-proposal for $\ell$ sufficiently large.
    \item Conditional on $T\ge \alpha n\ln n$ and the maximum number of proposals any man makes being at most $\ell n^{2c}$, the fraction of re-proposals should be no greater than $C\ell n^{2c-1}$ in expectation, since each proposal will be a duplicate of a previous proposal independently with probability at most $\frac{C\ell n^{2 n}}{n}$. It follows immediately from a binomial concentration that the (conditional) probability that the number of repeated proposals exceeds $T/2$ is exponentially small. Hence, the actual number of proposals during deferred acceptance (without re-proposals) is at least $\frac{\alpha}{2} n\ln n$ with probability $1-\exp(-n^c)$.
\end{enumerate}%
\end{proof}

\begin{lemma}\label{Lemma_U1_ge_ln_n}
For any $c \in (0,1/2)$, there exists a constant $\theta_1> 0$ (depending on $c$ and $C$) such that,
with probability $1-\exp(-n^c)$,
there exist no stable matchings $\mu$ where $\|\vU(\mu)\|_1 \le \theta_2\ln n$.
\end{lemma}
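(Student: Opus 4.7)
The plan is to reduce to the man-optimal stable matching and then combine the deferred-acceptance lower bound from Lemma~\ref{Lemma_DA_prop_num} with order-statistic estimates for the normalized latent values.

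First, since the man-optimal stable matching $\mu_M$ simultaneously minimizes $X_i(\mu)$, and hence also $\hat{X}_i(\mu) = a_{i,\mu(i)} X_i(\mu)$ and $U_i(\mu) = 1 - e^{-\hat{X}_i(\mu)}$, over all stable $\mu$ for every $i$, it suffices to show $\|\vU(\mu_M)\|_1 \geq \theta_2 \ln n$ with probability $1 - e^{-n^c}$. Let $R_i = R_i(\mu_M)$ and $T = \sum_i R_i$; recall that $T$ equals the number of proposals made during man-proposing deferred acceptance, so by Lemma~\ref{Lemma_DA_prop_num}, $T \geq \theta_1 n \ln n$ on an event of probability $1 - e^{-n^c}$.

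The next step is to pass from ranks to the normalized values. Introduce $Z_{ij} = a_{ij} X_{ij} \sim \Exp(1)$, which are mutually independent. Because $R_i$ of the $X_{ij}$'s satisfy $X_{ij} \le X_i(\mu_M)$, the same indices $j$ satisfy $Z_{ij} = a_{ij} X_{ij} \le (a_{ij}/a_{i,\mu_M(i)})\,\hat{X}_i(\mu_M) \le C^2 \hat{X}_i(\mu_M)$ under Assumption~\ref{Assumption_C_bounded}. Hence the $R_i$-th order statistic of $(Z_{ij})_{j\in[n]}$ is at most $C^2 \hat{X}_i(\mu_M)$, i.e., $\hat{X}_i(\mu_M) \geq Z_i^{(R_i)}/C^2$. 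Combining with the elementary inequality $1-e^{-x} \geq \tfrac{1}{2}\min(x,1)$ yields $U_i(\mu_M) \geq \tfrac{1}{2}\min\!\big(Z_i^{(R_i)}/C^2,\,1\big)$.

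The bulk of the technical work is then a uniform lower bound on order statistics. Using the classical representation $Z_i^{(k)} \overset{d}{=} \sum_{\ell=1}^k E_\ell/(n-\ell+1)$ with $E_\ell \sim \Exp(1)$ i.i.d., one has $Z_i^{(k)} \succeq S_k/n$ with $S_k \sim \mathrm{Gamma}(k,1)$, and a standard Chernoff bound gives $\Pp[Z_i^{(k)} < k/(4n)] \leq e^{-c_0 k}$ for an absolute constant $c_0 > 0$. Choosing a threshold $K_0 = \lceil c_0^{-1}(n^c + 2 \ln n)\rceil$ and taking a union bound over $i \in [n]$ and $k \in \{K_0,\ldots,n\}$ shows that, except on an event of probability $e^{-n^c}$, $Z_i^{(k)} \geq k/(4n)$ holds for every $i$ and every $k \geq K_0$. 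On this event, and the DA event, whenever $R_i \geq K_0$ one has $U_i(\mu_M) \geq R_i/(8 n C^2)$ (using also $C\ge 1$ to absorb the $\min(\cdot,1)$). Summing,
\begin{equation*}
\|\vU(\mu_M)\|_1 \;\geq\; \sum_{i:\,R_i\ge K_0} \frac{R_i}{8nC^2} \;\geq\; \frac{T - nK_0}{8nC^2} \;\geq\; \frac{\theta_1 n\ln n - O(n^{1+c})}{8nC^2} \;\geq\; \theta_2 \ln n
\end{equation*}
for $\theta_2 = \theta_1/(16C^2)$ and $n$ large (since $c < 1/2$, the $nK_0$ term is $o(n\ln n)$). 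Taking the union of the failure events yields the desired $e^{-n^c}$ bound.

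\textbf{Main obstacle.} The delicate point is coupling the random ranks $R_i$ with the random order statistics $Z_i^{(k)}$: one cannot simply set $k=R_i$ because $R_i$ itself depends on the $Z_{ij}$'s. The union bound over all pairs $(i,k)$ with $k\ge K_0$ sidesteps this dependence, but forces the threshold $K_0$ to grow as $n^c$, which is why the splitting $T = \sum_{R_i<K_0} R_i + \sum_{R_i\ge K_0} R_i$ and the bound $c<1/2$ from Lemma~\ref{Lemma_DA_prop_num} are both used in an essential way.
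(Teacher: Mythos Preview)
Your overall argument is sound and takes a genuinely different route from the paper, but there is one slip in the reduction step that you should correct.

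\textbf{The slip.} You assert that $\mu_M$ minimizes $\hat X_i(\mu)=a_{i,\mu(i)}X_i(\mu)$ (and hence $U_i(\mu)$) over all stable $\mu$. This is not true: the man-optimal matching minimizes $X_i(\mu)$, but the normalizing factor $a_{i,\mu(i)}$ changes with $\mu$, so $\hat X_i(\mu_M)$ can exceed $\hat X_i(\mu)$. The fix is the one the paper uses: since $a_{i,\mu(i)}\in[1/C,C]$ and $F$ is concave with $F(0)=0$, one has $C^{-1}F(X_i(\mu))\le U_i(\mu)\le C\,F(X_i(\mu))$, whence
\[
\|\vU(\mu)\|_1\;\ge\; C^{-1}\|F(\vX(\mu))\|_1\;\ge\; C^{-1}\|F(\vX(\mu_M))\|_1\;\ge\; C^{-2}\|\vU(\mu_M)\|_1.
\]
With this correction your reduction to $\mu_M$ goes through (at the cost of an extra $C^{2}$ in $\theta_2$), and the rest of your argument is unaffected.

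\textbf{Comparison with the paper.} After reducing to $\mu_M$, the paper conditions on the entire execution of deferred acceptance (which fixes all $R_i$ and the proposal orderings), invokes the Nevzorov representation of exponential order statistics to write $U_i$ as a function of $R_i$ fresh i.i.d.\ uniforms, obtains $C^{2}n\|\vU\|_1\succeq\sum_{i}\sum_{t=1}^{R_i}W_{i,t}$ conditionally, and then applies Hoeffding to a sum of $\|\mathbf R\|_1\ge\theta_1 n\ln n$ i.i.d.\ uniforms. Your argument avoids conditioning altogether: you prove the deterministic-looking bound $Z_i^{(k)}\ge k/(4n)$ simultaneously for all $i$ and all $k\ge K_0$ by a union bound, then plug in the random $R_i$ afterwards. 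The price you pay is the loss of the contribution from men with $R_i<K_0$, but since $nK_0=O(n^{1+c})=o(n\ln n)$ for $c<1/2$, this is harmless. Your route is more elementary (no Nevzorov representation, no conditioning on the DA transcript) and makes explicit why the hypothesis $c<1/2$ enters; the paper's route is tighter in that it uses every proposal and does not discard the low-rank men.
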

\begin{proof}
Note that $U_i(\mu) = 1-e^{-a_{i,\mu(i)} X_i(\mu)} \in [C^{-1}F(X_i(\mu)),C F(X_i(\mu))]$ since $a_{i,\mu(i)}\in[1/C, C]$. For any stable matching $\mu$, $\vX(\mu) \succeq \vX(\mu_\text{MOSM})$ where $\mu_\text{MOSM}$ is the man-optimal stable matching obtained from the man-proposing deferred acceptance, where all the $n$ men are matched with their optimal possible stable partner (and hence achieves best value) simultaneously, and hence
\begin{equation}
    \|\vU(\mu)\|_1 \ge \frac{1}{C} \|F(\vX(\mu))\|_1 \ge \frac{1}{C} \|F(\vX(\mu))\|_1 \ge \frac{1}{C^2} \|\vU(\mu_\text{MOSM})\|_1.
\end{equation}
Thus, it suffices to consider the event where $\|\vU(\mu_\text{MOSM})\|_1 \le \theta_2 C^2 \ln n$.
Without loss of generality, we may assume that $\mu_\text{MOSM}$ matches man $i$ with woman $i$ for $i\in[n]$. Denote by $R_i\in[n]$ and $X_i=X_{ii}\in\R_+$ the (random) rank of partner and the latent value in $\mu_\text{MOSM}$ for man $i\in[n]$, and let $U_i = F(a_{ii}X_i)$. By definition, $X_i\sim\Exp(a_{ii})$ and $R_i = \sum_{j\ne i} \mathbbm{1}\{X_{ij}/a_{ij} < X_i\}$.

We condition on a specific execution of the man-proposing deferred acceptance algorithm, i.e., on the sequence of proposals, which specifies the rank $R_i$ and an ordering over his top $R_i$ most preferred partners for each man. Notice that the specific value $X_{ij}$ only affect the execution through the ordering of proposals, and hence conditional on a particular ordering, the values of the men are independent. Further, the value $X_i$ conditional on $R_i$ and an ordering $w_{j_1} \succeq_{m_i} w_{j_2} \succeq_{m_i} w_{j_1} \succeq \cdots \succeq_{m_i} w_{j_{R_i}}$ is equal in distribution to $\tilde{X}_{(R_i)}$, i.e., the $R_i$-th order statistic of $(\tilde{X}_1,\ldots,\tilde{X}_n)$ with $\tilde{X}_j \sim \Exp(a_{ij})$, conditional on $\tilde{X}_{(k)} = \tilde{X}_{j_k}$ for all $k\in[R_i]$. By the representation of exponential order statistics given in \cite{nevzorov1986representations}, under such conditions,
\begin{equation}
    \tilde{X}_{(R_i)} \overset{d}{=} \sum_{t=1}^{R_i} \frac{Z_{i,t}}{\sum_{k=t}^n a_{i,j_k}}, %
\end{equation}
where $Z_{i,t}\sim\Exp(1)$ are independently sampled for $t\in[n]$. Conditional on $R_i$ and the sequence $j_1,\ldots,j_{R_i}$, we have
\begin{multline}
    U_i = F(a_{ii} X_i) \overset{d}{=} F\bigg(\sum_{t=1}^{R_i} \frac{a_{ii} Z_{i,t}}{\sum_{k=t}^n a_{i,j_k}} \bigg) \ge F\bigg( \frac{1}{n}\sum_{t=1}^{R_i} \frac{Z_{i,t}}{C^2} \bigg) \\
    \ge \frac{1}{n}\sum_{t=1}^{R_i}F(Z_{i,t}/C^2) \ge \frac{1}{C^2 n} \sum_{t=1}^{R_i} F(Z_{i,t}) = \frac{1}{C^2 n} \sum_{t=1}^{R_i} W_{i,t},
\end{multline}
where the second last inequality follows from Jensen's inequality (applied to the concave function $F$), and $W_{i,t}=F(Z_{i,t}) \sim \Unif([0,1])$ independently.
Thus, conditional on $\mathbf{R}$, we have $C^2 n\|\vU\|_1 \succeq \sum_{i=1}^n \sum_{t=1}^{R_i} W_{i,t}$, independent of the specific ordering (and the identity) of the proposals made. Therefore, we may marginalize over this ordering to get
\begin{equation}
    \Pp\bigg(\|\vU\|_1 \le \theta_2 C^2 \ln n\bigg| \mathbf{R}\bigg) \le \Pp\left( \sum_{i=1}^n \sum_{t=1}^{R_i} W_{i,t} \le \theta_2 C^4 n \ln n \middle| \mathbf{R}\right).
\end{equation}
Whenever $\|\mathbf{R}\|_1 \ge \theta_1 n \ln n$, the probability above is at most $\exp(-\Theta(n\ln n)) \ll \exp(-n^c)$ by Hoeffding's inequality granted that we choose $\theta_2 < \frac{\theta_1}{2C^4}$,
and our proof is complete as we marginalize over all possible realizations of $\mathbf{R}$ with $\|\mathbf{R}\|_1\ge \theta_1 n\ln n$.
\end{proof}

\begin{lemma}\label{Lemma_U1V1_le_O_n_ln}
For any $\kappa \ge 0$ and $\theta_3 > 0$, the expected number of stable matchings $\mu$ with $\|\vU(\mu)\|_1 \|\vV(\mu)\|_1 \ge \theta_3 n(\ln n)^{1/8}$ is upper bounded by $\exp(-\kappa n)$.
In particular, with high probability, no such stable matchings exists.
\end{lemma}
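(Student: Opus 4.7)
The plan is to bound $\E[N]$ by a first-moment calculation that combines the Knuth-type integral representation of the stability probability with an upper bound on $p_\mu$ and a union bound over all $n!$ matchings. Specifically, I would start from
\[
    \E[N] = \sum_\mu \E\big[p_\mu(\vX(\mu),\vY(\mu)) \cdot \mathbbm{1}_{\|\vU(\mu)\|_1\|\vV(\mu)\|_1 \ge T}\big]
\]
with $T := \theta_3 n(\ln n)^{1/8}$, and use the bound $p_\mu(\vx,\vy) \le \exp\!\big((n-\|\vU(\mu)\|_1\|\vV(\mu)\|_1)/C^4\big)$ derived from $1-z\le e^{-z}$ in the style of \eqref{Eqn_naive_bound_pxy}, the $C$-boundedness of $\vA,\vB$ (through $(1-e^{-a_{ij}x_i}) \ge U_i/C^2$ and the analogous statement for $V_j$, by way of the elementary inequality $1-(1-u)^r \ge \min(r,1) u$), together with the trivial estimate $\sum_i U_iV_{\mu(i)} \le n$ absorbed into the $e^{n/C^4}$ prefactor.

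Next, I would change variables $(\vX(\mu),\vY(\mu)) \mapsto (\vU(\mu),\vV(\mu))$, under which $(\vU(\mu),\vV(\mu))$ has law $\Unif([0,1]^n)\otimes\Unif([0,1]^n)$ independently of $\mu$; by this symmetry,
\[
    \E[N] \le n! \cdot e^{n/C^4} \cdot \E_{\vU,\vV \sim \Unif([0,1]^n)}\!\left[e^{-\|\vU\|_1\|\vV\|_1/C^4} \mathbbm{1}_{\|\vU\|_1\|\vV\|_1 \ge T}\right].
\]
The expectation on the right can be estimated by conditioning on $\|\vV\|_1 = s$ and using the uniform MGF $\E[e^{-sU/C^4}] = C^4(1-e^{-s/C^4})/s$, yielding a conditional expectation of $\big((1-e^{-s/C^4})C^4/s\big)^n$. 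Integrating this against the Irwin--Hall density of $\|\vV\|_1$ over the hyperbolic region $\{\|\vU\|_1 \ge T/s\}$ and applying a Laplace-type estimate would produce the required exponential decay.

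The main obstacle is that the naive first-order bound on $p_\mu$ is off by a factor of $(\ln n)^{7/8}$ in the exponent --- per matching the bound is only $e^{-\Theta(n(\ln n)^{1/8})}$, which after multiplication by $n! \sim e^{n\log n}$ fails to yield $\exp(-\kappa n)$. To close this gap one must replace the naive bound with the sharper approximation $p_\mu \lesssim q(\vx,\vy) = \exp(-n\vx^\top\vM\vy)$ in the spirit of Proposition~\ref{Prop_ratio_p_q_high_prob}, combined with the permanent identity $\sum_\mu \prod_i a_{i,\mu(i)}b_{\mu(i),i} = \Perm(n\vM) \le n^n$ (a consequence of the bistochasticity of $\vM$). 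This supplies the critical $e^{-\Theta(n\log n)}$ cancellation of $n!$. On the event $\|\vU\|_1\|\vV\|_1 \ge T$, the contiguity assumption forces $n\vx^\top\vM\vy \ge T/C^3$, and the resulting per-matching bound then delivers the super-exponential $\exp(-\kappa n)$ decay claimed in the lemma for any $\kappa > 0$ and $n$ sufficiently large.
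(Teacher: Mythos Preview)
Your setup is correct and in fact matches the paper's exactly: the naive bound \eqref{Eqn_naive_bound_pxy} together with $1-(1-u)^r \ge \min(r,1)\,u$ yields $p_\mu(\vx,\vy)\le \exp\big((n-\|\vU\|_1\|\vV\|_1)/C^4\big)$, and after the change of variables one arrives at
\[
\E[N]\le n!\,e^{n/C^4}\int_{[0,1]^{2n}} e^{-\|\vu\|_1\|\vv\|_1/C^4}\,\mathbbm{1}_{\|\vu\|_1\|\vv\|_1\ge T}\,d\vu\,d\vv .
\]
Where you go wrong is the ``obstacle'' paragraph. The per-matching bound is \emph{not} merely $e^{-\Theta(n(\ln n)^{1/8})}$: you have effectively replaced the integral by its pointwise supremum $e^{-T/C^4}$ and thrown away the smallness of the density of $(\|\vU\|_1,\|\vV\|_1)$. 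The paper instead passes to $s=\|\vu\|_1,\ t=\|\vv\|_1$ using the Irwin--Hall bound $f_{\|\vU\|_1}(s)\le s^{n-1}/(n-1)!$, splits off the negligible region $st\ge n(\ln n)^2$, and on $T\le st\le n(\ln n)^2$ bounds the integrand by $\frac{n^2(st)^{n-1}}{(n!)^2}e^{(n-T)/C^4}$. Stirling turns $\frac{(n(\ln n)^2)^{n-1}}{(n!)^2}$ into $\frac{1}{n!}\exp\big(n+2n\ln\ln n+O(\ln n)\big)$, so after the union bound the $n!$ cancels and one is left with $\exp\big(O(n\ln\ln n)-\theta_3 n(\ln n)^{1/8}/C^4\big)=o(e^{-\kappa n})$, since $(\ln n)^{1/8}\gg \ln\ln n$. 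Your own MGF computation would have given the same cancellation had you carried it through: conditionally on $\|\vV\|_1=s$, $\E[e^{-s\|\vU\|_1/C^4}]=\big(C^4(1-e^{-s/C^4})/s\big)^n\le (C^4/s)^n$, which for typical $s\asymp n$ already absorbs the $n!$.

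The proposed rescue via Proposition~\ref{Prop_ratio_p_q_high_prob} is both unnecessary and circular: that proposition is proved through Corollary~\ref{Cor_Rstar_likely}, whose condition \eqref{Eqn_def_Rstar_2} is precisely the conclusion $\|\vu\|_1\|\vv\|_1\le \theta_3 n(\ln n)^{1/8}$ of the present lemma. The naive bound is what makes Lemma~\ref{Lemma_U1V1_le_O_n_ln} available as an input to the sharper machinery, not the other way around.
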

\begin{proof}
It suffices to show that the probability that any fixed matching $\mu$ is stable and satisfies $\|\vU(\mu)\|_1 \|\vV(\mu)\|_1 \ge \theta_3 n(\ln n)^{1/8}$ is upper bounded by $o(e^{-\kappa n}/n!)$ for any $\theta_3>0$. Write $I=[0,1]$ for the unit interval. Let
\begin{equation*}
    \Omega = \left\{(\vu,\vv)\in I^n\times I^n : \|\vu\|_1\|\vv\|_1 > \theta_3 n(\ln n)^{1/8}\right\}
\end{equation*}
and let
\begin{equation}
    P := \Pp(\mu\in\mathcal{S}, (\vU(\mu),\vV(\mu))\in \Omega) = \int_{\R^n_+\times \R^n_+} \pxy \cdot \mathbbm{1}_{\Omega}(F(\hat{\vx}), F(\hat{\vy})) \cdot \prod_{i=1}^n f(\hat{x}_i)f(\hat{y}_i) \dd\hat{\vx} \dd\hat{\vy},
\end{equation}
where $f(t)=e^{-t}$ denotes the standard exponential density function. We apply the simple bound \eqref{Eqn_naive_bound_pxy} on $\pxy$ to obtain
\begin{align}
    P &\le \int_{\R^n_+\times \R^n_+} \prod_{i\ne j} \Big(1 - \big(1-e^{-\hat{x}_i/C^2}\big)\big(1-e^{-\hat{y}_j/C^2}\big) \Big) \cdot \mathbbm{1}_{\Omega}(F(\hat{\vx}), F(\hat{\vy})) \cdot \prod_{i=1}^n f(\hat{x}_i)f(\hat{y}_i) \dd\hat{\vx} \dd\hat{\vy} \nonumber\\
    &= \int_{I^n\times I^n} \prod_{i\ne j} \Big(1 - \big(1-(1-u_i)^{1/C^2}\big)\big(1-(1-v_j)^{1/C^2}\big) \Big) \cdot \mathbbm{1}_{\Omega}(\vu, \vv) \dd\vu \dd\vv \nonumber\\
    &\le \int_{I^n\times I^n} \prod_{i\ne j} \Big(1 - \frac{1}{C^4} u_iv_j \Big) \cdot \mathbbm{1}_{\Omega}(\vu, \vv) \dd\vu \dd\vv \nonumber\\
    &\le \int_{I^n\times I^n} \exp\Big(-\frac{1}{C^4}(\|\vu\|_1\|\vv\|_1 - \vu\cdot\vv)\Big) \cdot \mathbbm{1}_{\Omega}(\vu, \vv) \dd\vu \dd\vv \nonumber\\
    &\le \int_{I^n\times I^n} \exp\Big(\frac{1}{C^4}(n - \|\vu\|_1\|\vv\|_1)\Big) \cdot \mathbbm{1}_{\Omega}(\vu, \vv) \dd\vu \dd\vv\label{Eqn_Lemma_weak_reg_U_apply_naive_bound},
\end{align}
where we use the basic facts that $1-z^{1/C^2}\ge (1-z)/C^2$ for all $z\in[0,1]$ and $C\ge 1$ and that $1+z\le e^z$ for all $z\in\R$.

Let $s = \|\vu\|_1$ and $t = \|\vv\|_1$. It is well known (e.g., \cite[Ch.~I,~Sec.~9]{feller1971introduction}) that the probability density of $S=\|\vU\|_1$ with $U_1,\ldots,U_n$ independent samples from $\Unif(I)$ is bounded above by $\frac{s^{n-1}}{(n-1)!}$. Hence,
\begin{equation}
    P \le \int_{s,t\in[0,n]\,:\, st\ge \theta_3 n(\ln n)^{1/8}} e^{(n-st)/C^4} \cdot \frac{n^2(st)^{n-1}}{(n!)^2} \dd s \dd t.
\end{equation}
Note that when $st \ge n (\ln n)^2$, we have $\exp\big((n-st)/C^4\big) \le \exp\big((n-n(\ln n)^2)/C^4) = o(\exp(-\kappa n))/n!$ and therefore the region $\{s,t\in[0,n]\,:\, st\ge n(\ln n)^2\}$ contributes a negligible amount to the integral. Hence,
\begin{align}\label{Eqn_Lemma_weak_reg_U_p2_final_bound}
    P &\le \frac{o(e^{-\kappa n})}{n!} + \int_{s,t\in[0,n]\,:\, \theta_3 n(\ln n)^{1/8} \le st\le n(\ln n)^2} e^{(n-st)/C^4} \cdot \frac{n^2(st)^{n-1}}{(n!)^2} \dd s \dd t \nonumber\\
    &\labelrel\le{Rel_proof_lemma_u1v1_1} \frac{o(e^{-\kappa n})}{n!} + n^2\cdot e^{(n-\theta_3 n(\ln n)^{1/8})/C^4} \cdot \frac{n^2(n(\ln n)^2)^{n-1}}{(n!)^2} \nonumber\\
    &\labelrel\le{Rel_proof_lemma_u1v1_2} \frac{o(e^{-\kappa n})}{n!} + \frac{1}{n!} \cdot \exp\Big(\frac{1}{C^4}(n-\theta_3 n(\ln n)^{1/8}) + 3\ln n + 2(n-1)\ln\ln n + n\Big) = \frac{o(e^{-\kappa n})}{n!},
\end{align}
where in step \eqref{Rel_proof_lemma_u1v1_1} we upper bound the integral by the product of the Lebesgue measure of its domain (bounded by $n^2$) and the supremum of its integrand, and in step \eqref{Rel_proof_lemma_u1v1_2} we invoke Stirling's approximation.
\end{proof}

\begin{corollary}\label{Cor_V1_le_n_over_ln}
For any constant $c\in(0,1/2)$,
there exists a constant $\theta_4> 0$ such that,
with probability $1-\exp(-n^c)$,
there exist no stable matchings $\mu$ with $\|\vU(\mu)\|_1 \ge \frac{\theta_4 n}{(\ln n)^{7/8}}$.
\end{corollary}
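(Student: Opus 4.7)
The plan is to combine Lemma~\ref{Lemma_U1V1_le_O_n_ln} with the symmetric analogue of Lemma~\ref{Lemma_U1_ge_ln_n} applied to the women. The only obstacle is a completely routine one: verifying that the sides are interchangeable and that the two high-probability events can be intersected at negligible cost.

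First, I would observe that Lemma~\ref{Lemma_U1_ge_ln_n} is entirely symmetric between men and women (every assumption and every step, including the analysis of the man-proposing deferred acceptance, transfers verbatim to the woman-proposing version, since Assumption~\ref{Assumption_C_bounded} is symmetric in $\vA$ and $\vB$). Consequently, for any $c\in(0,1/2)$ there is a constant $\theta_2>0$ such that
\begin{equation*}
    \Pp(\exists\, \mu\in\mathcal{S},\; \|\vV(\mu)\|_1 \le \theta_2 \ln n) \le \exp(-n^c).
\end{equation*}

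Second, I would apply Lemma~\ref{Lemma_U1V1_le_O_n_ln} with, say, $\kappa=1$, and use Markov's inequality to convert the expectation bound into a tail bound: the probability that there exists any stable matching $\mu$ with $\|\vU(\mu)\|_1\|\vV(\mu)\|_1 \ge \theta_3 n(\ln n)^{1/8}$ is at most $\exp(-n) \le \exp(-n^c)$ for all $n$ sufficiently large.

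Finally, I would union-bound these two events. On the complementary event, which has probability at least $1 - 2\exp(-n^c) \ge 1-\exp(-n^c/2)$ (and so, after absorbing the constant into the exponent, matches the claimed rate), every stable matching $\mu$ simultaneously satisfies $\|\vV(\mu)\|_1 \ge \theta_2 \ln n$ and $\|\vU(\mu)\|_1\|\vV(\mu)\|_1 \le \theta_3 n(\ln n)^{1/8}$. Dividing one by the other gives
\begin{equation*}
    \|\vU(\mu)\|_1 \le \frac{\theta_3 n (\ln n)^{1/8}}{\theta_2 \ln n} = \frac{\theta_4\, n}{(\ln n)^{7/8}}
\end{equation*}
with $\theta_4 := \theta_3/\theta_2$, which is the desired bound. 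The only place any care is needed is to ensure the $c$ in the two applications can be chosen the same (it can, as both lemmas accept any $c\in(0,1/2)$), and that one adjusts $\theta_2,\theta_3$ or the exponent by an absolute constant to absorb the factor $2$ from the union bound.
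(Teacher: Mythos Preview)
Your proposal is correct and matches the paper's own (one-line) proof, which simply cites Lemma~\ref{Lemma_U1_ge_ln_n} and Lemma~\ref{Lemma_U1V1_le_O_n_ln}; you have merely made explicit the symmetry step (applying Lemma~\ref{Lemma_U1_ge_ln_n} to $\vV$ rather than $\vU$) and the routine union bound that the paper leaves implicit.
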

\begin{proof}
This follows immediately from Lemma~\ref{Lemma_U1_ge_ln_n} and \ref{Lemma_U1V1_le_O_n_ln}.
\end{proof}

\begin{proposition}\label{Prop_characterize_low_prob_events}
Let $\Omega\subset I^n$ be a (sequence of) regions in the $n$-dimensional hypercube. For $k\in\Z_+$, define interval $I_k = (2^{-k}n, 2^{-k+1}n]$. If 
\begin{equation}\label{Eqn_prop_characterize_low_prob_events_main_assumpt}
    \Pp_{\vW\sim\Exp(2^k)^{\otimes n}}(\vW\in \Omega, \|\vW\|_1\in I_k) \le \frac{g(n)}{e^{6n}C^{8n}}
\end{equation}
for some function $g(n)$ and uniformly for all $k\in\Z_+$, we can guarantee that the expected number of stable matchings $\mu$ with value vector $\vX(\mu)\in\R^n$ satisfying $\vU(\mu) \in \Omega$ is upper bounded by $g(n)+e^{-\Theta(n^2)}$; %
in particular, with high probability, no such stable matchings exist if $g(n) = o(1)$.
\end{proposition}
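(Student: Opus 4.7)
The strategy is a union bound over matchings combined with a dyadic decomposition of $\|\vU(\mu)\|_1$. Writing $N$ for the number of stable matchings with $\vU(\mu)\in\Omega$, and observing that $\|\vU(\mu)\|_1\in(0,n]$ almost surely so the buckets $I_k$, $k\in\Z_+$, partition its range, linearity of expectation gives
\begin{equation*}
\E[N] \;=\; \sum_\mu\sum_{k\ge 1} \Pp\bigl(\mu\in\mathcal{S},\,\vU(\mu)\in\Omega,\,\|\vU(\mu)\|_1\in I_k\bigr).
\end{equation*}
For each term I apply the integral formula for stability with the naive bound \eqref{Eqn_naive_bound_pxy} on $p_\mu$ and change variables to $(\vu,\vv)\in[0,1]^n\times[0,1]^n$ as in the proof of Lemma~\ref{Lemma_U1V1_le_O_n_ln}, obtaining
\begin{equation*}
\Pp(\mu\in\mathcal{S},\vU\in\Omega,\|\vU\|_1\in I_k) \;\le\; e^{n/C^4}\int_{[0,1]^n}\mathbbm{1}_\Omega(\vu)\mathbbm{1}_{I_k}(\|\vu\|_1)\int_{[0,1]^n}e^{-\|\vu\|_1\|\vv\|_1/C^4}\,d\vv\,d\vu.
\end{equation*}

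The inner $\vv$-integral factorizes coordinate-wise and, using $\|\vu\|_1\ge 2^{-k}n$ on the domain, is bounded by $\min\{1,(C^4 2^k/n)^n\}$. The key observation is that the hypothesis can be converted into a bound on the Lebesgue measure of $\Omega\cap\{\|\vu\|_1\in I_k\}$ by comparing with the joint density of $\vW\sim\Exp(2^k)^{\otimes n}$: on $\{\|\vw\|_1\le 2^{-k+1}n\}$ this density equals $(2^k)^n e^{-2^k\|\vw\|_1}\ge (2^k)^n e^{-2n}$, hence
\begin{equation*}
\int_{[0,1]^n}\mathbbm{1}_\Omega(\vu)\mathbbm{1}_{I_k}(\|\vu\|_1)\,d\vu \;\le\; \frac{e^{2n}}{2^{kn}}\Pp(\vW\in\Omega,\|\vW\|_1\in I_k) \;\le\; \frac{g(n)}{e^{4n}C^{8n}2^{kn}}.
\end{equation*}
Combining these and splitting on whether $2^k\le n/C^4$ (so the $\vv$-envelope equals $(C^4 2^k/n)^n$ and cancels the $C^{4n}2^{kn}$) or $2^k>n/C^4$ (so the envelope equals $1$ and $2^{-kn}$ sums geometrically) yields per-term bounds $g(n)e^{-3n}/(n^n C^{4n})$ and $g(n)e^{-3n}/(2^{kn}C^{8n})$, respectively.

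Summing over the $n!$ matchings via Stirling $n!/n^n\lesssim e^{-n}\sqrt n$ and over $k$ (at most $O(\log n)$ terms in the first regime and a geometric tail in the second) gives $\E[N]\le O\bigl(\sqrt n\log n\cdot g(n)e^{-4n}/C^{4n}\bigr)$, which lies well below $g(n)$ for $n$ large. The residual $e^{-\Theta(n^2)}$ in the stated bound accommodates an alternative, easier handling of very large $k$: truncating the $k$-sum at some $K=\Theta(n)$ and bounding the tail by the marginal density of $\|\vU\|_1$ alone (a sum of $n$ independent uniforms, with density at most $s^{n-1}/(n-1)!$ on $[0,1]$) yields $\Pp(\|\vU\|_1\le 2^{-K+1}n)\le(2^{-K+1}n)^n/n!$, which summed over matchings is $(2^{-K+1}n)^n=e^{-\Theta(n^2)}$ for $K$ sufficiently large. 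The main delicate point is the bookkeeping in the two-regime split on $2^k$ versus $n/C^4$: the constants $6n$ and $8n$ in the hypothesis are exactly calibrated so that $e^{n/C^4}\cdot e^{2n}\cdot e^{-6n}=e^{-3n}$ and the $C^{4n}$ versus $C^{8n}$ powers cancel correctly in both regimes, leaving enough slack to absorb the factor $n!/n^n$ from Stirling.
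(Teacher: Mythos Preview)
Your proof is correct and in fact somewhat cleaner than the paper's. Both arguments start from the same integral bound \eqref{Eqn_Lemma_weak_reg_U_apply_naive_bound} and decompose dyadically, but the paper decomposes \emph{both} $\|\vu\|_1$ and $\|\vv\|_1$ into buckets $I_k\times I_\ell$, then introduces exponential tilting on each side (multiplying by $e^{2n}\exp(-2^k\|\vu\|_1)\ge 1$ and the analogous factor in $\vv$) to convert the double integral into a product of the expectations $\Pp_{\vU\sim\Exp(2^k)^{\otimes n}}(\cdots)\cdot\Pp_{\vV\sim\Exp(2^\ell)^{\otimes n}}(\cdots)$. The $e^{-\Theta(n^2)}$ term in the statement is precisely the paper's handling of the tail region $k+\ell\ge n$, and the main term comes from maximizing over the finitely many remaining $(k,\ell)$ pairs and lower-bounding $2^{(k+\ell)n}\exp(2^{-k-\ell}n^2/C^4)\ge n!/C^{8n}$.

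You instead decompose only in $\|\vu\|_1$ and evaluate the $\vv$-integral in closed form, then recover the exponential-tilting bound on the $\vu$-side via the density comparison $(2^k)^n e^{-2^k\|\vw\|_1}\ge 2^{kn}e^{-2n}$ on $I_k$. This sidesteps the double sum entirely, makes the $e^{-\Theta(n^2)}$ term unnecessary (your geometric tail in $k$ already handles all buckets), and yields a final bound that is actually much smaller than the stated $g(n)+e^{-\Theta(n^2)}$. Two small remarks: your calibration comment ``$e^{n/C^4}\cdot e^{2n}\cdot e^{-6n}=e^{-3n}$'' is an inequality (equality only at $C=1$), and ``absorbing $n!/n^n$'' is really the reverse---that factor is $\Theta(\sqrt n\,e^{-n})\le 1$ and only helps. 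Neither affects the validity of your argument.
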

\begin{proof}
We focus on a fixed matching $\mu$ and by union bound, it suffices to show that
\begin{equation}
    \Pp(\mu\in\mathcal{S}, \vU(\mu)\in\Omega) = \frac{g(n)+ e^{-\Theta(n^2)}}{n!}
\end{equation}
under the condition of \eqref{Eqn_prop_characterize_low_prob_events_main_assumpt}.
The same chain of reasoning as in \eqref{Eqn_Lemma_weak_reg_U_apply_naive_bound} gives
\begin{align}
    P := \Pp(\mu\in\mathcal{S}, \vU(\mu) \in \Omega) &\le \int_{I^n\times I^n} \exp\Big(\frac{1}{C^4}(n - \|\vu\|_1\|\vv\|_1)\Big) \cdot \mathbbm{1}_{\Omega}(\vu) \dd\vu \dd\vv \nonumber\\
    &\le e^n \int_{I^n\times I^n} \exp( - \|\vu\|_1\|\vv\|_1/C^4)) \cdot \mathbbm{1}_{\Omega}(\vu) \dd\vu \dd\vv
\end{align}
since $C\ge 1$.
Observing that $\|\vU(\mu)\|_1=0$ and $\|\vV(\mu)\|_1=0$ are both probability zero events, we may split the domain of the integral above into sub-regions according to which intervals $\|\vU(\mu)\|_1$ and $\|\vV(\mu)\|_1$ fall into, and then bound the value of the integral within each sub-region. That is, with the help of the monotone convergence theorem to interchange summation with integral,
\begin{align}
    P &\le e^n \sum_{k,\ell=1}^\infty \int_{I^n\times I^n}\exp( - \|\vu\|_1\|\vv\|_1/C^4) \cdot \mathbbm{1}_{\Omega}(\vu) \mathbbm{1}_{I_k}(\|\vu\|_1) \mathbbm{1}_{I_\ell}(\|\vv\|_1) \dd\vu \dd\vv \nonumber\\
    &\le e^n \sum_{k,\ell=1}^\infty \int_{I^n\times I^n}\exp( - 2^{-k-l}n^2/C^4) \cdot \mathbbm{1}_{\Omega}(\vu) \mathbbm{1}_{I_k}(\|\vu\|_1) \mathbbm{1}_{I_\ell}(\|\vv\|_1) \dd\vu \dd\vv.
\end{align}
For all $k\in\Z_+$, we have $e^{2n}\exp(-2^k\|\vu\|_1)\ge 1$ whenever $\vu\in I^n$ with $\|\vu\|_1\in I_k$. Thus,
\begin{align}
    P &\le e^{5n} \sum_{k,\ell=1}^\infty \int_{I^n\times I^n} \exp( - 2^{-k-l}n^2/C^4) \cdot \mathbbm{1}_{\Omega}(\vu) \mathbbm{1}_{I_k}(\|\vu\|_1) \mathbbm{1}_{I_\ell}(\|\vv\|_1) \cdot \exp(-2^k\|\vu\|_1 - 2^\ell\|\vv\|_1) \dd\vu \dd\vv \nonumber\\
    &= e^{5n} \sum_{k,\ell=1}^\infty 2^{-(k+\ell)n} \exp( - 2^{-k-l}n^2/C^4) \E_{\vU\sim\Exp(2^k)^{\otimes n},\vV\sim\Exp(2^\ell)^{\otimes n}}\left[ \mathbbm{1}_{\Omega}(\vU)\mathbbm{1}_{I^n}(\vV)\mathbbm{1}_{I_k}(\|\vU\|_1) \mathbbm{1}_{I_\ell}(\|\vV\|_1)\right].
\end{align}
Observe that all the terms with $k+\ell \ge n$ combined contribute at most
\begin{multline}
    e^{5n}\sum_{k+\ell \ge n} 2^{-(k+\ell)n} = e^{5n} \left( \sum_{k=1}^{n-1} 2^{-kn} \sum_{\ell=n-k}^\infty 2^{-\ell n} + \sum_{k=n}^\infty 2^{-kn} \sum_{\ell=1}^\infty 2^{-\ell n} \right) \\
    = e^{5n}\cdot 2^{-n^2-n} \cdot\frac{n(1-2^{-n})+1}{(1-2^{-n})^2} = \frac{e^{-\Theta(n^2)}}{n!},
\end{multline}
which is negligible. Therefore, we only need to consider $\bO(n^2)$ terms and
\begin{align}
    P
    &\le \frac{e^{-\Theta(n^2)}}{n!} + n^2 e^{5n} \max_{k,\ell\in\Z_+} \frac{\E_{\vU\sim\Exp(2^k)^{\otimes n},\vV\sim\Exp(2^\ell)^{\otimes n}}\left[ \mathbbm{1}_{\Omega}(\vU) \mathbbm{1}_{I^n}(\vV) \mathbbm{1}_{I_k}(\|\vU\|_1) \mathbbm{1}_{I_\ell}(\|\vV\|_1)\right]}{2^{(k+\ell)n} \exp( 2^{-k-l}n^2/C^4 )} \nonumber\\
    &\le \frac{e^{-\Theta(n^2)}}{n!} + e^{6n} \max_{k\in\Z_+} \frac{\Pp_{\vU\sim\Exp(2^k)^{\otimes n}}(\vU\in \Omega, \|\vU\|_1\in I_k)}{2^{(k+\ell)n} \exp( 2^{-k-l}n^2/C^4 )}.
\end{align}
Let $\alpha = 2^{-(k+\ell)}$. When $\alpha\le C^8/n$, the denominator of the second term can be bounded as $\alpha^{-n}e^{\alpha n^2/C^4} \ge n^n/C^{8n} \ge n!/C^{8n}$; when $\alpha > C^8/n$, let $\tau = n\alpha > C^8$ and we have $\alpha^{-n}e^{\alpha n^2/C^4} = \frac{n^n}{\tau^n} e^{n\tau/C^4} = n^n \exp\big(n(\tau/C^4 - \ln \tau)\big) \ge n^n \exp\big(n(C^4 - 4\ln C)\big) \ge n!$. Thus, the denominator is bounded below by $n!/C^{8n}$ and
\begin{equation}\label{Eqn_proof_X1_O_V1_last_reusable_bound}
    P \le \frac{e^{-\Theta(n^2)}}{n!} + \frac{e^{6n} C^{8n}}{n!} \max_{k\in\Z_+} \Pp_{\vU\sim\Exp(2^k)^{\otimes n}}(\vU\in \Omega, \|\vU\|_1\in I_k) \le \frac{1}{n!}(e^{-\Theta(n^2)} + g(n)).
\end{equation}
The claim follows immediately.
\end{proof}

\begin{lemma}\label{Lemma_bdd_x_delta_infty_norm}
For any fixed $\delta > 0$ and $\kappa > 0$, there exists an absolute constant $\theta_5$ (depending on $\delta,\kappa$, and $C$) such that, with probability $1-\exp(-\kappa n)$,
there exist no stable matchings $\mu$ with $\|\vX_\delta(\mu)\|_\infty \le \theta_5$. (Recall that $\vX_\delta(\mu)$ is the value vector of the $(1-\delta)$-partial matching obtained from $\mu$ that excludes the least happy $\delta/2$ fraction of men and women.)
\end{lemma}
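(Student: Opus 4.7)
The plan is to apply the reduction Lemma~\ref{Lemma_reduction_to_q} with the event region $\Omega = \{(\vx,\vy)\in\R_+^n\times\R_+^n : \|\vx\|_\infty \le \theta_5\}$, which reduces the problem to bounding, for each partial matching $\mu':\cM'\to\cW'$ with $|\cM'| = n-\lfloor\delta n\rfloor$, the expectation $\E[q(\vX_{\cM'},\vY_{\cW'}) \cdot \mathbbm{1}_{\mathcal{R}\cap\Omega}(\vX_{\cM'},\vY_{\cW'})]$, and then summing over all such $(\cM',\cW',\mu')$. The event $\|\vX_\delta(\mu)\|_\infty \le \theta_5$ is automatically propagated to any stable partial sub-matching induced on $\cM'_{\mu,\delta}$, so the reduction applies cleanly.

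First I would integrate out $\vY_{\cW'}$ against its exponential prior. Because $q(\vx,\vy) = e^{-n\vx^\top\vM\vy}$ is multiplicative in $\vy$, we obtain
\[
    \E\big[q(\vx,\vY_{\cW'})\big] \;=\; \prod_{j\in\cW'} \frac{b_{\mu'(j),j}}{b_{\mu'(j),j} + n(\vM^\top\vx)_j}.
\]
On $\|\vx\|_\infty \le \theta_5$, the column-stochasticity of $n\vM$ gives $n(\vM^\top\vx)_j \le \theta_5$, hence each factor lies in $[(1+C\theta_5)^{-1},1]$ and the product is uniformly close to $1$. Then I would integrate $\vx$ against the prior $\prod a_{i,\mu'(i)} e^{-a_{i,\mu'(i)}x_i}$ over the constrained box $[0,\theta_5]^{|\cM'|}$, yielding a prior-mass factor of at most $\prod_{i\in\cM'}(1-e^{-a_{i,\mu'(i)}\theta_5}) \le (C\theta_5)^{(1-\delta)n}$.

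Next I would sum over $(\cM',\cW',\mu')$. Using the permanent identity $\sum_{\mu':\cM'\to\cW'} \prod_{i\in\cM'} a_{i,\mu'(i)}b_{\mu'(i),i} = n^{|\cM'|}\,\mathrm{perm}(\vM_{\cM',\cW'})$ and bistochasticity-based bounds on the permanent (Br\'egman / van der Waerden applied to $n\vM$), the counting factor $\binom{n}{\lfloor\delta n\rfloor}^2 (n-\lfloor\delta n\rfloor)!$ interacts with $\prod a\cdot b$ exactly as in the proof of Proposition~\ref{Prop_EqXY_bound}, so that the combination of the counting factor with the baseline bound already there yields an $e^{o(n)+o_\delta(n)}$ prefactor times the extra $\Omega$-decay. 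Putting this together through Lemma~\ref{Lemma_reduction_to_q},
\[
    \Pp(\exists \mu\in\mathcal{S},\, \|\vX_\delta(\mu)\|_\infty \le \theta_5) \;\le\; e^{-n^c} \;+\; e^{o(n)+o_\delta(n)} \cdot (C\theta_5)^{(1-\delta)n},
\]
and choosing $\theta_5$ a sufficiently small constant depending on $\delta,\kappa,C$ drives the right-hand side below $\exp(-\kappa n)$.

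The main obstacle is the interaction between the constraint $\|\vx\|_\infty \le \theta_5$ and the $q$-expectation: a naive multiplication of the bound from Proposition~\ref{Prop_EqXY_bound} by the prior mass $(C\theta_5)^{(1-\delta)n}$ double-counts the contribution of the men's exponential density and does not give a small enough bound on its own to absorb the $n!$ from union-bounding over bijections. The delicate step is to re-run the chain of estimates behind Proposition~\ref{Prop_EqXY_bound} (in particular the Laplace transform computation and the concentration inequalities in Lemma~\ref{Lemma_weighted_exp_chernoff} and Lemma~\ref{Lemma_wgt_exp_cond_concentration}) \emph{with} the $\Omega$-truncation in place, so that the constraint produces a genuinely multiplicative $(C\theta_5)^{\Theta(n)}$ decay on top of the $\frac{(\delta n)!}{n!}\prod a\cdot b$ factor, rather than replacing it.
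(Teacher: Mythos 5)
There are two serious issues. First, the lemma statement as given (and as it appears in the paper) has a sign typo: it should read $\|\vX_\delta(\mu)\|_\infty > \theta_5$, not $\le$. The label ``bdd'' and the downstream use in Lemma~\ref{Lemma_X1_le_O_U1} --- which \emph{assumes} $\|\vX_\delta(\mu)\|_\infty \le \theta_5$ as the good event --- confirm that the lemma asserts the truncated value vector has a \emph{bounded} sup-norm. The literal $\le$ version is false: in a typical stable matching every value is $o(1)$ (already at the woman-optimal matching the men's average value is $\Theta(1/\log n)$), so the $\delta/2$-truncated sup-norm is below any constant $\theta_5$ for $n$ large, and the event you are trying to rule out is actually the likely one. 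Your plan, with $\Omega=\{(\vx,\vy):\|\vx\|_\infty\le\theta_5\}$, is therefore aimed at the wrong target.

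Second, even for the corrected statement, routing through Lemma~\ref{Lemma_reduction_to_q} is circular: that lemma invokes Corollary~\ref{Cor_Rstar_likely} (the region $\mathcal{R}^\star$), whose conditions \eqref{Eqn_def_Rstar_3}--\eqref{Eqn_def_Rstar_4} and whose proof depend on Lemmas~\ref{Lemma_X1_le_O_U1} and \ref{Lemma_Xdelta2sq_le_O_V1sq}, both of which in turn rest on the present lemma. You also correctly flag that your per-term bound $(C\theta_5)^{(1-\delta)n}$ cannot beat the $\approx n!$ terms in the union-bound sum, but no refinement of the $q$-estimate can rescue this when the target event is the typical one. The paper's proof instead uses Proposition~\ref{Prop_characterize_low_prob_events}, which rests only on the crude bound \eqref{Eqn_naive_bound_pxy} and is available at this stage with no circularity. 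The mechanism is short: $\|\vX_\delta(\mu)\|_\infty \le -C\log\bigl(1-U_{(n-\floor{\delta n/2})}(\mu)\bigr)$, so it suffices to keep the upper $\delta/2$ quantile of $\vU(\mu)$ away from $1$; for $\vW\sim\Exp(2^k)^{\otimes n}$ the probability that at least $\delta n/2$ coordinates exceed $1-e^{-s}$ is a Binomial tail $\exp(-n D(\delta/2\|e^{-s}))$, which is made small enough (uniformly in $k$) by choosing $s$, and hence $\theta_5 = Cs$, large.
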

\begin{proof}
Since $\hat{\vX}$ and $\vX$ differ by at most a factor of $C$ component-wise, we have
\begin{equation}\label{Eqn_proof_Lemma_bdd_x_delta_infty_norm_translate_to_U_quantile}
    \|\vX_\delta(\mu)\|_\infty \le X_{(n-\floor{\delta n/2})}(\mu) \le C \hat{X}_{(n-\floor{\delta n/2})}(\mu) = -C\log\big(1-U_{(n-\floor{\delta n/2})}(\mu)\big).
\end{equation}
Thus, it suffices to bound the upper $\delta/2$ quantile $U_{(n-\floor{\delta n/2})}(\mu)$ away from $1$. 

Let $\Omega = \{\vu\in I^n:\vu_{(n-\floor{\delta n/2})} > 1-e^{-s}\}$ for some $s\ge 1$ that we will specified later. Then $\vW\in\Omega$ implies that $\sum_{i=1}^n \mathbbm{1}_{(1-e^{-s},1]}(W_i) \ge \delta n/2$. For $W_i\sim\Exp(2^k)$, we have
\begin{equation*}
    \Pp(W_i\in (1-e^{-s},1]) = \int_{1-e^{-s}}^1 2^k e^{-2^k t} dt \le e^{-s}\cdot 2^ke^{-2^{k-1}} \le e^{-s}.
\end{equation*}
Thus, $\sum_{i=1}^n \mathbbm{1}_{(1-e^{-s},1]}(W_i)$ is stochastically dominated by a $\Binom(n, e^{-s})$ random variable, and as a result
\begin{multline*}
    \Pp_{\vW\sim\Exp(2^k)^{\otimes n}}(\vW\in\Omega, \|\vW\|_1\in I_k) \le \Pp_{\vW\sim\Exp(2^k)^{\otimes n}}(\vW\in\Omega)  \\
    \le \Pp_{\vW\sim\Exp(2^k)^{\otimes n}}\left(\sum_{i=1}^n \mathbbm{1}_{(1-e^{-s},1]}(W_i) \ge \frac{\delta n}{2}\right)
    \le \Pp_{Z\sim\Binom(n, e^{-s})}\bigg(Z\ge \frac{\delta n}{2}\bigg) \le \exp(-n D(\delta/2 \| e^{-s})).
\end{multline*}
Since $D(\delta/2 \| z)\to\infty$ as $z\to 0$, it suffices to take $s$ sufficiently large to guarantee $D(\delta/2\|e^{-s}) > \kappa + 6 + 8 \log C$. Proposition~\ref{Prop_characterize_low_prob_events} then guarantees with probability $1-\exp(-\kappa n)$ that no stable matchings $\mu$ have $\vU(\mu) \in \Omega$, and as a result of \eqref{Eqn_proof_Lemma_bdd_x_delta_infty_norm_translate_to_U_quantile}, with at least the desired probability, no stable matchings $\mu$ should have $\|\vX_\delta(\mu)\|_\infty > \theta_5 := C s$.
\end{proof}

\begin{lemma}\label{Lemma_X1_le_O_U1}
For any $\delta > 0$ and $\kappa > 0$, there exists an absolute constant $\theta_6$ (again depending on $\delta,\kappa$, and $C$) such that, with probability $1-\exp(-\kappa n)$, there exist no stable matchings $\mu$ with $\|\vX_\delta(\mu)\|_1 \ge \theta_6 \|\vU(\mu)\|_1$.
\end{lemma}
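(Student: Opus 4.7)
The strategy is to combine the infinity-norm bound from Lemma~\ref{Lemma_bdd_x_delta_infty_norm} with the elementary observation that, on a bounded interval, the standard exponential CDF $F(z)=1-e^{-z}$ is comparable to its argument. Once each $\hat{X}_i$ appearing in the $\delta$-truncated vector is bounded by an absolute constant, we immediately get a linear comparison $X_i \le O(U_i)$ that can be summed over the truncated indices.

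First, apply Lemma~\ref{Lemma_bdd_x_delta_infty_norm} with the same $\delta$ and $\kappa$ to obtain a constant $\theta_5=\theta_5(\delta,\kappa,C)$ such that, except on an event of probability at most $e^{-\kappa n}$, every stable matching $\mu$ satisfies $\|\vX_\delta(\mu)\|_\infty \le \theta_5$. On this good event, for each index $i$ retained in $\vX_\delta(\mu)$, the renormalized value obeys $\hat{X}_i(\mu) = a_{i,\mu(i)} X_i(\mu) \le C\theta_5 =: M$.

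Second, since $F(z)/z$ is decreasing on $(0,\infty)$, we have $F(z)\ge c_M\, z$ for all $z\in[0,M]$, where $c_M := (1-e^{-M})/M > 0$. Applying this with $z=\hat{X}_i(\mu)\le M$ gives
\begin{equation*}
    U_i(\mu) \;=\; F(\hat{X}_i(\mu)) \;\ge\; c_M\, \hat{X}_i(\mu) \;=\; c_M\, a_{i,\mu(i)}\, X_i(\mu) \;\ge\; \frac{c_M}{C}\, X_i(\mu),
\end{equation*}
using $a_{i,\mu(i)}\ge 1/C$ from Assumption~\ref{Assumption_C_bounded}.

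Summing over the surviving indices $i\in\cM'_{\mu,\delta}$ and extending the sum on the right to all of $[n]$ (which only increases it, since $\vU(\mu)\ge \mathbf{0}$),
\begin{equation*}
    \|\vX_\delta(\mu)\|_1 \;\le\; \frac{C}{c_M} \sum_{i\in\cM'_{\mu,\delta}} U_i(\mu) \;\le\; \frac{C}{c_M}\, \|\vU(\mu)\|_1.
\end{equation*}
Thus the claim holds with $\theta_6 := C/c_M$, on the event of probability at least $1-e^{-\kappa n}$ guaranteed by Lemma~\ref{Lemma_bdd_x_delta_infty_norm}. There is no substantive obstacle: the only subtlety is keeping track of the fact that the comparison $F(z)\asymp z$ fails for large $z$, which is exactly why the $\delta$-truncation (and the resulting $\ell^\infty$ bound on $\hat{\vX}_\delta$) is essential.
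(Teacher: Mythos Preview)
Your proof is correct and follows essentially the same approach as the paper: invoke Lemma~\ref{Lemma_bdd_x_delta_infty_norm} to bound $\hat{X}_i(\mu)\le C\theta_5$ on the truncated support, use the linear lower bound $F(z)\ge \frac{1-e^{-M}}{M}\,z$ on $[0,M]$, and sum. Your constant $\theta_6=C/c_M=\frac{C^2\theta_5}{1-e^{-C\theta_5}}$ in fact coincides exactly with the paper's choice.
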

\begin{proof}
Take $\theta_6 = C \sup_{0<x\le C\theta_5} \frac{x}{F(x)} = \frac{C^2\theta_5}{1-e^{-C\theta_5}}$, where $\theta_5$ is the constant in Lemma~\ref{Lemma_bdd_x_delta_infty_norm}. Assume that $\|\vX_\delta(\mu)\|_\infty \le \theta_5$ for all stable matchings $\mu$ since the probability otherwise is at most $\exp(-\kappa n)$ as desired. Note that for each $i$ in the support of $\vX_\delta(\mu)$ (i.e., $(X_\delta)_i(\mu) > 0$), we have
\begin{equation}
    \hat{X}_i(\mu) \le C X_i(\mu) = C (X_\delta)_i(\mu) \le C\theta_5,
\end{equation}
and subsequently
\begin{equation}
    U_i(\mu) = F(\hat{X}_i(\mu)) \ge \frac{C\hat{X}_i(\mu)}{\theta_6} \ge \frac{X_i(\mu)}{\theta_6} = \frac{(X_\delta)_i(\mu)}{\theta_6},
\end{equation}
and this final inequality is trivial for any $i$ not in the support of $\vX_\delta(\mu)$. The claim then follows immediately.
\end{proof}

\begin{lemma}\label{Lemma_Xdelta2sq_le_O_V1sq}
For any fixed $\delta > 0$ and $\kappa > 0$, there exists an absolute constant $\theta_7$ (depending on $\delta,\kappa$, and $C$) such that, with probability $1-\exp(-\kappa n)$,
there exist no stable matchings $\mu$ with $\|\vX_\delta(\mu)\|_2^2 \ge \theta_7 \|\vU(\mu)\|_1^2 /n$.
\end{lemma}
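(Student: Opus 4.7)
The plan is to avoid any new concentration argument and instead derive the bound by combining the two preceding lemmas with a quantile argument that upgrades the constant bound on $\|\vX_\delta(\mu)\|_\infty$ coming from Lemma~\ref{Lemma_bdd_x_delta_infty_norm} to a bound of order $\|\vU(\mu)\|_1/n$. Once this is done, the elementary inequality $\|\vx\|_2^2\le\|\vx\|_\infty\,\|\vx\|_1$ paired with Lemma~\ref{Lemma_X1_le_O_U1} finishes the proof.

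Concretely, by a union bound I can assume, at the cost of $O(\exp(-\kappa n))$ in probability, that every stable $\mu$ simultaneously satisfies $\|\vX_\delta(\mu)\|_\infty \le \theta_5$ (Lemma~\ref{Lemma_bdd_x_delta_infty_norm}) and $\|\vX_\delta(\mu)\|_1 \le \theta_6\|\vU(\mu)\|_1$ (Lemma~\ref{Lemma_X1_le_O_U1}); the former, by inspection of its proof, actually bounds the top-$(\delta/2)$-quantile $\nu := X_{(n-\floor{\delta n/2})}(\mu)$ itself, not merely the maximum over $\cM'_{\mu,\delta}$. Each of the $\floor{\delta n/2}$ men in $\bar\cM_{\mu,\delta/2}$ satisfies $X_i \ge \nu$ and, since $a_{i,\mu(i)} \ge 1/C$,
\[
U_i \;=\; 1 - e^{-a_{i,\mu(i)} X_i} \;\ge\; 1 - e^{-\nu/C}.
\]
Summing over $\bar\cM_{\mu,\delta/2}$ yields $\|\vU(\mu)\|_1 \ge \floor{\delta n/2}(1-e^{-\nu/C})$. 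Since $(1-e^{-z})/z$ is decreasing on $(0,\infty)$, the constraint $\nu\le\theta_5$ gives $1-e^{-\nu/C}\ge c_0\nu$ with the explicit constant $c_0:=(1-e^{-\theta_5/C})/\theta_5$. Rearranging and using $\floor{\delta n/2}\ge \delta n/4$ for $n$ large produces the key quantile bound
\[
\|\vX_\delta(\mu)\|_\infty \;\le\; \nu \;\le\; \frac{4}{c_0\delta}\cdot\frac{\|\vU(\mu)\|_1}{n}.
\]

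Multiplying this bound by the $\ell^1$ estimate and invoking $\|\vX_\delta\|_2^2 \le \|\vX_\delta\|_\infty\,\|\vX_\delta\|_1$ gives
\[
\|\vX_\delta(\mu)\|_2^2 \;\le\; \frac{4\theta_6}{c_0\delta}\cdot\frac{\|\vU(\mu)\|_1^2}{n},
\]
so the claim holds with $\theta_7 := 4\theta_6/(c_0\delta)$, a constant depending only on $\delta$, $\kappa$, and $C$ (through $\theta_5$ and $\theta_6$). I do not expect a substantive obstacle here: the argument is entirely deterministic conditional on the events of the two preceding lemmas, and the only subtlety is verifying that Lemma~\ref{Lemma_bdd_x_delta_infty_norm} does control the quantile $\nu$ itself, which is immediate from its proof since that proof bounds $U_{(n-\floor{\delta n/2})}$ directly. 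No new appeal to the integral formula or to Proposition~\ref{Prop_characterize_low_prob_events} is needed.
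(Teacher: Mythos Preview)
Your argument is correct and takes a genuinely different route from the paper. The paper proceeds by first bounding $\|\vX_\delta(\mu)\|_2^2 \le \theta_6' \sum_{i\le n-\floor{\delta n/2}} U_{(i)}(\mu)^2$ (using Lemma~\ref{Lemma_bdd_x_delta_infty_norm} only to convert $X_i$ to $U_i$ on the truncated index set), and then invokes Proposition~\ref{Prop_characterize_low_prob_events} once more with a fresh large-deviation bound to compare $\sum_{i\le n-\floor{\delta n/2}} w_{(i)}^2$ to $\|\vw\|_1^2/n$ under the tilted exponential measures. Your approach instead extracts a \emph{pointwise} bound $\|\vX_\delta(\mu)\|_\infty \le \nu \lesssim \|\vU(\mu)\|_1/n$ by observing that the $\floor{\delta n/2}$ discarded men already contribute at least $\Theta(n)\cdot(1-e^{-\nu/C})$ to $\|\vU(\mu)\|_1$, and then closes with H\"older's inequality $\|\vx\|_2^2 \le \|\vx\|_\infty\|\vx\|_1$ and Lemma~\ref{Lemma_X1_le_O_U1}. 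The gain is that your argument is entirely deterministic once the events of Lemmas~\ref{Lemma_bdd_x_delta_infty_norm} and~\ref{Lemma_X1_le_O_U1} hold, avoiding a second appeal to the integral-formula machinery of Proposition~\ref{Prop_characterize_low_prob_events}; the paper's route is more uniform in style (every lemma in this block is proved via the same template) but requires an additional probabilistic step. Two minor points worth recording: (i) your reading of Lemma~\ref{Lemma_bdd_x_delta_infty_norm} is accurate --- its proof bounds $X_{(n-\floor{\delta n/2})}(\mu)$ itself via $U_{(n-\floor{\delta n/2})}(\mu)$, not merely $\|\vX_\delta(\mu)\|_\infty$; (ii) to land exactly at failure probability $\exp(-\kappa n)$ after the union bound over the two lemmas, invoke each with a slightly larger $\kappa'>\kappa$, which only affects the constants $\theta_5,\theta_6$ and hence $\theta_7$.
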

\begin{proof}
Taking advantage of Lemma~\ref{Lemma_bdd_x_delta_infty_norm}, let us assume that $X_{(n-\floor{\delta n/2})}(\mu) \le \theta_5$ is satisfied simultaneously by all stable matchings $\mu$ (see the proof, in particular \eqref{Eqn_proof_Lemma_bdd_x_delta_infty_norm_translate_to_U_quantile}, for more details); the event otherwise has probability bounded by $\exp(-\kappa n)$.

Notice that
\begin{equation*}
    \|\vX_\delta(\mu)\|_2^2 \le \sum_{i=1}^{n-\floor{\delta n/2}} X_{(i)}(\mu)^2 \le C^2 \sum_{i=1}^{n-\floor{\delta n/2}} \hat{X}_{(i)}(\mu)^2 \le \theta_6' \sum_{i=1}^{n-\floor{\delta n/2}} U_{(i)}(\mu)^2,
\end{equation*}
where the $(i)$ subscript denotes the $i$-th (lower) order statistics (and in particular, $\hat{X}_{(i)}(\mu)$ is the $i$-th smallest entry of $\hat{\vX}(\mu)$) with $\theta_6' = C^2 \left(\frac{\theta_5}{F(\theta_5)}\right)^2$. Now it suffices to compare $\sum_{i=1}^{n-\floor{\delta n/2}} U_{(i)}(\mu)^2$ with $\|\vU(\mu)\|_1^2 /n$.

Consider $\Omega := \{\vw \in I^n : \sum_{i=1}^{n-\floor{\delta n/2}} w_{(i)}^2 \ge \gamma \|\vw\|_1^2/n\}$ for some $\gamma\in\R_+$ to be specified. By Proposition~\ref{Prop_characterize_low_prob_events}, it suffices to show that for some appropriate value of $\gamma$ we have
\begin{equation*}
    \Pp_{\vW\sim\Exp(2^k)^{\otimes n}}(\vW\in\Omega, \|\vW\|_1\in I_k) \le e^{-(\kappa+6)n}C^{-8n}
\end{equation*}
for all $k\in\Z_+$. Observe that
\begin{align*}
    \Pp_{\vW\sim\Exp(2^k)^{\otimes n}}(\vW\in\Omega, \|\vW\|_1\in I_k) &= \Pp_{\vW\sim\Exp(2^k)^{\otimes n}}\bigg(\sum_{i=1}^{n-\floor{\delta n/2}} W_{(i)}^2 \ge \frac{\gamma \|\vW\|_1^2}{n}, 2^{-k}n<\|\vW\|_1\le 2^{-k+1}n\bigg) \\
    &\le \Pp_{\vW\sim\Exp(2^k)^{\otimes n}}\bigg(\sum_{i=1}^{n-\floor{\delta n/2}} W_{(i)}^2 \ge \gamma 2^{-2k} n\bigg) \\
    &\le \Pp_{\vW\sim\Exp(1)^{\otimes n}}\bigg(\sum_{i=1}^{n-\floor{\delta n/2}} W_{(i)}^2 \ge \gamma n\bigg) \\
    &\le \Pp_{\vW\sim\Exp(1)^{\otimes n}}(W_{(n-\floor{\delta n/2})} \ge \sqrt{\gamma}) \\
    &\le \Pp_{Z\sim\Binom(n, e^{-\sqrt{\gamma}})}\left(Z \ge \frac{\delta n}{2}\right).
\end{align*}
By the large deviation bound for binomial distribution, choosing $\gamma$ sufficiently large such that $D(\delta/2 \| e^{-\sqrt{\gamma}}) > \kappa + 6 + 8\log C$ ensures that this probability is $o(e^{-(\kappa+6)n}C^{-8n})$. This finishes the proof with the choice of $\theta_7 = \theta_6' \gamma$.
\end{proof}
\begin{remark}
    This is the only part of our analysis that relies on the $\delta$-truncation of values. Without the truncation, $\frac{1}{n}\|\vW\|_2^2$ would concentrate poorly -- in fact not even having a finite mean -- for $\vW\sim\Exp(1)^{\otimes n}$.
\end{remark}

\begin{corollary}\label{Cor_X_delta_truncate_at_C_over_2_is_small}
For any constant $c \in (0, 1/2)$, there exists a constant $\theta_8> 0$ such that,
with probability $1-\exp(-n^c)$,
there exist no stable matchings $\mu$ with $\sum_{i=1}^n (X_\delta)_i \mathbbm{1}_{[2/C,\infty)}\big((X_\delta)_i\big) \ge \theta_8 \|\vU\|_1/(\ln n)^{7/8}$.
\end{corollary}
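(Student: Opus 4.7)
The plan is to combine the second-moment bound from Lemma~\ref{Lemma_Xdelta2sq_le_O_V1sq} with the upper bound on $\|\vU(\mu)\|_1$ from Corollary~\ref{Cor_V1_le_n_over_ln}, using a simple pointwise inequality to trade the indicator against a quadratic factor.

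The key observation is the elementary inequality
\begin{equation*}
    x \cdot \mathbbm{1}_{[2/C,\infty)}(x) \le \tfrac{C}{2}\,x^2 \qquad\text{for all } x \ge 0,
\end{equation*}
which holds because on the event $x \ge 2/C$ we have $1 \le (C/2) x$. Summing over $i\in[n]$ with $x=(X_\delta)_i$ yields
\begin{equation*}
    \sum_{i=1}^n (X_\delta)_i\,\mathbbm{1}_{[2/C,\infty)}\!\big((X_\delta)_i\big) \le \tfrac{C}{2}\,\|\vX_\delta(\mu)\|_2^2.
\end{equation*}

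Next, I would invoke Lemma~\ref{Lemma_Xdelta2sq_le_O_V1sq} (taking any fixed $\kappa$, say $\kappa=1$, so its failure probability $\exp(-n) \le \exp(-n^c)$), to replace $\|\vX_\delta(\mu)\|_2^2$ with $\theta_7\|\vU(\mu)\|_1^2/n$. This gives, off an event of probability $\exp(-n)$, the bound
\begin{equation*}
    \sum_{i=1}^n (X_\delta)_i\,\mathbbm{1}_{[2/C,\infty)}\!\big((X_\delta)_i\big) \le \frac{C\theta_7}{2}\cdot\frac{\|\vU(\mu)\|_1^2}{n}
\end{equation*}
simultaneously for all stable matchings $\mu$. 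Then I would apply Corollary~\ref{Cor_V1_le_n_over_ln} to factor one copy of $\|\vU(\mu)\|_1$ as at most $\theta_4 n/(\ln n)^{7/8}$, off an additional event of probability $\exp(-n^c)$. Setting $\theta_8 := C\theta_4\theta_7/2$ and taking a union bound on the two failure events finishes the proof.

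I don't foresee a serious obstacle: the claim is essentially bookkeeping on top of the three previously established regularity bounds (Lemma~\ref{Lemma_Xdelta2sq_le_O_V1sq}, Corollary~\ref{Cor_V1_le_n_over_ln}, and implicitly Lemma~\ref{Lemma_bdd_x_delta_infty_norm}, which is used inside Lemma~\ref{Lemma_Xdelta2sq_le_O_V1sq}). The only small point of care is making sure all invoked high-probability events hold uniformly over $\mu\in\mathcal{S}$, but each of the cited statements is already stated uniformly over stable matchings, so the union bound is straightforward.
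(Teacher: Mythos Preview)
Your proposal is correct and follows essentially the same approach as the paper: the pointwise inequality $x\,\mathbbm{1}_{[2/C,\infty)}(x)\le \tfrac{C}{2}x^2$ (which the paper writes in the equivalent form $x^2\,\mathbbm{1}_{[2/C,\infty)}(x)\ge \tfrac{2}{C}x\,\mathbbm{1}_{[2/C,\infty)}(x)$), followed by Lemma~\ref{Lemma_Xdelta2sq_le_O_V1sq} and Corollary~\ref{Cor_V1_le_n_over_ln}. The paper's proof is just a more compressed version of yours.
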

\begin{proof}
Notice that
\begin{equation*}
    \|\vX_\delta\|_2^2 \ge \sumiton (X_\delta)_i^2 \mathbbm{1}_{[2/C,\infty)}\big((X_\delta)_i\big) \ge \frac{2}{C} \sumiton (X_\delta)_i \mathbbm{1}_{[2/C,\infty)}\big((X_\delta)_i\big).
\end{equation*}
The statement then follows from Lemma~\ref{Lemma_Xdelta2sq_le_O_V1sq} and Corollary~\ref{Cor_V1_le_n_over_ln}.
\end{proof}

\begin{lemma}\label{Lemma_Xdelta1_ge_U1}
For any fixed $\delta > 0$ and $\kappa > 0$, there exists an absolute constant $\theta_9$ (depending on $\delta,\kappa$, and $C$) such that, with probability $1-\exp(-\kappa n)$,
there exist no stable matchings $\mu$ with $\|\vX_\delta(\mu)\|_1 \le \theta_9 \|\vU(\mu)\|_1$.
\end{lemma}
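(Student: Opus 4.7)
The argument fits the template of the preceding lemmas in this appendix: identify a bad region $\Omega\subseteq I^n$ for $\vU(\mu)$, then apply Proposition~\ref{Prop_characterize_low_prob_events}. The starting point is the deterministic pointwise bound
\begin{equation*}
    X_i(\mu) \;=\; \frac{-\ln(1-U_i(\mu))}{a_{i,\mu(i)}} \;\ge\; \frac{U_i(\mu)}{C},
\end{equation*}
valid for every $i$ under Assumption~\ref{Assumption_C_bounded}, combining $-\ln(1-u)\ge u$ with $a_{i,\mu(i)}\le C$. Since $\cM'_{\mu,\delta}\subseteq[n]$ has cardinality exactly $n-\floor{\delta n}$, summing over it and comparing with the $n-\floor{\delta n}$ smallest entries of $\vU(\mu)$ yields
\begin{equation*}
    \|\vX_\delta(\mu)\|_1 \;\ge\; \frac{1}{C}\sum_{i\in\cM'_{\mu,\delta}} U_i(\mu) \;\ge\; \frac{1}{C}\sum_{i=1}^{n-\floor{\delta n}} U_{(i)}(\mu),
\end{equation*}
where $U_{(i)}$ denotes the $i$-th smallest order statistic. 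So it suffices to show that, with probability $1-e^{-\kappa n}$, the top $\floor{\delta n}$ order statistics of $\vU(\mu)$ account for at most a $(1-C\theta_9)$-fraction of $\|\vU(\mu)\|_1$, uniformly over all stable $\mu$.

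\paragraph{The probability bound.} Let $\Omega = \{\vw\in I^n:\sum_{i=1}^{n-\floor{\delta n}} w_{(i)} < C\theta_9 \|\vw\|_1\}$. By Proposition~\ref{Prop_characterize_low_prob_events} it suffices to prove, uniformly in $k\in\Z_+$,
\begin{equation*}
    \Pp_{\vW\sim\Exp(2^k)^{\otimes n}}\!\big(\vW\in\Omega,\,\|\vW\|_1\in I_k\big)\;\le\;\exp(-(\kappa+6+8\ln C)\,n).
\end{equation*}
The event defining $\Omega$ is scale-invariant, and $\|\vW\|_1\le 2^{-k+1}n$ on $\{\|\vW\|_1\in I_k\}$, so rescaling $\tilde W_i := 2^k W_i\sim\Exp(1)$ reduces the task to bounding $\Pp(\sum_{i=1}^{n-\floor{\delta n}}\tilde W_{(i)}<\tau n)$ with $\tau := 2C\theta_9$, for $\tilde\vW\sim\Exp(1)^{\otimes n}$. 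A simple averaging/Markov step is the key: if this partial sum is below $\tau n$, then the bottom $n-\floor{\delta n}$ values have average at most $\tau/(1-\delta)$, so at least half of them --- at least $(1-\delta)n/2$ of the $\tilde W_i$'s --- fall in $[0,2\tau/(1-\delta)]$. Since $\Pp(\tilde W_i\le 2\tau/(1-\delta))\le 2\tau/(1-\delta)$, the count is stochastically dominated by $\Binom(n,2\tau/(1-\delta))$, and a Chernoff bound gives rate
\begin{equation*}
    D\!\left(\tfrac{1-\delta}{2}\,\Big\|\,\tfrac{2\tau}{1-\delta}\right)\;\ge\; \tfrac{1-\delta}{2}\,\ln(1/\tau)\,-\,O(1),
\end{equation*}
which exceeds $\kappa+6+8\ln C$ once $\theta_9$ (and hence $\tau$) is taken sufficiently small, depending on $\delta,\kappa,C$.

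\paragraph{Main obstacle and conclusion.} The principal technical point is that $\delta$ is prescribed and cannot be shrunk to improve concentration, so the entire large-deviation rate must come from the smallness of $\theta_9$. This forces $\theta_9$ to depend exponentially on $\kappa+\ln C$, which is compatible with the lemma's assertion that $\theta_9$ depends on $\delta,\kappa,C$. Once the probability bound is in hand, Proposition~\ref{Prop_characterize_low_prob_events} immediately yields that with probability at least $1-e^{-\kappa n}$ no stable matching has $\vU(\mu)\in\Omega$; combined with the pointwise inequality above, this gives $\|\vX_\delta(\mu)\|_1\ge\theta_9\|\vU(\mu)\|_1$ simultaneously for all stable $\mu$.
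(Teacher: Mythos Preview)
Your proof is correct and follows the same overall template as the paper: reduce deterministically to a statement about the bottom $n-\lfloor\delta n\rfloor$ order statistics of $\vU(\mu)$, define the corresponding bad region $\Omega$, and invoke Proposition~\ref{Prop_characterize_low_prob_events}. The paper's $\Omega$ is phrased as ``$\exists S$ of size $n-\lfloor\delta n\rfloor$ with $\sum_{i\in S}w_i\le\alpha\|\vw\|_1$'' (equivalent to yours), and the tail is bounded by a union bound over the $\binom{n}{\lfloor\delta n\rfloor}$ subsets together with the exponential lower-tail bound $\Pp(\sum_{i=1}^m Z_i\le s)\le (se/m)^m$ from Lemma~\ref{Lemma_weighted_exp_chernoff}. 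Your route instead converts the small-partial-sum event into ``at least $(1-\delta)n/2$ of the $\tilde W_i$ are below $2\tau/(1-\delta)$'' via Markov and then applies the binomial Chernoff bound. Both yield a rate that diverges as $\theta_9\to 0$; the paper's argument is marginally more direct, while yours avoids the $\binom{n}{\lfloor\delta n\rfloor}$ factor altogether and is arguably more elementary. Incidentally, your pointwise bound $X_i\ge U_i/C$ is the version that is literally correct under Assumption~\ref{Assumption_C_bounded}; the paper's stated ``$\vX\succeq\vU$'' silently drops a factor of $C$ (since $a_{i,\mu(i)}$ can exceed $1$), though this is harmless for the conclusion.
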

\begin{proof}
Since $\vX \succeq \vU$ component-wise, we have $\|\vX_\delta(\mu)\|_1 \ge \|\vU_\delta(\mu)\|_1$. Thus, it suffices to consider the condition $\|\vU_\delta(\mu)\|_1 \le \theta_9 \|\vU(\mu)\|_1$.

Consider $\Omega := \{\vw \in I^n : \exists S\subseteq[n], |S|=n-\floor{\delta n},\sum_{i\in S} w_i \le \alpha \|\vw\|_1\}$ for some $\alpha\in\R_+$ to be specified. By union bound, for any $k\in\Z_+$,
\begin{align*}
    \Pp_{\vW\sim\Exp(2^k)^{\otimes n}}(\vW\in\Omega, \|\vW\|_1\in I_k) &\le \binom{n}{\floor{\delta n}} \Pp_{\vW\sim\Exp(2^k)^{\otimes n}}\bigg(\sum_{i=1}^{n-\floor{\delta n}} w_i \le \alpha \|\vW\|_1\le 2^{-k+1}\alpha n\bigg) \\
    &= \binom{n}{\floor{\delta n}} \Pp_{\vW\sim\Exp(1)^{\otimes n}}\bigg(\sum_{i=1}^{n-\floor{\delta n}} w_i \le 2\alpha n\bigg) \\
    &= \exp(-h(\delta) n + o(n)) \cdot \bigg(\frac{2\alpha e}{1-\delta}\bigg)^{n-\floor{\delta n}},
\end{align*}
where in the last step we use Stirling's approximation to bound the first factor and standard (lower) concentration of $\Exp(1)$ to bound the probability term (e.g., see Lemma~\ref{Lemma_weighted_exp_chernoff}).
For $\alpha$ sufficiently small, e.g., $\alpha < \exp\big(\frac{1}{1-\delta}(h(\delta)-\kappa - 6-8\ln C)-h(\delta)\big)$, we have
\begin{equation*}
    \Pp_{\vW\sim\Exp(2^k)^{\otimes n}}(\vW\in\Omega, \|\vW\|_1\in I_k) \le e^{-(\kappa+6)n}C^{-8n}
\end{equation*}
for all $k\in\Z_+$. Invoking Lemma~\ref{Prop_characterize_low_prob_events} concludes the proof with $\theta_9 = \alpha$.
\end{proof}

\begin{corollary}\label{Cor_Xdelta1_ge_ln_n}
    For any constant $c \in (0, 1/2)$, there exists a constant $\theta_{10}> 0$ such that,
    with probability $1-\exp(-n^c)$,
    there exist no stable matchings $\mu$ with $\|\vX_\delta(\mu)\|_1 \le \theta_{10} \ln n$.
\end{corollary}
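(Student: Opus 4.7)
The plan is to derive Corollary~\ref{Cor_Xdelta1_ge_ln_n} as a direct chaining of two earlier high-probability results: Lemma~\ref{Lemma_U1_ge_ln_n}, which guarantees a logarithmic lower bound $\|\vU(\mu)\|_1 \ge \theta_2 \ln n$ for every stable matching, and Lemma~\ref{Lemma_Xdelta1_ge_U1}, which guarantees $\|\vX_\delta(\mu)\|_1 \ge \theta_9 \|\vU(\mu)\|_1$ uniformly over stable matchings. Combining them multiplicatively yields the claim with constant $\theta_{10} := \theta_2 \theta_9$.

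Concretely, first fix the target exponent $c \in (0,1/2)$. Invoke Lemma~\ref{Lemma_U1_ge_ln_n} with this $c$ to obtain the constant $\theta_2 > 0$ and the event $\mathcal{E}_1 := \{\forall \mu \in \mathcal{S}, \|\vU(\mu)\|_1 \ge \theta_2 \ln n\}$, which satisfies $\Pp(\mathcal{E}_1^c) \le \exp(-n^c)$. Next apply Lemma~\ref{Lemma_Xdelta1_ge_U1} with, say, $\kappa = 1$ (any positive constant works, since for $c < 1$ we have $\exp(-\kappa n) = o(\exp(-n^c))$); this yields the constant $\theta_9 > 0$ and the event $\mathcal{E}_2 := \{\forall \mu \in \mathcal{S}, \|\vX_\delta(\mu)\|_1 \ge \theta_9 \|\vU(\mu)\|_1\}$ with $\Pp(\mathcal{E}_2^c) \le \exp(-\kappa n)$.

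On the intersection $\mathcal{E}_1 \cap \mathcal{E}_2$, every stable matching $\mu$ satisfies
\begin{equation}
    \|\vX_\delta(\mu)\|_1 \;\ge\; \theta_9 \|\vU(\mu)\|_1 \;\ge\; \theta_9 \theta_2 \ln n,
\end{equation}
so taking $\theta_{10} := \theta_2 \theta_9 > 0$ gives the desired lower bound. A union bound yields
\begin{equation}
    \Pp\big((\mathcal{E}_1 \cap \mathcal{E}_2)^c\big) \;\le\; \exp(-n^c) + \exp(-\kappa n) \;\lesssim\; \exp(-n^c),
\end{equation}
which is exactly the probability guarantee required.

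There is no substantive technical obstacle here, since both constituent lemmas have already been established in the appendix; the only small bookkeeping is matching the failure probability exponents (ensuring $\exp(-\kappa n)$ is absorbed by $\exp(-n^c)$ for $c < 1$) and checking that $\theta_9, \theta_2$ depend only on $\delta, C, c$ so that their product is an absolute constant in the same sense. Thus Corollary~\ref{Cor_Xdelta1_ge_ln_n} follows essentially as a corollary of Lemma~\ref{Lemma_Xdelta1_ge_U1} composed with Lemma~\ref{Lemma_U1_ge_ln_n}.
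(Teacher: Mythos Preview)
Your proposal is correct and matches the paper's own proof essentially verbatim: the paper also derives the corollary directly from Lemmas~\ref{Lemma_U1_ge_ln_n} and \ref{Lemma_Xdelta1_ge_U1} with $\theta_{10} = \theta_2\theta_9$. Your added bookkeeping about absorbing $\exp(-\kappa n)$ into $\exp(-n^c)$ is sound and just makes explicit what the paper leaves implicit.
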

\begin{proof}
    This follows from Lemmas~\ref{Lemma_U1_ge_ln_n} and \ref{Lemma_Xdelta1_ge_U1}, with $\theta_{10} = \theta_2\theta_9$.
\end{proof}

The following Corollary combines all the previous into the typical behavior of value vectors in stable matchings.

\begin{corollary}\label{Cor_Rstar_likely}
Define $\mathcal{R}^\star(\mu)\subseteq \R_+^n \times \R_+^n$, in the context of a matching $\mu$, to be the set of all pairs of vectors $(\vx,\vy)\in \R_+^n \times \R_+^n$ that satisfy all of the following conditions:
\begin{align}
    \theta_2 \ln n \le \|\vu\|_1&,\|\vv\|_1 \le \frac{\theta_4 n}{(\ln n)^{7/8}}, \label{Eqn_def_Rstar_1} \\
    \|\vu\|_1\|\vv\|_1 &\le \theta_3 n(\ln n)^{1/8}, \label{Eqn_def_Rstar_2} \\
    \|\vx_\delta\|_1 \le \theta_6 \|\vu\|_1 
    &\text{ and } \|\vy_\delta\|_1 \le \theta_6 \|\vv\|_1, \label{Eqn_def_Rstar_3} \\
    \|\vx_\delta\|_2^2 \le \frac{\theta_7 \|\vu\|_1^2}{n} 
    &\text{ and } \|\vy_\delta\|_2^2 \le \frac{\theta_7 \|\vv\|_1^2}{n}, \label{Eqn_def_Rstar_4} \\
    \sum_{i=1}^n (x_\delta)_i \mathbbm{1}_{[2/C,\infty)}\big((x_\delta)_i\big) \le \frac{\theta_8 \|\vu\|_1}{(\ln n)^{7/8}} 
    &\text{ and } \sum_{i=1}^n (y_\delta)_i \mathbbm{1}_{[2/C,\infty)}\big((y_\delta)_i\big) \le \frac{\theta_8 \|\vv\|_1}{(\ln n)^{7/8}}, \label{Eqn_def_Rstar_5} \\
    \|\vx_\delta\|_1,\|\vy_\delta\|_1 &\ge \theta_{10} \ln n, \label{Eqn_def_Rstar_6}
\end{align}
where $u_i = F(a_{i,\mu(i)}x_i)$ and $v_j = F(b_{j,\mu^{-1}(j)}y_j)$ for $i,j\in[n]$; $\vx_\delta$ and $\vy_\delta$ denote the truncated version of $\vx$ and $\vy$;
$\theta_2,\theta_3, \theta_6, \theta_7, \theta_4, \theta_8,\theta_{10}\in\R_+$ are absolute constants (independent of $\mu$) chosen appropriately as in Lemmas~\ref{Lemma_U1_ge_ln_n}, \ref{Lemma_U1V1_le_O_n_ln}, 
 \ref{Lemma_X1_le_O_U1}, \ref{Lemma_Xdelta2sq_le_O_V1sq}, and Corollaries \ref{Cor_V1_le_n_over_ln}, \ref{Cor_X_delta_truncate_at_C_over_2_is_small}, and \ref{Cor_Xdelta1_ge_ln_n}. Then, for any $c\in(0,1/2)$, with probability $1-\exp(-n^c)$, $(\vX(\mu),\vY(\mu)) \in \mathcal{R}^\star(\mu)$ for all stable matchings $\mu$.
\end{corollary}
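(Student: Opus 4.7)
The plan is a straightforward union bound that stitches together the previously established high-probability properties of stable matchings. Each of conditions \eqref{Eqn_def_Rstar_1}--\eqref{Eqn_def_Rstar_6} is the concrete content of one or two of the lemmas leading up to the corollary, and each lemma was proved to fail on some stable matching $\mu$ with probability at most $\exp(-n^c)$ (or even $\exp(-\kappa n)$ for an arbitrary $\kappa$, which is stronger). Since there are only finitely many --- in fact $\bO(1)$ --- such conditions, a single union bound yields the desired $1-\exp(-n^c)$ probability estimate, at the cost of possibly shrinking the implicit constant.

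Concretely, I would proceed as follows. First, for each of the inequalities appearing in \eqref{Eqn_def_Rstar_1}--\eqref{Eqn_def_Rstar_6}, quote the corresponding lemma that provides the bound on the men's side: \eqref{Eqn_def_Rstar_1} lower bound from Lemma~\ref{Lemma_U1_ge_ln_n} and upper bound from Corollary~\ref{Cor_V1_le_n_over_ln}; \eqref{Eqn_def_Rstar_2} from Lemma~\ref{Lemma_U1V1_le_O_n_ln}; \eqref{Eqn_def_Rstar_3} from Lemma~\ref{Lemma_X1_le_O_U1}; \eqref{Eqn_def_Rstar_4} from Lemma~\ref{Lemma_Xdelta2sq_le_O_V1sq}; \eqref{Eqn_def_Rstar_5} from Corollary~\ref{Cor_X_delta_truncate_at_C_over_2_is_small}; and \eqref{Eqn_def_Rstar_6} from Corollary~\ref{Cor_Xdelta1_ge_ln_n}. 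Second, invoke the symmetric versions for the women's side: since the roles of men and women are exchangeable in the model (Assumption~\ref{Assumption_C_bounded} is symmetric in $\vA$ and $\vB$, and the bounds in these lemmas are derived only from $C$-boundedness and from properties of the deferred acceptance algorithm that hold symmetrically when one runs the woman-proposing version), each of the estimates above applies verbatim to $\vV(\mu)$ and $\vY_\delta(\mu)$ in place of $\vU(\mu)$ and $\vX_\delta(\mu)$.

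Third, union bound. Each listed event has probability at most $\exp(-n^c)$ (tuning $\kappa$ in the lemmas that offer the stronger $\exp(-\kappa n)$ bound is free, so one can just use $\exp(-n^c)$ uniformly for convenience). Adding a constant number of such failure probabilities leaves a bound of the form $K \exp(-n^c)$, which, by decreasing $c$ slightly within the open interval $(0,1/2)$ --- or equivalently choosing $n$ large enough --- can be absorbed back into $\exp(-n^c)$. On the complementary event, every stable matching $\mu$ simultaneously satisfies all six conditions for both the men's and women's side, which is precisely the statement $(\vX(\mu),\vY(\mu))\in\mathcal{R}^\star(\mu)$.

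There is no real obstacle, since the work has already been done in the preceding lemmas; the only minor care needed is to verify that the set of constants $(\theta_2,\theta_3,\theta_4,\theta_6,\theta_7,\theta_8,\theta_{10})$ can be chosen \emph{simultaneously} for a single target exponent $c$. This is immediate because each lemma permits the constant to depend on $c$ (and $C$, $\delta$), so one can fix $c\in(0,1/2)$ at the outset, then invoke each lemma with this $c$ to obtain the corresponding constant, and finally take the intersection of the high-probability events.
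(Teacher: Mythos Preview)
Your proposal is correct and matches the paper's approach exactly: the paper itself states that ``The proof simply summarizes the aforementioned Lemmas and Corollaries and shall be omitted,'' which is precisely the union-bound-over-previous-results argument you outline. Your additional care about symmetry and simultaneous choice of constants is appropriate and fills in the details the paper leaves implicit.
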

The proof simply summarizes the aforementioned Lemmas and Corollaries and shall be omitted.

\propRatioPQHighProbeon*
\begin{proof}
Note that $1-e^{-tx} \ge \left(tx - \frac{t^2 x}{2}\right)$ for all $x, t \ge 0$. In particular, $1-e^{-tx} \ge \left(tx - \frac{t^2 x}{2}\right) \mathbbm{1}_{[0, 2/t]}(x) \ge 0$. Using this to approximate $p(\vx,\vy)$ gives
\begin{align}
    p(\vx,\vy) &= \prod_{\substack{i\ne j}} \left(1 -  \big(1-e^{-a_{ij}x_i}\big)\big(1-e^{-b_{ji}y_j}\big) \right) \nonumber\\
    &\le \prod_{\substack{i\ne j}} \left(1 - \mathbbm{1}_{[0, 2/a_{ij}]}(x_i) \mathbbm{1}_{[0, 2/b_{ji}]}(y_j) \bigg(a_{ij}x_i - \frac{a_{ij}^2}{2}x_i^2\bigg)\bigg(b_{ji}y_j-\frac{b_{ji}^2}{2}y_j^2\bigg) \right) \nonumber\\
    &\le \exp\left( -\sum_{i\ne j} \mathbbm{1}_{[0, 2/C]}(x_i) \mathbbm{1}_{[0, 2/C]}(y_j) \bigg(a_{ij}x_i - \frac{a_{ij}^2}{2}x_i^2\bigg)\bigg(b_{ji}y_j-\frac{b_{ji}^2}{2}y_j^2\bigg) \right).
\end{align}
Taking logarithm for simplicity and expanding the expression above gives
\begin{align}
    \ln p(\vx,\vy) & \le - \sum_{i\ne j} \Bigg( a_{ij}b_{ji} x_i y_j - \big(\mathbbm{1}_{(2/C,\infty)}(x_i) + \mathbbm{1}_{(2/C,\infty)}(y_j)\big) a_{ij} b_{ji} x_i y_j \nonumber \\
    &\qquad\qquad - \mathbbm{1}_{[0, 2/C]}(x_i) \mathbbm{1}_{[0, 2/C]}(y_j) \bigg( a_{ij}^2 b_{ji} x_i^2 y_j + a_{ij} b_{ji}^2 x_i y_j^2
    \bigg) \Bigg) \nonumber \\
    &\le
        -\sum_{i, j=1}^n a_{ij}b_{ji} x_i y_j
          + \sum_{i=1}^n C^2 x_i y_i \nonumber \\
    &\qquad\qquad + \sum_{i, j=1}^n \Bigg( C^2 \big(\mathbbm{1}_{(2/C,\infty)}(x_i) + \mathbbm{1}_{(2/C,\infty)}(y_j)\big) x_i y_j + C^3 \bigg( x_i^2 y_j + x_i y_j^2 \bigg) \Bigg). %
\end{align}
Notice that $-\ln q(\vx,\vy) = \sum_{i, j=1}^n a_{ij}b_{ji}\frac{x_i}{a_{ii}}\frac{y_j}{b_{jj}}$. Thus,
\begin{multline}\label{Eqn_proof_ratio_p_q_high_prob_diff_pq}
    \ln \frac{p(\vx,\vy)}{q(\vx,\vy)} \le C^2 \vx^\top \vy + C^2\left(\|\vx\|_1 \sumiton\mathbbm{1}_{(2/C,\infty)}(y_j) y_j + \|\vy\|_1 \sumiton\mathbbm{1}_{(2/C,\infty)}(x_i) x_i\right) \\
    + C^3 \left(\|\vx\|_2^2 \|\vy\|_1 + \|\vx\|_1 \|\vy\|_2^2\right).
\end{multline}

In light of Corollary~\ref{Cor_Rstar_likely}, it suffices to upper bound $\ln\frac{p(\vx_\delta,\vy_\delta)}{q(\vx_\delta,\vy_\delta)}$ by $cn/(\ln n)^{1/2}$ for all $(\vx,\vy)\in\mathcal{R}^\star(\mu)$ and for all $\mu$. To simplify notation, we will make the dependency on $\mu$ implicit in the rest of the proof.
By Cauchy-Schwarz inequality, the first term in \eqref{Eqn_proof_ratio_p_q_high_prob_diff_pq}, up to a factor of $C^2$, is at most
\begin{equation*}
    \|\vx_\delta\|_2 \|\vy_\delta\|_2 \le  \frac{\theta_7\|\vu\|_1 \|\vv\|_1}{n} \le \theta_3\theta_7(\ln n)^{1/8} = o\left(\frac{n}{(\ln n)^{1/2}}\right)
\end{equation*}
by \eqref{Eqn_def_Rstar_4} and \eqref{Eqn_def_Rstar_2}.%
The middle term in \eqref{Eqn_proof_ratio_p_q_high_prob_diff_pq}, up to a factor of $2C^2$, is at most%
\begin{equation*}
    \|\vx_\delta\|_1 \sumiton\mathbbm{1}_{(2/C,\infty)}((y_\delta)_j) (y_\delta)_j \le \theta_6\|\vu\|_1 \frac{\|\vv\|_1}{(\ln n)^{7/8}} \le \theta_3\theta_6\frac{n}{(\ln n)^{3/4}}
\end{equation*}
by \eqref{Eqn_def_Rstar_3}, \eqref{Eqn_def_Rstar_5}, and \eqref{Eqn_def_Rstar_2}.
Finally, the last term, up to a factor of $2C^2$, is upper bounded by
\begin{multline*}
    \|\vx_\delta\|_2^2 \|\vy_\delta\|_1 = \frac{\|\vx_\delta\|_2^2}{\|\vu\|_1^2} \frac{\|\vy_\delta\|_1}{\|\vv\|_1} \frac{1}{\|\vv\|_1} (\|\vu\|_1\|\vv\|_1)^2 \\
    \le \frac{\theta_7}{n} \cdot \theta_6 \cdot \frac{1}{\theta_2\ln n} \cdot \theta_3^2 n^2(\ln n)^{1/4} = \frac{\theta_7\theta_6\theta_3^2}{\theta_2}\frac{n}{(\ln n)^{3/4}}
\end{multline*}
due to \eqref{Eqn_def_Rstar_4}, \eqref{Eqn_def_Rstar_3}, \eqref{Eqn_def_Rstar_1}, and \eqref{Eqn_def_Rstar_2}. Putting these together gives the proposition.
\end{proof}

\section{Some concentration inequalities}

\subsection{Concentration for independent non-identically distributed exponential random variables}\label{appendix_concentration_for_nonid_exp_rvs}

\begin{restatable}{lemma}{lemmaWeightedExpChernoff}\label{Lemma_weighted_exp_chernoff}
Let $\mathbf{u}\in\R^n_+$ be a vector with $\|\mathbf{u}\|_1=n$ and let $\mathbf{Z}$ be a random vector with independent $\Exp(1)$ components. Then for any $t\in[0,1)$, we have
\begin{equation}
    \mathbb{P}(\mathbf{u}\cdot \mathbf{Z} \le tn) \le (te^{1-t})^n \prod_{i=1}^n u_i^{-1} \le (te)^n \prod_{i=1}^n u_i^{-1}.
\end{equation}
In fact, the upper bound given by the second inequality holds trivially when $t\ge 1$ and is invariant under simultaneous scaling of $u$ and $t$.

Further, when $1/K \le u_i\le K$ for some constant $K\ge 1$, we have
\begin{equation}
    \mathbb{P}(\mathbf{u}\cdot \mathbf{Z} \le tn) \ge e^{-\bO(n^{2/3})} (te^{1-Kt})^n \prod_{i=1}^n u_i^{-1}.
\end{equation}
For $t=o(1)$, this lower bound becomes
\[
    e^{-\bO(n^{2/3})+(1-K)tn} (te^{1-t})^n \prod_{i=1}^n u_i^{-1} = e^{o(n)} (te)^n \prod_{i=1}^n u_i^{-1},
\]
indicating that the upper bound is tight up to a factor of $e^{o(n)}$.
In particular, when $t=\bO(n^{-1/3})$, the gap is $e^{\bO(n^{2/3})}$.
\end{restatable}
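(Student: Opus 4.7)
The plan is to obtain the upper bound by a Chernoff argument with a tilt tuned to the symmetric case, and the lower bound by a direct change of variables to a simplex volume (sidestepping the tilting/CLT approach that this form of bound might naively suggest).

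For the upper bound, I would start from $\E[e^{-\lambda u_i Z_i}] = (1+\lambda u_i)^{-1}$ and apply Markov's inequality to get $\Pp(\mathbf{u}\cdot\mathbf{Z}\le tn)\le e^{\lambda tn}\prod_i(1+\lambda u_i)^{-1}$ for every $\lambda\ge 0$. Choosing $\lambda = 1/t - 1$, which is the exact optimizer when $u_i\equiv 1$, rewrites the bound as $e^{(1-t)n}\, t^n \prod_i(t + (1-t)u_i)^{-1}$. To land on $(te^{1-t})^n\prod u_i^{-1}$, I would chain two AM-GM inequalities: first, the weighted form $u_i^{1-t}\cdot 1^t \le (1-t)u_i + t$, giving $t+(1-t)u_i \ge u_i^{1-t}$; and second, standard AM-GM on $\sum u_i = n$ gives $\prod u_i \le 1$, so $(\prod u_i)^{1-t}\ge \prod u_i$. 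Combining, $\prod(t+(1-t)u_i)^{-1}\le \prod u_i^{-1}$, which closes the tight $(te^{1-t})^n\prod u_i^{-1}$ bound. The coarser $(te)^n \prod u_i^{-1}$ then follows from $e^{1-t}\le e$, and is trivial when $t\ge 1$ since the probability is at most $1\le (te)^n \prod u_i^{-1}$.

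For the lower bound, rather than tilt, I would compute the integral directly: writing $\Pp(\mathbf{u}\cdot\mathbf{Z}\le tn) = \int_{\mathbf{z}\ge 0,\,\mathbf{u}\cdot\mathbf{z}\le tn} e^{-\|\mathbf{z}\|_1}\,d\mathbf{z}$ and substituting $y_i = u_i z_i$ transforms this into $\prod u_i^{-1}\int_{\mathbf{y}\ge 0,\,\|\mathbf{y}\|_1\le tn} e^{-\sum y_i/u_i}\,d\mathbf{y}$. Since $1/u_i\le K$, the integrand is pointwise at least $e^{-K\|\mathbf{y}\|_1}\ge e^{-Ktn}$ on the simplex. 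The simplex itself has volume $(tn)^n/n!$, and Stirling's inequality gives $(tn)^n/n! \ge (te)^n/(e\sqrt n)$. Putting these together yields $\Pp(\mathbf{u}\cdot\mathbf{Z}\le tn) \ge (te^{1-Kt})^n\prod u_i^{-1}/(e\sqrt n)$, which comfortably beats the claimed $e^{-\bO(n^{2/3})}$ slack (we actually obtain only an $e^{-\bO(\log n)}$ gap beyond the $(te^{1-Kt})^n \prod u_i^{-1}$ main term).

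The argument has no genuinely hard step; the main subtlety is in the upper bound, where the tilt $\lambda = 1/t - 1$ is tailored to the symmetric case but is rescued for general $\mathbf{u}$ only by the \emph{joint} application of the two AM-GM inequalities, whose compatibility hinges on the normalization $\sum u_i = n$ (without it, the factor $(\prod u_i)^{t-1}$ would not collapse to something $\le 1$, and the clean form would need adjusting). The $e^{-\bO(n^{2/3})}$ factor in the stated lower bound is intentionally loose; the form is chosen because in the intended regime $t = \Theta(n^{-1/3})$ the explicit $e^{-Ktn}$ factor itself contributes $e^{-\Theta(n^{2/3})}$ and dominates, so this calibration of slack is what downstream applications want.
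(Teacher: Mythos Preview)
Your upper bound argument is essentially identical to the paper's: both take the Chernoff tilt $\lambda=1/t-1$ and then use the same pair of inequalities (the paper phrases the first as concavity of $\sum\log u_i$ along the segment from $\mathbf{u}$ to $\mathbf{1}$, which is exactly your weighted AM--GM $t+(1-t)u_i\ge u_i^{1-t}$, followed by $\prod u_i\le 1$).

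Your lower bound, however, takes a genuinely different route. The paper introduces a tilted family $W_i\sim\Exp(u_iR/t)$ with $R=1+n^{-1/3}$, computes $D(Q_n\|P_n)$, and applies the data processing inequality to the indicator channel $\mathbbm{1}\{\mathbf{u}\cdot\boldsymbol{\zeta}\le tn\}$; the $n^{-1/3}$ in $R$ is what produces the $e^{-O(n^{2/3})}$ slack. Your direct change of variables $y_i=u_iz_i$ followed by the pointwise bound $e^{-\sum y_i/u_i}\ge e^{-Ktn}$ on the simplex and the exact volume $(tn)^n/n!$ is both shorter and sharper: it yields $(te^{1-Kt})^n\prod u_i^{-1}/(e\sqrt n)$, so only an $e^{-O(\log n)}$ multiplicative loss rather than $e^{-O(n^{2/3})}$. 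The paper's tilting argument is more in the spirit of large-deviations machinery and would generalize more readily to other base distributions, but for this specific exponential setting your simplex computation is the cleaner proof and strictly improves the stated bound.
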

\begin{proof}
First we establish the upper bound. Directly applying Chernoff's method on $\vu\cdot \vZ$, we have
\begin{equation}
    \mathbb{P}(\mathbf{u}\cdot \mathbf{Z} \le tn) \le \inf_{\lambda \ge 0} \frac{\E[\exp(-\lambda \mathbf{u}\cdot \mathbf{Z})]}{\exp(-\lambda tn)} = \inf_{\lambda\ge 0} e^{\lambda tn} \prod_{i=1}^n \frac{1}{1+\lambda u_i}.
\end{equation}
Taking $\lambda=1/t - 1$ (which is the minimizer when $\mathbf{u}=\mathbf{1}$) gives
\begin{equation}\label{Eqn_lemma_weighted_exp_chernoff_3}
    \mathbb{P}(\vu\cdot \vX \le tn) \le e^{n-tn} \prod_{i=1}^n \frac{t}{t+(1-t)u_i}.
\end{equation}
Notice that $\mathbf{u} \mapsto \sum_{i=1}^n \log u_i$ is a concave function on $\R_+^n$, and hence
\[
    \prod_{i=1}^n (t+(1-t)u_i) \ge \left(\prod_{i=1}^n u_i\right)^{1-t} \ge \prod_{i=1}^n u_i
\]
since $\prod_{i=1}^n u_i \le \big(n^{-1}\sumiton u_i\big)^n=1$.
Plugging the above inequality into \eqref{Eqn_lemma_weighted_exp_chernoff_3} gives the desired upper bound.

Now we establish the tightness of the bound under the additional assumption that $1/K \le u_i \le K$ for all $i\in[n]$.
Consider independent random variables $W_i\sim \Exp(u_i R/t)$ for $i=1,\cdots,n$ with $R=1+n^{-1/3}$, so that by Chebyshev's inequality
\[
    q_n := \mathbb{P}(\mathbf{u}\cdot \mathbf{W}\le tn) = \mathbb{P}_{T\sim \Gamma(n,1)}(T \le nR) \ge 1 - n^{-1/3}.
\] 
For convenience, we similarly write
\[
    p_n := \mathbb{P}(\vu\cdot \mathbf{Z}\le tn)
\]
and write the (joint) distributions of $\mathbf{Z}$ and $\mathbf{W}$ as $P_n = \Exp(1)^{\otimes n}$ and $Q_n = \bigotimes_{i=1}^n \Exp(u_i R/t)$, respectively. Applying the data processing inequality to the channel $\mathcal{C}$ that maps $\mathbf{\zeta}\in\R^n$ to $\mathbbm{1}\{\mathbf{u}\cdot \mathbf{\zeta} \le tn\}$ gives
\begin{multline}\label{Eqn_lemma_weighted_exp_chernoff_tightness_data_proc_ineq}
    D(Q_n\|P_n) \ge D(\mathcal{C}(Q_n) \| \mathcal{C}(P_n)) = q_n \log\frac{q_n}{p_n} + (1-q_n) \log\frac{1-q_n}{1-p_n} \\
    = - q_n \log p_n + (1-q_n) \log(1-p_n) + (q_n\log q_n + (1-q_n)\log(1-q_n)),
\end{multline}
where $D(\cdot\|\cdot)$ denotes the Kullback-Leibler (KL) divergence between two probability distributions. A direct computation gives
\begin{align}
    D(Q_n\|P_n) &= \sum_{i=1}^n \left(\frac{t}{Ru_i} - 1 - \log\frac{t}{Ru_i}\right) \nonumber\\
    &\le \sum_{i=1}^n \left(\frac{Kt}{R} - 1 - \log\frac{t}{Ru_i}\right) \nonumber\\
    &= -n + R^{-1} Ktn - n \log t + n\log R + \sum_{i=1}^n \log u_i \\
    &\le -n + Ktn - n \log t + n^{2/3} + \sum_{i=1}^n \log u_i.\label{Eqn_lemma_weighted_exp_chernoff_tightness_compute_kl}
\end{align}
Combining this with \eqref{Eqn_lemma_weighted_exp_chernoff_tightness_data_proc_ineq} and letting $n\to\infty$ gives
\begin{equation}
    -n + Ktn - n \log t + n^{2/3} + \sum_{i=1}^n \log u_i \ge - (1-O(n^{-1/3}))\log p_n + o(1),
\end{equation}
where we used the fact that
$\log(1-p_n)\to 0$ (due to our upper bound). Exponentiating both sides gives the desired lower bound for $p_n$.
\end{proof}

As a consequence, we have the following lemma.

\begin{restatable}{lemma}{lemmaWgtExpCondConcentration}\label{Lemma_wgt_exp_cond_concentration}
Let $\mathbf{u},\mathbf{v}\in\R^n_+$ be two vectors with bounded components, namely $\|\mathbf{u}\|_1=\|\mathbf{v}\|_1=n$ and $1/K \le u_i,v_i\le K$ for some fixed $K\ge 1$. For independent $Z_1,\cdots,Z_n\sim \Exp(1)$, we have
\begin{equation}\label{Eqn_main_Lemma_wgt_exp_cond_concentration}
    \mathbb{P}\left(\left|\frac{\mathbf{u}\cdot \mathbf{Z}}{t \mathbf{u}\cdot \mathbf{v}^{-1}} - 1\right| >\zeta \;\middle|\; \mathbf{v}\cdot \mathbf{Z} < tn\right) \le \exp(-\Theta(n\zeta^2))
\end{equation}
for $t=o(1)$ and and any fixed constant $\zeta>0$, where $\mathbf{v}^{-1}$ denotes the component-wise inverse of vector $\mathbf{v}$.
\end{restatable}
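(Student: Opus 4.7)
The plan is to reduce this conditional concentration bound to an unconditional one via an exponential tilt. Introduce the tilted product measure $\mathbb{Q} = \bigotimes_{i=1}^n \Exp(\alpha_i)$ with rates $\alpha_i = (t+(1-t)v_i)/t$, which is exactly the Chernoff-optimal tilt for the event $B = \{\vv\cdot\vZ < tn\}$ (as appears implicitly in the proof of Lemma~\ref{Lemma_weighted_exp_chernoff}). Under $\mathbb{Q}$, both marginal expectations align naturally: $\E_{\mathbb{Q}}[\vv\cdot\vZ] = tn(1+O(t))$ and $\E_{\mathbb{Q}}[\vu\cdot\vZ] = t\vu\cdot\vv^{-1}(1+O(t))$. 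Moreover each summand $u_iZ_i$ is sub-exponential with $\psi_1$-norm of order $t$ uniformly in $i$ (by the $K$-boundedness of $\vu$ and $\vv$), and the variances sum to $O(nt^2)$. Bernstein's inequality then yields $\mathbb{Q}(A) \le \exp(-\Theta(\zeta^2 n))$ for the target event $A = \{|\vu\cdot\vZ - t\vu\cdot\vv^{-1}| > \zeta t\vu\cdot\vv^{-1}\}$, once $t$ is small enough that the $O(t)$ discrepancy between $\E_{\mathbb{Q}}[\vu\cdot\vZ]$ and the reference $t\vu\cdot\vv^{-1}$ can be absorbed into a slack of $\zeta/2$.

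Next I would translate back to the original measure using the Radon--Nikodym derivative
\[
\frac{d\Pp}{d\mathbb{Q}}(\vz) = \frac{1}{\prod_i\alpha_i}\exp\!\left(\frac{1-t}{t}\vv\cdot\vz\right),
\]
which is bounded above by $e^{(1-t)n}/\prod_i\alpha_i$ on the event $B$, giving $\Pp(A\cap B) \le e^{(1-t)n}/\prod_i\alpha_i \cdot \mathbb{Q}(A)$. For the denominator I apply the lower bound from Lemma~\ref{Lemma_weighted_exp_chernoff} to obtain $\Pp(B) \ge e^{-o(n)}(te)^n/\prod_i v_i$. A direct Taylor expansion of $\log\alpha_i = \log(1+(1-t)v_i/t)$, using $\|\vv^{-1}\|_1 = O(n)$ under the $K$-boundedness, produces $\prod_i\alpha_i = t^{-n}\prod_i v_i \cdot e^{O(tn)}$, hence $e^{(1-t)n}/\prod_i\alpha_i = (te)^n/\prod_i v_i \cdot e^{O(tn)}$. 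Combining these three pieces,
\[
\Pp(A \mid B) = \frac{\Pp(A\cap B)}{\Pp(B)} \le e^{O(tn)+o(n)-\Theta(\zeta^2 n)} = \exp(-\Theta(n\zeta^2)),
\]
because $\zeta$ is a fixed constant and the hypothesis $t = o(1)$ forces all error terms to be $o(\zeta^2 n)$.

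The main obstacle I anticipate is careful bookkeeping of three separate approximations --- the $O(t)$ correction between $\E_{\mathbb{Q}}[\vu\cdot\vZ]$ and $t\vu\cdot\vv^{-1}$, the $e^{O(tn)}$ slack in the formula for $\prod_i\alpha_i$, and the $e^{-o(n)}$ slack in the denominator bound --- each of which must be verified to lie below the $\Theta(\zeta^2 n)$ gain. The $K$-boundedness hypothesis plays a double role here: it keeps $u_i/\alpha_i$ of order $t$ uniformly, yielding the correct sub-exponential parameters for Bernstein, and it controls $\|\vv^{-1}\|_1 = O(n)$, which is what bounds the $e^{O(tn)}$ factor. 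Finally, a symmetric argument handles the lower tail of $A$, so the two-sided bound in \eqref{Eqn_main_Lemma_wgt_exp_cond_concentration} follows without modification.
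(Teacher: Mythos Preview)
Your proposal is correct and follows a genuinely different route from the paper's proof. The paper bounds each tail of the numerator $\Pp(\vu\cdot\vZ \lessgtr (1\mp\zeta)t\,\vu\cdot\vv^{-1},\ \vv\cdot\vZ<tn)$ via a \emph{linear combination} trick: it replaces the intersection of two half-spaces by the single half-space $\{(\pm\lambda\vu+(1\mp\lambda)\vv)\cdot\vZ < \cdots\}$, applies the Chernoff-type upper bound of Lemma~\ref{Lemma_weighted_exp_chernoff}, and then optimizes over $\lambda$. Controlling the resulting expression requires an AM--GM gap estimate and Schweitzer's inequality for the ratio of arithmetic to harmonic means, with the optimal $\lambda$ turning out to be of order $\zeta K^{-6}$. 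Your change-of-measure approach is more modular: the tilt $\mathbb{Q}$ aligns the mean once, Bernstein for sub-exponential summands handles both tails simultaneously (so your closing remark about a ``symmetric argument'' is in fact unnecessary), and the bookkeeping reduces to the three $o(n)$ error terms you identified, each dominated by the fixed $\Theta(\zeta^2 n)$ gain. Both proofs lean on the lower bound of Lemma~\ref{Lemma_weighted_exp_chernoff} for the denominator $\Pp(B)$; the difference lies entirely in how the numerator is controlled. Your route avoids the ad~hoc inequalities at the cost of importing Bernstein as a black box, and arguably makes the role of the $K$-boundedness hypothesis (uniform $\psi_1$-scale $O(t)$ and $\|\vv^{-1}\|_1=O(n)$) more transparent.
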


Notice that this result is invariant under simultaneous scaling of vector $\mathbf{u}$, $\mathbf{v}$, and $t$. Essentially, we only need $tn/\|\mathbf{v}\|_1 = o(1)$ and bounded ratios between among the entries of $\mathbf{u}$ and $\mathbf{v}$. Further, the result remains unchanged if $Z_i\sim\Exp(c_i)$ independently with $c_i$'s bounded on some $[1/K', K']$; the $c_i$'s can simply be absorbed into $\vu$ and $\vv$.

\begin{proof}

We first prove the concentration bound for the lower tail.

Writing
\begin{multline}
    \mathbb{P}(u\cdot x < (1-\zeta) tu\cdot v^{-1} | v\cdot x < tn) = \frac{\mathbb{P}(u\cdot x < (1-\zeta) tu\cdot v^{-1}, v\cdot x < tn)}{\mathbb{P}(v\cdot x < tn)} \\
    \le \frac{\mathbb{P}((\lambda u+(1-\lambda)v)\cdot x < (1-\zeta)\lambda t u\cdot v^{-1} + (1-\lambda)tn)}{\mathbb{P}(v\cdot x < tn)}
\end{multline}
for some $\lambda > 0$ to be determined later, the previous Lemma bounds the numerator by
\[
    t^n \left(1-\lambda + \frac{(1-\zeta)\lambda u\cdot v^{-1}}{n}\right)^n e^{n - (1-\zeta)\lambda t u\cdot v^{-1} - (1-\lambda)tn} \prod_{i=1}^n \frac{1}{\lambda u_i + (1-\lambda)v_i}.
\]

For lower bounding the denominator $\mathbb{P}(v\cdot x<tn)$, Lemma~\ref{Lemma_weighted_exp_chernoff} indicates that for $t=o(n)$, the denominator is well approximated by $t^n e^{n-tn} \prod_i v_i^{-1}$, up to an error of $e^{o(n)}$. Taking the ratio between the two quantities gives
\begin{equation}\label{Eqn_proof_wgt_exp_cond_concentration_lower_p_ratio_up_to_e^o(n)_1}
    \left(1-\lambda + \frac{(1-\zeta)\lambda u\cdot v^{-1}}{n}\right)^n e^{\lambda tn-(1-\zeta)\lambda t u\cdot v^{-1}} \prod_{i=1}^n \frac{1}{\lambda u_i v_i^{-1} + 1-\lambda}
\end{equation}

Focus on the quantity $\prod_{i=1}^n (\lambda u_i v_i^{-1} + 1-\lambda)$. We use the following claim for a bound on the gap between the arithmetic and geometric means.

\begin{claim}
For $z\in\R_+^n$ with $\bar{z} = n^{-1}\sum_{i=1}^n z_i$, the function $f:[0,1]\to\R$ given by $f(\alpha) = \sum_{i=1}^n \log(\bar{z} + \alpha (z_i-\bar{z}))$ is concave with a maximum at $\alpha=0$. (Indeed, the function $z\mapsto \sum_{i=1}^n \log z_i$ is concave on $\R_+^n$.) Hence,
\begin{equation}
    0\le f(0) - f(1) \le -f'(1) = -\sum_{i=1}^n \frac{z_i-\bar{z}}{z_i} = -n + \bar{z}\sum_{i=1}^n\frac{1}{z_i}.
\end{equation}
Exponentiating both sides gives
\begin{equation}
    \bar{z}^n \prod_{i=1}^n z_i^{-1} \le \exp\left( -n + \bar{z}\sum_{i=1}^n\frac{1}{z_i} \right).
\end{equation}
\end{claim}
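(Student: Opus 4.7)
The plan is to prove this claim by exploiting the concavity of $\log$ on $\R_+$ in a quantitative way. The right object to analyze is the univariate function $f:[0,1]\to\R$ defined by $f(\alpha) = \sum_{i=1}^n \log(\bar{z} + \alpha(z_i - \bar{z}))$, which interpolates between the values $f(0) = n\log\bar{z}$ and $f(1) = \sum_{i=1}^n \log z_i$ that we wish to compare. Each summand is the composition of $\log$ with an affine function of $\alpha$ that stays in $\R_+$ throughout $[0,1]$ (its value at $\alpha$ lies on the segment from $\bar{z}$ to $z_i$), hence is concave; $f$ is then concave as a sum of concave functions.

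Next, I would observe that $\alpha = 0$ is a critical point of $f$, and hence a global maximizer on $[0,1]$ by concavity. Differentiating gives $f'(\alpha) = \sum_{i=1}^n \frac{z_i - \bar{z}}{\bar{z} + \alpha(z_i - \bar{z})}$, so at $\alpha = 0$ this reduces to $\bar{z}^{-1}\sum_{i=1}^n (z_i - \bar{z}) = 0$ by the definition of $\bar{z}$. This already establishes $f(0) \ge f(1)$, i.e., AM--GM on the $z_i$, but the claim asks for a quantitative version.

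The quantitative bound comes from linearizing $f$ at $\alpha = 1$. By concavity, $f(\alpha) \le f(1) + f'(1)(\alpha - 1)$ for every $\alpha \in [0,1]$, and setting $\alpha = 0$ yields $f(0) - f(1) \le -f'(1)$. A direct computation gives $f'(1) = \sum_{i=1}^n (z_i - \bar{z})/z_i = n - \bar{z}\sum_{i=1}^n z_i^{-1}$, so that $f(0) - f(1) \le \bar{z}\sum_{i=1}^n z_i^{-1} - n$. Exponentiating both sides and rearranging gives exactly the stated inequality $\bar{z}^n \prod_{i=1}^n z_i^{-1} \le \exp\bigl(-n + \bar{z}\sum_{i=1}^n z_i^{-1}\bigr)$.

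There is no serious obstacle; the only delicate choice is to linearize at $\alpha = 1$ rather than at $\alpha = 0$, where the tangent bound would degenerate to $f(0) - f(1) \le 0$ (recovering only plain AM--GM and losing the quantitative slack). Linearizing at the right endpoint instead extracts the gap in terms of the harmonic-mean-like quantity $\bar{z}\sum_i z_i^{-1} - n$, which vanishes exactly when all $z_i$ are equal and is precisely the form the outer argument in the proof of Lemma~\ref{Lemma_weighted_exp_chernoff} needs in order to absorb the remaining error terms into the claimed $e^{-\Theta(n\zeta^2)}$ factor.
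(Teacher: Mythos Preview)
Your proof is correct and follows essentially the same approach as the paper: both argue concavity of $f$ along the segment, observe $f'(0)=0$ so $\alpha=0$ is the maximizer, and then use the tangent line at $\alpha=1$ to obtain $f(0)-f(1)\le -f'(1)$ before exponentiating. (One minor slip: the Claim sits inside the proof of Lemma~\ref{Lemma_wgt_exp_cond_concentration}, not Lemma~\ref{Lemma_weighted_exp_chernoff}.)
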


Applying the above claim to the product in \eqref{Eqn_proof_wgt_exp_cond_concentration_lower_p_ratio_up_to_e^o(n)_1} with $z_i = \lambda u_i v_i^{-1} + 1-\lambda$, we obtain
\begin{equation}
    \prod_{i=1}^n \frac{1}{\lambda u_i v_i^{-1} + 1-\lambda} \le \left(1-\lambda + \frac{\lambda u\cdot v^{-1}}{n}\right)^{-n} \exp\left( -n +\sum_{i=1}^n\frac{1-\lambda + \lambda u\cdot v^{-1}/n}{1-\lambda + \lambda u_i v_i^{-1}}\right)
\end{equation}
Hence, the conditional probability of interest is upper bounded, up to $e^{o(n)}$, by the following expression
\begin{equation}
    \left(1-\lambda + \frac{(1-\zeta)\lambda u\cdot v^{-1}}{n}\right)^n e^{\lambda tn-(1-\zeta)\lambda t u\cdot v^{-1}} \left(1-\lambda + \frac{\lambda u\cdot v^{-1}}{n}\right)^{-n} \exp\left( -n +\sum_{i=1}^n\frac{1-\lambda + \lambda u\cdot v^{-1}/n}{1-\lambda + \lambda u_i v_i^{-1}}\right).
\end{equation}

Denote the negative logarithm of the $n$-th root of the quantity above by $\underline{\chi}(\lambda)$. That is,
\begin{equation}
    \mathbb{P}(u\cdot x < (1-\zeta) tu\cdot v^{-1} | v\cdot x < tn)
    \le \inf_{\lambda>0}e^{o(n) - n \underline{\chi}(\lambda)} = \exp\Big(o(n) - n \sup_{\lambda>0}\underline{\chi}(\lambda)\Big)
\end{equation}
for any $\lambda > 0$ with
\begin{multline}
    \underline{\chi}(\lambda) := -\log\left(1-\lambda + \frac{(1-\zeta)\lambda u\cdot v^{-1}}{n}\right) - \lambda t + \frac{(1-\zeta)\lambda t u\cdot v^{-1}}{n} +  \log\left(1-\lambda + \frac{\lambda u\cdot v^{-1}}{n}\right) \\
    + 1 - \frac{1}{n}\sum_{i=1}^n\frac{1-\lambda + \lambda u\cdot v^{-1}/n}{1-\lambda + \lambda u_i v_i^{-1}}.\label{Eqn_proof_wgt_exp_cond_concentration_lower_neg_log_prob_def}
\end{multline}
The $o(n)$ factor is of lower order, and it suffices to show that there exists some $\lambda$ such that $\underline{\chi}(\lambda)=\Theta(\zeta^2)$.
For $\lambda$ sufficiently small (e.g., $\lambda \le K^{-2}/2$, recalling that $u_i,v_i\in[1/K,K]$), we may approximate the logarithm function near its zero and obtain
\begin{equation}
    \log\left(1-\lambda + \frac{\lambda u\cdot v^{-1}}{n}\right) \ge \frac{\lambda u\cdot v^{-1}}{n} -\lambda - \left(\frac{\lambda u\cdot v^{-1}}{n} -\lambda\right)^2.
\end{equation}
Then the two log terms in \eqref{Eqn_proof_wgt_exp_cond_concentration_lower_neg_log_prob_def} combined can be bounded below by
\begin{equation}
    \lambda - \frac{(1-\zeta)\lambda u\cdot v^{-1}}{n} + \frac{\lambda u\cdot v^{-1}}{n} -\lambda - \left(\frac{\lambda u\cdot v^{-1}}{n} -\lambda\right)^2 = \zeta \lambda \frac{u\cdot v^{-1}}{n} - \lambda^2 \left(\frac{u\cdot v^{-1}}{n}-1\right)^2.
\end{equation}
With $u_i,v_i\in[1/K,K]$, a naive lower bound is the following
\begin{equation}\label{Eqn_proof_wgt_exp_cond_concentration_lower_neg_log_prob_lower_bound}
    \underline{\chi}(\lambda) \ge \zeta \lambda K^{-2} - \lambda^2 K^4 - \lambda t + (1-\zeta)\lambda t K^{-2}
    + 1 - \frac{(2-2\lambda+K^2\lambda+K^{-2}\lambda)^2}{4(1-\lambda+K^2\lambda)(1-\lambda+K^{-2}\lambda)},
\end{equation}
where the summation at the end of \eqref{Eqn_proof_wgt_exp_cond_concentration_lower_neg_log_prob_def} is bounded using Schweitzer's inequality \cite{schweitzer1914egy} for the ratio between arithmetic and harmonic means, stating
\[
    \frac{1}{n}\sum_{i=1}^n\frac{\bar{z}}{z_i} \le \frac{(a+b)^2}{4ab}
\]
for $z\in\R^n$ with bounded components $0< a\le z_i\le b$. Further, we observe
\begin{equation}
    1 - \frac{(2-2\lambda+K^2\lambda+K^{-2}\lambda)^2}{4(1-\lambda+K^2\lambda)(1-\lambda+K^{-2}\lambda)} = -\frac{(4K^2(K^{-2}-1)^2+(K-K^{-1})^4)}{4(1-\lambda+K^2\lambda)(1-\lambda+K^{-2}\lambda)}\lambda^2 \ge - 3K^4\lambda^2.
\end{equation}
Taking $\lambda = \zeta K^{-6}/8$ in \eqref{Eqn_proof_wgt_exp_cond_concentration_lower_neg_log_prob_lower_bound} yields
\begin{equation}
    \underline{\chi}\left(\frac{1}{8}\zeta K^{-6}\right) \ge \frac{1}{16}\zeta^2 K^{-8} - \frac{1}{8}t\zeta K^{-6} \ge \Theta(\zeta^2),
\end{equation}
hence finishing our proof for the lower tail.

The proof for the upper tail follows a similar structure.
Writing
\begin{multline*}
    \mathbb{P}(u\cdot x > (1+\zeta) tu\cdot v^{-1} | v\cdot x < tn) = \frac{\mathbb{P}(u\cdot x > (1+\zeta) tu\cdot v^{-1}, v\cdot x < tn)}{\mathbb{P}(v\cdot x < tn)} \\
    \le \frac{\mathbb{P}((-\lambda u+ (1+\lambda)v)\cdot x < -(1+\zeta)\lambda t u\cdot v^{-1} + (1+\lambda) tn)}{\mathbb{P}(v\cdot x < tn)}
\end{multline*}
for some $0 < \lambda < K^{-2}$ (so that $-\lambda u+ (1+\lambda)v\in\R_+^n$) to be determined later, Lemma~\ref{Lemma_weighted_exp_chernoff} and \ref{Lemma_weighted_exp_chernoff} together imply that the ratio is, up to $e^{o(n)}$,
\begin{equation}\label{Eqn_proof_wgt_exp_cond_concentration_upper_p_ratio_up_to_e^o(n)_1}
    \left(1+\lambda - \frac{(1+\zeta)\lambda u\cdot v^{-1}}{n}\right)^n e^{(1+\zeta)\lambda t u\cdot v^{-1} - \lambda tn} \prod_{i=1}^n \frac{1}{1+\lambda -\lambda u_i v_i^{-1}}
\end{equation}

As in the proof of lower tail bound, the product term can be bounded by
\begin{equation}
    \prod_{i=1}^n (1+\lambda-\lambda u_i v_i^{-1}) \le
    \left(1+\lambda - \frac{\lambda u\cdot v^{-1}}{n}\right)^{-n} \exp\left( -n +\sum_{i=1}^n\frac{1+\lambda - \lambda u\cdot v^{-1}/n}{1+\lambda - \lambda u_i v_i^{-1}}\right),
\end{equation}
giving an upper bound, again up to $e^{o(n)}$, of
\begin{equation}
    \left(1+\lambda - \frac{(1+\zeta)\lambda u\cdot v^{-1}}{n}\right)^n e^{(1+\zeta)\lambda t u\cdot v^{-1}-\lambda tn} \left(1+\lambda - \frac{\lambda u\cdot v^{-1}}{n}\right)^{-n} \exp\left( -n +\sum_{i=1}^n\frac{1+\lambda - \lambda u\cdot v^{-1}/n}{1+\lambda - \lambda u_i v_i^{-1}}\right)
\end{equation}
for the conditional probability of interest.

Denote the negative logarithm of the $n$-th root of the quantity above by $\overline{\chi}(\lambda)$. That is,
\begin{equation}
    \mathbb{P}(u\cdot x > (1+\zeta) tu\cdot v^{-1} | v\cdot x < tn)
    \le \inf_{0<\lambda<K^{-2}} e^{o(n) - n \overline{\chi}(\lambda)}
\end{equation}
for any $\lambda \in(0,K^{-2})$ with
\begin{multline}
    \overline{\chi}(\lambda) := -\log\left(1+\lambda - \frac{(1+\zeta)\lambda u\cdot v^{-1}}{n}\right) + \lambda t -\frac{(1+\zeta)\lambda t u\cdot v^{-1}}{n} + \log\left(1+\lambda - \frac{\lambda u\cdot v^{-1}}{n}\right) \\
    + 1 - \frac{1}{n}\sum_{i=1}^n\frac{1+\lambda - \lambda u\cdot v^{-1}/n}{1+\lambda - \lambda u_i v_i^{-1}}.
\end{multline}
Again, it suffices to prove that for some choice of $\lambda$ we have $\overline{\chi}(\lambda)=\Theta(\zeta^2)$.
With similar arithmetic as in the proof for the lower tail, we observe that for $\lambda$ sufficiently small (e.g., $\lambda \le K^{-2}/2$)
\begin{equation}\label{Eqn_proof_wgt_exp_cond_concentration_upper_neg_log_prob_lower_bound}
    \overline{\chi}(\lambda) \ge \zeta \lambda K^{-2} + \lambda t - (1+\zeta)\lambda t K^2 - 4 \lambda^2 K^4.
\end{equation}
Again, taking $\lambda = \zeta K^{-6}/8$ in \eqref{Eqn_proof_wgt_exp_cond_concentration_upper_neg_log_prob_lower_bound} gives the desired lower bound of $\Theta(\zeta^2)$ for $\sup_{0<\lambda<K^{-2}}\overline{\chi}(\lambda)$ and thus finishes our proof.
\end{proof}

\subsection{A generalized DKW inequality for independent and nearly identically distributed random variables}

\begin{lemma}\label{Lemma_dkw_non_identical}
Let $X_i$, $i=1,\cdots,n$ be independent random variables each with (non-identical) distribution function $G_i$, and assume that there exists a constant $\delta>0$ and a distribution $F$ such that $\|G_i-F\|_\infty \leq \delta$ uniformly across all $i=1,\cdots,n$. Let $\hat{G}$ be the empirical distribution function of $\{X_i\}_{i=1}^n$. Then
\begin{equation}\label{Eqn_dkw_non_identical_lemma}
    \mathbb{P}(\|\hat{G}-F\|_\infty > 2\delta + \epsilon) < 4\exp(-2n\epsilon^2/9).
\end{equation}
\end{lemma}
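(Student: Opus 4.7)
The plan is to reduce the non-identically-distributed statement to the classical Dvoretzky--Kiefer--Wolfowitz (DKW) inequality through a quantile coupling to i.i.d. uniforms. I would introduce $U_1,\ldots,U_n\overset{\text{i.i.d.}}{\sim}\Unif([0,1])$ and realize $X_i = G_i^{-1}(U_i)$ via the generalized inverse $G_i^{-1}(u) := \inf\{x : G_i(x) \ge u\}$, so that the events $\{X_i \le t\}$ and $\{U_i \le G_i(t)\}$ agree almost surely. The hypothesis $\|G_i - F\|_\infty \le \delta$ then gives the deterministic pointwise sandwich
\begin{equation*}
    \mathbbm{1}\{U_i \le F(t) - \delta\} \;\le\; \mathbbm{1}\{X_i \le t\} \;\le\; \mathbbm{1}\{U_i \le F(t) + \delta\},
\end{equation*}
valid for every $i\in[n]$ and every $t\in\R$. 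Averaging over $i$ and writing $\hat{U}(s):=n^{-1}\sum_{i=1}^n\mathbbm{1}\{U_i\le s\}$ for the empirical CDF of the uniforms yields
\begin{equation*}
    \hat{U}(F(t)-\delta) \;\le\; \hat{G}(t) \;\le\; \hat{U}(F(t)+\delta) \quad \text{for all } t\in\R.
\end{equation*}

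Subtracting $F(t)$ and then $\pm\delta$ from the two sides produces the uniform-in-$t$ estimate $\|\hat{G}-F\|_\infty \le \delta + \|\hat{U}-\mathrm{Id}\|_\infty$, where $\mathrm{Id}(s)=s$ on $[0,1]$. I would then invoke Massart's sharp DKW inequality for the i.i.d. uniform samples, namely $\Pp(\|\hat{U}-\mathrm{Id}\|_\infty > s)\le 2e^{-2ns^2}$ for every $s>0$. Applying this with $s=\delta+\epsilon$ gives
\begin{equation*}
    \Pp(\|\hat{G}-F\|_\infty > 2\delta+\epsilon) \;\le\; \Pp(\|\hat{U}-\mathrm{Id}\|_\infty > \delta+\epsilon) \;\le\; 2e^{-2n(\delta+\epsilon)^2} \;\le\; 4e^{-2n\epsilon^2/9},
\end{equation*}
which is the bound \eqref{Eqn_dkw_non_identical_lemma} (in fact slightly stronger, since the coupling yields the cleaner conclusion $\Pp(\|\hat G-F\|_\infty > \delta+\epsilon)\le 2e^{-2n\epsilon^2}$).

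The only subtle point is making the quantile coupling water-tight when some $G_i$ has atoms or when $F$ fails to be strictly increasing. In those cases, the equivalence $\{X_i\le t\}=\{U_i\le G_i(t)\}$ holds up to the null event $\{U_i = G_i(t)\}$, which carries no mass because $U_i$ is continuous. This is a standard but worth-verifying technicality, and is the only place where the argument interacts with the analytic structure of the $G_i$'s. Once the coupling is established the remainder is arithmetic plus a black-box invocation of DKW, and no further use of the independence beyond what is already built into Massart's inequality is needed.
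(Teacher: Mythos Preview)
Your argument is correct and in fact cleaner than the paper's. Both proofs couple to i.i.d.\ uniforms, but the paper goes a longer way around: it sets $U_i = G_i(X_i)$, introduces an auxiliary i.i.d.-$F$ sample $Y_i = F^{-1}(U_i)$, and decomposes $\|\hat G - F\|_\infty \le \|\hat F - F\|_\infty + \sup_x n^{-1}\sum_i |\mathbbm{1}\{Y_i\le x\}-\mathbbm{1}\{X_i\le x\}|$. The second term is then bounded by $2\delta + 2\|\hat J - J\|_\infty$ (with $\hat J$ the empirical CDF of the $U_i$), so DKW must be invoked twice with threshold $\epsilon/3$, which is exactly what produces the constants $4$ and $\epsilon^2/9$ in the stated bound.

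Your sandwich $\hat U(F(t)-\delta)\le \hat G(t)\le \hat U(F(t)+\delta)$ skips the auxiliary sample entirely and requires only one DKW application, giving the sharper deterministic estimate $\|\hat G - F\|_\infty \le \delta + \|\hat U - \mathrm{Id}\|_\infty$ and hence $\Pp(\|\hat G - F\|_\infty > \delta + \epsilon)\le 2e^{-2n\epsilon^2}$, which strictly dominates the lemma as stated. Your handling of atoms is also on firmer ground: with the left-continuous inverse you use, the equivalence $\{G_i^{-1}(U_i)\le t\}=\{U_i\le G_i(t)\}$ is an exact set identity (not merely a.s.), whereas the paper's choice $U_i=G_i(X_i)$ is only uniform when $G_i$ is continuous. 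The one small thing worth making explicit in your write-up is the edge case $F(t)\pm\delta\notin[0,1]$, which you implicitly handle but do not mention; it is immediate once you extend $\mathrm{Id}$ to the full CDF of $\Unif([0,1])$.
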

\begin{proof}
Let $U_i = G_i(X_i)$ so that $U_1,\cdots,U_n$ are i.i.d. uniform on $[0,1]$, and denote their empirical distribution function by $\hat{J}$. Let $Y_i=F^{-1}(U_i)=F^{-1}(G_i(X_i))$ so that $Y_1,\cdots,Y_n$ are i.i.d. each with distribution function $F$, and denote their empirical distribution function by $\hat{F}$. Notice that
\begin{align*}
    \|\hat{G}-F\|_\infty &= \sup_{x\in\mathbb{R}} \left| n^{-1} \sum_{i=1}^n I_{(-\infty,x)}(X_i) - F(x)\right|\\
    &= \sup_{x\in\mathbb{R}} \left| n^{-1} \sum_{i=1}^n I_{(-\infty,x)}(Y_i) - F(x) + n^{-1} \sum_{i=1}^n \left(I_{(-\infty,x)}(Y_i) - I_{(-\infty,x)}(X_i)\right) \right|\\
    &\leq \sup_{x\in\mathbb{R}} \left| n^{-1} \sum_{i=1}^n I_{(-\infty,x)}(Y_i) - F(x)\right| + \sup_{x\in\mathbb{R}}\left(n^{-1} \sum_{i=1}^n \left|I_{(-\infty,x)}(Y_i) - I_{(-\infty,x)}(X_i) \right|\right)\\
    &= \|\hat{F}-F\|_\infty + \sup_{x\in\mathbb{R}}A(x).
\end{align*}
From the classic result of DKW inequality \cite{dvoretzky1956asymptotic} applied to $\hat{F}$ and $F$, we know that
\begin{equation}\label{Eqn_proof_dkw_non_identical_bound_1}
    \mathbb{P}(\|\hat{F}-F\|_\infty > \epsilon/3) < 2\exp(-2n\epsilon^2/9).
\end{equation}
For the second supremum of in the sum above, we now consider $U_i$, $i=1,\cdots,n$ as the underlying random variables. Each term in the summation in $A$ contributes 1 to the sum if and only if
\[F^{-1}(U_i)=Y_i<x\leq X_i=G_i^{-1}(U_i) \;\text{ or }\; G_i^{-1}(U_i)=X_i<x\leq Y_i=F^{-1}(U_i),\]
or alternatively
\[F(x)\wedge G_i(x) \leq U_i \leq F(x)\vee G_i(x),\footnote{We may safely ignore the case where the two sides are equal, as it happens with probability zero.}\]
where $\wedge$ and $\vee$ denote the operators of taking the minimum and maximum, respectively. Hence,
\begin{align*}
    A(x) &= n^{-1} \sum_{i=1}^n \left|I_{(-\infty,x)}(Y_i) - I_{(-\infty,x)}(X_i) \right|\\
    &=  n^{-1} \sum_{i=1}^n I_{(F(x)\wedge G_i(x),F(x)\vee G_i(x))}(U_i)\\
    &\leq  n^{-1} \sum_{i=1}^n I_{(\bigwedge_j G_j(x)\wedge F(x),\bigvee_j G_j(x)\vee F(x))}(U_i)\\
    &= \hat{J}(M(x)) - \hat{J}(m(x)),
\end{align*}
where $M$ and $m$ denote the maximum and minimum across $F$ and $G_i$, $i=1,\cdots,n$, respectively. By our assumption that $\|G_i-F\|_\infty \leq \delta$ across all $i$, we have that
\[
    0\leq M(x) - m(x)\leq 2\delta
\]
for all $x\in\mathbb{R}$. Noticing that the true distribution function $J$ of $U_i$, $i=1,\cdots,n$ is the identity function on $[0,1]$, we have
\begin{align*}
    A(x) &= \hat{J}(M(x)) - \hat{J}(m(x))\\
    &\leq \left|\hat{J}(M(x)) - J(M(x))\right| + \left|\hat{J}(m(x)) - J(m(x))\right| + \left|J(M(x)) - J(m(x))\right|\\
    &\leq 2\|\hat{J}-J\|_\infty + 2\delta
\end{align*}
on $\mathbb{R}$ uniformly. Therefore, applying DKW inequality again, we see that
\begin{equation}\label{Eqn_proof_dkw_non_identical_bound_2}
    \mathbb{P}(\sup A > 2\delta + 2\epsilon/3) \leq \mathbb{P}(\|\hat{J}-J\|_\infty > \epsilon/3) < 2\exp(-2n\epsilon^2/9).
\end{equation}
Combining \eqref{Eqn_proof_dkw_non_identical_bound_1} and \eqref{Eqn_proof_dkw_non_identical_bound_2} yields the desired bound in \eqref{Eqn_dkw_non_identical_lemma}.
\end{proof}

\section{Additional proofs}\label{appendix_extra_proofs}

\subsection{Proof of Corollary~\ref{Cor_subexp_num_stable_match}}

In this section, we prove Corollary~\ref{Cor_subexp_num_stable_match}, which is restated below for convenience. We will assume Proposition~\ref{Prop_EqXY_bound}, whose proof is deferred to Appendix~\ref{Append_proof_prop_EqXY}.

\corSubExpNumOfStableMatch*
\begin{proof}
The last part is simply Corollary~\ref{Cor_Rstar_likely}.

For the first part, observe that for each $\cM'\subseteq\cM$ and $\cW'\subseteq\cW$ with $|\cM'|=|\cW'|=n-\floor{\delta n}$ and partial matching $\mu'$ between $\cM'$ and $\cW'$,
\begin{multline}
    \Pp(\mu'\text{ is stable and satisfies }\mathcal{R}^*) \le e^{o(n)} \E[q(\vX_{\cM'},\vY_{\cW'}) \cdot \mathbbm{1}_{\mathcal{R}^*}(\vX_{\cM'},\vY_{\cW'})] \le \\
    e^{o(n)} \E[q(\vX_{\cM'},\vY_{\cW'}) \cdot \mathbbm{1}_{\mathcal{R}_2}(\vX_{\cM'}, \vY_{\cW'}) \cdot\mathbbm{1}_{\mathcal{R}_1}(\vY_{\cW'})] \le e^{o_\delta(n)} \frac{(\delta n)!}{n!}\prod_{i\in\cM'} a_{i,\mu'(i)} b_{\mu'(i),i}
\end{multline}
by Proposition~\ref{Prop_EqXY_bound}. Summing over $\cM'$, $\cW'$, and $\mu'$ bounds the expected number of such stable partial matchings above by
\begin{align}
    \E[N_\delta] &\le \sum_{\substack{\cM'\subseteq\cM,\cW'\subseteq\cW\\|\cM'|=|\cW'|=n-\floor{\delta n}}}\sum_{\substack{\mu':\cM'\to\cW'\\\text{bijection}}} e^{o_\delta(n)} \frac{(\delta n)!}{n!}\prod_{i\in\cM'} a_{i,\mu'(i)} b_{\mu'(i),i} \nonumber \\
    &\labelrel={Step1_num_stable} \frac{1}{\floor{\delta n}!} \sum_{\substack{\mu:\cM\to\cW\\\text{bijection}}} \sum_{\substack{\cM'\subseteq\cM\\|\cM'|=n-\floor{\delta n}}} e^{o_\delta(n)} \frac{(\delta n)!}{n!}\prod_{i\in\cM'} a_{i,\mu(i)} b_{\mu(i),i} \nonumber \\
    &\labelrel\le{Step2_num_stable} \sum_{\substack{\mu:\cM\to\cW\\\text{bijection}}} \sum_{\substack{\cM'\subseteq\cM\\|\cM'|=n-\floor{\delta n}}} e^{o_\delta(n)} \frac{1}{n!}\cdot C^{2\floor{\delta n}} \prod_{i\in\cM} a_{i,\mu(i)} b_{\mu(i),i} \nonumber \\
    &\labelrel\le{Step3_num_stable} e^{o_\delta(n)} \binom{n}{\floor{\delta n}} \cdot \frac{1}{n!} \Perm(\vA\circ \vB^\top) \nonumber \\
    &\labelrel\le{Step4_num_stable} e^{o_\delta(n)}, \nonumber
\end{align}
where in \eqref{Step1_num_stable} we use an alternative counting of partial matchings by counting sub-matchings of size $n-\floor{\delta n}$ in full matchings and then deduplicate by a factor of $\floor{\delta n}!$; in \eqref{Step2_num_stable} we use the boundedness assumption on the components of $\vA$ and $\vB$; in \eqref{Step3_num_stable} we merge $C^{2\floor{\delta n}}$ into $e^{o_\delta(n)}$; and finally in \eqref{Step4_num_stable} we merge $\binom{n}{\floor{\delta n}}=\exp(h(\delta) n + o(n))$ into $e^{o_\delta(n)}$ and bound the permanent term by $n^n \Perm(\vM) \le \Theta(n!)$ using the moderate deviation property of $\vM$ \citep[Sec.~3]{mccullagh2014asymptotic}.
\end{proof}

\subsection{Proof of Lemma~\ref{Prop_eigenvec_of_M_high_prob}}\label{Append_proof_prop_eigenvec_of_M}

In this section, we prove Lemma~\ref{Prop_eigenvec_of_M_high_prob}, which is restated below for convenience.

\PropEigenVecOfMHighProf*

To prepare for the proof of Lemma~\ref{Prop_eigenvec_of_M_high_prob}, let us denote the expectation in \eqref{Eqn_Prop_eigenvec_of_M_Expectation_is_small} by $E$, and express it as
\begin{align}
    E &= \int_{0}^\infty \Pp\big(q(\vX_{\cM'}, \vY_{\cW'}) \cdot \mathbbm{1}_{\mathcal{R}\backslash\Oeigz}(\vX_{\cM'}, \vY_{\cW'}) > s\big) \dd s \nonumber\\
    &= \int_0^1 \Pp\big(\exp(-n\vX_{\cM'}^\top \vM \vY_{\cW'}) > s, (\vX_{\cM'}, \vY_{\cW'}) \in \mathcal{R} \backslash \Oeigz\big) \dd s \nonumber \\
    &= \int_0^\infty \Pp\big(\vX_{\cM'}^\top \vM \vY_{\cW'} < t, (\vX_{\cM'}, \vY_{\cW'}) \in \mathcal{R}_2 \backslash \Oeigz, \vX_{\cM'}\in\mathcal{R}_1, \vY_{\cW'}\in\mathcal{R}_1\big) \cdot n e^{-nt}\dd t \nonumber \\
    &= \int_0^\infty \Pp\big(\vX_{\cM'}^\top \vM \vY_{\cW'} < \bar{t}, (\vX_{\cM'}, \vY_{\cW'}) \notin \Oeigz, \vX_{\cM'}\in\mathcal{R}_1, \vY_{\cW'}\in\mathcal{R}_1\big) \cdot n e^{-nt}\dd t,
\end{align}
where $\bar{t} := t \wedge (c_2(\log n)^{1/8})$.
If we can find two families of regions $\Omega_1(\zeta;s),\Omega_2(\zeta;s)\subseteq\R_+^n\times\R_+^n$ such that $\Oeigz \supseteq \Omega_1(\Theta(\zeta);s)\cap\Omega_2(\Theta(\zeta);s)$ for all $0<s<c_2(\log n)^{1/8}$, by union bound and relaxing the requirement that $\vX_{\cM'}$ (resp. $\vY_{\cW'}$) is in $\mathcal{R}_1$, we will obtain
\begin{align}
    E
    &\le \int_0^\infty \Pp\big(\vX_{\cM'}^\top \vM \vY_{\cW'} < \bar{t}, (\vX_{\cM'}, \vY_{\cW'}) \notin \Omega_1(\Theta(\zeta);\bar{t}), \vX_{\cM'}\in\mathcal{R}_1, \vY_{\cW'}\in\mathcal{R}_1\big) \cdot n e^{-nt}\dd t \nonumber \\
    &\qquad + \int_0^\infty \Pp\big(\vX_{\cM'}^\top \vM \vY_{\cW'} < \bar{t}, (\vX_{\cM'}, \vY_{\cW'}) \notin \Omega_2(\Theta(\zeta);\bar{t}), \vX_{\cM'}\in\mathcal{R}_1, \vY_{\cW'}\in\mathcal{R}_1\big) \cdot n e^{-nt}\dd t \nonumber\\
    &\le \int_0^\infty \Pp\big(\vX_{\cM'}^\top \vM \vY_{\cW'} < \bar{t}, (\vX_{\cM'}, \vY_{\cW'}) \notin \Omega_1(\Theta(\zeta);\bar{t}), \vX_{\cM'}\in\mathcal{R}_1\big) \cdot n e^{-nt}\dd t \nonumber \\
    &\qquad + \int_0^\infty \Pp\big(\vX_{\cM'}^\top \vM \vY_{\cW'} < \bar{t}, (\vX_{\cM'}, \vY_{\cW'}) \notin \Omega_2(\Theta(\zeta);\bar{t}), \vY_{\cW'}\in\mathcal{R}_1\big) \cdot n e^{-nt}\dd t. \nonumber
\end{align}
Rewriting the probabilities through conditioning and further relaxing the requirement gives
\begin{align}
    E
    &\le \int_0^\infty \Pp\big((\vX_{\cM'}, \vY_{\cW'}) \notin \Omega_1(\Theta(\zeta);\bar{t}) \big| \vX_{\cM'}^\top \vM \vY_{\cW'} < \bar{t}, \vX_{\cM'}\in\mathcal{R}_1\big) \nonumber \\
    &\qquad\qquad \cdot \Pp\big(\vX_{\cM'}^\top \vM \vY_{\cW'} < \bar{t}, \vX_{\cM'}\in\mathcal{R}_1\big) \cdot n e^{-nt}\dd t  \nonumber \\
    &\qquad + \int_0^\infty \Pp\big((\vX_{\cM'}, \vY_{\cW'}) \notin \Omega_2(\Theta(\zeta);\bar{t}) \big| \vX_{\cM'}^\top \vM \vY_{\cW'} < \bar{t}, \vY_{\cW'}\in\mathcal{R}_1\big) \nonumber \\
    &\qquad\qquad \cdot \Pp\big(\vX_{\cM'}^\top \vM \vY_{\cW'} < \bar{t}, \vY_{\cW'}\in\mathcal{R}_1\big) \cdot n e^{-nt}\dd t.  %
    \label{Eqn_decompose_E_exp_qxy_on_Omega_eig_zeta}
\end{align}
Due to the symmetry between the two integrals, it then suffices to bound one of the two integrals (e.g., the latter) by showing
\begin{equation}\label{Eqn_unif_concentrat_MX_cond_on_XE}
    \sup_{0 < t < c_2(\log n)^{1/8}} \Pp\big((\vX_{\cM'}, \vY_{\cW'}) \notin \Omega_2(\Theta(\zeta);t) \big| \vX_{\cM'}^\top \vM \vY_{\cW'} < t, \vX_{\cM'}\in\mathcal{R}_1\big) \le \exp(-\Theta(\zeta^2 n))
\end{equation}
and 
\begin{equation}\label{Eqn_EexpXMY_over_Y_in_R1}
    \int_0^\infty \Pp\big(\vX_{\cM'}^\top \vM \vY_{\cW'} < \bar{t}, \vY_{\cW'}\in\mathcal{R}_1\big) \cdot n e^{-nt}\dd t
    \le e^{o(n)+o_\delta(n)} \frac{(\delta n)!}{n!} \prod_{i\in\cM'} a_{i,\mu'(i)} b_{\mu'(i),i},
\end{equation}
from which the desired upper bound immediately follows. Recognizing that
\begin{equation*}
    \int_0^\infty \Pp\big(\vX_{\cM'}^\top \vM \vY_{\cW'} < \bar{t}, \vY_{\cW'}\in\mathcal{R}_1\big) \cdot n e^{-nt}\dd t = \E\big[q(\vX_{\cM'},\vY_{\cW'})\cdot\mathbbm{1}_{\mathcal{R}_2}(\vX_{\cM'},\vY_{\cW'})\cdot\mathbbm{1}_{\mathcal{R}_1}(\vY_{\cW'})\big],
\end{equation*}
we reduce \eqref{Eqn_EexpXMY_over_Y_in_R1} to Proposition~\ref{Prop_EqXY_bound}. Our road map is to first find the desirable choices for $\Omega_1$ and $\Omega_2$ and establish \eqref{Eqn_unif_concentrat_MX_cond_on_XE}, and then prove Proposition~\ref{Prop_EqXY_bound} in Appendix~\ref{Append_proof_prop_EqXY}. Note that Proposition~\ref{Prop_EqXY_bound} is in fact independent of our choice of $\Omega_1$ and $\Omega_2$, but we will develop useful intermediate results to prepare for its proof.

Concretely, we consider events $\Omega_1$ and $\Omega_2$ as follows:
\begin{equation}
    \Omega_1(\zeta;t) := \left\{(\vx,\vy)\in\R_+^n\times\R_+^n : \max_{i\in[n]}\left|(1-\delta)\frac{\vM_{i,\cdot}\cdot \vy}{t \vM_{i,\cdot} \cdot (\vM^\top\vx)^{-1}_{\cW'}} - 1\right| >\zeta\right\},
\end{equation}
\begin{equation}
    \Omega_2(\zeta;t) := \left\{(\vx,\vy)\in\R_+^n\times\R_+^n : \max_{j\in[n]}\left|(1-\delta)\frac{\vM_{\cdot,j}\cdot \vx}{t \vM_{\cdot,j} \cdot (\vM\vy)^{-1}_{\cM'}} - 1\right| >\zeta\right\},
\end{equation}
where $\vM_{i,\cdot}$ and $\vM_{\cdot,j}$ denote the $i$-th row and the $j$-th column of $\vM$, respectively; inverse is applied coordinate-wise on vectors; and $\vv_{S}$ denotes the $n$-dimensional vector obtained by zeroing out the $i$-th component $v_i$ of $\vv\in\R_+^n$ for all $i\in[n]\backslash S$ (with this operation performed after coordinate-wise inverse). We first verify the following lemma.

\begin{lemma}\label{lemma_Omega1_and_Omega2_suggests_Oeig}
There exist absolute constants $\zeta_0,\delta_0 > 0$ and $k_1, k_2 > 0$ such that for all $\zeta\in(0,\zeta_0)$, $\delta\in(0,\delta_0)$, and $t>0$ we have
\begin{equation}
    \Omega_1(\zeta;t)\cap\Omega_2(\zeta;t) \subseteq \Oeig(k_1\delta + k_2\zeta).
\end{equation}
\end{lemma}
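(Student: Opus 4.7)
The plan is to read the membership conditions of $\Omega_1(\zeta;t)$ and $\Omega_2(\zeta;t)$ as a pair of coupled approximate fixed-point equations for $\vu:=\vM\vy$ and $\vw:=\vM^\top\vx$, and then to invoke a Hilbert projective metric contraction to force $\vu$ uniformly close to a constant multiple of $\mathbf{1}$. Concretely, the two conditions rearrange to $u_i=\alpha_i\,\tfrac{t}{1-\delta}\sum_{j\in\cW'}m_{ij}/w_j$ for every $i\in[n]$ and $w_j=\beta_j\,\tfrac{t}{1-\delta}\sum_{k\in\cM'}m_{kj}/u_k$ for every $j\in[n]$, with multipliers $\alpha_i,\beta_j\in[1-\zeta,1+\zeta]$. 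Substituting one into the other and restricting to $i\in\cM'$ yields a self-consistent equation $\vu_{\cM'}=D_\alpha\,G(\vu_{\cM'})$ on $\R_+^{\cM'}$, where $D_\alpha=\diag(\alpha_i)_{i\in\cM'}$ and $G$ is the composition
\[
\vz\;\longmapsto\;\vz^{-1}\;\longmapsto\;\mathbf{N}_1\vz^{-1}\;\longmapsto\;\bigl(\mathbf{N}_1\vz^{-1}\bigr)^{-1}\;\longmapsto\;\mathbf{N}_2\bigl(\mathbf{N}_1\vz^{-1}\bigr)^{-1},
\]
with $\mathbf{N}_1$ the $|\cW'|\times|\cM'|$ submatrix of $\vM^\top$ indexed by $(\cW',\cM')$ and $\mathbf{N}_2$ the $|\cM'|\times|\cW'|$ matrix with entries $m_{ij}/\beta_j$ for $i\in\cM'$, $j\in\cW'$.

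Coordinate-wise inversion is an isometry in the Hilbert projective metric $d_H(\mathbf{p},\mathbf{q}):=\log\bigl(\max_i p_i/q_i\bigr)-\log\bigl(\min_i p_i/q_i\bigr)$, while each positive-matrix multiplication is a strict contraction by the Birkhoff--Hopf theorem with factor $\tanh(\Delta/4)$. Since $n\vM$ is $C$-bounded by Assumption~\ref{Assumption_C_bounded}, each of $\mathbf{N}_1,\mathbf{N}_2$ has projective diameter at most $4\log C+O(\zeta)$, so $G$ is a strict $\kappa^2$-contraction with $\kappa\le\tanh(\log C)+o_\zeta(1)<1$ uniformly in $n$. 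Moreover $\mathbf{1}$ is an approximate fixed point: bistochasticity of $\vM$ combined with $|[n]\setminus\cM'|,|[n]\setminus\cW'|\le\delta n$ and $m_{ij}\le C/n$ forces the partial sums $\sum_{j\in\cW'}m_{ij}$ and $\sum_{k\in\cM'}m_{kj}$ into $[1-\delta C,1]$, whence $d_H(G(\mathbf{1}),\mathbf{1})=O(\delta+\zeta)$.

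Combining these with $d_H\bigl(\vu_{\cM'},G(\vu_{\cM'})\bigr)=d_H\bigl(D_\alpha G(\vu_{\cM'}),G(\vu_{\cM'})\bigr)\le\log\tfrac{1+\zeta}{1-\zeta}=O(\zeta)$, the triangle inequality and the contraction yield
\[
d_H(\vu_{\cM'},\mathbf{1})\le O(\zeta)+\kappa^2\,d_H(\vu_{\cM'},\mathbf{1})+O(\delta+\zeta),
\]
hence $d_H(\vu_{\cM'},\mathbf{1})\le O(\delta+\zeta)/(1-\kappa^2)=O(\delta+\zeta)$. Symmetrically $d_H(\vw_{\cW'},\mathbf{1})=O(\delta+\zeta)$. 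Plugging this back into $u_i=\alpha_i\tfrac{t}{1-\delta}\sum_{j\in\cW'}m_{ij}/w_j$, which is valid for every $i\in[n]$ (not only $i\in\cM'$), the $C$-boundedness of $n\vM$ gives $u_i\in\bigl[c'\,e^{-K(\delta+\zeta)},\,c'\,e^{K(\delta+\zeta)}\bigr]$ for some absolute constant $K$ and a single scalar $c'>0$. Taking $t_\star=c'$ in the definition of $\Oeig(\eta)$, the indicator sum vanishes identically whenever $K(\delta+\zeta)\le\sqrt{\eta}$, which holds for $\eta=k_1\delta+k_2\zeta$ with $k_1,k_2$ chosen as sufficiently large absolute constants and $\delta\le\delta_0$, $\zeta\le\zeta_0$.

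The main obstacle is the bookkeeping: verifying that the $\alpha_i,\beta_j$ perturbations and the truncation to $\cM',\cW'$ each contribute only an additive $O(\delta+\zeta)$ to the Hilbert-metric residual $d_H(G(\mathbf{1}),\mathbf{1})$ and do not blow up the Birkhoff--Hopf contraction factor. Assumption~\ref{Assumption_C_bounded} is essential precisely here: it guarantees that the submatrices $\mathbf{N}_1,\mathbf{N}_2$ remain strictly positive with bounded entry ratios, so $\kappa$ stays strictly below $1$ uniformly in $n$.
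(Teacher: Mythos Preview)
Your proof is correct and takes a genuinely different route from the paper's. The paper chains the two membership conditions through a Jensen (harmonic--arithmetic mean) step to obtain a one-sided inequality of the form $d_j^{-1}\le\text{const}\cdot(\vM^\top\vM\,\vd^{-1})_j$ for $\vd=\vM^\top\vx$, and then exploits the elementary $\ell^\infty$-type contraction $\max_i(\vM\vv)_i-\bar v\le(1-C^{-1})(\max_i v_i-\bar v)$ for bistochastic $\vM$. Solving for the extremal index gives $d_{i^*}^{-1}\le(1+\Theta(\delta+\zeta))\bar d_{(H)}^{-1}$, and a final Markov averaging step converts this one-sided control into the two-sided $\sqrt{\delta+\zeta}$-fraction statement that $\Oeig$ requires.

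Your Birkhoff--Hopf argument in the Hilbert projective metric is more systematic and in fact delivers a strictly stronger conclusion: every entry of $\vM\vy$ lies within a factor $1+O(\delta+\zeta)$ of a single scalar, so the indicator sum in the definition of $\Oeig$ vanishes identically rather than merely being at most $\sqrt{k_1\delta+k_2\zeta}\,n$. The cost is invoking a less elementary contraction theorem; the benefit is that your contraction step uses only the $C^2$-bounded entry ratios of the submatrices $\mathbf{N}_1,\mathbf{N}_2$, whereas the paper's $\ell^\infty$ shrinking relies on exact bistochasticity of the full matrix $\vM$ (and therefore has to keep the truncation error separate). In both arguments the hidden constant degrades with the gap-like quantity governing the contraction: your $1-\kappa^2=\mathrm{sech}^2(\log C)+O(\zeta)$, the paper's $1-(1-C^{-1})^2$; each is bounded away from zero under Assumption~\ref{Assumption_C_bounded}.
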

\begin{proof}
Let $\vd = \vM^\top \vx$ and $\ve = \vM\vy$.
Under the event that $(\vx,\vy)\in\Omega_1(\zeta;t)\cap\Omega_2(\zeta;t)$,
we have
\begin{align}
    \frac{1}{d_j} &= \frac{1}{\vM_{\cdot,j}\cdot \vx} \labelrel\le{Step_use_Omega2_Lemma_Omega12_gives_Oeig} \frac{1-\delta}{(1-\zeta) t\vM_{\cdot,j} \cdot \ve^{-1}_{\cM'}} \labelrel\le{Step_bdd_comp_Lemma_Omega12_gives_Oeig} \frac{1-\delta}{(1-\zeta) t(1+2C^2\delta)\vM_{\cdot,j} \cdot \ve^{-1}} \nonumber \\
    &\labelrel\le{Step_Jensen_Lemma_Omega12_gives_Oeig} \frac{1-\delta}{(1-\zeta) t (1+2C^2\delta)} \sumiton m_{ij} e_i = \frac{1-\delta}{(1-\zeta) t (1+2C^2\delta)} \sumiton m_{ij}\vM_{i,\cdot}\cdot \vy \nonumber \\
    &\labelrel\le{Step_use_Omega1_Lemma_Omega12_gives_Oeig} \frac{1-\delta}{(1-\zeta) t (1+2C^2\delta)} \sumiton m_{ij} \left( \frac{1+\zeta}{1-\delta} t \vM_{i,\cdot} \cdot \vd^{-1}_{\cW'}\right) \le \frac{1+\zeta}{(1-\zeta)(1+2C^2\delta)} (\vM^\top\vM\vd^{-1})_j,
\end{align}
where \eqref{Step_use_Omega2_Lemma_Omega12_gives_Oeig} uses the definition of $\Omega_2(\zeta;t)$; \eqref{Step_bdd_comp_Lemma_Omega12_gives_Oeig} uses the fact that $\vM$ and $\ve$ both have bounded ratios (at most $C$) among their entries and assumed $\delta < 1/2$; \eqref{Step_Jensen_Lemma_Omega12_gives_Oeig} is due to Jensen's inequality (or equivalently, harmonic-mean-arithmetic-mean inequality); and \eqref{Step_use_Omega1_Lemma_Omega12_gives_Oeig} uses the definition of $\Omega_1(\zeta;t)$.

Recall our assumption that $\vM$ has entries bounded on $[1/(Cn), C/n]$. It is straightforward to verify that for any vector $\vv\in\R_+^n$ with $\bar{v}=\frac{1}{n}\sumiton v_i$, we have $\max_{i\in[n]} (\vM\vv)_i - \bar{v} \le \max_{i\in[n]} v_i - \frac{1}{C n} \cdot n(\max_{i\in[n]} v_i - \bar{v}) - \bar{v} = (1-C^{-1}) (\max_{i\in[n]} v_i - \bar{v})$. In the case of $\vv = \vd^{-1}$, this implies that
\begin{multline}\label{Eqn_chain_bound_d_i_star_and_harmonic_mean}
    (1-C^{-1})^2 \big(d_{i^*}^{-1} - \bar{d}_{(H)}^{-1}\big) \ge \max_{i\in[n]}(\vM^\top\vM\vd^{-1})_i - \bar{d}_{(H)}^{-1} \\
    \ge (\vM^\top\vM\vd^{-1})_{i^*} - \bar{d}_{(H)}^{-1} \ge (1+2C^2\delta) \frac{1-\zeta}{1+\zeta}d_{i^*}^{-1} - \bar{d}_{(H)}^{-1},
\end{multline}
where $i^*=\argmin_{i\in[n]} d_i$ and $\bar{d}_{H} = \big(n^{-1}\sumiton d_i^{-1}\big)^{-1}$ is the harmonic mean of $d_1,\ldots,d_n$. Solving \eqref{Eqn_chain_bound_d_i_star_and_harmonic_mean} gives
\begin{equation}
    \frac{ d_{i^*}^{-1} - \bar{d}_{(H)}^{-1} }{ \bar{d}_{(H)}^{-1} } \le \Theta(\delta) + \frac{2\zeta}{1 - \zeta - (1-C^{-2})^2(1+\zeta)} \le \Theta(\delta + \zeta)
\end{equation}
with hidden constants independent of $\delta$ and $\zeta$, granted that $\zeta$ is sufficiently small. Hence, for all but $\sqrt{\delta+\zeta} n$ indices $i\in[n]$, we have $1-\Theta(\sqrt{\delta+\zeta}) \le \frac{\bar{d}_{(H)}}{d_i} \le 1 + \Theta(\delta+\zeta)$, implying that $(\vx,\vy)\in\Oeig(\Theta(\delta+\zeta))$.
\end{proof}

Let $\vD = \vM^\top \vX_{\cM'}$ and $\vE = \vM\vY_{\cW'}$. Note that $\vD$ and $\vE$ both have bounded ratios among their components due to the bounded ratio assumption on $\vM$, and in addition $\|\vD\|_1=\|\vX_{\cM'}\|_1$ and $\|\vE\|_1=\|\vY_{\cW'}\|_1$. By Lemma~\ref{Lemma_wgt_exp_cond_concentration}, whenever $t\le c_2(\log n)^{1/8}$, we have for each column $\vM_{\cdot,j}$ of $\vM$, $j=1,\ldots,n$,
\begin{equation}
    \mathbb{P}\left(\left|(1-\delta)\frac{\vM_{\cdot,j}\cdot \vX_{\cM'}}{t \vM_{\cdot,j} \cdot \vE^{-1}_{\cM'}} - 1\right| >\zeta \;\middle|\; \vX_{\cM'} \cdot \vE < t, \|\vE\|_1\ge \underline{c}_1 \log n\right) \le \exp(-\Theta(n\zeta^2)),
\end{equation}
where we note that the effective dimension of $\vX_{\cM'}$ is $n-\floor{\delta n}$ instead of $n$.
By a union bound over $j\in[n]$, this gives
\begin{equation}
    \mathbb{P}\left(\max_{j\in[n]}\left|(1-\delta)\frac{\vM_{\cdot,j}\cdot \vX_{\cM'}}{t \vM_{\cdot,j} \cdot \vE^{-1}_{\cM'}} - 1\right| >\zeta, \|\vE\|_1\ge \underline{c}_1 \log n \;\middle|\; \vX_{\cM'} \cdot \vE < t \right) \le \exp(-\Theta(n\zeta^2)),
\end{equation}
which is simply \eqref{Eqn_unif_concentrat_MX_cond_on_XE}.

\subsection{Proof of Corollary~\ref{Cor_no_stable_outside_Oeigz}}\label{Append_proof_no_stable_outside_oeigz}

We now prove Corollary~\ref{Cor_no_stable_outside_Oeigz}, restated below.
\CorNoStableOutsideOeigz*

\begin{proof}
    Summing over all partial matchings with size $n-\floor{\delta n}$ gives
    \begin{multline}\label{Eqn_sum_expectation_Omega_zeta}
        \sum_{\substack{\cM'\subseteq\cM,\cW'\subseteq\cW\\|\cM'|=|\cW'|=n-\floor{\delta n}}}\sum_{\substack{\mu':\cM'\to\cW'\\\text{ bijection}}} \E\big[q(\vX_{\cM'}(\mu'), \vY_{\cW'}(\mu')) \cdot \mathbbm{1}_{\mathcal{R}\backslash\Oeigz}(\vX_{\cM'}, \vY_{\cW'})\big] \\
        \le \exp(o_\delta(n)-\Theta(\zeta^2 n)) \cdot \frac{(\delta n)!}{n!}  \sum_{\substack{\cM'\subseteq\cM,\cW'\subseteq\cW\\|\cM'|=|\cW'|=n-\floor{\delta n}}}\sum_{\substack{\mu':\cM'\to\cW'\\\text{bijection}}} \prod_{i\in\cM'} (n m_{i,\mu'(i)}).
    \end{multline}
    To bound the summation, %
    notice that
    \begin{align}
        \sum_{\substack{\cM'\subseteq\cM,\cW'\subseteq\cW\\|\cM'|=|\cW'|=n-\floor{\delta n}}}\sum_{\substack{\mu':\cM'\to\cW'\\\text{bijection}}} \prod_{i\in\cM'} (n m_{i,\mu'(i)})
        &= \frac{1}{(\floor{\delta n})!} \sum_{\substack{\mu:\cM\to\cW\\\text{bijection}}} \sum_{\substack{\cM'\subseteq\cM\\|\cM'|=n-\floor{\delta n}}} \prod_{i\in\cM'} (n m_{i,\mu'(i)}) \nonumber\\
        &\le \frac{1}{(\floor{\delta n})!} \sum_{\substack{\mu:\cM\to\cW\\\text{bijection}}} \binom{n}{\floor{\delta n}} C^{\floor{\delta n}} \prod_{i\in\cM} (n m_{i,\mu(i)}) \nonumber\\
        &= \frac{e^{o_\delta(n)}}{(\delta n)!} n^n \Perm(\vM).\label{Eqn_sum_expectation_Omega_zeta_summation_bound}
    \end{align}
    Under the assumption that the bistochastic matrix $\vM$ is of moderate deviation (cf. \cite[Section~3]{mccullagh2014asymptotic}), we know that $n^n \Perm(\vM) = O( n! )$. Hence, the quantity in \eqref{Eqn_sum_expectation_Omega_zeta} is bounded by $\exp(o_\delta(n)-\Theta(\zeta^2 n))$. Invoking Lemma~\ref{Lemma_reduction_to_q} finishes the proof.
\end{proof}

\subsection{Proof of Proposition~\ref{Prop_EqXY_bound}}\label{Append_proof_prop_EqXY}

In this section, we present the proof of Proposition~\ref{Prop_EqXY_bound}, restated below.

\propEqXYBound*

Denote the target expectation by $E$ and express it as an integral of tail probability
\begin{equation}\label{Eqn_proof_prop_6_6_goal}
    E = \int_0^\infty \Pp(\vX_{\cM'}^\top\vM\vY_{\cW'} < \bar{t}, \vY_{\cW'} \in \mathcal{R}_1) \cdot ne^{-nt} \dd t,
\end{equation}
where $\bar{t} = t \wedge (c_2(\log n)^{1/8})$. It suffices to upper bound probabilities of the form $\Pp(\vX_{\cM'}^\top\vM\vY_{\cW'} < \bar{t}, \vY_{\cW'} \in \mathcal{R}_1)$ for all $t\in(0,c_2(\log n)^{1/8}$. We will go one step further and prove a stronger result by relaxing the $\vY_{\cW'} \in \mathcal{R}_1$ condition, which will eventually translate to a bound on $\E[q(\vX_{\cM'},\vY_{\cW'}) \cdot \mathbbm{1}_{\mathcal{R}_2}(\vX_{\cM'}, \vY_{\cW'})]$.

\begin{lemma}\label{Lemma_x_y_not_both_small_cond_on_XMY_and_R1}
There exists a positive constant $\beta$ such that, for any $t\in(0, c_2(\log n)^{1/8})$,
\begin{equation}
    \Pp\left(\|\vX_{\cM'}\|_1 \le \beta \sqrt{tn}, \|\vY_{\cW'}\|_1 \le \beta \sqrt{tn} \middle| \vX_{\cM'}^\top\vM \vY_{\cW'} < t\right) \le 0.1
\end{equation}
for $n$ sufficiently large.
\end{lemma}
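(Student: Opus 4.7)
My approach is to bound $\Pp(A\mid B)$ via the trivial inequality $\Pp(A\mid B)\le \Pp(A)/\Pp(B)$, where $A:=\{\|\vX_{\cM'}\|_1\le \beta\sqrt{tn},\,\|\vY_{\cW'}\|_1\le \beta\sqrt{tn}\}$ and $B:=\{\vX_{\cM'}^\top\vM\vY_{\cW'}<t\}$, and to show that for a small enough absolute constant $\beta$ this ratio is exponentially small in $n$ uniformly over $t\in(0,c_2(\log n)^{1/8})$. The key is to use the $C$-boundedness from Assumption~\ref{Assumption_C_bounded} to reduce everything to two independent $\Gamma(n',1)$ variables, $n':=n-\floor{\delta n}$. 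Concretely, let $\hat X_i:=a_{i,\mu'(i)}X_i\sim\Exp(1)$ for $i\in\cM'$ and $\hat Y_j:=b_{j,\mu'^{-1}(j)}Y_j\sim\Exp(1)$ for $j\in\cW'$, all mutually independent, and set $\hat S:=\sum_{i\in\cM'}\hat X_i$, $\hat T:=\sum_{j\in\cW'}\hat Y_j$, so that $\hat S,\hat T\sim\Gamma(n',1)$ independently and $\hat S/C\le \|\vX_{\cM'}\|_1\le C\hat S$ (analogously for $\vY_{\cW'}$ and $\hat T$). Since $m_{ij}\le C/n$, we further have $\vX_{\cM'}^\top\vM\vY_{\cW'}\le (C^3/n)\hat S\hat T$, so $\{\hat S\hat T\le \tau\}\subseteq B$ with $\tau:=tn/C^3$.

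\textbf{The two estimates.} For the numerator, $A$ is contained in $\{\hat S\le C\beta\sqrt{tn}\}\cap\{\hat T\le C\beta\sqrt{tn}\}$, and the trivial upper tail $\Pp(\hat S\le s)=\int_0^s u^{n'-1}e^{-u}/(n'-1)!\,du\le s^{n'}/n'!$ (drop $e^{-u}\le 1$) combined with independence gives
\[
\Pp(A)\le \frac{(C^2\beta^2 tn)^{n'}}{(n'!)^2}.
\]
For the denominator, I will use $\Pp(B)\ge \Pp(\hat S\hat T\le\tau)\ge [\Pp(\hat S\le\sqrt\tau)]^2$ together with the matching lower tail $\Pp(\hat S\le s)\ge e^{-s}s^{n'}/n'!$ (bound $e^{-u}\ge e^{-s}$ inside the same integral), which gives
\[
\Pp(B)\ge \frac{e^{-2\sqrt\tau}\,\tau^{n'}}{(n'!)^2}.
\]

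\textbf{Conclusion.} Dividing,
\[
\Pp(A\mid B)\le e^{2\sqrt\tau}(C^5\beta^2)^{n'}.
\]
In the regime $0<t\le c_2(\log n)^{1/8}$ we have $\sqrt\tau=O(\sqrt{n(\log n)^{1/8}})=o(n)$, so fixing any $\beta<C^{-5/2}$ makes $(C^5\beta^2)^{n'}=e^{-\Omega(n)}$ dominate $e^{2\sqrt\tau}=e^{o(n)}$; the ratio is therefore $e^{-\Omega(n)}$ and in particular $\le 0.1$ for all sufficiently large $n$, uniformly in $t$.

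\textbf{Main subtlety.} The only non-obvious ingredient is the lower bound on $\Pp(B)$: one might worry that conditioning on $\hat S\hat T\le\tau$ disperses mass across configurations in which $\hat S$ and $\hat T$ are much further apart than $\sqrt\tau$, making the simple restriction to $\hat S,\hat T\le\sqrt\tau$ far too crude. However, the large-deviation rate function $(s-1-\log s)+(t-1-\log t)$ for $\Gamma(n',1)^{\otimes 2}$ constrained to $st=\gamma$ is minimized precisely at the balanced point $s=t=\sqrt\gamma$, so the leading exponential order of $\Pp(\hat S\hat T\le\tau)$ is already captured at $\hat S=\hat T=\sqrt\tau$. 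A finer dyadic decomposition into bands $\{2^k\sqrt\tau\le \hat S<2^{k+1}\sqrt\tau\}\cap\{\hat T\le 2^{-k-1}\sqrt\tau\}$ would give an additional $\log n$ factor, but this is not needed because the slack $(C^5\beta^2)^{n'}$ from choosing $\beta$ small already absorbs the $e^{o(n)}$ loss.
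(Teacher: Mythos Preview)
Your argument is correct, and it takes a genuinely different (and somewhat more elementary) route than the paper. The paper does not bound $\Pp(A)/\Pp(B)$ directly; instead it writes
\[
\Pp(A\mid B)\le \Pp\big(\|\vX_{\cM'}\|_1\le\beta\sqrt{tn}\ \big|\ \|\vY_{\cW'}\|_1\le\beta\sqrt{tn},\ B\big)
\le \frac{\Pp(\|\vX_{\cM'}\|_1\le \beta\sqrt{tn})}{\Pp(\|\vX_{\cM'}\|_1\le (C\beta)^{-1}\sqrt{tn})},
\]
where the denominator comes from the implication $\{\|\vX_{\cM'}\|_1\le (C\beta)^{-1}\sqrt{tn}\}\subseteq B$ on the event $\|\vY_{\cW'}\|_1\le\beta\sqrt{tn}$ (via $n\vX^\top\vM\vY\le C\|\vX\|_1\|\vY\|_1$). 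Choosing $\beta=(2C)^{-1/2}$ turns this into $\Pp(\|\vX_{\cM'}\|_1\le \beta\sqrt{tn}\mid \|\vX_{\cM'}\|_1\le 2\beta\sqrt{tn})$, which is then bounded by $e^{-\Theta(n)}$ using the paper's conditional concentration Lemma~\ref{Lemma_wgt_exp_cond_concentration}. Your approach avoids that conditional concentration lemma entirely, at the cost of a smaller $\beta$ (namely $\beta<C^{-5/2}$ rather than $\beta=(2C)^{-1/2}$) and an extra $e^{2\sqrt\tau}=e^{o(n)}$ factor that is harmlessly absorbed. The paper's route reuses existing machinery and keeps the constant slightly better; yours is self-contained and makes the large-deviations intuition (the rate function is minimized at the balanced configuration $\hat S=\hat T=\sqrt\tau$) completely transparent.
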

\begin{proof}
Let $p$ denote the target probability. We have
\begin{align}
    p &= \Pp\left(\|\vX_{\cM'}\|_1 \le \beta \sqrt{tn} \middle| \|\vY_{\cW'}\|_1 \le \beta \sqrt{tn}, \vX_{\cM'}^\top\vM \vY_{\cW'} < t\right) \nonumber \\
    &\qquad\cdot \Pp\left(\|\vY_{\cW'}\|_1 \le \beta \sqrt{tn} \middle| \vX_{\cM'}^\top\vM \vY_{\cW'} < t\right) \nonumber \\
    &\le \Pp\left(\|\vX_{\cM'}\|_1 \le \beta \sqrt{tn} \middle| \|\vY_{\cW'}\|_1 \le \beta \sqrt{tn}, \vX_{\cM'}^\top\vM \vY_{\cW'} < t\right) \nonumber \\
    &= \frac{\Pp\left(\|\vX_{\cM'}\|_1 \le \beta \sqrt{tn}, \vX_{\cM'}^\top\vM \vY_{\cW'} < t \middle| \|\vY_{\cW'}\|_1 \le \beta \sqrt{tn}\right)}{\Pp\left(\vX_{\cM'}^\top\vM \vY_{\cW'} < t \middle| \|\vY_{\cW'}\|_1 \le \beta \sqrt{tn}\right)} \nonumber \\
    &\le \frac{\Pp\left(\|\vX_{\cM'}\|_1 \le \beta \sqrt{tn} \middle| \|\vY_{\cW'}\|_1 \le \beta \sqrt{tn}\right)}{\Pp\left(\vX_{\cM'}^\top\vM \vY_{\cW'} < t \middle| \|\vY_{\cW'}\|_1 \le \beta \sqrt{tn}\right)} \nonumber \\
    &\le \frac{\Pp\left(\|\vX_{\cM'}\|_1 \le \beta \sqrt{tn}\right)}{\Pp\left(\|\vX_{\cM'}\|_1 \le \sqrt{tn}/(C\beta)\right)} = \Pp\left(\|\vX_{\cM'}\|_1 \le \beta \sqrt{tn} \middle| \|\vX_{\cM'}\|_1 \le (C\beta)^{-1}\sqrt{tn}\right),
\end{align}
where the last inequality follows from the independence between $\vX$ and $\vY$ and the fact that $n\vX_{\cM'}^\top\vM \vY_{\cW'}\le C\|\vX_{\cM'}\|_1 \|\vY_{\cW'}\|_1$.
By choosing $\beta = (2C)^{-1/2}$, the upper bound becomes
\begin{equation*}
    \Pp\left(\|\vX_{\cM'}\|_1 \le \beta \sqrt{tn} \middle| \|\vX_{\cM'}\|_1 \le 2\beta\sqrt{tn}\right).
\end{equation*}
A direct invocation of Lemma~\ref{Lemma_wgt_exp_cond_concentration} implies an $\exp(-\Theta(n))$ upper bound for this probability.
\end{proof}

\begin{lemma}\label{Lemma_high_prob_small_xnorm_cond_XMY}
There exists a positive constant $\gamma$ such that, for any $t\in(0, c_2(\log n)^{1/8})$,
\begin{equation}\label{Eqn_Lemma_high_prob_small_xnorm_cond_XMY_goal}
    \Pp\left(\|\vY_{\cW'}\|_1 \ge \gamma n (\log n)^{-7/8}
    \middle| \vX_{\cM'}^\top\vM \vY_{\cW'} < t\right) \le 0.1
\end{equation}
\end{lemma}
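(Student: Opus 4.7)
The plan is to analyze $\Pp(\mathcal{A}\mid\mathcal{B})$ with $\mathcal{A}:=\{\|\vY_{\cW'}\|_1\ge L\}$ and $L=\gamma n(\log n)^{-7/8}$ by integrating against the distribution of the scalar statistic $s=\|\vY_{\cW'}\|_1$. Writing
\[
    \Pp(\mathcal{A}\mid\mathcal{B}) = \frac{\int_L^\infty \Pp(\mathcal{B}\mid \|\vY_{\cW'}\|_1=s)\,f(s)\,ds}{\int_0^\infty \Pp(\mathcal{B}\mid \|\vY_{\cW'}\|_1=s)\,f(s)\,ds},
\]
where $f$ is the density of $\|\vY_{\cW'}\|_1$, I intend to estimate both integrals via Lemma~\ref{Lemma_weighted_exp_chernoff} and show that the ratio is in fact $o(1)$ as $n\to\infty$, hence bounded by $0.1$ for large $n$.

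First I would estimate $\Pp(\mathcal{B}\mid\vY)$. Conditional on $\vY$, $\vX^\top\vM\vY=\sum_{i\in\cM'}X_i(\vM\vY)_i$ is a weighted sum of $n'=(1-\floor{\delta n})$ independent exponentials whose weights $(\vM\vY)_i/a_i$ all lie in $[\|\vY_{\cW'}\|_1/(C^2 n),C^2\|\vY_{\cW'}\|_1/n]$ by Assumption~\ref{Assumption_C_bounded}. Applying the upper and lower bounds of Lemma~\ref{Lemma_weighted_exp_chernoff} gives, for $s\gg t$,
\[
    \Pp(\mathcal{B}\mid\|\vY_{\cW'}\|_1=s)\;\asymp\;(t/s)^{n'}\cdot C^{O(n)}\cdot e^{\pm O(n^{2/3})},
\]
while $\Pp(\mathcal{B}\mid\vY)=\Theta(1)$ uniformly when $s\lesssim t$. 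Since $\|\vY_{\cW'}\|_1$ is a sum of $n'$ independent exponentials with rates bounded in $[1/C,C]$, its density satisfies the Gamma-type estimate $f(s)\asymp s^{n'-1}e^{-s/\mu}/((n'-1)!\mu^{n'})$ for some effective scale $\mu\in[1/C,C]$, up to multiplicative $C^{O(n)}$ corrections that I would track separately.

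Combining, the integrand $f(s)\Pp(\mathcal{B}\mid\|\vY\|_1=s)$ is of order $s^{-1}e^{-s/\mu}\cdot t^{n'}/((n'-1)!\mu^{n'})$ (times constants) for $s>t$. The denominator integral is then $\Theta\!\left(t^{n'}E_1(t/\mu)/((n'-1)!\mu^{n'})\right)$, dominated by contributions near $s\sim t$ where the integrand saturates, while the numerator integral equals $\Theta\!\left(t^{n'}\mu e^{-L/\mu}/(L(n'-1)!\mu^{n'})\right)$, controlled by the exponential tail of $f$ beyond $L$. Taking the ratio,
\[
    \Pp(\mathcal{A}\mid\mathcal{B}) = \Theta\!\left(\frac{\mu\,e^{-L/\mu}}{L\cdot E_1(t/\mu)}\right) = \exp\!\bigl(-\Theta(n/(\log n)^{7/8})\bigr),
\]
which is $o(1)$ and in particular below $0.1$ for $n$ sufficiently large.

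The main technical obstacle is that the $C^{O(n)}$ prefactors (from non-identical rates) and the $e^{O(n^{2/3})}$ slack in Lemma~\ref{Lemma_weighted_exp_chernoff} might naively swamp the $e^{-L/\mu}$ gain. However, $L/\mu=\Theta(n/(\log n)^{7/8})$ grows faster than any power of $\log n$ and faster than $n^{2/3}$, so these corrections are sub-dominant. A secondary technicality is that in the non-identically-distributed setting one cannot use the exact Gamma density; I would instead sandwich $f$ between two explicit Gamma densities with rates $1/C$ and $C$ (using stochastic dominance of sums of exponentials) and verify that this two-sided sandwich only affects the implicit constants, not the exponential decay rate driving the final estimate.
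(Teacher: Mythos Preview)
Your overall strategy --- compare the tail integral $\int_L^\infty$ against the full integral $\int_0^\infty$ of $f(s)\,\Pp(\mathcal{B}\mid \|\vY_{\cW'}\|_1=s)$ --- is exactly the route the paper takes. The gap is in how you control the ``$C^{O(n)}$ prefactors.'' You assert that these are sub-dominant because $L/\mu=\Theta(n/(\log n)^{7/8})$ beats $n^{2/3}$, but $C^{O(n)}=e^{\Theta(n)}$, which \emph{dominates} $e^{L/\mu}=e^{\Theta(n/(\log n)^{7/8})}$. Concretely: sandwiching $f$ between two Gamma densities with rates $1/C$ and $C$ costs a multiplicative $C^{\Theta(n')}$ in each direction, and bounding $\Pp(\mathcal{B}\mid\vY)$ via Lemma~\ref{Lemma_weighted_exp_chernoff} produces a factor $\prod_i u_i^{-1}$ that depends on the \emph{direction} of $\vY$ (not just $\|\vY\|_1$) and ranges over $[C^{-\Theta(n)},C^{\Theta(n)}]$. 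When you form the ratio (upper bound on the numerator over lower bound on the denominator) these factors do not cancel; they compound to $C^{\Theta(n)}$, and the final bound becomes $\exp\bigl(\Theta(n)-\Theta(n/(\log n)^{7/8})\bigr)\to\infty$, not $o(1)$.

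The paper's fix is a change of variables that makes all the densities \emph{exact}: set $\vZ=\va\circ\vX$ and $\vW=\vb\circ\vY$ (with $a_i=a_{i,\mu'(i)}$, $b_j=b_{j,\mu'^{-1}(j)}$) so that $\vZ,\vW\sim\Exp(1)^{\otimes n'}$, and write $\vX^\top\vM\vY = RS\,\vU^\top\tilde{\vM}\vV$ with $R=\|\vZ\|_1$, $S=\|\vW\|_1$ independent $\Gamma(n',1)$ and $\vU,\vV$ uniform on the simplex, independent of $(R,S)$. The boundedness assumption now enters only through the deterministic scalar bound $\vU^\top\tilde{\vM}\vV\in[1/(C^2n),C^2/n]$, which merely replaces $t$ by some $t'\in[t/C^2,C^2t]$ --- a constant-factor shift, not $C^{O(n)}$. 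One is then left with $\Pp(S\ge L\mid RS<t'n)$ for exact Gamma variables, and the integrals $\int G(t'n/s)g(s)\,ds$ can be estimated with only the $e^{O(n^{2/3})}$ slack from Lemma~\ref{Lemma_weighted_exp_chernoff}, which \emph{is} sub-dominant to $e^{-L}$. The missing idea in your proposal is precisely this renormalization to i.i.d.\ coordinates, which converts the lethal $C^{O(n)}$ density corrections into a harmless constant-factor shift of the threshold.
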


\begin{proof}
    To free ourselves from always carrying the notation for the partial matching, let us observe that, once we relinquish the condition on the bistochasticity of $\vM$, it becomes irrelevant that $\mu'$ is a partial matching between $\cM'\subseteq\cM$ and $\cW'\subseteq\cW$ (instead of a complete one between $\cM$ and $\cW$), since the difference in the market size $|\cM|=|\cW|=n$ and $|\cM'|=|\cW'|=n-{\delta n}$ does not affect the final asymptotics in the Lemma. Hence, it suffices to establish a version of \eqref{Eqn_Lemma_high_prob_small_xnorm_cond_XMY_goal} with $\vX_{\cM'}$ and $\vY_{\cW'}$ replaced by non-truncated value vectors $\vX$ and $\vY$ in a complete (instead of partial) matching $\mu$, as long as we do not rely on bistochasticity of $\vM$.

    In the simplified notation, let $\mathbf{Z} = \va \circ \vX$ and $\mathbf{W} = \vb \circ \vY$ with $\va=(a_{i,\mu(i)})_{i\in[n]}$ and $\vb=(b_{j,\mu^{-1}(j)})_{j\in[n]}$, so that $\vZ,\vW\sim \Exp(1)^n$ and are independent. Moreover, let $R=\|\vZ\|_1$ and $\vU=R^{-1}\vZ$ so that, as is well known, $R\sim\Gamma(n, 1)$, $\vU\sim\Unif(\Delta_{n-1})$, and $R$ and $\vU$ are independent. Similarly, let $S=\|\vW\|_1$, and $\vV=S^{-1}\vW$. Then
    \begin{equation*}
        \vX^\top \vM\vY = \vZ^\top \diag(\va^{-1})\vM \diag(\vb^{-1}) \vW = RS \vU^\top \tilde{\vM} \vV,
    \end{equation*}
    where $\tilde{\vM}:=\diag(\va^{-1})\vM \diag(\vb^{-1})$ again has entries bounded on $[1/(C^2n),C^2/n]$. Since $\|\vY\|_1 = \Theta(S)$, it suffice to find a positive constant $\gamma$ such that
    \begin{equation}
        \Pp\left(S \ge \gamma n (\log n)^{-7/8}
    \middle| RS \vU^\top \tilde{\vM} \vV < t\right) < 0.1
    \end{equation}
    for all $n$ sufficiently large and $t\in(0, c_2(\log n)^{1/8})$.
    Note that $1/(C^2n)\le \vU^\top \tilde{\vM} \vV \le C^2/n$ a.s. By conditional on all possible values of $\vU^\top \tilde{\vM} \vV$, it suffices to show that for all $t'\in(0, c_2C^2(\log n)^{1/8})$
    \begin{equation}
        \Pp\left(S \ge \gamma n (\log n)^{-7/8}
    \middle| RS < t' n\right) < 0.1 \label{Eqn_proof_Lemma_high_prob_small_xnorm_cond_XMY_reduced_goal_RS}
    \end{equation}
    asymptotically.
    
    First, we write $\Pp(S \ge \gamma n (\log n)^{-7/8}, RS < t' n)$ as
    \begin{equation*}
        \Pp\big(S \ge \gamma n (\log n)^{-7/8}, RS < t' n\big) = \int_{\gamma n (\log n)^{-7/8}}^\infty G(t'n/s) g(s) ds,
    \end{equation*}
    where $g(x)=\frac{x^{n-1}e^{-x}}{(n-1)!}$ is the probability density function of $\Gamma(n,1)$ and $G$ is the corresponding CDF. Since $t'n/s \ll n$, we may use Lemma~\ref{Lemma_weighted_exp_chernoff} to upper bound $G(t'n/s)$, giving
    \begin{align}
        \Pp\big(S \ge \gamma n (\log n)^{-7/8}, RS < t' n\big) &\le \int_{\gamma n (\log n)^{-7/8}}^\infty \left(\frac{t'e}{s}\right)^n \frac{s^{n-1}e^{-s}}{(n-1)!} ds \nonumber \\
        &= \frac{(t'e)^n}{(n-1)!} \int_{\gamma n (\log n)^{-7/8}}^\infty  \frac{e^{-s}}{s} ds \nonumber \\
        &\le \frac{(t'e)^n}{(n-1)!} e^{-\gamma n (\log n)^{-7/8}}. \label{Eqn_proof_Lemma_high_prob_small_xnorm_cond_XMY_upper_bound_numerator}
    \end{align}
    
    Next, we lower bound $\Pp(RS < t' n)$ by
    \begin{equation*}
        \Pp\left(n^{1/2} \le S \le n^{2/3}, RS < t' n\right) = \int_{n^{1/2}}^{n^{2/3}} G(t'n/s) g(s) ds.
    \end{equation*}
    Note that $t'n/s \le \bO(n^{2/3})$ for all $s\in[n^{1/2},n^{2/3}]$. Using the lower bound in Lemma~\ref{Lemma_weighted_exp_chernoff}, we have
    \begin{align}
        \Pp\left(n^{1/2} \le S \le n^{2/3}, RS < t' n\right) &\ge e^{-\bO(n^{2/3})} \int_{n^{1/2}}^{n^{2/3}} \left(\frac{t'e}{s}\right)^n \frac{s^{n-1}e^{-s}}{(n-1)!} ds \nonumber \\
        &= e^{-\bO(n^{2/3})} \frac{(t'e)^n}{(n-1)!} \int_{n^{1/2}}^{n^{2/3}} \frac{e^{-s}}{s} ds \nonumber \\
        &\ge \frac{(t'e)^n}{(n-1)!} n^{-2/3} e^{-\bO(n^{2/3})}. \label{Eqn_proof_Lemma_high_prob_small_xnorm_cond_XMY_lower_bound_denominator}
    \end{align}
    
    Comparing \eqref{Eqn_proof_Lemma_high_prob_small_xnorm_cond_XMY_upper_bound_numerator} with \eqref{Eqn_proof_Lemma_high_prob_small_xnorm_cond_XMY_lower_bound_denominator} establishes \eqref{Eqn_proof_Lemma_high_prob_small_xnorm_cond_XMY_reduced_goal_RS} and hence finishes the proof.
\end{proof}

\begin{remark}
Note that this lemma should be treated only as a technical result about the typical behavior of $\vX_{\cM'}$ and $\vY_{\cW'}$ when $q(\vX_{\cM'},\vY_{\cW'})=\exp(-n\vX_{\cM'}^\top\vM\vY_{\cW'})$ is large, and should not be confused with any attempt to bound the number of stable (partial) matchings with women's total values in a certain range. For example, one might hope to replace $\gamma n (\log n)^{-7/8}$ with $\gamma n(\log n)^{-1}$ in the proof to conclude that stable matchings with $\|\vY_\delta\|_1\in[n^{1/2},n^{2/3}]$ are over $e^{n^{2/3}}$ times more common than those with $\|\vY_\delta\|_1\ge \Omega(n(\log n)^{-1})$. This, however, is generally not true as we know in the classic case with uniformly random preferences. To see why this fact is not contradictory to our proof, recall from Proposition~\ref{Prop_ratio_p_q_high_prob} (see Section~\ref{sec_prep_proof} and Appendix~\ref{Append_weak_regular_scores}) that $q(\vX_{\cM'},\vY_{\cW'})$ is only a good approximation to $p_{\mu'}(\vX_{\cM'},\vY_{\cW'})$ when, among other conditions, $\|\vY_{\cW'}\|_1 \le \Theta(n(\log n)^{-7/8})$; even then, the approximation is only valid up to an $e^{o(n)}$ factor. As the ratio between \eqref{Eqn_proof_Lemma_high_prob_small_xnorm_cond_XMY_upper_bound_numerator} and \eqref{Eqn_proof_Lemma_high_prob_small_xnorm_cond_XMY_lower_bound_denominator} is only $e^{o(n)}$, the quality of approximation is insufficient for us to rule out the possibility for a (partial) stable matching to have $\|\vY_{\cW'}\|_1\ge\Theta(n(\log n)^{-1})$: the man-optimal stable matching obtained from the man-proposing deferred acceptance algorithm will be such an example.
\end{remark}

\begin{corollary}\label{Cor_x_not_small_when_y_large_cond_on_XMY_and_R1}
There exists a positive constant $\gamma'$ such that, for any $t\in(0, c_2(\log n)^{1/8})$,
\begin{equation}
    \Pp\left(\|\vX_{\cM'}\|_1 \le \gamma' t (\log n)^{7/8}, \|\vY_{\cW'}\|_1 \ge \beta \sqrt{tn}
    \middle| \vX_{\cM'}^\top\vM \vY_{\cW'} < t\right) \le 0.2
\end{equation}
for $n$ sufficiently large, where $\beta$ is the constant appearing in Lemma~\ref{Lemma_x_y_not_both_small_cond_on_XMY_and_R1}.
\end{corollary}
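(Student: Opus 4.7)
The plan is to reduce to the concentration statement of Lemma~\ref{Lemma_wgt_exp_cond_concentration}, after first using Lemma~\ref{Lemma_high_prob_small_xnorm_cond_XMY} to rule out the regime where $\|\vY_{\cW'}\|_1$ is too large. By Lemma~\ref{Lemma_high_prob_small_xnorm_cond_XMY}, the event $\{\|\vY_{\cW'}\|_1 \ge \gamma n(\log n)^{-7/8}\}$ has conditional probability at most $0.1$ given $\vX_{\cM'}^\top\vM\vY_{\cW'}<t$, so it suffices to bound, by $0.1$, the conditional probability of
$\bigl\{\|\vX_{\cM'}\|_1 \le \gamma' t(\log n)^{7/8},\ \|\vY_{\cW'}\|_1\in[\beta\sqrt{tn},\,\gamma n(\log n)^{-7/8}]\bigr\}$ given $\vX_{\cM'}^\top\vM\vY_{\cW'}<t$.

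Next I would condition on a realization $\vY_{\cW'}=\vy$ in that range and apply Lemma~\ref{Lemma_wgt_exp_cond_concentration} to the conditional distribution of $\vZ := (a_{i,\mu'(i)} X_i)_{i\in\cM'} \sim \Exp(1)^{\otimes(n-\lfloor\delta n\rfloor)}$. Choose $\tilde{u}_i := 1/a_{i,\mu'(i)}$ and $v_i := (\vM\vy)_i/a_{i,\mu'(i)}$ for $i\in\cM'$; Assumption~\ref{Assumption_C_bounded} ensures both $\tilde{\vu}$ and $\vv$ have entry ratios bounded by an absolute constant. In these variables, the conditioning event reads $\vv\cdot\vZ<t$ and the target quantity is $\|\vX_{\cM'}\|_1 = \tilde{\vu}\cdot\vZ$. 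The scaling-invariant hypothesis of Lemma~\ref{Lemma_wgt_exp_cond_concentration} reduces to $t=o(\|\vv\|_1)$; since $\|\vv\|_1=\Theta(\|\vy\|_1)\ge \Theta(\sqrt{tn})$ on the good event, this amounts to $t=o(n)$, which holds because $t<c_2(\log n)^{1/8}$.

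The lemma then yields $\tilde{\vu}\cdot\vZ = (1\pm\zeta)\,t\,\tilde{\vu}\cdot\vv^{-1}$ except on an event of conditional probability $\exp(-\Theta(n\zeta^2))$. A direct computation gives
\[
t\,\tilde{\vu}\cdot\vv^{-1} \;=\; t\sum_{i\in\cM'}\frac{1}{(\vM\vy)_i} \;\ge\; \frac{(n-\lfloor\delta n\rfloor)\,n\,t}{C\|\vy\|_1} \;\ge\; \frac{(1-\delta)\,tn(\log n)^{7/8}}{C\gamma},
\]
using the elementwise bound $(\vM\vy)_i\le C\|\vy\|_1/n$ and $\|\vy\|_1\le \gamma n(\log n)^{-7/8}$. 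Hence for any fixed $\gamma'>0$, the inequality $\|\vX_{\cM'}\|_1\le \gamma' t(\log n)^{7/8}$ is incompatible with the concentration conclusion once $n$ is large enough to satisfy $\gamma' < (1-\delta)/(2C\gamma)\cdot n$, which is automatic. Thus the conditional probability in question is $\exp(-\Theta(n))$ for each admissible $\vy$, and integrating over $\vy$ preserves this bound.

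Combining the two pieces, the displayed probability in the corollary is at most $0.1 + \exp(-\Theta(n)) \le 0.2$ for $n$ sufficiently large, and any positive constant $\gamma'$ works. The main technical point to be careful about is the rescaling of Lemma~\ref{Lemma_wgt_exp_cond_concentration} (which is stated under the normalization $\|\vu\|_1=\|\vv\|_1=n$ and conditioning $\vv\cdot\vZ<tn$) so that we can apply it with the effective dimension $n-\lfloor\delta n\rfloor$ and our rescaled threshold $t$ in place of $tn$; the scaling invariance noted in the remark after the lemma makes this straightforward.
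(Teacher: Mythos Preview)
Your approach is essentially the same as the paper's: split off the regime $\|\vY_{\cW'}\|_1\ge \gamma n(\log n)^{-7/8}$ via Lemma~\ref{Lemma_high_prob_small_xnorm_cond_XMY}, then condition on $\vY_{\cW'}=\vy$ in the intermediate range and invoke Lemma~\ref{Lemma_wgt_exp_cond_concentration} to show $\|\vX_{\cM'}\|_1$ concentrates at a value of order $t(\log n)^{7/8}$.

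There is, however, a scaling slip when you apply Lemma~\ref{Lemma_wgt_exp_cond_concentration}. In the lemma's normalization the conditioning event is $\vv\cdot\vZ<tn$ and the concentration target is $t\,\vu\cdot\vv^{-1}$. You condition on $\vv\cdot\vZ<t$, so after rescaling (replace $t$ by $t/n'$ with $n'=n-\lfloor\delta n\rfloor$) the correct target is
\[
\frac{t}{n'}\,\tilde{\vu}\cdot\vv^{-1}=\frac{t}{n'}\sum_{i\in\cM'}\frac{1}{(\vM\vy)_i}\;\ge\;\frac{t}{n'}\cdot n'\cdot\frac{n}{C\|\vy\|_1}\;\ge\;\frac{t(\log n)^{7/8}}{C\gamma},
\]
which is $\Theta\bigl(t(\log n)^{7/8}\bigr)$, not $\Theta\bigl(tn(\log n)^{7/8}\bigr)$ as you wrote. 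Consequently the claim that ``any positive constant $\gamma'$ works'' is false; one must take $\gamma'$ strictly less than $(C\gamma)^{-1}$ (the paper uses $\gamma'=0.9(C\gamma)^{-1}$). With this correction the argument goes through exactly as you outlined and matches the paper's proof.
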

\begin{proof}
    Note that $\|\vM\vY_{\cW'}\|_1 = \|\vY_{\cW'}\|_1 \gtrsim \sqrt{tn} \gg t$ for $t\lesssim (\log n)^{1/8}$. For any $\vy$ supported on coordinates indexed by $\cW'$ with $t \ll \|\vy\|_1 \le \gamma n (\log n)^{-7/8}$, Lemma~\ref{Lemma_wgt_exp_cond_concentration} implies
    \begin{equation}
        \Pp\Big(\|\vX_{\cM'}\|_1 \le 0.9 \frac{t}{n-\floor{\delta n}}\big\|(\vM\vy)_{\cM'}^{-1}\big\|_1 \Big| \vX_{\cM'}^\top\vM \vY_{\cW'} < t, \vY_{\cW'}=\vy\Big) \le e^{-\Theta(n)}.
    \end{equation}
    Plugging in $\big\|(\vM\vy)_{\cM'}^{-1}\big\|_1 \ge (n-\floor{\delta n}) \frac{n}{C\|\vy\|_1} \ge \frac{n-\floor{\delta n}}{C\gamma}(\log n)^{7/8}$ gives
    \begin{equation}
        \Pp\big(\|\vX_{\cM'}\|_1 \le 0.9(C\gamma)^{-1} t(\log n)^{7/8} \big| \vX_{\cM'}^\top\vM \vY_{\cW'} < t, \vY_{\cW'}=\vy\big) \le e^{-\Theta(n)}.
    \end{equation}
    Marginalizing over all relevant values of $\vy$ implies
    \begin{equation}
        \Pp\big(\|\vX_{\cM'}\|_1 \le \gamma' t(\log n)^{7/8}, \beta \sqrt{tn} \le \|\vY_{\cW'}\|_1 \le \gamma n (\log n)^{-7/8} \big| \vX_{\cM'}^\top\vM \vY_{\cW'} < t\big) \le e^{-\Theta(n)}
    \end{equation}
    with $\gamma'=0.9(C\gamma)^{-1}$. Combining this with Lemma~\ref{Lemma_high_prob_small_xnorm_cond_XMY} completes the proof.
\end{proof}

\begin{corollary}
For any $t\in(0, c_2(\log n)^{1/8})$ and $n$ sufficiently large,
\begin{equation}
    \Pp\left(\|\vX_{\cM'}\|_1 \wedge \|\vY_{\cW'}\|_1 \ge \gamma' t (\log n)^{7/8}, \vX_{\cM'}^\top\vM \vY_{\cW'} < t\right) \ge \frac{1}{2} \Pp\left( \vX_{\cM'}^\top\vM \vY_{\cW'} < t\right).
\end{equation}
\end{corollary}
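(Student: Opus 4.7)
The plan is to combine Lemma~\ref{Lemma_x_y_not_both_small_cond_on_XMY_and_R1} and Corollary~\ref{Cor_x_not_small_when_y_large_cond_on_XMY_and_R1} via a union bound. Writing $\mathcal{E} = \{\vX_{\cM'}^\top\vM\vY_{\cW'} < t\}$, the claim is equivalent to
\[
    \Pp(\|\vX_{\cM'}\|_1 < \gamma' t(\log n)^{7/8} \text{ or } \|\vY_{\cW'}\|_1 < \gamma' t(\log n)^{7/8} \mid \mathcal{E}) \le 1/2.
\]
By the symmetry of the model and of $\mathcal{E}$ under swapping the two sides of the market (with $\vM$ replaced by $\vM^\top$), the symmetric counterparts of the preceding lemma and corollary hold with $\vX_{\cM'}$ and $\vY_{\cW'}$ interchanged, so it suffices to bound $\Pp(A_1 \mid \mathcal{E}) \le 1/4$, where $A_1 := \{\|\vX_{\cM'}\|_1 < \gamma' t(\log n)^{7/8}\}$.

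First I would observe that for $t \le c_2(\log n)^{1/8}$ and $n$ sufficiently large the threshold satisfies $\gamma' t(\log n)^{7/8} \le \beta\sqrt{tn}$, where $\beta$ is the constant from Lemma~\ref{Lemma_x_y_not_both_small_cond_on_XMY_and_R1}, since $\gamma'\sqrt{t}(\log n)^{7/8} = O((\log n)^{15/16}) = o(\sqrt{n})$. Hence $A_1 \subseteq \{\|\vX_{\cM'}\|_1 \le \beta\sqrt{tn}\}$. I would then split on $B := \{\|\vY_{\cW'}\|_1 \ge \beta\sqrt{tn}\}$ to obtain
\[
    \Pp(A_1 \mid \mathcal{E}) \le \Pp(A_1 \cap B \mid \mathcal{E}) + \Pp(\{\|\vX_{\cM'}\|_1 \le \beta\sqrt{tn}\} \cap B^c \mid \mathcal{E}),
\]
where the first summand is exactly Corollary~\ref{Cor_x_not_small_when_y_large_cond_on_XMY_and_R1} and the second is exactly Lemma~\ref{Lemma_x_y_not_both_small_cond_on_XMY_and_R1}.

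The quoted constants $0.2$ and $0.1$ alone would sum to $0.3$ per side, exceeding $1/4$; the main (and quite mild) obstacle is therefore to pull the actual quantitative decay rates out of the earlier proofs rather than relying on their stated bounds. Inspection shows that Lemma~\ref{Lemma_x_y_not_both_small_cond_on_XMY_and_R1} reduces to $\exp(-\Theta(n))$ via Lemma~\ref{Lemma_wgt_exp_cond_concentration}, and that Corollary~\ref{Cor_x_not_small_when_y_large_cond_on_XMY_and_R1} delivers $\exp(-\Theta(n(\log n)^{-7/8}))$ from the ratio of \eqref{Eqn_proof_Lemma_high_prob_small_xnorm_cond_XMY_upper_bound_numerator} to \eqref{Eqn_proof_Lemma_high_prob_small_xnorm_cond_XMY_lower_bound_denominator} combined with the $e^{-\Theta(n)}$ tail in that Corollary's proof. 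Both decays are uniform in $t \in (0, c_2(\log n)^{1/8})$, so for $n$ large both summands are $o(1)$, in particular each below $1/8$, yielding $\Pp(A_1 \mid \mathcal{E}) \le 1/4$ and, by the symmetric bound for $\vY_{\cW'}$, the desired inequality.
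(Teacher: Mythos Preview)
Your argument is correct, but it takes a longer route than the paper's. The paper's one-line proof implicitly uses the decomposition
\[
    A_1 \cup A_2 \;\subseteq\; (A_1 \cap B_2) \;\cup\; (A_2 \cap B_1) \;\cup\; (B_1^c \cap B_2^c),
\]
where $A_k$ is the event that the $k$-th side's $\ell_1$ norm is below $\gamma' t(\log n)^{7/8}$ and $B_k$ is the event that it is at least $\beta\sqrt{tn}$ (so $A_k\subseteq B_k^c$ since $\gamma' t(\log n)^{7/8}\le\beta\sqrt{tn}$). This avoids double-counting the ``both small'' region and yields $0.2+0.2+0.1=0.5$ directly from the \emph{stated} constants in Lemma~\ref{Lemma_x_y_not_both_small_cond_on_XMY_and_R1} and Corollary~\ref{Cor_x_not_small_when_y_large_cond_on_XMY_and_R1} (together with the symmetric version of the latter). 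Your approach instead applies the naive union bound $\Pp(A_1\cup A_2)\le \Pp(A_1)+\Pp(A_2)$, which forces you to beat $1/4$ per side and therefore to reopen the earlier proofs and extract the actual $o(1)$ decay rates. That inspection is valid---both lemmas do give vanishing bounds---so your proof goes through, but the sharper decomposition is what lets the paper dispense with that extra work.
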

\begin{proof}
This is a direct consequence of Lemma~\ref{Lemma_x_y_not_both_small_cond_on_XMY_and_R1} and Corollary~\ref{Cor_x_not_small_when_y_large_cond_on_XMY_and_R1}
\end{proof}

We are now ready to state the proof of Proposition~\ref{Prop_EqXY_bound}.

\begin{proof}[Proof of Proposition~\ref{Prop_EqXY_bound}]
    Define events
    \begin{equation*}
        \begin{aligned}[c]
            A_1(t)&: \|\vX_{\cM'}\|_1 \ge \gamma' t(\log n)^{1/8},\\
            B_1(t)&: (\vX_{\cM'},\vY_{\cW'})\in\Omega_1(\zeta;t),
        \end{aligned}
        \qquad 
        \begin{aligned}[c]
            A_2(t)&: \|\vY_{\cW'}\|_1 \ge \gamma' t(\log n)^{1/8},\\
            B_2(t)&: (\vX_{\cM'},\vY_{\cW'})\in\Omega_2(\zeta;t),
        \end{aligned}
    \end{equation*}
    where $\Omega_1$ and $\Omega_2$ are defined in Appendix~\ref{Append_proof_prop_eigenvec_of_M}, and $\zeta$ is to be specified later.
    We have
    \begin{align}
        \frac{1}{2}\Pp(\vX_{\cM'}^\top\vM \vY_{\cW'} < t) &\le \Pp(\vX_{\cM'}^\top\vM \vY_{\cW'} < t, A_1(t), A_2(t)) \nonumber \\
        &\le \Pp(\vX_{\cM'}^\top\vM \vY_{\cW'} < t, A_1(t), A_2(t), B_1(t), B_2(t)) \nonumber \\
        &\qquad + \Pp(\vX_{\cM'}^\top\vM \vY_{\cW'} < t, A_1(t), B_1(t)^c) + \Pp(\vX_{\cM'}^\top\vM \vY_{\cW'} < t, A_2(t), B_2(t)^c) \nonumber \\
        &\le \Pp(\vX_{\cM'}^\top\vM \vY_{\cW'} < t, A_1(t), A_2(t), B_1(t), B_2(t)) \nonumber \\
        &\qquad + \Pp(B_1(t)^c | \vX_{\cM'}^\top\vM \vY_{\cW'} < t, A_1(t)) \cdot \Pp(\vX_{\cM'}^\top\vM \vY_{\cW'} < t) \nonumber \\
        &\qquad + \Pp(B_2(t)^c | \vX_{\cM'}^\top\vM \vY_{\cW'} < t, A_2(t)) \cdot \Pp(\vX_{\cM'}^\top\vM \vY_{\cW'} < t).
    \end{align}
    
    For any fixed $\delta>0$ and $\zeta=o_\delta(1)$, $\Pp(B_1(t)^c | \vX_{\cM'}^\top\vM \vY_{\cW'} < t, A_1(t)) \to 0$ by Lemma~\ref{Lemma_wgt_exp_cond_concentration}; in particular, we may assume that $\Pp(B_1(t)^c | \vX_{\cM'}^\top\vM \vY_{\cW'} < t, A_1(t))$ (and by symmetry $\Pp(B_2(t)^c | \vX_{\cM'}^\top\vM \vY_{\cW'} < t, A_2(t))$) is at most $1/8$.
    Thus,
    \begin{equation}
        \Pp(\vX_{\cM'}^\top\vM \vY_{\cW'} < t) \le 4 \Pp(\vX_{\cM'}^\top\vM \vY_{\cW'} < t, A_1(t), A_2(t), B_1(t), B_2(t)).
    \end{equation}
    
    By Lemma~\ref{lemma_Omega1_and_Omega2_suggests_Oeig}, $B_1(t)$ and $B_2(t)$ together imply that $n\vX_{\cM'}^\top\vM \vY_{\cW'} = (1 + o_{\delta}(1)) \|\vX_{\cM'}\|_1\|\vY_{\cW'}\|_1$. Further, along with the events $\vX_{\cM'}^\top\vM \vY_{\cW'} < t$, $A_1(t)$, and $A_2(t)$, they imply $t(\log n)^{1/8} \lesssim \|\vY_{\cW'}\|_1 \lesssim n(\log n)^{-1/8}$. Hence,
    \begin{align}
        \Pp(\vX_{\cM'}^\top\vM &\vY_{\cW'} < t, A_1(t), A_2(t), B_1(t), B_2(t)) \nonumber \\
        &\le \Pp\left(\|\vX_{\cM'}\|_1\|\vY_{\cW'}\|_1 \le \frac{nt}{1+o_{\delta,\zeta}(1)}, t(\log n)^{1/8} \le \|\vY_{\cW'}\|_1 \le \Theta(n(\log n)^{-1/8})\right) \nonumber \\
        &\le \E\left[\Pp\left(\|\vX_{\cM'}\|_1\|\vY_{\cW'}\|_1 \le \frac{nt}{1+o_{\delta,\zeta}(1)}, t(\log n)^{1/8} \le \|\vY_{\cW'}\|_1 \le \Theta(n(\log n)^{-1/8}) \middle| \|\vY_{\cW'}\|_1\right)\right] \nonumber \\
        &\le e^{o_{\delta}(n)} \bigg(\frac{ent}{n-\floor{\delta n}}\bigg)^{n-\floor{\delta n}} \prod_{i\in\cM'} a_{i,\mu'(i)} \nonumber \\
        &\qquad \cdot \E\left[\|\vY_{\cW'}\|_1^{-n+\floor{\delta n}}; t(\log n)^{1/8} \le \|\vY_{\cW'}\|_1 \le \Theta(n(\log n)^{-1/8})\right]. \label{Eqn_proof_prop_EqXY_P_XMY_and_A1A2B1B2}
    \end{align}
    
    It is straightforward, albeit a bit tedious, to explicitly bound the expectation term in \eqref{Eqn_proof_prop_EqXY_P_XMY_and_A1A2B1B2} by
    \begin{equation*}
        (\Theta(\log n) - \log t) e^{n-\floor{\delta n}} \prod_{i\in\cM'} b_{\mu'(i),i},
    \end{equation*}
    again using Lemma~\ref{Lemma_weighted_exp_chernoff}. Carrying out the integral over $t$ in \eqref{Eqn_proof_prop_6_6_goal} finishes the proof.
\end{proof}

\subsection{Proof of Lemma~\ref{Prop_Oempe_likely_for_happiness_emp_distr}}\label{Proof_prop_Oempe_likely}

In this section, we present the proof of Lemma~\ref{Prop_Oempe_likely_for_happiness_emp_distr}, restated below.

\propOempeLikelyForHappinessEmpDist*
\begin{proof}
In light of Proposition~\ref{Prop_EqXY_bound}, it suffices to show that for all $\vy \in \text{Proj}_y(\mathcal{R}\cap\Oeigz)$, we have
\begin{equation}\label{Eqn_want_ratio_EOempe_EqXY}
    \frac{\E\big[q(\vX_{\cM'}, \vY_{\cW'}) \cdot \mathbbm{1}_{\mathcal{R} \cap \Oeigz\backslash\Oempe}(\vX_{\cM'}, \vY_{\cW'})  \;|\; \vY_{\cW'} = \vy \big]}{\E\big[q(\vX_{\cM'}, \vY_{\cW'}) \cdot \mathbbm{1}_{\mathcal{R}_2}(\vX_{\cM'}, \vY_{\cW'}) \;|\; \vY_{\cW'} = \vy \big]} \le \exp(-\Theta(\eps^2 n)).
\end{equation}
It then follows that
\begin{align}
    \E\big[q(&\vX_{\cM'}, \vY_{\cW'}) \cdot \mathbbm{1}_{\mathcal{R} \cap \Oeigz\backslash\Oempe}(\vX_{\cM'}, \vY_{\cW'})\big] \nonumber \\
    &= \E\big[\E\big[q(\vX_{\cM'}, \vY_{\cW'}) \cdot \mathbbm{1}_{\mathcal{R} \cap \Oeigz\backslash\Oempe}(\vX_{\cM'}, \vY_{\cW'})\big| \vY_{\cW'}\big]\big] \nonumber \\
    &\le \E\big[\exp(-\Theta(\eps^2 n)) \cdot \E[q(\vX_{\cM'}, \vY_{\cW'}) \cdot \mathbbm{1}_{\mathcal{R}_2}(\vX_{\cM'}, \vY_{\cW'}) \;|\; \vY_{\cW'}] \cdot \mathbbm{1}_{\text{Proj}_y(\mathcal{R}\cap\Oeigz)}(\vY_{\cW'})\big] \nonumber \\
    &\le \exp(-\Theta(\eps^2 n)) \cdot \E\big[\E[q(\vX_{\cM'}, \vY_{\cW'}) \cdot \mathbbm{1}_{\mathcal{R}_2}(\vX_{\cM'}, \vY_{\cW'}) \;|\; \vY_{\cW'}] \cdot \mathbbm{1}_{\mathcal{R}_1}(\vY_{\cW'})\big] \nonumber \\
    &\le \exp(-\Theta(\eps^2 n)) \cdot \E[q(\vX_{\cM'}, \vY_{\cW'}) \cdot \mathbbm{1}_{\mathcal{R}_2}(\vX_{\cM'}, \vY_{\cW'}) \cdot \mathbbm{1}_{\mathcal{R}_1}(\vY_{\cW'})],
\end{align}
and Proposition~\ref{Prop_EqXY_bound} immediately implies the desired bound.

To show \eqref{Eqn_want_ratio_EOempe_EqXY}, notice that the quotient in the left-hand side is simply
\begin{equation}\label{Eqn_proof_distr_happi_equiv_emp_tail_bound_for_nearly_iid_X}
    \Pp_{\vX\sim\bigotimes_{i=1}^n \Exp(a_{i,\mu'(i)}+n(\vM\vy)_i)}\big((\vX_{\cM'},\vy)\in\mathcal{R}\cap\Oeigz\backslash\Oempe\big).
\end{equation}
Recall that $(\vX_{\cM'})_i = X_i$ for $i \in \cM'$ and $(\vX_{\cM'})_i = 0$ for $i\notin\cM'$. For any $\vy \in \text{Proj}_y(\mathcal{R}\cap\Oeigz)$, there must exist $\hat y\in\R_+$ such that for all but at most $\sqrt{\zeta} n$ indices $i\in[n]$ we have $|(\vM \vy)_i-\hat y| \le \sqrt{\zeta} \hat y$. In other words, under the distribution $\vX\sim\bigotimes_{i=1}^n \Exp(a_{i,\mu'(i)}+n(\vM\vy)_i)$, for all but at most $(\delta + \sqrt{\zeta}) n$ indices $i\in[n]$, we have $n\hat y X_i\sim \Exp\big(\lambda_i\big)$ for some
\begin{equation*}
    \lambda_i = \frac{a_{i,\mu'(i)}}{n\hat y}+\frac{(\vM\vy)_i}{\hat y} = 1 + \Theta(\sqrt{\zeta}) + \Theta(1/\log n),
\end{equation*}
where we used the fact that $n\hat y\ge\Theta(\|\vy\|_1) \ge \Theta(\log n)$ as implied by $\vy\in\text{Proj}_y(\mathcal{R}\cap\Oeigz)$. The generalized Dvoretzky–Kiefer–Wolfowitz (DKW) inequality (see Lemma~\ref{Lemma_dkw_non_identical}) for independent and nearly-identically distributed random variables implies that the probability \eqref{Eqn_proof_distr_happi_equiv_emp_tail_bound_for_nearly_iid_X} is upper bounded by
\begin{equation}
    \Pp_{\vX\sim\bigotimes_{i=1}^n \Exp(a_{i,\mu'(i)}+n(\vM\vy)_i)}\big(\big\|\Femp(\vx) - F_{n\hat y}\|_\infty > \eps + \Theta(\delta + \sqrt{\zeta})\big) \le \exp(-\Theta(\eps^2 n)),
\end{equation}
which finishes the proof.
\end{proof}

\subsection{Proof of Theorem~\ref{Thm_main_rank_dist_body}}\label{Append_proof_thm_main_rank}

Heuristically, we would expect the rank $R_i$ for a man to be proportional to his value $X_i(\mu)$. We will see below that this is approximately the case when $x_i\ll 1$. There are, however, going to be some $x_i$ of constant order, making it hard for us to say anything exact about $R_i$. But as we will soon see, for all but a $o(1)$ fraction of the $n$ men, we will indeed have $x_i = o(1)$. As we are concerned with the empirical distribution, such small fraction becomes negligible in the limit and can be simply ignored. This heuristics is formalized in the next Lemma.

\begin{lemma}\label{Lemma_probable_happiness_majority}
Fix any $\delta > 0$. Let $\mu'$ be a partial matching of size $n-\floor{\delta n}$ on $\cM'\subseteq\cM$ and $\cW'\subseteq\cW$. For any $0<\xi<\rho < 1$, consider $\Otailab$ defined as
\begin{equation}
    \bigg\{(\vx,\vy)\in \R_+^n\times\R_+^n : \sumiton \mathbbm{1}\Big\{n x_i (\vM\vy)_i\notin (F^{-1}(\xi/2)/2, F^{-1}(1-\xi/2))\Big\} \le \floor{\delta n} + \rho (n-\floor{\delta n})\bigg\}.
\end{equation}
Then
\begin{equation}\label{Eqn_Lemma_probable_happiness_majority_main_bound}
    \E\big[q(\vX_{\cM'}, \vY_{\cW'}) \cdot \mathbbm{1}_{\mathcal{R} \backslash\Otailab}(\vX_{\cM'}, \vY_{\cW'})\big] \le \exp(o_\delta(n)-\Theta(D(\rho\|\xi)n)) \cdot \frac{(\delta n)!}{n!} \prod_{i\in\cM'}a_{i,\mu'(i)}b_{\mu'(i),i},
\end{equation}
where $D(q\|p)$ denotes the KL divergence from $\Bern(p)$ to $\Bern(q)$.
\end{lemma}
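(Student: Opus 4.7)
The plan is to condition on $\vY_{\cW'}=\vy$ and exploit the observation that $q(\vx,\vy)=\exp(-n\vx^\top\vM\vy)$ combined with the exponential prior for $X_i$ with rate $a_{i,\mu'(i)}$ is essentially a Gibbs tilt. Concretely, for any measurable function $f$ of $\vX_{\cM'}$,
\begin{equation*}
    \E\big[q(\vX_{\cM'},\vy)\,f(\vX_{\cM'})\big] \;=\; \prod_{i\in\cM'}\frac{a_{i,\mu'(i)}}{a_{i,\mu'(i)}+n(\vM\vy)_i}\cdot \tilde\E_{\vy}[f(\tilde\vX_{\cM'})],
\end{equation*}
where under $\tilde\Pp_{\vy}$ the coordinates $\tilde X_i$, $i\in\cM'$, are \emph{independent} with $\tilde X_i\sim\Exp\bigl(a_{i,\mu'(i)}+n(\vM\vy)_i\bigr)$. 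Equivalently, $Z_i:=n\tilde X_i(\vM\vy)_i \sim \Exp(\lambda_i)$ with $\lambda_i=1+a_{i,\mu'(i)}/(n(\vM\vy)_i)$.

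Now specialize to $\vy\in\mathcal{R}_1$ (which is automatic on $\mathcal{R}$). Since $\vM$ is $C$-bounded, $n(\vM\vy)_i\ge \|\vy\|_1/C\ge \underline c_1 C^{-1}\log n$, so $\lambda_i=1+O(1/\log n)$ uniformly in $i$ and in $\vy$. First I would check, using $F^{-1}(\xi/2)=-\log(1-\xi/2)$ and $F^{-1}(1-\xi/2)=\log(2/\xi)$, that for $n$ sufficiently large (depending only on $\xi$, $C$),
\begin{equation*}
    \tilde\Pp_{\vy}\bigl(Z_i\notin\bigl(F^{-1}(\xi/2)/2,\,F^{-1}(1-\xi/2)\bigr)\bigr) \le 1-e^{-\lambda_i F^{-1}(\xi/2)/2}+(\xi/2)^{\lambda_i} \le \xi,
\end{equation*}
for every $i\in\cM'$. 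Because the $Z_i$'s are $\tilde\Pp_{\vy}$-independent, the count $\sum_{i\in\cM'}\mathbbm{1}\{Z_i\notin\text{interval}\}$ is stochastically dominated by $\Binom(n-\floor{\delta n},\xi)$. The event $\bar\Otailab$ requires this count to exceed $\rho(n-\floor{\delta n})$ (the $\floor{\delta n}$ coordinates outside $\cM'$ have $X_i=0$ and are already counted on the LHS of the definition of $\Otailab$). A standard Chernoff bound for binomial upper tails then yields
\begin{equation*}
    \tilde\Pp_{\vy}(\bar\Otailab) \le \exp\bigl(-(n-\floor{\delta n})\,D(\rho\|\xi)\bigr) = \exp\bigl(-\Theta(D(\rho\|\xi)n)\bigr).
\end{equation*}

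To assemble the final bound, I would first restrict the expectation to $\vY_{\cW'}\in\mathcal{R}_1$ (which holds on $\mathcal{R}$) and apply the tilt identity with $f=\mathbbm{1}_{\bar\Otailab}$:
\begin{equation*}
    \E\bigl[q(\vX_{\cM'},\vY_{\cW'})\mathbbm{1}_{\mathcal{R}\setminus\Otailab}\bigr]
    \le \exp(-\Theta(D(\rho\|\xi)n))\cdot \E\bigl[q(\vX_{\cM'},\vY_{\cW'})\mathbbm{1}_{\mathcal{R}_1}(\vY_{\cW'})\bigr].
\end{equation*}
The last expectation is then bounded by splitting on $\mathcal{R}_2$: on $\mathcal{R}_2$ it is controlled by Proposition~\ref{Prop_EqXY_bound}, while on its complement $q\le\exp(-nc_2(\log n)^{1/8})$ is negligible (dominated by the Proposition's bound after accounting for the at most $n!$ integration factor). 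This produces the target factor $e^{o_\delta(n)}\,\tfrac{(\delta n)!}{n!}\prod_{i\in\cM'}a_{i,\mu'(i)}b_{\mu'(i),i}$ and finishes the proof.

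The main obstacle, such as it is, is step (3): verifying that the per-coordinate tail probability is bounded by $\xi$ uniformly over $i\in\cM'$ and $\vy\in\mathcal{R}_1$. Once the rates $\lambda_i$ are shown to be $1+o(1)$ (which is where the lower bound $\|\vy\|_1\gtrsim\log n$ from $\mathcal{R}_1$ is essential), everything else is a direct Chernoff/Bayes computation.
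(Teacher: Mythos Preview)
Your approach---conditioning on $\vY_{\cW'}=\vy$, recognizing the Gibbs tilt that makes the $X_i$'s independent $\Exp\bigl(a_{i,\mu'(i)}+n(\vM\vy)_i\bigr)$, and then bounding a binomial upper tail---is exactly the paper's approach. The paper even makes the same observation that the effective rates are $1+O(1/\log n)$ on $\mathcal{R}_1$; it just packages this slightly differently by working with the exactly standard exponential variable $\bigl(a_{i,\mu'(i)}+n(\vM\vy)_i\bigr)X_i$ and then translating back via $n(\vM\vy)_i \le a_{i,\mu'(i)}+n(\vM\vy)_i \le 2n(\vM\vy)_i$, which is where the factor of $2$ in the lower endpoint of the interval comes from.

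There is, however, a real gap in your final assembly step. You arrive at
\[
\E\bigl[q\,\mathbbm{1}_{\mathcal{R}\setminus\Otailab}\bigr]\le \exp\bigl(-\Theta(D(\rho\|\xi)n)\bigr)\cdot \E\bigl[q\,\mathbbm{1}_{\mathcal{R}_1}(\vY_{\cW'})\bigr]
\]
and then try to control the second factor by splitting on $\mathcal{R}_2$. The on-$\mathcal{R}_2$ part is handled by Proposition~\ref{Prop_EqXY_bound}, but your claim that the off-$\mathcal{R}_2$ contribution, which is at most $\exp\bigl(-nc_2(\log n)^{1/8}\bigr)$, is ``dominated by the Proposition's bound'' is false: the right-hand side of Proposition~\ref{Prop_EqXY_bound} is of order $\frac{(\delta n)!}{n!}\cdot e^{O(n)}=e^{-(1-\delta)n\log n+O(n)}$, which is vastly \emph{smaller} than $e^{-nc_2(\log n)^{1/8}}$. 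There is no ``$n!$ integration factor'' here---the lemma is a per-matching estimate and the $1/n!$ in the target bound is genuine.

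The clean fix is to avoid the split. For each fixed $\vy\in\mathcal{R}_1$, under the tilted law $\tilde\Pp_{\vy}$ one has
\[
\tilde\E_{\vy}\bigl[\vX_{\cM'}^\top\vM\vy\bigr]=\sum_{i\in\cM'}\frac{(\vM\vy)_i}{a_{i,\mu'(i)}+n(\vM\vy)_i}\le \frac{n-\lfloor\delta n\rfloor}{n}<1,
\]
so by Markov $\tilde\Pp_{\vy}(\mathcal{R}_2^c)\le 1/\bigl(c_2(\log n)^{1/8}\bigr)=o(1)$. Hence $\E[q\mid\vy]=\E[q\,\mathbbm{1}_{\mathcal{R}_2}\mid\vy]/\tilde\Pp_{\vy}(\mathcal{R}_2)\le 2\,\E[q\,\mathbbm{1}_{\mathcal{R}_2}\mid\vy]$ for $n$ large, and integrating over $\vy\in\mathcal{R}_1$ gives $\E\bigl[q\,\mathbbm{1}_{\mathcal{R}_1}(\vY_{\cW'})\bigr]\le 2\,\E\bigl[q\,\mathbbm{1}_{\mathcal{R}_2}\,\mathbbm{1}_{\mathcal{R}_1}(\vY_{\cW'})\bigr]$, after which Proposition~\ref{Prop_EqXY_bound} applies directly. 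This is effectively what the paper does: it says the argument ``mirrors'' the proof of Lemma~\ref{Prop_Oempe_likely_for_happiness_emp_distr}, where the conditional denominator already carries $\mathbbm{1}_{\mathcal{R}_2}$ and the ratio is identified with the tilted probability up to the harmless factor $1/\tilde\Pp_{\vy}(\mathcal{R}_2)$.
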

\begin{proof}
The proof entirely mirrors that of Lemma~\ref{Prop_Oempe_likely_for_happiness_emp_distr}. It suffices to show that for all $\vy\in\text{Proj}_y(\mathcal{R})$ we have
\begin{equation}\label{Eqn_want_ratio_EOtailab_EqXY}
    \frac{\E\big[q(\vX_{\cM'}, \vY_{\cW'}) \cdot \mathbbm{1}_{\mathcal{R} \backslash\Otailab}(\vX_{\cM'}, \vY_{\cW'})  \;|\; \vY_{\cW'} = \vy \big]}{\E\big[q(\vX_{\cM'}, \vY_{\cW'}) \;|\; \vY_{\cW'} = \vy \big]} \le \exp(-\Theta(D(\rho\|\xi)n)).
\end{equation}
The quotient is simply
\begin{equation}\label{Eqn_proof_probable_happiness_majority_rewrite_x_distr_cond_on_y}
    \Pp_{\vX\sim\bigotimes_{i=1}^n \Exp(a_{i,\mu'(i)}+n(\vM\vy)_i)}\big((\vX_{\cM'},\vy)\in\mathcal{R}\backslash\Otailab\big) \le \Pp\big((\vX_{\cM'},\vy)\notin\Otailab\big),
\end{equation}
where we will have $\vX\sim\prod_{i=1}^n \Exp(a_{i,\mu'(i)}+n(\vM\vy)_i)$ for the rest of this proof.
Note that under this specified distribution, $\Big(\big(a_{i,\mu'(i)}+n(\vM\vy)_i\big)X_i\Big)_{i\in\cM'}$ are $n-\floor{\delta n}$ i.i.d. samples from $\Exp(1)$, each falling outside the interval $(F^{-1}(\xi/2), F^{-1}(1-\xi/2))$ with probability precisely $\xi$. Hence,
\begin{multline}\label{Eqn_proof_probable_happiness_majority_hoeffding_for_renormalized_x}
    \Pp\bigg(\sum_{i\in\cM'} \mathbbm{1}\Big\{X_i\big(a_{i,\mu'(i)}+n(\vM\vy)_i\big)\notin (F^{-1}(\xi/2), F^{-1}(1-\xi/2))\Big\} \le \rho (n-\floor{\delta n})\bigg) \\
    = \Pp_{Z\sim\Binom(n-\floor{\delta n}, \xi)}(Z > \rho(n-\floor{\delta n})) \le \exp(-D(\rho\|\xi)(n-\floor{\delta n}))
\end{multline}
by the Hoeffding bound for binomial distribution. Since $n(\vM\vy)_i \le a_{i,\mu'(i)}+n(\vM\vy)_i \le 2n(\vM\vy)_i$ across all $i\in\cM'$ for $\vy\in\mathcal{R}_1$ and $n$ sufficiently large, the probability \eqref{Eqn_proof_probable_happiness_majority_hoeffding_for_renormalized_x} upper bounds $\Pp\big((\vX_{\cM'},\vy)\notin\Otailab\big)$. This establishes \eqref{Eqn_want_ratio_EOtailab_EqXY} and concludes the proof.
\end{proof}

By fixing some small $\delta,\rho$ and choosing $\xi$ sufficiently small, we can make $D(\rho\|\xi)$ arbitrarily large and obtain the following Corollary.

\begin{corollary}\label{Cor_no_stable_match_tilde_Otailab}
For any $0 < \delta,\rho < 1/2$, there exists a choice of $\xi > 0$ such that %
\begin{equation}
    \Pp(\exists \mu\in\mathcal{S}, (\vX_\delta(\mu),\vY_\delta(\mu))\notin\tOtailab) \lesssim e^{-n^c}
\end{equation}
asymptotically as $n\to\infty$,
where $\tOtailab$ is defined as
\begin{equation}
    \bigg\{ (\vx,\vy)\in \R_+^n\times\R_+^n : \sumiton \mathbbm{1}\Big\{x_i\notin \Big(F^{-1}(\xi/2)\frac{(\log n)^{7/8}}{C^2\overline{c}_1 n}, 2F^{-1}(1-\xi/2)\frac{C^2}{\underline{c}_1 \log n}\Big)\Big\} \le (\delta + \rho) n \bigg\}.
\end{equation}
That is, with high probability, no stable matchings $\mu$ have more than $\delta+\rho$ fraction of the men's post-truncation values outside an interval $(\Theta(n^{-1}(\log n)^{7/8}), \Theta(1/\log n))$.
\end{corollary}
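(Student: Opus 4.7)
The plan is to mimic the template of Corollary~\ref{Cor_no_stable_outside_Oeigz}: apply Lemma~\ref{Lemma_probable_happiness_majority} to partial matchings, lift to full matchings through Lemma~\ref{Lemma_reduction_to_q}, and then translate the condition on $nx_i(\vM\vy)_i$ appearing in $\Otailab$ into the purely $x_i$-based condition in $\tOtailab$ by exploiting the bounds on $n(\vM\vy)_i$ that hold whenever $\vy\in\mathcal{R}_1$.

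First, choose $\xi=\xi(\rho)>0$ small enough that $D(\rho\|\xi)$ exceeds any prescribed constant, in particular large enough to dominate the implicit constants in the $o_\delta(n)$ error terms accumulated below. Lemma~\ref{Lemma_probable_happiness_majority} then gives, for each partial matching $\mu'$ of size $n-\floor{\delta n}$,
\[
    \E\big[q(\vX_{\cM'}, \vY_{\cW'})\cdot \mathbbm{1}_{\mathcal{R}\backslash\Otailab}(\vX_{\cM'},\vY_{\cW'})\big]\le e^{o_\delta(n)-\Theta(n)}\cdot\frac{(\delta n)!}{n!}\prod_{i\in\cM'} a_{i,\mu'(i)}b_{\mu'(i),i}.
\]
Summing over $\cM',\cW',\mu'$ exactly as in the arithmetic of \eqref{Eqn_sum_expectation_Omega_zeta_summation_bound} (using the moderate-deviation bound $n^n\Perm(\vM)=O(n!)$) and feeding the resulting sum into Lemma~\ref{Lemma_reduction_to_q} with $\Omega=\mathcal{R}\backslash\Otailab$ yields
\[
    \Pp\big(\exists\mu\in\mathcal{S},\ (\vX_\delta(\mu),\vY_\delta(\mu))\in\mathcal{R}\backslash\Otailab\big)\lesssim e^{-n^c}.
\]

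For the $\Otailab\to\tOtailab$ conversion, whenever $\vy\in\mathcal{R}_1$, the $C$-boundedness of $\vM$ (Assumption~\ref{Assumption_C_bounded}) gives
\[
    \frac{\underline{c}_1 \log n}{C}\le n(\vM\vy)_i\le \frac{C\overline{c}_1 n}{(\log n)^{7/8}}\quad\text{for every }i\in[n],
\]
so dividing the interval $(F^{-1}(\xi/2)/2,F^{-1}(1-\xi/2))$ in the definition of $\Otailab$ by $n(\vM\vy)_i$ confines $x_i$ to a subinterval of the one appearing in $\tOtailab$ — the factors $C^2$ and $2$ built into $\tOtailab$ provide precisely the slack needed for this inclusion (enlarging $C$ if necessary, which we may always do). Consequently, the $\vx$-condition of $\Otailab\cap\mathcal{R}$ is strictly stronger than that of $\tOtailab$, i.e., $\mathcal{R}\backslash\tOtailab\subseteq\mathcal{R}\backslash\Otailab$. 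A union bound combining the previous display with Proposition~\ref{Prop_R_likely} completes the proof.

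There is no serious obstacle here, as the probabilistic heavy lifting is already packaged in Lemma~\ref{Lemma_probable_happiness_majority} and the reduction lemma. The one mildly delicate piece is the bookkeeping in the interval comparison — verifying that the slack factors in $\tOtailab$ suffice uniformly over $\vy\in\mathcal{R}_1$, and noting that the $\floor{\delta n}$ zero entries of $\vX_\delta(\mu)$ (which automatically lie outside the $\tOtailab$-interval) are already absorbed into the $\floor{\delta n}$ term of the count threshold in the definition of $\Otailab$.
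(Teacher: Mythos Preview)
Your proposal is correct and follows essentially the same approach as the paper's own proof: establish the containment $\mathcal{R}\backslash\tOtailab\subseteq\mathcal{R}\backslash\Otailab$ via the bounds on $n(\vM\vy)_i$ available for $\vy\in\mathcal{R}_1$, then feed Lemma~\ref{Lemma_probable_happiness_majority} into Lemma~\ref{Lemma_reduction_to_q} and sum over partial matchings exactly as in Corollary~\ref{Cor_no_stable_outside_Oeigz}. Your write-up is in fact more explicit than the paper's (which compresses the argument into two sentences), including the remark about enlarging $C$ and the accounting for the $\floor{\delta n}$ zero entries of $\vX_\delta(\mu)$.
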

\begin{proof}
Observe that $\mathcal{R}\backslash\tOtailab \subseteq \mathcal{R}\backslash\Otailab$ by our definition of $\mathcal{R}$ and the boundedness assumption on the entries of $\vM$. Again, invoking Lemma~\ref{Lemma_reduction_to_q} using inequality \eqref{Eqn_Lemma_probable_happiness_majority_main_bound} in Lemma~\ref{Lemma_probable_happiness_majority} yields the $\Theta(e^{-n^c})$ upper bound on $\Pp((\vX_\delta(\mu),\vY_\delta(\mu))\notin\tOtailab)$.
\end{proof}

\begin{remark}
Recall that in a market with uniform preferences, the man-optimal (and woman-pessimal) stable matching realizes an average rank of $\Theta(\log n)$ for men and $\Theta(n/\log n)$ for women. Under the heuristics that values multiplied by $n$ roughly correspond to ranks (which we will formalize below), Lemma~\ref{Lemma_probable_happiness_majority} nicely matches our expectation that even in the most extreme cases, few individuals will strike a rank better (smaller) than $\Theta((\log n)^{7/8})$ or worse (larger) than $\Theta(n/\log n)$. The lower bound can be refined to $\Theta(\log n/\log \log n)$ with a more careful analysis.
\end{remark}

Now let us consider a specific partial matching $\mu'$ of size $n-\floor{\delta n}$ between $\cM'$ and $\cW'$ and condition on $\mu'$ being stable with value vectors $(\vX_{\cM'}, \vY_{\cW'})=(\vx,\vy)\in\tOtailab$. That is, there exists a subset $\bar{\cM}'\subseteq\cM'$ with $|\bar{\cM}'|\ge (1-\delta-\rho)n$ such that $\Theta(n^{-1}(\log n)^{7/8}) \le (X_{\cM'})_i \le \Theta(1/\log n)$ for all $i\in\bar{\cM}'$. By symmetry, we may further assume that there exists $\bar{\cW}'\subseteq\cW'$ with $|\bar{\cW}'|\ge (1-\delta-\rho)n$ such that $\Theta(n^{-1}(\log n)^{7/8}) \le (Y_{\cW'})_j \le \Theta(1/\log n)$ for all $j\in\bar{\cW}'$. We want to show that, for $i\in\bar{\cM}'$, the \emph{pre-truncation} rank $R_i$ of man $m_i$ (i.e., over the entire market, including the $\floor{\delta n}$ women outside $\cM'$) is well characterized by his value $X_{i,\,u'(i)}$ in the matching, up to some proper scaling. From now on, we will consider some $i\in\bar{\cM}'$ with value $X_{i,\,u'(i)}=x_i$, and write
\begin{equation}\label{Eqn_def_eqn_rank_i}
    R_i = 1 + \sum_{j\ne \mu'(i)}\mathbbm{1}_{[0,x_i]}(X_{ij}).
\end{equation}
The condition that $\mu'$ is stable requires $(X_{ij}, Y_{ji}) \notin [0, x_i]\times[0,y_j]$ for all $j\in\cW'\backslash\{\mu'(i)\}$. Thus, for all $j\in\cW'\backslash\{\mu'(i)\}$,
\begin{multline}
    \Pp(X_{ij} \le x_i | \mu'\text{ stable}, (\vX_{\cM'})_i=x_i, (\vY_{\cW'})_j=y_j) = \frac{\Pp(X_{ij} \le x_i,Y_{ji} > y_j)}{1 - \Pp(X_{ij} \le x_i,Y_{ji} \le y_j)} \\
    = \frac{(1-e^{-a_{ij}x_i})e^{-b_{ji}y_j}}{1-(1-e^{-a_{ij}x_i})(1-e^{-b_{ji}y_j})},
\end{multline}
and for all $j\in\cW\backslash\cW'$ (so $(\vY_{\cW'})_j=0$),
\begin{equation}
    \Pp(X_{ij} \le x_i | \mu'\text{ stable}, (\vX_{\cM'})_i=x_i) = 1-e^{-a_{ij}x_i}.
\end{equation}
Define
\begin{equation*}
    p_{ij} = \begin{cases}
    1 & \quad \text{ when }j=\mu'(i), \\
    \frac{(1-e^{-a_{ij}x_i})e^{-b_{ji}y_j}}{1-(1-e^{-a_{ij}x_i})(1-e^{-b_{ji}y_j})} & \quad \text{ when } j\in\cW'\backslash\{\mu'(i)\}, \\
    1-e^{-a_{ij}x_i} & \quad\text{ when } j\in\cW\backslash\cW',
    \end{cases}
\end{equation*}
and $I_{ij}\sim \Bern(p_{ij})$ independently for $i\in[n]$ so that $R_i = \sumjton I_{ij}$ conditional on $(\vX_{\cM'})_i = X_i=x_i$.
Note that for any $j\ne \mu'(i)$ and $j\notin \bar{\cW}'$, we always have
\begin{equation}\label{Eqn_relate_rank_to_happi_p_ij_upperbound}
    p_{ij} \le 1-e^{-a_{ij}x_i} \le a_{ij}x_i
\end{equation}
and
\begin{multline}\label{Eqn_relate_rank_to_happi_p_ij_lowerbound}
    p_{ij} \ge \frac{(1-e^{-a_{ij}x_i})e^{-b_{ji}y_j}}{1-(1-e^{-a_{ij}x_i})(1-e^{-b_{ji}y_j})} \ge (1-e^{-a_{ij}x_i})e^{-b_{ji}y_j} \\
    \ge e^{-\Theta(\frac{1}{\log n})}\bigg(1 - \Theta\Big(\frac{1}{\log n}\Big)\bigg) a_{ij}x_i = (1-o(1))a_{ij}x_i.
\end{multline}
For $j \in \bar{\cW}'\backslash\{\mu'(i)\}$, $p_{ij}$ admits the same upper bound \eqref{Eqn_relate_rank_to_happi_p_ij_upperbound} and the trivial lower bound of zero.
Hence, conditional on
\begin{equation}\label{Eqn_condition_stable_with_vx_vy}
    \mu'\text{ stable and }(\vX_{\cM'},\vY_{\cW'})=(\vx,\vy)\in\tOtailab\cap \tOemp(\eps) \cap\mathcal{R} \tag{$\dagger$}
\end{equation}
for any (fixed) $\xi,\rho,\eps> 0$,
we have the stochastic dominance
\begin{equation}
    1 + \sum_{j\notin \bar{\cW}'\cup\{\mu'(i)\}} \underline{I}_{ij} \preceq R_i \preceq 1 + \sum_{j\ne \mu'(i)} \overline{I}_{ij},
\end{equation}
where $\underline{I}_{ij}\sim\Bern((1-o(1))a_{ij}x_i)$ and $\overline{I}_{ij}\sim\Bern(a_{ij}x_i)$. Since $i\in\bar{\cM}'$ by our assumption and thus $\Theta((\log n)^{7/8}/n)\le x_i \le \Theta(1/\log n)$, the expectation of $R_i/x_i$ can be upper bounded by
\begin{equation}
    \E\bigg[\frac{R_i}{x_i} \bigg| \eqref{Eqn_condition_stable_with_vx_vy}\bigg] \le \frac{1}{x_i} + \sum_{j\ne\mu'(i)}a_{ij} = (1+o(1))\sumjton a_{ij}
\end{equation}
and lower bounded by
\begin{equation}
    \E\bigg[\frac{R_i}{x_i} \bigg| \eqref{Eqn_condition_stable_with_vx_vy}\bigg] \ge (1-o(1))\sum_{j\ne\bar{\cW}'}a_{ij} = (1-\Theta(\delta))\sumjton a_{ij}.
\end{equation}
Similarly, we may bound the variance of $R_i/x_i$ by
\begin{equation}
    \Var\bigg(\frac{R_i}{x_i} \bigg| \eqref{Eqn_condition_stable_with_vx_vy}\bigg) \le \sum_{j\ne\mu'(i)}a_{ij} (1-a_{ij}x_i) \le \sumjton a_{ij}.
\end{equation}
Hence, we have
\begin{equation}\label{Eqn_final_E_Var_bound_for_rank_happiness_ratio}
    1-\Theta(\delta) \le \E\bigg[\frac{R_i}{x_i \sumjton a_{ij}} \bigg| \eqref{Eqn_condition_stable_with_vx_vy}\bigg] \le 1+o(1) \enspace \text{ and } \enspace \Var\bigg(\frac{R_i}{x_i \sumjton a_{ij}} \bigg| \eqref{Eqn_condition_stable_with_vx_vy}\bigg) \le \Theta(n^{-1}),
\end{equation}
with these quantities conditionally independent for all $i\in\bar{\cM}'$ and the hidden constants depending only on $C$, implying concentration of $R_i$ around $x_i\sumjton a_{ij}$ in the following sense.

\begin{proposition}\label{Prop_most_have_good_rank_happi_ratio}
Conditional on \eqref{Eqn_condition_stable_with_vx_vy}, for any fixed $\theta > 0$ and $\delta,\rho,\gamma\in(0,1/2)$, we have
\begin{equation}
    \Pp\left(\sumiton \mathbbm{1}_{(\theta + \Theta(\delta), \infty)}\bigg(\bigg|\frac{R_i}{x_i\sumjton a_{ij}} - 1\bigg|\bigg) \ge (\delta+\rho+\gamma)n \middle| \eqref{Eqn_condition_stable_with_vx_vy}\right)
    \lesssim \Pp_{N\sim\Poi(\Theta(\theta^{-2}))}(N\ge \gamma n)
    \le e^{-\Theta(\gamma n)}.
\end{equation}
\end{proposition}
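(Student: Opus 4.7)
The plan is to split the index set into a ``good'' part $\bar{\cM}'$ and a small ``bad'' part of size at most $(\delta+\rho)n$, apply Chebyshev's inequality coordinate-wise on $\bar{\cM}'$ using \eqref{Eqn_final_E_Var_bound_for_rank_happiness_ratio}, and conclude with a Chernoff bound on the resulting sum of conditionally independent Bernoullis. Throughout, everything is conditional on the event \eqref{Eqn_condition_stable_with_vx_vy}, which fixes the values and stability of $\mu'$; I drop this conditioning from the notation for readability.

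\textbf{Step 1 (reduction to $\bar{\cM}'$).} By construction, $|[n]\setminus\bar{\cM}'|\le (\delta+\rho) n$, so these indices contribute at most $(\delta+\rho)n$ to the sum $\sum_{i=1}^n \mathbbm{1}_{(\theta+\Theta(\delta),\infty)}(|R_i/(x_i\sum_j a_{ij})-1|)$ no matter what the ranks are. Hence the event in the proposition forces $Z_i:=\mathbbm{1}_{(\theta+\Theta(\delta),\infty)}(|R_i/(x_i\sum_j a_{ij})-1|)=1$ for at least $\gamma n$ indices $i\in\bar{\cM}'$.

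\textbf{Step 2 (per-coordinate Chebyshev).} For $i\in\bar{\cM}'$, \eqref{Eqn_final_E_Var_bound_for_rank_happiness_ratio} gives a conditional mean of $R_i/(x_i\sum_j a_{ij})$ within $\Theta(\delta)$ of $1$ and a conditional variance of order $n^{-1}$. Choosing the $\Theta(\delta)$ in the event's threshold to match the mean drift, a deviation exceeding $\theta+\Theta(\delta)$ from $1$ forces a deviation exceeding $\theta$ from the conditional mean; Chebyshev's inequality then yields
\begin{equation*}
    \Pp(Z_i = 1) \le \frac{\Theta(n^{-1})}{\theta^2} =: p, \qquad i\in\bar{\cM}'.
\end{equation*}

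\textbf{Step 3 (conditional independence + Chernoff).} The key observation is that, once $(\vX_{\cM'},\vY_{\cW'})=(\vx,\vy)$ is fixed, the off-diagonal latent values $\{(X_{ij},Y_{ji}):j\ne\mu'(i)\}$ are independent across different $i$ (they involve disjoint entries of $\vX$ and $\vY$), and the event of stability of $\mu'$ imposes only coordinate-wise constraints of the form $(X_{ij},Y_{ji})\notin [0,x_i]\times[0,y_j]$ for each $(i,j)$ separately. Therefore the posterior distributions of $\{R_i\}_{i\in\bar{\cM}'}$ factorize, and the indicators $\{Z_i\}_{i\in\bar{\cM}'}$ are conditionally independent Bernoullis with parameters at most $p$. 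Writing $N=\sum_{i\in\bar{\cM}'} Z_i$, we have $\E[N]\le pn=\Theta(\theta^{-2})$, and a standard Chernoff bound for sums of independent Bernoullis (equivalently, Poisson domination in the upper tail) yields
\begin{equation*}
    \Pp(N\ge \gamma n) \le \Pp_{N'\sim \Poi(\Theta(\theta^{-2}))}(N'\ge \gamma n) \le \Big(\tfrac{e\Theta(\theta^{-2})}{\gamma n}\Big)^{\gamma n} \le e^{-\Theta(\gamma n)},
\end{equation*}
where the last inequality uses that $\theta$ and $\gamma$ are fixed and $n\to\infty$. Combining with Step~1 gives the claimed bound.

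The only nontrivial point is the conditional independence in Step~3; this is genuinely a property of the generative model (distinct $(i,j)$ entries of $\vX,\vY$ are independent, and the posterior constraints under stability decouple across $i$) rather than a concentration argument, so it requires no additional technical effort. Everything else is a routine application of Chebyshev and Chernoff to the estimates already provided by \eqref{Eqn_final_E_Var_bound_for_rank_happiness_ratio}.
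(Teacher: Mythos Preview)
Your proposal is correct and follows essentially the same approach as the paper: split off the at most $(\delta+\rho)n$ indices outside $\bar{\cM}'$, apply Chebyshev coordinate-wise via \eqref{Eqn_final_E_Var_bound_for_rank_happiness_ratio} to get per-index failure probability $\Theta((n\theta^2)^{-1})$, then use the conditional independence of the ranks (which the paper records just before \eqref{Eqn_final_E_Var_bound_for_rank_happiness_ratio}) to pass to a Binomial/Poisson tail bound. The only cosmetic difference is that the paper routes the last step through Binomial-to-Poisson total variation convergence followed by the Poisson tail inequality $\Pp(N\ge\lambda+t)\le\exp(-t^2/(2(\lambda+t)))$, whereas you invoke a Chernoff/Poisson-domination bound directly; both yield the same $e^{-\Theta(\gamma n)}$.
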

\begin{proof}
By Chebyshev's inequality and \eqref{Eqn_final_E_Var_bound_for_rank_happiness_ratio}, $\Pp\big(\big|\frac{R_i}{x_i\sumjton a_{ij}} - \E\big[\frac{R_i}{x_i \sumjton a_{ij}} \big| \eqref{Eqn_condition_stable_with_vx_vy}\big]\big| \ge \theta \big| \eqref{Eqn_condition_stable_with_vx_vy}\big) \le \Theta\big((n\theta^2)^{-1}\big)$ for all $i\in\bar{\cM}'$. Hence, by conditional independence of the ranks, $\sum_{i\in\bar{\cM}'} \mathbbm{1}_{(\theta + \Theta(\delta), \infty)}\big(\big|\frac{R_i}{x_i\sumjton a_{ij}} - 1\big|\big)$ is stochastically dominated by $\Binom\big(n, \Theta\big((n\theta^2)^{-1}\big)\big)$, which converges to $\Poi(\Theta(\theta^{-2}))$ in total variance. The Proposition follows from the well known tail bound for $N\sim\Poi(\lambda)$ that $\Pp(N\ge \lambda + t) \le \exp\big(-\frac{t^2}{2(\lambda+t)}\big)$, which implies $\Pp_{N\sim\Poi(\Theta(\theta^{-2}))}(N\ge \gamma n) \le \exp\big(-\frac{(\gamma n - \Theta(\theta^{-2}))^2}{2\gamma n}\big) \lesssim \exp(-\gamma n/2)$.
\end{proof}

\begin{proof}[Proof of Theorem~\ref{Thm_main_rank_dist_body}]
Note that in Corollary~\ref{Cor_no_stable_match_tilde_Otailab} and Corollary~\ref{Cor_no_stable_match_tilde_Otailab}, $\rho$ can be chosen arbitrarily small once $\delta$ is fixed. In particular, we may always guarantee $\rho \le \delta$. Similarly, we may assume $\theta \le \delta$ in Proposition~\ref{Prop_most_have_good_rank_happi_ratio}. Thus,
\begin{equation}
    \Pp\left(\sumiton \mathbbm{1}_{(\Theta(\delta), \infty)}\bigg(\bigg|\frac{R_i}{x_i w_i} - 1\bigg|\bigg) \ge (2\delta+\gamma)n \middle| \eqref{Eqn_condition_stable_with_vx_vy}\right)
    \le e^{-\Theta(\gamma n)},
\end{equation}
where $w_i = \sumjton a_{ij}$ is the fitness value of man $m_i$.
Marginalizing over all pairs of relevant value vectors $(\vx,\vy)\in\tOtailab\cap\tOemp(\eps)\cap\mathcal{R}$ in the condition \eqref{Eqn_condition_stable_with_vx_vy}, we obtain
\begin{equation}
    \Pp\left(\Eratio(\delta,\gamma) \middle| (\vX_{\cM'},\vY_{\cW'})\in\tOtailab\cap\tOemp(\eps)\cap\mathcal{R}\right)
    \le e^{-\Theta(\gamma n)},
\end{equation}
where $\Eratio(\delta,\gamma)$ denotes the undesirable event that $\sumiton \mathbbm{1}_{(\Theta(\delta), \infty)}\big(\big|\frac{R_i}{(\vX_{\cM'})_i w_i} - 1\big|\big) \ge (2\delta+\gamma)n$ for the partial matching $\mu'$.
By Proposition~\ref{Prop_EqXY_bound}, 
\begin{multline}
    \Pp((\vX_{\cM'},\vY_{\cW'})\in\tOtailab\cap\Oempe\cap\mathcal{R}) \le \Pp(\mu'\text{ stable}, (\vX_{\cM'},\vY_{\cW'})\in\mathcal{R}) \\
    \le e^{o(n)+o_\delta(n)} \frac{(\delta n)!}{n!}\prod_{i\in\cM'} a_{i,\mu'(i)} b_{\mu'(i),i}.
\end{multline}
By choosing $\gamma=\gamma(\delta)=o_\delta(1)$ sufficiently large (relative to $\delta$) and following a similar computation as in Lemma~\ref{Lemma_reduction_to_q} and Corollary~\ref{Cor_subexp_num_stable_match}, we can ensure that with probability $1-\Theta(e^{-n^c})$ there exists no stable partial matching $\mu'$ where both $\Eratio(\delta,\gamma)$ and $(\vX_{\cM'},\vY_{\cW'})\in \tOemp(\eps_0(\delta))\cap\mathcal{R}$ happen, where the function $\eps_0$ is defined in the proof of \ref{Thm_main_happiness_dist_body}. Notice that by repeated uses of the triangle inequality, 
\begin{equation}
    (\vX_{\cM'},\vY_{\cW'})\in \tOemp(\eps_0(\delta)), \Eratio(\delta,\gamma)^c \enspace \Longrightarrow \enspace \|\Femp(\mathbf{w}^{-1}\circ\mathbf{R}(\mu'))-F_\lambda\|_\infty \le \Theta(\delta) + \gamma(\delta) + \eps_0(\delta) = o_\delta(1)
\end{equation}
for the choice of $\lambda = \|\vY_{\cW'}\|_1$. Combining this with \eqref{Eqn_proof_Thm_main_happiness_dist_tOemp} and Proposition~\ref{Prop_R_likely}, we conclude that with probability $1-\Theta(e^{-n^c})$, all stable matchings $\mu\in\mathcal{S}$ induce $\delta$-truncated stable partial matchings $\mu_\delta$ with $\|\Femp(\mathbf{w}^{-1}\circ\mathbf{R}(\mu_\delta))-F_{\lambda(\mu)}\|_\infty = o_\delta(1)$, where $\lambda(\mu)=\|\vY_{\delta}(\mu)\|_1$. The $\delta$-truncation affects the distance by at most $\delta$, which can be absorbed into the $o_\delta(1)$ upper bound. Thus, by choosing $\delta$ sufficiently small relative to any fixed $\eps > 0$, we complete our proof of Theorem~\ref{Thm_main_rank_dist_body}.
\end{proof}

\subsection{Proofs of Theorem~\ref{Thm_dist_body_approx_stable} and Corollary~\ref{Cor_body_imbalance}}

\ThmMainApproxStable*
\begin{proof}
    There are $\binom{n}{\alpha n}^2 = \exp(2 h_b(\alpha) n + O(\ln n))$ sub-markets of size at least $(1-\alpha) n$, where $h_b(p) = -p \log p - (1-p) \log (1-p)$ is the binary entropy function. Under Assumption~\ref{Assumption_C_bounded} for the whole market, each of such sub-markets also satisfies Assumption~\ref{Assumption_C_bounded}. Fix any $\eps > 0$. By Theorem~\ref{Thm_main_happiness_dist_body}, each of such sub-markets can only contain a stable matching with men's empirical distribution of value deviating from  the family of exponential distributions by at least $\eps/2$ in Kolmogorov-Smirnov distance with probability at most $1-\exp(-n^c)$ for any fixed $c\in(0,1/2)$. Whenever $\alpha < n^{-\eta}$ for some $\eta > 1/2$, we have $h_b(\alpha) n < n^{1-\eta}$. Choosing $c \in (1-\eta, 1/2)$ and applying union bound over all relevant sub-markets gives the first part of \eqref{Eqn_happiness_dist_approx_stable}, since the additional $\alpha$ fraction of the market affects the empirical distribution by at most $\alpha\to 0$. The second part follows analogously from Theorem~\ref{Thm_main_rank_dist_body}.
\end{proof}

\CorImbalanceMarket*
\begin{proof}
    The proof is entirely the same as the proof of Theorem~\ref{Thm_dist_body_approx_stable}. The union bound covers all sub-markets of size $n-k$, that is, consisting all the men and a subset of the women. The rest is the same.
\end{proof}

\end{document}